\newtheorem{conj}{Conjecture}[section]
\newtheorem{thm}[conj]{Theorem}
\newtheorem{remark}[conj]{Remark}
\newtheorem{lem}[conj]{Lemma}
\newtheorem{prop}[conj]{Proposition}
\newtheorem{defn}[conj]{Definition}
\newtheorem{cor}[conj]{Corollary}
\newcommand\independent{\protect\mathpalette{\protect\independent}{\perp}} 
\def\independent#1#2{\mathrel{\rlap{$#1#2$}\mkern2mu{#1#2}}}
\newcommand{\supp}{\mathrm{Supp}}
\newcommand{\mR}{\mathbb{R}} 
\newcommand{\R}{\mathbb{R}}
\newcommand{\bc}{\begin{center}}
\newcommand{\ec}{\end{center}}
\newcommand{\bt}{\begin{tabular}}
\newcommand{\et}{\end{tabular}} 
\newcommand{\bea}{\begin{eqnarray}}
\newcommand{\eea}{\end{eqnarray}}
\newcommand{\bean}{\begin{eqnarray*}}
\newcommand{\eean}{\end{eqnarray*}}
\newcommand{\ba}{\begin{array}}
\newcommand{\ea}{\end{array}}
\def\be{\begin{eqnarray}}
\def\ee{\end{eqnarray}}
\def\ben{\begin{eqnarray*}}
\def\een{\end{eqnarray*}}
\newcommand{\ra} {\rightarrow}
\newcommand{\leqc}{\mbox{$ \;\stackrel{(c)}{\leq}\; $}}
\newcommand{\geqa}{\mbox{$ \;\stackrel{(a)}{\geq}\; $}}
\newcommand{\geqb}{\mbox{$ \;\stackrel{(b)}{\geq}\; $}}
\newcommand{\RL}{{\mathbb R}}
\newcommand{\calM}{\mbox{${\cal M}$}}
\newcommand{\Rpl}{{\mathbb R}_{+}}
\newcommand{\Nat}{\mathbb{N}}
\newcommand{\lam}{\lambda}
\def\elabel#1{\label{e:#1}}
\def\sq{$\Box$}
\def\qed{\ifmmode\sq\else{\unskip\nobreak\hfil
\penalty50\hskip1em\null\nobreak\hfil\sq
\parfillskip=0pt\finalhyphendemerits=0\endgraf}\fi\par\medbreak}
\def\supp{{\rm supp\,}}
\newsavebox{\junk}
\savebox{\junk}[1.6mm]{\hbox{$|\!|\!|$}}
\def\til={{\widetilde =}}
\def\half{{\mathchoice{\textstyle \frac{1}{2}}%
{\frac{1}{2}}%
{\hbox{\tiny $\frac{1}{2}$}}%
{\hbox{\tiny $\frac{1}{2}$}} }}
 \def\eq#1/{(\ref{#1})}
\def\eq#1/{(\ref{e:#1})}
\newcommand{\beqn}[1]{\notes{#1}%
\begin{eqnarray} \elabel{#1}}
\newcommand{\eeqn}{\end{eqnarray} }
\newcommand{\beq}[1]{\notes{#1}%
\begin{equation}\elabel{#1}}
\newcommand{\eeq}{\end{equation}} 
\def\bdes{\begin{description}}
\def\edes{\end{description}}
\def\notes#1{}
\newcommand{\setS}{s}
\newcommand{\collS}{\mathcal{C}}
\newcommand{\sumS}{\sum_{\setS\in\collS}}
\newcommand{\bs}{\beta_{\setS}}
\newcommand{\dth}{\frac{1}{d}}
\def\E{{\bf E}}
\def\phi{\varphi}
\def\bee{\begin{eqnarray*}}
\def\ene{\end{eqnarray*}}
\begin{document}

\title{Forward and Reverse Entropy Power Inequalities in Convex Geometry}
\author{Mokshay Madiman, James Melbourne, Peng Xu\thanks{All the authors are with the 
Department of Mathematical Sciences, University of Delaware.
E-mail: {\tt madiman@udel.edu, jamesm@udel.edu, xpeng@udel.edu}.
This work was supported in part by the U.S. National Science Foundation through grants DMS-1409504 (CAREER) and CCF-1346564. 
Some of the new results described in Section~\ref{sec:repi} were announced 
at the 2016 IEEE International Symposium on Information Theory \cite{XMM16:isit} in Barcelona.}}
\date{}
\maketitle

\begin{abstract}
The entropy power inequality, which plays a fundamental role in information theory and probability,
may be seen as an analogue of the Brunn-Minkowski inequality. Motivated by this connection to 
Convex Geometry, we survey various recent developments  on forward and reverse entropy power inequalities 
not just for the Shannon-Boltzmann entropy but also more generally for R\'enyi entropy.
In the process, we discuss connections between the so-called functional (or integral) and probabilistic (or entropic) 
analogues of some classical inequalities in geometric functional analysis.
\end{abstract}

\tableofcontents

\section{Introduction}

The Brunn-Minkowski inequality plays a fundamental role not just in Convex Geometry,
where it originated over 125 years ago, but also as an indispensable tool in Functional Analysis,
and-- via its connections to the concentration of measure phenomenon-- in
Probability. The importance of this inequality, and the web of its tangled relationships
with many other interesting and important inequalities, is beautifully elucidated in the
landmark 2002 survey of Gardner \cite{Gar02}. Two of the parallels that Gardner discusses
in his survey are the Pr\'ekopa-Leindler inequality and the Entropy Power Inequality;
since the time that the survey was written, these two inequalities have become the foundation
and prototypes for two different but related analytic ``liftings'' of Convex Geometry. 
While the resulting literature is too vast for us to attempt doing full justice to in this survey,
we focus on one particular strain of research-- namely, the development of reverse
entropy power inequalities-- and using that as a narrative thread, chart some of the work
that has been done towards these ``liftings''.

Let  $A, B$ be any nonempty Borel sets in $\R^d$. Write  $A + B = \big\{x+y: x \in A, \ y \in B\big\}$
for the Minkowski sum, and  $|A|$ for the $d$-dimensional volume (or Lebesgue measure) of $A$.
The Brunn-Minkowski inequality (BMI) says that 
\be\label{eq:bm-std}
\big| A +  B\big|^{1/d} \geq |A|^{1/d} + |B|^{1/d} .
\ee
The BMI was proved in the late 19th century by Brunn for convex sets in low dimension ($d \leq 3$), and Minkowski for convex sets in $\R^d$;
the reader may consult Kjeldsen \cite{Kje08, Kje09} for an interesting historical analysis of how the notion of convex sets in linear spaces
emerged from these efforts (Minkowski's in particular). The extension of the BMI to compact-- and thence Borel-measurable-- 
subsets of $\R^d$ was done by Lusternik \cite{Lus35}. Equality holds in the inequality \eqref{eq:bm-std}
for sets $A$ and $B$ with positive volumes if and only if they are convex and homothetic (i.e., one is a scalar multiple of the other, up to translation),
possibly with sets of measure zero removed from each one.
As of today, there are a number of simple and elegant proofs known for the BMI. 

In the last few decades, the BMI became the starting point of what is sometimes called the Brunn-Minkowski theory,
which encompasses a large and growing range of geometric inequalities including the
Alexandrov-Fenchel inequalities for mixed volumes, and which has 
even developed important offshoots such as the $L^p$-Brunn-Minkowski theory \cite{Lut93:1}.
Already in the study of the geometry of convex bodies (i.e., convex compact sets with nonempty interior), 
the study of log-concave functions turns out to be fundamental.
One way to see this is to observe that  uniform measures
on convex bodies are not closed under taking lower-dimensional marginals, but yield
log-concave densities, which do have such a closure property-- while the closure property of
log-concave functions under marginalization goes back to Pr\'ekopa \cite{Pre71, Pre73} and Brascamp-Lieb \cite{BL76a},   
their consequent fundamental role in the geometry of convex bodies was first clearly recognized in
the doctoral work of K.~Ball \cite{Bal86:phd} (see also \cite{Bal88, MP89}). 
Since then, the realization has grown that it is both possible and natural to state many questions and theorems in Convex Geometry directly
for the category of log-concave functions or measures rather than for the category of convex bodies-- V.~Milman calls this
the ``Geometrization of Probability'' program \cite{Mil08}, although one might equally well call it the
``Probabilitization of Convex Geometry'' program. The present survey squarely falls within
this program.

For the goal of embedding the geometry of  convex sets  in a more analytic setting,
two approaches are possible:
\begin{enumerate}
\item {\it Functional (integral) lifting}: Replace sets by functions, and convex sets by log-concave or $s$-concave functions,
and the volume functional  by the integral. This is a natural extension because if we identify a convex body $K$ with its
indicator function $1_K$ (defined as being 1 on the set and 0 on its complement), then the integral of $1_K$ is just
the volume of $K$. The earlier survey of V.~Milman \cite{Mil08} is entirely focused on this lifting of Convex Geometry;
recent developments since then include the introduction and study of mixed integrals (analogous to mixed volumes)
independently by Milman-Rotem \cite{MR13:1, MR13:2} and Bobkov-Colesanti-Fragala \cite{BCF14} (see also \cite{BGW13}). 
Colesanti \cite{Col16} has an up-to-date survey of these developments in another chapter of this volume. 
\item {\it Probabilistic (entropic) lifting}:  Replace sets by random variables (or strictly speaking their distributions), 
and convex sets by random variables with log-concave or $s$-concave distributions, and the volume functional 
by the entropy functional (actually ``entropy power'', which we will discuss shortly). This is a natural analogue 
because if we identify a convex body $K$ with the random variable $U_K$ whose distribution is uniform measure on $K$, then
the entropy of $U_K$ is the logarithm of $|K|$.
The  parallels were observed early by Costa and Cover \cite{CC84} (and perhaps also implicitly by Lieb \cite{Lie78}); 
subsequently this analogy has been studied by many other authors, including by  Dembo-Cover-Thomas \cite{DCT91} 
and in two series of papers by Lutwak-Yang-Zhang (see, e.g., \cite{LYZ04, LLYZ13}) and Bobkov-Madiman (see, e.g., \cite{BM11:cras, BM12:jfa}).
\end{enumerate}

While this paper is largely focused on the probabilistic (entropic) lifting, we will also discuss how it is related 
to the functional (integral) lifting.

It is instructive at this point to state the integral and entropic liftings of the Brunn-Minkowski inequality itself,
which are known as the Pr\'ekopa-Leindler inequality and the Entropy Power Inequality respectively.

\vspace{.1in}
\noindent{\bf Pr\'ekopa-Leindler inequality (PLI)}:
The Pr\'ekopa-Leindler inequality (PLI) \cite{Pre71, Lei72b, Pre73} states that if $f, g, h:\R^d\ra [0,\infty)$ are integrable functions satisfying, for a given $\lam\in (0,1)$,
\ben
h(\lam x + (1-\lam) y) \geq f^\lam(x) g^{1-\lam}(y)
\een
for every $x, y\in \R^d$, then
\be\label{eq:pli}
\int h \geq \bigg(\int f\bigg)^\lam \bigg(\int g\bigg)^{1-\lam} .
\ee
If one prefers, the PLI can also be written more explicitly as a kind of convolution inequality, as implictly observed
in \cite{BL76a} and explicitly in \cite{KM05}. Indeed, if one defines the Asplund product of two nonnegative functions by
\ben
    (f \star g) (x) = \sup_{x_1 + x_2 = x} f(x_1) g(x_2) ,
\een
and the scaling $(\lambda \cdot f) (x) = f^\lambda(x/\lambda)$,
then the left side of \eqref{eq:pli} can be replaced by the integral of $[\lam\cdot f]\star [(1-\lam)\cdot g]$.

To see the connection with the BMI, one simply has to observe that $f=1_A, g=1_B$
and $h=1_{\lam A + (1-\lam)B}$ satisfy the hypothesis, and in this case, the conclusion is precisely the BMI in its
``geometric mean'' form
$|\lam A +  (1-\lam)B| \geq |A|^{\lam} |B|^{1-\lam}$.
The equivalence of this inequality to the BMI in the form \eqref{eq:bm-std} 
is just one aspect of a broader set of equivalences involving the BMI. To be precise, 
for the class of Borel-measurable subsets of $\mathbb{R}^d$, the following are equivalent:
\begin{align}
	|A+B|^{\frac 1 d} 
		&\geq 	|A|^{\frac 1 d} + |B|^{\frac 1 d} \label{trad}
			\\
	|\lam A + (1-\lam)B| 
		&\geq \left( \lam |A|^{\frac 1 d} + (1-\lam) |B|^{\frac 1 d} \right)^d \label{sharp}
			\\
	|\lam A + (1-\lam)B|
		&\geq
			|A|^{\lam}|B|^{1-\lam} \label{geo}
			\\
	|\lam A + (1-\lam)B|
		&\geq
			\min \{ |A|, |B| \}. \label{min}
\end{align} 

Let us indicate why the inequalities \eqref{trad}--\eqref{min} are equivalent.
Making use of the arithmetic mean-geometric mean inequality, we immediately have \eqref{sharp} $\Rightarrow$ \eqref{geo} $\Rightarrow$ \eqref{min}.  Applying \eqref{trad} to $\tilde{A} = \lam A$, $\tilde{B} = (1-\lam)B$ we have
\begin{align*}
|\lam A + (1-\lam)B| 
	&=
		|\tilde{A} + \tilde{B}|
		\\	
	&\geq 
		( |\tilde{A}|^{\frac 1 d} + |\tilde{B}|^{\frac 1 d} )^d
		\\
	&=  
		\left( |\lam A|^{\frac 1 d} + |(1-\lam)B|^{\frac 1 d} \right)^d
		\\
	&=
		\left( \lam |A|^{\frac 1 d} + (1-\lam)|B|^{\frac 1 d} \right)^d,
\end{align*}
where the last equality is by homogeneity of the Lebesgue measure.  Thus \eqref{trad} $\Rightarrow$ \eqref{sharp}.  It remains to prove that \eqref{min} $\Rightarrow$ \eqref{trad}.  First notice that \eqref{min} is equivalent to 
\begin{align*}
	|A+B| 
		&\geq 
			\min \{|A/\lam|, |B/(1-\lam)| \}
			\\
		&=
			\min \{ |A|/\lam^d , |B|/(1-\lam)^d \}.
\end{align*}
It is easy to see that the right hand side is maximized when $|A|/\lam^d = |B|/(1-\lam)^d$, or
\[
	\lam = \frac{ |A|^{\frac 1 d} }{|A|^{\frac 1 d} +  |B|^{\frac 1 d}}.
\]
Inserting $\lam$ into the above yields \eqref{trad}.

\vspace{.1in}
\noindent{\bf Entropy Power Inequality (EPI)}:
In order to state the Entropy Power Inequality (EPI), let us first explain what is meant by entropy power.
When random variable $X=(X_1,\ldots,X_d)$ has density $f(x)$ on $\RL^d$, the {\it entropy} of $X$ is 
\be\label{eq:SNE}
h(X)=h(f):= -\int_{\RL^d} f(x)\log f(x)dx= \E[-\log f(X)] .
\ee
This quantity is sometimes called the Shannon-Boltzmann entropy or the differential entropy (to distinguish
it from the discrete entropy functional that applies to probability distributions on a countable set).
The {\it entropy power} of $X$ is 
$
N(X)=e^{\frac{2h(X)}{d}}.
$
As is usual, we abuse notation and write $h(X)$ and $N(X)$,
even though these are functionals depending only on the density
of $X$ and not on its random realization.
The entropy power $N(X)\in [0,\infty]$ can be thought of as a ``measure of randomness''. 
It is an (inexact) analogue of volume: 
if $U_A$ is uniformly distributed on a bounded Borel set $A$, then it is easily checked that 
$h(U_A)=\log |A|$ and hence $N(U_A)=|A|^{2/d}$.
The reason we don't define entropy power by $e^{h(X)}$ (which would yield a value of $|A|$ for the entropy power of $U_A$)
is that the ``correct'' comparison is not to uniforms but to Gaussians.
This is because just as Euclidean balls are special among subsets of $\RL^d$, 
Gaussians are special among distributions on $\RL^d$.
Indeed, the reason for the appearance of the  functional $|A|^{\dth}$ 
in the BMI is because this functional is (up to a universal constant)
the radius of the ball that has the same volume as $A$, i.e., $|A|^{\dth}$ may be thought of as
(up to a universal constant) the ``effective radius'' of $A$.
To develop the analogy for random variables, observe that when $Z\sim N(0,\sigma^{2}I)$
(i.e., $Z$ has the Gaussian distribution with mean 0 and covariance matrix that is a multiple of the identity), 
the entropy power of $Z$ is $N(Z)=(2\pi e) \sigma^{2}$.
Thus the entropy power of $X$ is (up to a universal constant)
the variance of the isotropic normal that has the same entropy as $X$,
i.e., if $Z\sim N(0,\sigma_Z^{2}I)$ and $h(Z)=h(X)$, then
\ben
N(X)=N(Z)=(2\pi e) \sigma_{Z}^{2} .
\een
Looked at this way, entropy power is the ``effective variance'' of a random variable,
exactly as volume raised to $1/d$ is the effective radius of a set.

The EPI states that for any two independent random vectors $X$ and $Y$ in $\R^d$ such that the entropies of $X, Y$ and $X+Y$ exist, 
\ben
N(X+Y) \geq N(X) + N(Y) .
\een
The EPI was stated by Shannon \cite{Sha48} with an incomplete proof; the first complete proof
was provided by Stam \cite{Sta59}. The EPI plays an extremely important role in the field
of Information Theory, where it first arose and was used (first by Shannon, and later by many others)
to prove statements about the fundamental limits of communication over various models
of communication channels. Subsequently it has also been recognized as an extremely useful
inequality in Probability Theory, with close connections to the logarithmic Sobolev
inequality for the Gaussian distribution as well as to the Central Limit Theorem. 
We will not further discuss these other motivations for the study of the EPI
in this paper, although we refer the interested reader to \cite{Joh04:book, Mad17} for more
on the connections to central limit theorems.

It should be noted that one insightful way to compare the BMI and EPI is to think of the latter
as a ``99\% analogue in high dimensions'' of the former, in the sense that looking at 
most of the Minkowski sum of the supports of a large number of independent copies of the two 
random vectors effectively yields the EPI via a simple instance of 
the asymptotic equipartition property or Shannon-McMillan-Breiman theorem.
A rigorous argument is given by Szarek and Voiculescu \cite{SV00} (building on \cite{SV96}),
a short intuitive explanation of which can be found in an answer of Tao to a MathOverflow question\footnote{See 
{\tt http://mathoverflow.net/questions/167951/entropy-proof-of-brunn-minkowski-inequality.}}.
The key idea of \cite{SV00} is to use not the usual BMI but a ``restricted'' version of it where it is
the exponent $2/d$ rather than $1/d$ that shows up\footnote{We mention in passing that Barthe \cite{Bar99} also proved a restricted version of the PLI.
An analogue of ``restriction'' for the EPI would involve some kind of weak dependence between summands;
some references to the literature on this topic are given later.}.

\vspace{.1in}
\noindent{\bf R\'enyi entropies.}
Unified proofs can be given of the EPI and the BMI in two different ways, both of which
may be thought of as providing extensions of the EPI to R\'enyi entropy. We will discuss both of
these later; for now, we only introduce the notion of  R\'enyi entropy.
For a $\R^d$-valued random variable $X$ with probability density function $f$,
define its R\'enyi entropy of order $p$ (or simply $p$-R\'enyi entropy) by
\begin{eqnarray}\label{Def:Renyi}
h_p(X)=h_p(f):=\frac{1}{1-p}\log\left(\int_{\mathbb{R}^d}f^p(x)dx\right) ,
\end{eqnarray}
if $p\in (0,1)\cup (1,\infty)$.
Observe that, defining $h_1$ ``by continuity'' and using l'Hospital's rule, $h_1(X)=h(X)$ is the (Shannon-Boltzmann) entropy.
Moreover, by taking limits,
\ben\begin{split}
h_0(X) &=  \log |\supp(f)| ,\\
h_\infty(X) &= -\log \|f\|_\infty ,
\end{split}\een
where $\supp(f)$ is the support of $f$ (i.e., the smallest closed set such that $f$ is zero outside it),
and $\|f\|_\infty$ is the usual $L^\infty$-norm of $f$ (i.e., the essential supremum with respect to Lebesgue measure).
We also define the $p$-R\'enyi entropy power by $N_p(X)=e^{\frac{2h_p(X)}{d}}$, so that the
usual entropy power $N(X)=N_1(X)$ and for a random variable $X$ whose support is $A$, $N_0(X)=|A|^{2/d}$.

\vspace{.1in}
\noindent{\bf Conventions.}
Throughout this paper, {\it we assume that all random variables considered have densities with respect to Lebesgue measure}.
While the entropy of $X$ can be meaningfully set to $-\infty$ when the distribution of $X$ does not possess a density, for the most part
we avoid discussing this case. Also, when $X$ has probability density function $f$, we write $X\sim f$. 

For real-valued functions $A,B$ we will use the notation $A \lesssim B$ when $A(z) \leq C B(z)$ 
for some positive constant $C$ independent of $z$.  For our purposes this will be most interesting 
when $A$ and $B$ are in some way determined by dimension.

\vspace{.1in}
\noindent{\bf Organization.}
This survey is organized as follows. In Section~\ref{sec:epi}, we review various statements
and variants of the EPI, first for the usual Shannon-Boltzmann entropy in Section~\ref{sec:epi-s}
and then for $p$-R\'enyi entropy in Section~\ref{sec:epi-r}, focusing on the $\infty$-R\'enyi
case in Section~\ref{sec:epi-inf}. In Section~\ref{sec:repi}, we explore
what can be said about inequalities that go the other way, under convexity constraints on the probability measures
involved. We start by recalling the notions of $\kappa$-concave measures and functions in Section~\ref{sec:cvx}.
In Section~\ref{sec:repi-pos}, we discuss reverse EPI's that require invoking a linear transformation (analogous to the reverse Brunn-Minkowski
inequality of V.~Milman), and explicit choices of linear transformations that can be used are discussed in Section~\ref{ss:spl-pos}.
The three intermediate subsections focus on three different approaches to reverse
R\'enyi EPI's that do not require invoking a linear transformation.
Finally we discuss the relationship between integral and entropic liftings, in the context of the Blashke-Santal\'o inequality
in Section~\ref{sec:reln}, and end with some concluding remarks on nonlinear and discrete analogs in Section~\ref{sec:concl}.

\section{Entropy Power Inequalities}
\label{sec:epi}

\subsection{Some Basic Observations}
\label{sec:renyi}

Before we discuss more sophisticated results, let us recall some basic properties of R\'enyi entropy.

\begin{thm}\label{thm:CdtEntp}
For independent $\R^d$-valued random variables $X$ and $Y$, and any $p \in [0,\infty]$,
\begin{align*}
N_p(X+Y) \geq \max \{N_p(X), N_p(Y)\}.
\end{align*}
\end{thm}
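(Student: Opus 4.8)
The plan is to establish the one-sided inequality $h_p(X+Y)\ge h_p(X)$ for every $p\in[0,\infty]$; since addition is symmetric this also yields $h_p(X+Y)\ge h_p(Y)$, and because $N_p(Z)=e^{2h_p(Z)/d}$ is an increasing function of $h_p(Z)$, taking the maximum of the two bounds proves the theorem. Write $X\sim f$ and $Y\sim g$, so that $X+Y$ has density $(f\ast g)(x)=\int_{\R^d}f(x-y)\,g(y)\,dy$. The single observation driving everything is that $f\ast g$ is a mixture of the translates $\{f(\,\cdot\,-y)\}_{y\in\R^d}$ of $f$ with mixing measure $g(y)\,dy$, and that each such translate has the same $p$-th moment $\int f(\,\cdot\,-y)^p=\int f^p$ as $f$.

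For $p\in(0,1)$ the function $t\mapsto t^p$ is concave, so Jensen's inequality against the probability measure $g(y)\,dy$ gives, for each $x$, $(f\ast g)(x)^p\ge\int f(x-y)^p\,g(y)\,dy$; integrating in $x$ and using Tonelli's theorem together with translation-invariance of Lebesgue measure yields $\int(f\ast g)^p\ge\int f^p$. Since $\frac{1}{1-p}>0$, applying $\frac{1}{1-p}\log(\cdot)$ to both sides gives $h_p(f\ast g)\ge h_p(f)$. For $p\in(1,\infty)$ the same computation with the convex function $t\mapsto t^p$ reverses the Jensen step to $\int(f\ast g)^p\le\int f^p$, but now $\frac{1}{1-p}<0$, so $\frac{1}{1-p}\log(\cdot)$ is order-reversing and we again obtain $h_p(f\ast g)\ge h_p(f)$; equivalently, for $p\ge1$ one may simply invoke Young's convolution inequality $\|f\ast g\|_p\le\|f\|_p\|g\|_1=\|f\|_p$. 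From the ``mixture'' viewpoint this is just the statement that $f\mapsto\int f^p$ is concave for $p<1$ and convex for $p>1$, which in each case puts the $p$-th moment of the mixture on the side of the common value that, after composing with the (order-preserving or order-reversing) map $\frac{1}{1-p}\log(\cdot)$, points the right way. All integrals above are of nonnegative functions, so Tonelli applies with no integrability hypothesis, and the inequalities remain valid, under the obvious conventions, when a quantity equals $+\infty$.

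The three endpoints are disposed of directly. For $p=1$ this is the classical inequality $h(X+Y)\ge h(X+Y\mid Y)=h(X\mid Y)=h(X)$, using that conditioning does not increase differential entropy and that the conditional law of $X+Y$ given $Y=y$ is that of $X$ shifted by $y$. For $p=\infty$, $(f\ast g)(x)=\int f(x-y)\,g(y)\,dy\le\|f\|_\infty\int g=\|f\|_\infty$, so $h_\infty(f\ast g)=-\log\|f\ast g\|_\infty\ge-\log\|f\|_\infty=h_\infty(f)$. For $p=0$ one may either note that $\supp(f\ast g)$ contains a translate of $\supp f$, whence $|\supp(f\ast g)|\ge|\supp f|$, or simply let $p\to0^+$ in the case just established, using the routine fact that $h_p(Z)\to h_0(Z)$. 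Since the heart of the argument is a single application of Jensen's inequality, there is no genuine obstacle here; the only points needing a little care are the uniform bookkeeping around possibly infinite values of $\int f^p$ (handled by working throughout with nonnegative integrands and Tonelli) and the fact that the endpoints $p\in\{0,1,\infty\}$, though each a one-line estimate, require separate treatment. It is worth emphasizing what the proof does \emph{not} use: no convexity or log-concavity hypothesis on the laws of $X$ and $Y$ — precisely the regime in which the reverse inequalities become delicate, as taken up in the later sections.
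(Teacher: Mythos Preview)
Your proof is correct and follows essentially the same approach as the paper: the core step for $p\in(0,1)\cup(1,\infty)$ is the identical Jensen-against-the-mixing-measure argument (yielding $\int(f\ast g)^p\gtrless\int f^p$ according to concavity/convexity of $t\mapsto t^p$, then composing with the order-preserving/reversing map), and the endpoint cases $p\in\{0,1,\infty\}$ are handled the same way (conditional entropy for $p=1$, the bound $\|f\ast g\|_\infty\le\|f\|_\infty$ for $p=\infty$, and a support-volume argument for $p=0$). Your framing in terms of mixtures and the remark that Young's inequality gives an alternative for $p\ge1$ are nice touches, but the substance is the same.
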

\begin{proof}
Let $X\sim f$ and $Y\sim g$. For $p \in (1, \infty)$, we have the following with the inequality delivered by Jensen's inequality:
\begin{align*}
\int (f * g )^p(x) dx 
	&=
		\int \left( \int f(x-y) g(y) dy \right)^p dx
		\\
	&\leq
		\int \int f^p(x-y) g(y) dy dx
		\\
	&=
		\int \left( \int f^p(x-y) dx\right)  g(y) dy
		\\
	&=
		\int f^p(x) dx.
\end{align*}
Inserting the inequality into the order reversing function $\phi(z) = z^{\frac 2 {d(1-p)}}$ we have our result.

The case that $p \in (0,1)$ is similar, making note that now $z^p$ is concave while $z^{\frac 2 {d(1-p)}}$ is order preserving. 
For $p=1$, we can give a probabilistic proof: applying the nonnegativity of mutual information, which in particular
implies that conditioning reduces entropy (see, e.g., \cite{CT06:book}),
\ben
h(X+Y)\ge h(X+Y|Y)=h(X|Y)=h(X) ,
\een
where we used translation-invariance of entropy for the first equality and independence of $X$ and $Y$ for the second.
For $p=0$, the conclusion simply follows by the fact that
$|A+B|\ge \max \{|A|,|B| \}$ for any nonempty Borel sets $A$ and $B$; this may be seen by translating $B$ so that 
it contains 0, which does not affect any of the volumes and in which case $A+B\supset A$. 
For $p=\infty$, the conclusion follows from H\"older's inequality:
\begin{align*}
\int f(x-y)g(y)dy\le \|g\|_\infty \| f \|_1=\|g\|_\infty .
\end{align*}
Thus we have the theorem for all values of $p\in [0,\infty]$.
\end{proof}

We now observe that for any fixed random vector, the R\'enyi entropy of order $p$ is non-increasing in $p$.

\begin{lem}\label{lem:monotonicity}
For a $\mathbb{R}^d$-valued random variable $X$, and $0 \leq q < p \leq \infty$, we have
\begin{align*}
	N_q(X) \ge N_p(X).
\end{align*}
\end{lem}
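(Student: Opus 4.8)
The plan is to reduce the claim to the statement that the R\'enyi entropy $h_p(X)$ is non-increasing in $p$ on $[0,\infty]$; since $t\mapsto e^{2t/d}$ is strictly increasing, this is equivalent to $N_q(X)\ge N_p(X)$. Writing $X\sim f$ and, for $p\in(0,\infty)$, setting $\Lambda(p):=\log\int_{\mathbb{R}^d}f^p$, two elementary facts do all the work: first, $\Lambda(1)=\log\int f=0$ because $f$ is a probability density; second, $\Lambda$ is convex on the interval where it is finite, since H\"older's inequality gives $\int f^{\theta p_1+(1-\theta)p_2}\le\big(\int f^{p_1}\big)^{\theta}\big(\int f^{p_2}\big)^{1-\theta}$ for $\theta\in(0,1)$.

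With these in hand, the main step is a one-line observation: for $p\ne 1$,
\[
h_p(f)=\frac{\Lambda(p)}{1-p}=-\,\frac{\Lambda(p)-\Lambda(1)}{p-1},
\]
so $h_p(f)$ is exactly minus the slope of the secant line of the graph of $\Lambda$ joining $(1,\Lambda(1))$ to $(p,\Lambda(p))$. The monotonicity of difference quotients of a convex function then says this secant slope is non-decreasing in $p$, so $h_p(f)$ is non-increasing in $p$ on $(0,\infty)$; the value at $p=1$, obtained by l'Hospital, is $-\Lambda'(1)=-\int f\log f=h_1(f)$, which fits continuously. To reach the endpoints I would pass to limits: $\int f^p\to|\supp(f)|$ as $p\downarrow0$ and $\|f\|_p\to\|f\|_\infty$ as $p\uparrow\infty$ give $h_p\to h_0$ and $h_p\to h_\infty$, so in fact $h_0=\sup_p h_p$ and $h_\infty=\inf_p h_p$ and the monotonicity extends to all of $[0,\infty]$. (If one prefers to avoid the limiting argument, the endpoint comparisons admit direct one-line proofs: $\int f^p\le|\supp(f)|^{1-p}$ for $p<1$ by Jensen against the uniform probability measure on $\supp(f)$, and $\int f^p\le\|f\|_\infty^{p-1}$ for $p>1$.)

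The step I expect to need the most care is the finiteness bookkeeping rather than any inequality. If $h_q(X)=+\infty$ then $N_q(X)=\infty$ and the conclusion is trivial, and if $h_p(X)=-\infty$ then $N_p(X)=0\le N_q(X)$; so one may assume both are finite, which for $q,p\ne1$ means $\int f^q,\int f^p\in(0,\infty)$. Together with $\int f^1=1$, the log-convexity of $r\mapsto\int f^r$ then forces $\int f^r<\infty$ for every $r$ in the smallest interval containing $\{q,1,p\}$, in particular on $[q,p]$; hence $\Lambda$ is a genuine finite convex function there and the secant-slope argument applies verbatim. The remaining degenerate cases ($q=1$ or $p=1$) are handled by the continuity at $1$ already noted, or by chaining $h_q\ge h_1\ge h_p$.
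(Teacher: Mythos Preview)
Your proof is correct and is essentially the same argument as the paper's: the paper writes $h_p(X)=-\log\|f(X)\|_{L^{p-1}(f\,dx)}$ and invokes the monotonicity of $L^r$-norms on a probability space, which is exactly your secant-slope-of-a-convex-function observation applied to the log-convex map $r\mapsto\log\mathbb{E}\,f(X)^r$ anchored at $r=0$ (equivalently, to your $\Lambda$ anchored at $p=1$). Your write-up is more careful about finiteness and the endpoint cases $q=0$, $p=\infty$, which the paper leaves implicit.
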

\begin{proof}
The result follows by expressing, for $X \sim f$,
\[
h_p(X) = \frac{\log( \int f^p )}{1-p} = -\log \mathbb{E} \|f(X) \|_{p-1}
\]
and using the ``increasingness" of $p$-norms on probability spaces,
which is nothing but an instance of H\"older's inequality.
\end{proof}

\begin{defn}\label{defn:lc-fn}
A function $f: \R^d\ra [0,\infty)$
is said to be {\it log-concave} if 
\be\label{defn:lc}
f(\alpha x +(1-\alpha)y) \geq f(x)^{\alpha} f(y)^{1-\alpha} ,
\ee 
for each $x,y\in \R^d$ and each $0\leq \alpha\leq 1$. 
\end{defn}

If a probability density function $f$ is log-concave, we will
also use the adjective ``log-concave'' for a random variable $X$ distributed according
to $f$, and for the probability measure induced by it. 
Log-concavity has been deeply studied in probability, statistics, optimization and geometry, 
and is perhaps the most natural notion of convexity for probability density functions.

In general, the monotonicity of Lemma~\ref{lem:monotonicity} relates two different
R\'enyi entropies of the same distribution in one direction, but there is no reason for a 
bound to exist in the other direction.
Remarkably, for log-concave random vectors, all R\'enyi entropies are comparable
in both directions.

\begin{lem}\label{lem:theLemma} \cite{MWB16})
If a random variable $X$ in $\R^d$ has log-concave density $f$, then
for $p\geq q>0$,
\bee
h_q(f)-h_p(f)\leq d\frac{\log q}{q-1}-d\frac{\log p}{p-1} ,
\ene
with equality if $f(x)=e^{-\sum_{i=1}^d x_i}$ on the positive orthant and 0 elsewhere.
\end{lem}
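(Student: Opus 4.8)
The plan is to reduce the multidimensional statement to the one-dimensional case and then compute directly. First I would show that it suffices to assume $f$ is a one-dimensional log-concave density: by a standard fact (see, e.g., the discussion of log-concavity closure under marginals going back to Pr\'ekopa), the R\'enyi entropies of a $d$-dimensional log-concave $f$ are related to those of products, and more to the point, the extremizer $e^{-\sum_i x_i} 1_{\R_+^d}$ is a product of one-dimensional exponential densities; since R\'enyi entropy is additive over independent coordinates ($h_r(X_1,\dots,X_d)=\sum_i h_r(X_i)$ when the $X_i$ are independent), the $d$-dimensional bound with the factor $d$ follows from the one-dimensional bound with factor $1$ applied coordinatewise — but one must be careful, because a general log-concave $f$ on $\R^d$ is not a product. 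The cleaner route is to use a localization/rearrangement argument: the quantity $h_q(f)-h_p(f)$ for log-concave $f$ is maximized (over all log-concave densities in all dimensions, after normalizing) at a one-dimensional exponential, which can be established by a symmetrization-to-the-extreme-point argument on the class of log-concave measures. I would follow the approach of \cite{MWB16}: reduce to $d=1$, and within $d=1$ reduce to the extremal density using the fact that among one-dimensional log-concave densities with prescribed value of $\int f^p$, the one maximizing $\int f^q$ (for $q<p$) is a truncated exponential, by a variational argument.

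The core one-dimensional computation is then elementary. For the density $f(x)=e^{-x}1_{x\ge 0}$ one computes $\int_0^\infty f^r(x)\,dx = \int_0^\infty e^{-rx}\,dx = 1/r$, hence
\[
h_r(f) = \frac{1}{1-r}\log\frac{1}{r} = \frac{\log r}{r-1},
\]
so $h_q(f)-h_p(f) = \frac{\log q}{q-1}-\frac{\log p}{p-1}$, matching the claimed equality case (with the factor $d$ appearing from the $d$-fold product). The monotonicity (Lemma~\ref{lem:monotonicity}) already tells us the left side is nonnegative; what the lemma adds is the matching upper bound.

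For the reduction step, the key tool I would invoke is the following: if $X$ is one-dimensional log-concave, then by affine invariance of the difference $h_q-h_p$ (both $h_q$ and $h_p$ shift by $\log|a|$ under $x\mapsto ax+b$, so the difference is invariant) we may normalize $f$, say, to have $\|f\|_\infty = f(0) = 1$. Writing $f(x)=e^{-\varphi(x)}$ with $\varphi$ convex, $\varphi\ge 0$, $\varphi(0)=0$, one wants to show $\int e^{-q\varphi} / (\int e^{-p\varphi})^{(q-1)/(p-1)}$ is maximized when $\varphi(x)=|x|$ scaled, i.e., $\varphi$ is a "tent" or one-sided linear function. This is a convexity/rearrangement statement about the moment ratios of log-concave functions; it can be proven by noting that replacing $\varphi$ on a sublevel set by a linear interpolant only increases the relevant ratio, pushing $\varphi$ toward a piecewise-linear and ultimately linear profile.

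The main obstacle I anticipate is precisely this reduction to the extremal density — verifying rigorously that the ratio of $L^p$ quasi-norms is extremized at the one-sided exponential among all log-concave densities, uniformly over dimension. The clean way is to cite the localization technique (Kannan–Lov\'asz–Simonovits style, or the Fradelizi–Guédon machinery on extreme points of classes of log-concave measures under linear constraints), which reduces optimization of a ratio of two integral functionals over log-concave measures to measures supported on a segment with log-affine density — exactly the exponential. Once that reduction is in hand, everything else is the one-line computation above plus bookkeeping of the dimensional factor. So my proof would be: (i) invoke affine invariance and additivity over independent coordinates to reduce to $d=1$; (ii) invoke a localization argument to reduce to one-sided exponential densities; (iii) compute $h_r = \frac{\log r}{r-1}$ directly; (iv) conclude, noting the equality case is the stated product of exponentials.
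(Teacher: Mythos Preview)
The paper does not actually prove this lemma; it is stated with a citation to \cite{MWB16} and used as a black box. So there is no ``paper's own proof'' to compare against, and your proposal must be judged on its own.

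Your argument has a genuine gap in the reduction steps. Step (i), reducing to $d=1$ via ``affine invariance and additivity over independent coordinates,'' does not work: you yourself note that a general log-concave density on $\R^d$ is not a product, and affine invariance of $h_q-h_p$ does not change the dimension. Step (ii), invoking KLS or Fradelizi--Gu\'edon localization, is also problematic: those techniques extremize \emph{linear} functionals (or ratios thereof) of a log-concave measure, whereas $f\mapsto\int f^p$ is nonlinear in the density $f$. Moreover, the output of localization would be a measure supported on a segment in $\R^d$, which is singular with respect to $d$-dimensional Lebesgue measure and has all R\'enyi entropies equal to $-\infty$, so the reduction does not even make sense as stated. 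With two constraints ($\int f=1$ and $\int f^p$ fixed) you would in any case land on truncated exponentials, not pure ones, leaving a further optimization you have not addressed.

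There is a much cleaner route that avoids all of this and works directly in dimension $d$. Write $f=e^{-\varphi}$ with $\varphi$ convex. After the substitution $y=rx$,
\[
r^d\int_{\R^d} f(x)^r\,dx \;=\; \int_{\R^d} e^{-r\varphi(y/r)}\,dy.
\]
The map $(r,y)\mapsto r\varphi(y/r)$ on $(0,\infty)\times\R^d$ is the perspective of the convex function $\varphi$, hence convex; thus the integrand is jointly log-concave, and by Pr\'ekopa's theorem the marginal $r\mapsto r^d\int f^r$ is log-concave on $(0,\infty)$. Setting $G(r)=-\log\bigl(r^d\int f^r\bigr)$, this says $G$ is convex with $G(1)=0$, so the slope $r\mapsto \frac{G(r)-G(1)}{r-1}=h_r(f)-d\,\frac{\log r}{r-1}$ is non-decreasing. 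Applying this at $q\le p$ gives exactly the inequality, and equality holds when $G$ is affine, which your computation verifies for the product of one-sided exponentials. This is presumably the argument in \cite{MWB16}; your equality-case computation is correct and is the only part of your proposal that survives intact.
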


This lemma generalizes  the following sharp inequality for log-concave distributions obtained in \cite{BM11:it}:
\be\label{inq:lem}
h(X)\le d+h_\infty(X) .
\ee

In fact, Lemma~\ref{lem:theLemma} has an extension to the larger class (discussed later) of $s$-concave measures
with $s<0$; preliminary results in this direction are available in \cite{BM11:it} and sharp results
obtained in \cite{BFLM16}.

\subsection{The Shannon-Stam EPI and its variants}
\label{sec:epi-s}

\subsubsection{The Basic EPI}

The  EPI has several  equivalent formulations; we collect these together with minimal conditions below.

\begin{thm}\label{thm:epi}
Suppose $X$ and $Y$ are independent $\R^d$-valued random variables such that $h(X), h(Y)$ and $h(X+Y)$ exist.
Then the following statements, which are equivalent to each other, are true:
\begin{enumerate}
\item 
We have
\be\label{inq:BasicEPIeqv} 
N(X+Y) \geq N(X) + N(Y) . \label{eq:BasicEPIeqv}
\ee
\item 
For any $\lambda\in [0,1]$,
\be\label{inq:eqv1}
h(\sqrt{\lambda}X+\sqrt{1-\lambda}Y)\ge \lambda h(X)+(1-\lambda)h(Y)\label{eq:eqv1}.
\ee
\item 
Denoting by $X^G$ and $Y^G$  independent, isotropic\footnote{By isotropic here, we mean spherical symmetry, or equivalently, that the covariance matrix is taken to be a scalar multiple of the identity matrix.}, Gaussian random variables with $h(X^G)=h(X)$ and $h(Y^G)=h(Y)$, one has
\be\label{inq:eqv2}
h(X+Y)\ge h(X^G+Y^G) .\label{eq:eqv2}
\ee
\end{enumerate}
In each case, equality holds if and only if $X$ and $Y$ are Gaussian random variables with proportional covariance matrices.
\end{thm}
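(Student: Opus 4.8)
\emph{Step 1: reduce to a single inequality.} I would first dispose of the equivalences, which are purely formal and need nothing beyond the existence of the three entropies. Everything rests on the scaling identity $h(cX)=h(X)+d\log c$ (equivalently $N(cX)=c^{2}N(X)$) for $c>0$, together with the fact that an isotropic Gaussian $Z^{G}\sim N(0,\sigma^{2}I)$ has $N(Z^{G})=2\pi e\,\sigma^{2}$. Choosing $X^{G}\sim N\!\big(0,\tfrac{N(X)}{2\pi e}I\big)$ and $Y^{G}\sim N\!\big(0,\tfrac{N(Y)}{2\pi e}I\big)$ makes $X^{G}+Y^{G}$ isotropic Gaussian with $N(X^{G}+Y^{G})=N(X)+N(Y)$, so \eqref{eq:eqv2} is literally \eqref{eq:BasicEPIeqv} after applying $\tfrac d2\log$. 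For \eqref{eq:BasicEPIeqv}$\,\Rightarrow\,$\eqref{eq:eqv1}, apply \eqref{eq:BasicEPIeqv} to $\sqrt\lambda\,X$ and $\sqrt{1-\lambda}\,Y$, use $N(\sqrt\lambda\,X)=\lambda N(X)$, and pass from the arithmetic to the geometric mean by concavity of $\log$. For the converse, apply \eqref{eq:eqv1} to $X/\sqrt\lambda$ and $Y/\sqrt{1-\lambda}$ so that its left side becomes $h(X+Y)$, rewrite the right side by the scaling identity, and maximise the resulting lower bound over $\lambda\in(0,1)$; the maximiser is $\lambda=N(X)/(N(X)+N(Y))$, at which point the bound reads exactly $N(X+Y)\ge N(X)+N(Y)$. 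So it remains to establish one formulation.

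\emph{Step 2: the inequality.} I would prove \eqref{eq:BasicEPIeqv} by Stam's heat-flow argument. Let $Z_{1},Z_{2}$ be independent standard Gaussians in $\R^{d}$, independent of $(X,Y)$, and for $t>0$ put $X_{t}=X+\sqrt t\,Z_{1}$, $Y_{t}=Y+\sqrt t\,Z_{2}$, so that $X_{t}+Y_{t}=(X+Y)+\sqrt{2t}\,Z$ for a standard Gaussian $Z$; mollification by the heat semigroup makes all three densities smooth with finite Fisher information $J(\cdot)$. Combining the de Bruijn identity $\tfrac{d}{dt}h(X_{t})=\tfrac12 J(X_{t})$ (and its analogues for $Y_{t}$ and $X_{t}+Y_{t}$) with the Fisher information inequality $J(X_{t}+Y_{t})^{-1}\ge J(X_{t})^{-1}+J(Y_{t})^{-1}$ for independent summands shows that $g(t):=N(X_{t}+Y_{t})-N(X_{t})-N(Y_{t})$ is non-increasing in $t$; since $g(t)\to0$ as $t\to\infty$, one obtains $g(0)\ge0$ by letting $t\downarrow0$ and using lower semicontinuity of entropy along the heat flow. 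Alternatively one may cite Stam \cite{Sta59}, or use Lieb's proof via the sharp Young inequality \cite{Lie78} or the I-MMSE proof; each alternative comes with its own equality analysis.

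\emph{Step 3: the equality case.} One direction is a one-line computation: if $X\sim N(0,\Sigma)$ and $Y\sim N(0,c\Sigma)$ with $c>0$, then $X+Y\sim N(0,(1+c)\Sigma)$ and $N(X+Y)=2\pi e\,(1+c)\,|\Sigma|^{1/d}=N(X)+N(Y)$; via Step 1 this yields equality in \eqref{eq:eqv2} and, at the optimal $\lambda$, in \eqref{eq:eqv1}. For the converse, equality in \eqref{eq:BasicEPIeqv} forces $g\equiv0$ in Step 2, hence equality in the Fisher information inequality $J(X_{t}+Y_{t})^{-1}=J(X_{t})^{-1}+J(Y_{t})^{-1}$ for every $t>0$; examining the Cauchy--Schwarz step in the proof of that inequality, equality for all $t$ forces the score functions of $X_{t}$ and $Y_{t}$ to be proportional affine maps, so $X_{t}$ and $Y_{t}$ are Gaussian for every $t>0$. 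Since a convolution $\mu*\gamma_{t}$ can be Gaussian only if $\mu$ is (Cram\'er's decomposition theorem, or a characteristic-function argument), $X$ and $Y$ are themselves Gaussian, and running the computation above in reverse forces their covariances to be proportional.

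\emph{Main obstacle.} The forward inequality is classical and the equivalences are routine; the genuine work is the converse half of the equality characterisation. Three points need care: (a)~propagating equality at $t=0$ to equality in the Fisher information inequality at \emph{every} $t>0$ under only the hypothesis that $h(X),h(Y),h(X+Y)$ exist --- this is exactly why one mollifies by the heat semigroup, which supplies the finite Fisher information and smooth densities that the bare hypotheses do not; (b)~carrying out the equality analysis of the Fisher information inequality rigorously, excluding degenerate or unbounded-score configurations; and (c)~controlling the limits $t\downarrow0$ and $t\to\infty$ (the latter needing a small computation with the covariances, after a truncation if second moments are infinite) so that neither the inequality nor the extremal structure is lost. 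One should also read equality in \eqref{eq:eqv1} with a little care: at a fixed $\lambda\in(0,1)$ the arithmetic--geometric-mean step of Step 1 is tight only when the proportional covariances are in fact equal, so that case is most naturally phrased at the optimal $\lambda$.
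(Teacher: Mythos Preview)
Your treatment of the equivalences in Step~1 is essentially the same as the paper's: both arguments rest entirely on the scaling identity $N(cX)=c^{2}N(X)$, the arithmetic--geometric mean inequality, and the exact additivity $N(X^{G}+Y^{G})=N(X^{G})+N(Y^{G})$ for isotropic Gaussians. The paper cycles \eqref{inq:BasicEPIeqv}$\Rightarrow$\eqref{inq:eqv1}$\Rightarrow$\eqref{inq:eqv2}$\Rightarrow$\eqref{inq:BasicEPIeqv}, whereas you link \eqref{inq:BasicEPIeqv} and \eqref{inq:eqv2} directly and handle \eqref{inq:eqv1} separately; the content is identical, and in particular your optimisation over $\lambda$ in \eqref{inq:eqv1}$\Rightarrow$\eqref{inq:BasicEPIeqv} is precisely the paper's step \eqref{inq:eqv1}$\Rightarrow$\eqref{inq:eqv2}. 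The paper also begins by disposing of the cases $h(X)$ or $h(Y)\in\{\pm\infty\}$, which you implicitly fold into Step~1.

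Where you genuinely diverge is in Steps~2 and~3: the paper does \emph{not} prove the inequality itself or analyse the equality case, but simply defers to the literature (Stam, Lieb, etc.). Your Stam/de Bruijn heat-flow argument and the equality analysis via Cram\'er's theorem are correct and standard, and supply what the paper deliberately omits. The technical caveats you flag under ``Main obstacle'' --- mollification to secure finite Fisher information, the $t\downarrow 0$ and $t\to\infty$ limits, and the equality propagation --- are the right places to be careful, and each has a well-trodden resolution in the references you cite. So your proposal is not merely an alternative to the paper's proof; it is a completion of it.
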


\begin{proof}
First let us show that we can assume $h(X),h(Y) \in (-\infty, \infty)$.  
By Theorem \ref{thm:CdtEntp} we can immediately
obtain $h(X+Y) \geq \max \{h(X),h(Y) \}$.    It follows that all three inequalities hold immediately in the case that $\max\{h(X), h(Y)\} = \infty$. 
Now assume that neither $h(X)$ nor $h(Y)$ take the value $+\infty$ and consider $\min\{h(X),h(Y)\} = -\infty$.  In this situation, 
the inequalities \eqref{inq:BasicEPIeqv} and \eqref{inq:eqv1} are immediate.  For \eqref{inq:eqv2}, in the case that $h(X) = -\infty$ we interpret $X^G$ as a Dirac point mass, and hence $h(X^G+Y^G) = h(Y^G) = h(Y) \leq h(X+Y)$.  

We now proceed to prove the equivalences.

\eqref{inq:BasicEPIeqv} $\Rightarrow$ \eqref{inq:eqv1}: 
Apply \eqref{inq:BasicEPIeqv}, substituting $X$ by $ \sqrt{\lambda}X$ and $Y$ by $\sqrt{1-\lambda}Y$ and use the homogeneity of entropy power to obtain
\ben
N(\sqrt{\lambda}X+\sqrt{1-\lambda}Y)\ge \lambda N(X) + (1-\lambda)N(Y).
\een
Apply the AM-GM inequality to the right hand side and conclude by taking logarithms.

\eqref{inq:eqv1} $\Rightarrow$ \eqref{inq:eqv2}:
Applying \eqref{inq:eqv1} in its exponentiated form $N(\sqrt{\lambda}X + \sqrt{1-\lambda}Y) \geq N^{\lambda}(X) N^{1-\lambda}(Y)$ after writing $X + Y = \sqrt{\lambda} (X/\sqrt{\lambda}) +  \sqrt{1-\lambda} (Y/\sqrt{1-\lambda})$ we obtain
\ben
N(X+Y)\ge \left(N\left(\frac{X}{\sqrt{\lambda}}\right)\right)^{\lambda}\left(N\left(\frac{Y}{\sqrt{1-\lambda}}\right)\right)^{1-\lambda}.
\een
Making use of the identity $N(X^G + Y^G) = N(X^G)+N(Y^G)$ and homogeneity again, we can evaluate the right hand side at $\lambda =N(X^G)/N(X^G+Y^G)$ to obtain exactly $N(X^G+Y^G)$, recovering the exponentiated version of \eqref{inq:eqv2}.\\

\eqref{inq:eqv2} $\Rightarrow$ \eqref{inq:BasicEPIeqv}: Using the exponentiated version of \eqref{inq:eqv2},
\ben
N(X+Y)\ge N(X^G+Y^G)=N(X^G)+N(Y^G)=N(X)+N(Y).
\een
Observe from the proof that a strict inequality in one statement implies a strict inequality in the rest.

What is left is to prove any of the 3 statements of the EPI when the entropies involved are finite.
There are many proofs of this available in the literature (see, e.g., \cite{Sta59, Bla65, Lie78, DCT91, SV00, Rio11}), 
and we will not detail any here, although we later sketch a proof via the sharp form of Young's convolution inequality.
\end{proof}

The conditions stated above cannot be relaxed, as observed by Bobkov and Chistyakov \cite{BC15:1},
who construct a distribution whose entropy exists but such that the entropy of the self-convolution does not exist. 
This, in particular, shows that the assumption for validity of the EPI stated for example in \cite{DCT91}
is incomplete-- existence of just $h(X)$ and $h(Y)$ is not sufficient. 
It is also shown in \cite{BC15:1}, however, that for any example where $h(X)$ exists but $h(X+X')$
does not (with $X'$ an i.i.d. copy of $X$), necessarily $h(X)=-\infty$, so that
it remains true that if the entropy exists and is a real number, then the entropy of the self-convolution
also exists.
They also have other interesting examples of the behavior of entropy on convolution:
\cite[Example 1 ]{BC15:1} constructs a distribution with entropy $-\infty$ such that
that the entropy of the self-convolution is a real number, and \cite[Proposition 5]{BC15:1}  constructs a distribution with finite entropy
such that its convolution with any distribution of finite entropy has infinite entropy.

\subsubsection{Fancier versions of the EPI}
\label{sss:fancy-epi}

Many generalizations and improvements of the EPI exist.
For three or more independent random vectors $X_i$, the EPI trivially implies that 
\begin{eqnarray}\label{inq:MULSNEPI}
N(X_1+\cdots +X_n)\ge \sum_{i=1}^nN(X_i) ,
\end{eqnarray}
with equality if and only if the random vectors are Gaussian and their covariance matrices are proportional to each other. 
In fact, it turns out that this can be refined, as shown by S.~Artstein, K.~Ball, Barthe and Naor \cite{ABBN04:1}:
\begin{eqnarray}\label{inq:RFMEPI}
N\left(\sum_{i=1}^{n}X_i\right)\ge \frac{1}{n-1}\sum_{j=1}^{n}N\bigg(\sum_{i\neq j}X_i\bigg) .
\end{eqnarray}
This implies the monotonicity of entropy in the Central Limit Theorem, which suggests that quantifying the
Central Limit Theorem using entropy or relative entropy is a particularly natural approach. More precisely,
if $X_1,\ldots, X_n$ are independent and identically distributed (i.i.d.) square-integrable random vectors, then
\begin{eqnarray}\label{inq:ECLT}
h\left(\frac{X_1+\cdots +X_n}{\sqrt{n}}\right)\le h\left(\frac{X_1+\cdots +X_{n-1}}{\sqrt{n-1}}\right).
\end{eqnarray}
Simpler proofs of \eqref{inq:RFMEPI} were given independently by \cite{MB06:isit, SS07, TV06}. Generalizations of 
\eqref{inq:RFMEPI} to arbitrary collections of subsets on the right side was given by \cite{MB07, MG09:isit},
and some further fine properties of the kinds of inequalities that hold for the entropy power of a sum
of independent random variables were revealed in \cite{MG16}. Let us mention a key result of this type due to \cite{MG09:isit}.
For a collection  $\collS$ of nonempty subsets of $[n]:=\{1,2,\cdots,n\}$, a function $\beta:\collS \to \Rpl$ 
is called a {\em fractional partition}\footnote{If there exists
a fractional partition $\beta$ for $\collS$ that is $\{0,1\}$-valued,
then $\beta$ is the indicator function for a partition of the set $[n]$
using a subset of $\collS$; hence the terminology.} if  for each $i\in [n]$, we have
$\sum_{\setS\in \collS:i\in \setS} \bs = 1$. Then the entropy power of 
convolutions is fractionally superadditive, i.e., if $X_1, \ldots, X_n$ are independent
$\R^d$-valued random variables, one has
\ben
N\bigg(\sum_{i=1}^n X_i\bigg) \geq \sumS \bs N\bigg(\sum_{i\in\setS} X_i\bigg) .
\een
This yields the usual EPI by taking $\collS$ to be the collection of all singletons
and $\bs\equiv 1$, and the inequality \eqref{inq:RFMEPI} by taking $\collS$
to be the collection of all sets of size $n-1$ and $\bs\equiv \frac{1}{n-1}$.

For i.i.d. summands in dimension 1, \cite{ABBN04:2} and \cite{JB04} prove an upper bound of the relative entropy 
between the distribution of the normalized sum and that of a standard Gaussian random variable. 
To be precise, suppose $X_1, \ldots, X_n$ are independent copies of a random variable $X$ with $\mbox{Var}(X)=1$,
and the density of $X$ satisfies a Poincar\'e inequality with constant $c$, i.e., for every smooth function $s$,
\ben
c \mbox{Var}(s(X)) \leq \E [\{s'(X)\}^2] .
\een
Then, for every $a\in \mathbb{R}^n$ with $\sum_{i=1}^na_i^2=1$ and $\alpha(a):=\sum_{i=1}^na_i^4$,
\begin{eqnarray}\label{inq:upbdRltvEntp}
h(G)-h\left(\sum_{i=1}^na_iX_i\right)\le \frac{\alpha(a)}{\frac{c}{2}+(1-\frac{c}{2})\alpha(a)}\left(h(G)-h(X)\right),
\end{eqnarray} 
where $G$ is a standard Gaussian random variable.
Observe that this refines the EPI since taking $c=0$ in the inequality \eqref{inq:upbdRltvEntp}
gives the EPI in the second form of Theorem~\ref{thm:epi}.
On the other hand, specializing \eqref{inq:upbdRltvEntp} to $n=2$ with $a_1=a_2= \frac{1}{\sqrt{2}}$, one obtains
a lower bound for $h\big( \frac{X_1+X_2}{\sqrt{2}} \big) - h(X)$ in terms of the relative entropy $h(G)-h(X)$
of $X$ from Gaussianity. Ball and Nguyen \cite{BN12} develop an extension of this latter inequality to general dimension 
under the additional assumption of log-concavity.

It is natural to ask if the EPI can be refined by introducing an error term that quantifies the gap
between the two sides in terms of how non-Gaussian the summands are.
Such estimates are referred to as ``stability estimates'' since they capture how stable the equality
condition for the inequality is, i.e., whether closeness to Gaussianity is guaranteed for the summands
if the two sides in the inequality are not exactly equal but close to each other. 
For the EPI, the first stability estimates were given by Carlen and Soffer \cite{CS91}, but 
these are qualitative and not quantitative (i.e., they do not give numerical bounds on distance
from Gaussianity of the summands when there is near-equality in the EPI, but they do assert
that this distance must go to zero as the deficit in the inequality goes to zero).
Recently Toscani~\cite{Tos15:1} gave a quantitative stability estimate when the summands are restricted to have
log-concave densities: For independent random vectors $X$ and $Y$ with log-concave densities,
\begin{eqnarray}\label{inq:TosImprEPI}
N(X+Y)\ge \left(N(X)+N(Y)\right)R(X,Y),
\end{eqnarray}
where the quantity $R(X,Y)\ge 1$ is a somewhat complicated quantity that we do not define here and can be interpreted as a 
measure of non-Gaussianity of $X$ and $Y$. Indeed, \cite{Tos15:1} shows that $R(X,Y)=1$ if and only if $X$ and 
$Y$ are Gaussian random vectors, but leaves open the question of whether $R(X,Y)$ can be related to some
more familiar distance from Gaussianity. Even more recently, Courtade, Fathi and Pananjady \cite{CFP16}
showed that if $X$ and $Y$ are uniformly log-concave (in the sense that the densities of both are of the form
$e^{-V}$ with the Hessian of $V$ bounded from below by a positive multiple of the identity matrix), then the deficit
in the EPI is controlled in terms of the quadratic Wasserstein distances between the distributions of $X$ and $Y$ and
Gaussianity.

There are also strengthenings of the EPI when one of the summands is Gaussian.
Set $X^{(t)}=X+\sqrt{t}Z$,
with $Z$ a standard Gaussian random variable independent of $X$.
Costa \cite{Cos85b} showed that for any $t\in [0,1]$, 
\be\label{costa-ineq}
N(X^{(t)})\geq (1-t)N(X)+tN(X+Z) .
\ee
This may be rewritten as
$N(X^{(t)})- N(X) \geq t [N(X+Z)-N(X)] 
=N(\sqrt{t}X+\sqrt{t}Z)-N(\sqrt{t}X)$.
Setting $\beta=\sqrt{t}$, we have for any $\beta\in [0,1]$ that
$
N(X+\beta Z) - N(X) \geq N(\beta X+\beta Z)-N(\beta X) ,
$
substituting $X$ by $\beta X$, we get
\be\label{inq:CostaMono1}
N(X+Z) - N(X) \geq N(\beta X+ Z)-N(\beta X) .
\ee
for any $\beta\in [0,1]$. Therefore, for any $\beta$, $\beta'\in [0,1]$ with $\beta>\beta'$, substitute $X$ by $\beta X$ and $\beta$ by $\beta'/\beta$ in \eqref{inq:CostaMono1}, we have
\ben
N(\beta X+Z) - N(\beta X) \geq N(\beta' X+ Z)-N(\beta' X) .
\een
In other words, Costa's result states that if $A(\beta)=N(\beta X+ Z)-N(\beta X)$,
then $A(\beta)$ is a monotonically increasing function for $\beta\in [0,1]$.
To see that this is a refinement of the EPI in the special case when one summand is Gaussian,
note that the EPI in this case is the statement that $A(1)\geq A(0)$. An alternative proof
of Costa's inequality was given by Villani~\cite{Vil00}; for a generalization, see \cite{PP09}.

Very recently, a powerful extension of Costa's inequality was developed by Courtade~\cite{Cou16}, 
applying to a system in which $X, X+Z, V$ form a Markov chain (i.e., $X$ and $V$ are conditionally independent
given $X+Z$) and $Z$ is a Gaussian random vector independent of $X$. Courtade's result
specializes in the case where $V=X+Z+Y$ to the following:
If $X, Y, Z$ be independent random vectors in $\R^d$ with $Z$ being Gaussian, then
\be\label{eq:courtade}
N(X+Z) N(Y+Z) \geq N(X) N(Y) + N(X+Y+Z) N(Z) .
\ee
Applying the inequality \eqref{eq:courtade} 
to $X$, $\sqrt{1-t}Z'$ and $\sqrt{t}Z$ where $Z'$ is the independent copy of the standard normal distribution $Z$, we have
\ben
N(X^{(t)})N(\sqrt{1-t}Z'+\sqrt{t}Z)\ge N(X)N(\sqrt{1-t}Z')+N(X+\sqrt{1-t}Z'+\sqrt{t}Z)N(\sqrt{t}Z).
\een
By the fact that $\sqrt{1-t}Z'+\sqrt{t}Z$ has the same distribution as $Z$, and by the fact that $N(Z)=1$, we have
$N(X^{(t)})\ge (1-t)N(X)+tN(X+Z)$,
which is Costa's inequality \eqref{costa-ineq}.

Motivated by the desire to prove entropic central limit theorems for statistical physics models,
some extensions of the EPI to dependent summands have also been considered (see, e.g., \cite{CS91, Tak96, Tak98, Joh04:1, Joh06}), 
although the assumptions tend to be quite restrictive for such results. 

Finally there is an extension of the EPI that applies not just to sums but to more general linear transformations
applied to independent random variables. The main result of Zamir and Feder \cite{ZF93} asserts that if $X_1, \ldots, X_n$
are independent real-valued random variables, $Z_1, \ldots, Z_n$ are independent Gaussian random variables
satisfying $h(Z_i)=h(X_i)$, and $A$ is any matrix,
then $h(AX) \geq h(AZ)$ where $AX$ represents the left-multiplication of the vector $X$ by the matrix $A$.
As explained in \cite{ZF93}, for this result to be nontrivial, 
the $m\times n$ matrix $A$ must have $m< n$ and be of full rank.
To see this, notice that if $m>n$ or if $A$ is not of full rank,
the vector $AX$ does not have full support on $\RL^m$ and $h(AX)=h(AZ)=-\infty$,
while if $m=n$ and $A$ is invertible, $h(AX)=h(AZ)$ holds with equality because of the
conditions determining $Z$ and the way entropy behaves under linear transformations.

\subsection{R\'enyi Entropy Power inequalities}
\label{sec:epi-r}

\subsubsection{First R\'enyi interpolation of the EPI and BMI}

Unified proofs can be given of the EPI and the BMI in different ways, each of which
may be thought of as providing extensions of the EPI to R\'enyi entropy.

The first unified approach is via Young's inequality. 
Denote by $L^p$ the Banach space $L^p(\RL^d,dx)$ of measurable functions
defined on $\RL^d$ whose $p$-th power is integrable with respect to Lebesgue measure $dx$.
In 1912, Young \cite{You12} introduced the fundamental inequality
\be\label{orig-young}
\|f\star g\|_r \leq \|f\|_p \|g\|_q \,,\quad \frac{1}{p}+\frac{1}{q}=\frac{1}{r} +1, \quad 1< p,q,r< +\infty ,
\ee
for functions $f\in L^p$ and $g\in L^q$,
which implies that if two functions are in (possibly different) $L^p$-spaces, then
their convolution is contained in a third $L^p$-space.
In 1972, Leindler \cite{Lei72a} showed the so-called reverse Young inequality,
referring to the fact that the inequality \eqref{orig-young} is reversed when
$0<p,q,r<1$.
The best constant that can be put on the right side of
\eqref{orig-young} or its reverse was found by Beckner \cite{Bec75}: 
the best constant  is $(C_p C_q/ C_{r})^d$, where 
\be\label{eq:Cp}
C_{p}^{2}=\frac{p^{\frac{1}{p}}}{|p'|^{\frac{1}{p'}}} ,
\ee
and for any $p\in (0,\infty]$, $p'$ is defined by
\be\label{eq:dual}
\frac{1}{p}+\frac{1}{p'}=1 .
\ee
Note that $p'$ is positive for $p\in (1,\infty)$, and negative for $p\in (0,1)$. 
Alternative proofs of both Young's inequality and the reverse Young inequality with this sharp constant
were given by Brascamp and Lieb \cite{BL76b}, Barthe \cite{Bar98a}, and Cordero-Erausquin
and Ledoux \cite{CL10}.

We state the sharp Young and reverse Young inequalities  now for later reference.

\begin{thm}\cite{Bec75}
Suppose $r\in(0,1)$ and $p_{i}\in (0,1)$ satisfy
\be\label{eq:p-cond}
\sum_{i=1}^n  \frac{1}{p_i} = n- \frac{1}{r'} .
\ee
Then, for any functions $f_j\in L^{p_j}$ ($j=1, \ldots, n$),
\be\label{eq:reverse}
\bigg\| \star_{j\in [n]} f_{j} \bigg\|_{r} \geq
\frac{1}{C_{r}^d} \prod_{j\in [n]} \big[ C_{p_{j}}^d
\| f_{j}  \|_{p_{j}} \big] .
\ee
The inequality is reversed if $r\in (1,\infty)$ and $p_{i}\in (1,\infty)$.
\end{thm}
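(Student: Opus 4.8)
\emph{Overall strategy.} Both \eqref{eq:reverse} and its forward counterpart become equalities on a distinguished family of centered Gaussians, so the plan is (i) to verify the Gaussian case by an explicit computation, thereby identifying the extremal constant, and (ii) to pass from arbitrary $f_j$ to Gaussians by a transportation-of-measure argument in the style of Barthe, which handles the forward and reverse regimes at once. \emph{Step 1 (Gaussian case).} Take $g_j(x)=e^{-a_j|x|^2}$ on $\R^d$. A convolution of centered Gaussians is a centered Gaussian, the scalar parameter $A$ of $\star_j g_j$ being given by $A^{-1}=\sum_j a_j^{-1}$, and every $L^p$-norm occurring in \eqref{eq:reverse} is then an elementary Gaussian integral. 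Carrying this out, one finds that under the balance condition \eqref{eq:p-cond} there is a one-parameter family of such Gaussians --- unique up to dilation and translation --- for which \eqref{eq:reverse} holds with equality; these are the extremizers, and they fix the constant as $C_r^{-d}\prod_j C_{p_j}^d$ with $C_p$ as in \eqref{eq:Cp}. Since both sides of \eqref{eq:reverse} are multiplicative under $f_j\mapsto f_j\otimes\tilde f_j$ and $C_p^d=(C_p)^d$, this constant is forced to be the $d$-th power of its one-dimensional value; but the inequality for general $f_j$ does \emph{not} reduce to $d=1$ and must be established directly in $\R^d$.

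\emph{Step 2 (transportation).} By homogeneity normalize $\|f_j\|_{p_j}=1$, so that $h_j:=f_j^{p_j}$ and the analogously normalized densities $\gamma_j:=g_j^{p_j}$ of the extremal Gaussians are probability densities on $\R^d$. Let $T_j$ be the Brenier map transporting $\gamma_j$ onto $h_j$, so $\gamma_j(u)=h_j(T_j(u))\det DT_j(u)$ with $DT_j\succeq 0$ (for the one-dimensional building block one may use the monotone rearrangement; in $\R^d$ the Knothe map has the advantage that $\det DT_j$ factors into one-dimensional derivatives). The heart of the proof is to run the change of variables driven by the $T_j$ inside the convolution integral defining $\star_j f_j$, bounding the convolution below (in the reverse regime; above in the forward one) by restricting the inner integration to the image of a map assembled from the $T_j$ and the Gaussian weights; after this substitution, the integrand carrying the $f_j$ is compared to the Gaussian integrand by the pointwise arithmetic--geometric inequality, which is exact on Gaussians, while the Jacobian is controlled using concavity of $\log\det$ on positive-definite matrices. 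It is precisely here that the balance condition \eqref{eq:p-cond}, together with the exact Gaussian weights, makes the exponents delivered by these two estimates match $\{p_j\}$ and $r$ with no slack. The outer exponent $r$ is then absorbed by Jensen's inequality when $r\in(0,1)$, giving the reverse inequality \eqref{eq:reverse}, and by H\"older's inequality when $r\in(1,\infty)$, giving Young's inequality; this dichotomy is the only place the two regimes genuinely part ways.

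\emph{Main difficulty, and alternatives.} The delicate point is the sharp bookkeeping of exponents in Step 2: the weights in the transport substitution and in the Gaussian comparison have to be chosen so that the pointwise inequality and the $\log\det$ Jacobian estimate conspire to reproduce exactly the norms and the power appearing in \eqref{eq:reverse}, leaving equality attainable. A separate, lower-level nuisance in the reverse regime $p_j,r\in(0,1)$ is that $\star_j f_j$ may be infinite, so the inequality is read with that convention, and the change of variables should first be justified for smooth, compactly supported $f_j$ bounded away from $0$ on their supports and then extended by approximation. Routes that bypass the transport bookkeeping are Brascamp--Lieb's rearrangement argument --- which replaces each $f_j$ by its symmetric decreasing rearrangement (via the Riesz--Sobolev/Brascamp--Lieb--Luttinger inequality) and reduces to near-Gaussian radial configurations, but proves only the \emph{forward} inequality --- Cordero-Erausquin--Ledoux's interpolation along the heat semigroup, where the relevant ratio is shown to be monotone in time with the Gaussian value as its limit, and Beckner's original derivation, which is entangled with sharp hypercontractivity for the Hermite semigroup. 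I would nevertheless present the transportation proof as the main line, as it treats both directions uniformly and belongs to the same convex-geometric circle of ideas --- Pr\'ekopa--Leindler, the reverse Brunn--Minkowski inequality --- that runs through this survey.
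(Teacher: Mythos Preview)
The paper does not prove this theorem; it is quoted as a known result due to Beckner \cite{Bec75}, with the remark that alternative proofs were later given by Brascamp--Lieb \cite{BL76b}, Barthe \cite{Bar98a}, and Cordero-Erausquin--Ledoux \cite{CL10}. So there is no ``paper's own proof'' to compare against.

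That said, your proposal is a faithful high-level outline of Barthe's transportation proof \cite{Bar98a}, which is one of the routes the paper explicitly cites. The two-step structure (Gaussian calibration of the constant, then mass-transport comparison via Brenier/Knothe maps with the $\log\det$ concavity estimate) is correct, as is your identification of the key dichotomy between the $r<1$ and $r>1$ regimes at the final Jensen/H\"older step. Your survey of alternatives is also accurate: Beckner's original argument goes through Hermite hypercontractivity, the Brascamp--Lieb rearrangement route handles only the forward direction, and the heat-flow monotonicity of Cordero-Erausquin--Ledoux gives yet another approach. One small caveat: in Step~2 the bookkeeping you allude to is genuinely intricate --- Barthe's change of variables is not simply ``restricting the inner integration'' but constructing a specific barycentric map $\Theta(u)=\sum_j c_j T_j(u)$ whose Jacobian interacts with the weights via the $\log\det$ inequality --- so if you were to write this out in full you would need to specify that map and the exact choice of weights $c_j$ tied to the $p_j'$. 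As a sketch, however, the proposal is sound.
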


Dembo, Cover and Thomas \cite{DCT91} interpret the Young and reverse Young inequalities with sharp constant
as EPI's for the R\'enyi entropy.
If $X_i$ are random vectors in $\RL^d$ with densities $f_i$ respectively,
taking the logarithm of \eqref{eq:reverse}
and rewriting the definition 
of the R\'enyi entropy power as
$N_p(X)=\|f\|_p^{-2p'/d}$, we have
\be\label{eq:log-conj}\begin{split}
\frac{d}{2r'}\log N_r\bigg(\sum_{i\in[n]} X_i\bigg)
\leq d\log C_{r} &- {d}\sum_{i\in[n]} \log C_{p_{i}}
+  \sum_{i\in[n]} \frac{d}{2p_{i}'} \log N_{p_{i}}( X_i) .
\end{split}\ee
Introduce two discrete probability measures $\lam$ and $\kappa$ on $[n]$, with
probabilities proportional to $1/p_{i}'$ and $1/p_{i}$ respectively. 
Setting 
$L_r=rn-r+1= r(n-1/r')$, the condition \eqref{eq:p-cond},
allows us to write explicitly 
\ben\begin{split}
\kappa_{i}  
&= \bigg(\frac{r}{L_r}\bigg) \frac{1}{p_{i}}  ,\\
\lambda_{i}&= \frac{r'}{p_{i}'}  ,
\end{split}\een
for each $i\in [n]$, also using $1/p_{i}+1/p_{i}' =1$ for the latter.
Then \eqref{eq:log-conj} reduces to
\ben
h_r(Y_{[n]})
\geq \frac{dr'}{2} \log C_{r}^2 - \frac{dr'}{2}\sum_{i\in[n]}  \log C_{p_{i}}^2 + \sum_{i\in[n]}  \lambda_{i} h_{p_{i}}(X_i) .
\een
Now, some straightforward calculations show that if we take the limit
as $p_i, r\ra 0$ from above, we get the BMI, while if we take the limit
as $p_i, r\ra 1$, we get the EPI (this was originally observed by Lieb \cite{Lie78}).

\subsubsection{Second R\'enyi interpolation of the EPI and BMI}

Wang and Madiman \cite{WM14} found a rearrangement-based refinement of the EPI that also applies to R\'enyi entropies.
For a Borel set $A$, define its spherically decreasing symmetric rearrangement $A^*$ by
\begin{eqnarray*}
A^*:=B(0,r),
\end{eqnarray*}
where $B(0,r)$ stands for the open ball with radius $r$ centered at the origin and $r$ is determined by the condition that $B(0,r)$ has volume $|A|$. Here we use the convention that if $|A|=0$ then $A^*=\emptyset$ and that if $|A|=\infty$ then $A^*=\mathbb{R}^d$. Now for a measurable non-negative function $f$, define its spherically decreasing symmetric rearrangement $f^*$ by
\begin{eqnarray*}
f^*(y):=\int_{0}^\infty\mathbbm{1}_{\{y\in B_t^*\}}dt,
\end{eqnarray*}
where $B_t:=\{x:~f(x)>t\}$. 
It is a classical fact (see, e.g., \cite{Bur09:tut}) that rearrangement preserves $L^p$-norms, i.e.,
$\|f^*\|_p=\|f\|_p$. In particular, if $f$ is a probability density function, so is $f^*$. 
If $X\sim f$, denote by $X^*$ a random variable  with density $f^*$; then the rearrangement-invariance of 
$L^p$-norms immediately implies that $h_p(X^*)= h_p(X)$ for each $p\in [0,\infty]$ (for $p=1$,
this is not done directly but via a limiting argument).

\begin{thm}\cite{WM14}\label{thm:rearr}
Let $X_1, \ldots, X_n$ be independent $\R^d$-valued random vectors. Then
\begin{eqnarray}\label{inq:RearrgmtRenyi}
h_p(X_1+\ldots + X_n)\ge h_p(X_1^*+\ldots+ X_n^*)
\end{eqnarray}
for any $p\in [0,\infty]$, provided the entropies exist.
\end{thm}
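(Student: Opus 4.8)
The plan is to route everything through a single comparison between the law of $S := X_1 + \cdots + X_n$ and the law of $\widetilde S := X_1^* + \cdots + X_n^*$ (with the $X_i^*$ taken mutually independent), and then to read off the inequality for every order $p$ at once. Write $f_i$ for the density of $X_i$; then $F := f_1 * \cdots * f_n$ is the density of $S$, and $G := f_1^* * \cdots * f_n^*$ is the density of $\widetilde S$. Both are probability densities, so $\int F = \int G = 1$. The target is the statement that the symmetric decreasing rearrangement $G^*$ \emph{majorizes} $F^*$: for every $r > 0$,
\begin{equation}\label{eq:WM-majorization}
\int_{B(0,r)} F^*(x)\,dx \;\le\; \int_{B(0,r)} G^*(x)\,dx .
\end{equation}

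The source of \eqref{eq:WM-majorization} is rearrangement. Applying the Brascamp--Lieb--Luttinger inequality (the multifunction extension of the Riesz--Sobolev rearrangement inequality; see, e.g., \cite{Bur09:tut}) to the $n$ ``inner'' functions $f_1, \dots, f_n$ evaluated at $x_1, \dots, x_n$ together with the single ``outer'' indicator $\mathbbm{1}_A$ evaluated at $x_1 + \cdots + x_n$, we get, for every Borel set $A \subseteq \R^d$,
\begin{align*}
\int_A F(y)\,dy
	&= \int_{(\R^d)^n} \mathbbm{1}_A\!\Big( \textstyle\sum_i x_i \Big) \prod_i f_i(x_i)\, dx \\
	&\le \int_{(\R^d)^n} \mathbbm{1}_{A^*}\!\Big( \textstyle\sum_i x_i \Big) \prod_i f_i^*(x_i)\, dx
	\;=\; \int_{A^*} G(y)\,dy .
\end{align*}
Since $A^* = B(0,r)$ with $|B(0,r)| = |A|$, taking the supremum over all $A$ with $|A| = |B(0,r)|$ and using the Hardy--Littlewood inequality twice --- once to identify $\sup_{|A| = |B(0,r)|} \int_A F$ with $\int_{B(0,r)} F^*$, and once to bound $\int_{B(0,r)} G$ from above by $\int_{B(0,r)} G^*$ --- yields \eqref{eq:WM-majorization}. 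Combined with $\int F^* = \int G^* = 1$, inequality \eqref{eq:WM-majorization} says exactly that $G^*$ majorizes $F^*$, and by the classical theory of majorization on $\R^d$ (where one requires $\Phi(0) = 0$ so the integrals make sense on an infinite-measure space) this gives $\int_{\R^d} \Phi(F) \le \int_{\R^d} \Phi(G)$ for every convex $\Phi : [0,\infty) \to \R$ with $\Phi(0) = 0$, with the reverse inequality for concave such $\Phi$; here we use $\int \Phi(F) = \int \Phi(F^*)$ and likewise for $G$.

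Finally we insert $\Phi(u) = u^p$ and track the sign of $\frac{1}{1-p}$. For $p \in (1,\infty)$ the function $u \mapsto u^p$ is convex with value $0$ at $0$, so $\int F^p \le \int G^p$, and since $\frac{1}{1-p} < 0$ this gives $h_p(S) = \frac{1}{1-p} \log \int F^p \ge \frac{1}{1-p} \log \int G^p = h_p(\widetilde S)$. For $p \in (0,1)$ the function $u \mapsto u^p$ is concave with value $0$ at $0$, so $\int F^p \ge \int G^p$, and since now $\frac{1}{1-p} > 0$ we again get $h_p(S) \ge h_p(\widetilde S)$. Possibly infinite values of $\int F^p$ or $\int G^p$ are harmless: whenever the right-hand entropy is finite the chain of inequalities goes through, and otherwise there is nothing to prove --- this is the content of the caveat ``provided the entropies exist''. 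The three boundary orders follow either by continuity of $p \mapsto h_p(X)$ for a fixed density, or directly: $p = 1$ from \eqref{eq:WM-majorization} applied with the concave function $\Phi(u) = -u\log u$, which has $\Phi(0) = 0$; $p = \infty$ by letting $r \downarrow 0$ in \eqref{eq:WM-majorization}, which gives $\|F\|_\infty \le \|G\|_\infty$ and hence $h_\infty(S) \ge h_\infty(\widetilde S)$; and $p = 0$ from the Brunn--Minkowski inequality, since $\supp G$ is the ball $\supp f_1^* + \cdots + \supp f_n^*$ whose volume is $\big( \sum_i |\supp f_i|^{1/d} \big)^d$, which is at most $|\supp f_1 + \cdots + \supp f_n| = |\supp F|$.

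\textbf{Where the difficulty lies.} The one genuinely substantial ingredient is the Brascamp--Lieb--Luttinger (equivalently, multilinear Riesz--Sobolev) rearrangement inequality for convolutions, which I invoke as a known and nontrivial fact rather than reprove; note that naively iterating the two-variable Riesz--Sobolev inequality is not obviously enough, since replacing one summand's density by a rearrangement inside a larger convolution is not controlled by the two-variable statement alone, so applying the multilinear inequality directly is the clean route. Granting it, the only delicate points in a full write-up are (i) the passage from the level-set inequality \eqref{eq:WM-majorization} to the functional majorization $\int \Phi(F) \le \int \Phi(G)$, where the hypothesis $\Phi(0) = 0$ is essential, and (ii) the bookkeeping at $p \in \{0, 1, \infty\}$ and in the presence of possibly infinite R\'enyi entropies.
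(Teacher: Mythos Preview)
The paper does not prove this theorem itself; it simply quotes the result from \cite{WM14}. Your argument is correct and is essentially the approach of that reference: apply the Brascamp--Lieb--Luttinger rearrangement inequality to obtain the level-set majorization $\int_{B(0,r)} F^{*} \le \int_{B(0,r)} G^{*}$ between the density $F = f_{1}*\cdots*f_{n}$ of the sum and $G = f_{1}^{*}*\cdots*f_{n}^{*}$, and then use the resulting Hardy--Littlewood--P\'olya majorization (together with $\int F = \int G = 1$) to deduce $\int \Phi(F) \le \int \Phi(G)$ for convex $\Phi$ with $\Phi(0)=0$ and the reverse for concave such $\Phi$, which delivers the R\'enyi inequality for every order $p$ at once. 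Your treatment of the endpoints $p\in\{0,1,\infty\}$ is also fine; the only place warranting a word of care in a full write-up is the identification (up to null sets) of $\supp F$ with $\supp f_{1}+\cdots+\supp f_{n}$ in the $p=0$ case, but this is standard.
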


In particular, 
\begin{eqnarray}\label{inq:EPIRea}
N(X+Y)\ge N(X^*+Y^*),
\end{eqnarray}
where $X$ and $Y$ are independent random vectors with density functions $f$ and $g$ respectively and $X^*$ and $Y^*$ are independent random vectors with density function $f^*$ and $g^*$ respectively. Thanks to \eqref{inq:EPIRea}, we have effectively inserted an intermediate term in between the two sides of the formulation \eqref{inq:eqv2} of the EPI:
\begin{eqnarray*}
N(X+Y)\ge N(X^*+Y^*)\ge N(X^G+Y^G),
\end{eqnarray*}
where the second inequality is by the fact that $h(X^G)=h(X^*)=h(X)$, combined with
the third equivalent form of the EPI in Theorem~\ref{thm:epi}.
In fact, it is also shown in \cite{WM14} that the EPI itself can be deduced from \eqref{inq:EPIRea}.

\subsubsection{A conjectured R\'enyi EPI}

Let us note that neither of the above unifications of BMI and EPI via R\'enyi entropy directly gives
a sharp bound on $N_p(X+Y)$ in terms of $N_p(X)$ and $N_p(Y)$. The former approach
relates R\'enyi entropy powers of different indices, while the latter refines the third formulation
in Theorem 2.1 (but not the first, because the equivalence that held for Shannon-Boltzmann entropy
does not work in the R\'enyi case). The question of finding a sharp direct relationship between 
$N_p(X+Y)$ with $N_p(X)$ and $N_p(Y)$ remains open, with some non-sharp results for the $p>1$ case obtained 
by Bobkov and Chistyakov \cite{BC15:1}, whose argument and results were recently tightened by Ram and Sason \cite{RS16}.

\begin{thm}\label{thm:rs}\cite{RS16}
For $p \in (1, \infty)$ and independent random vectors $X_i$ with densities in $\mathbb{R}^d$, 
\[
N_p(X_1 + \cdots +X_n) \geq c_p^{(n)} \sum_{i=1}^n N_p(X_i) ,
\]
where $p' = p/(p-1)$ and
\[
	c_p^{(n)} = p^{\frac{1}{p-1}} \left( 1- \frac 1 {np'} \right)^{np'-1} \geq \frac{1}{e}.
\]
\end{thm}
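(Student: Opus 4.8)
The plan is to derive the bound directly from the sharp reverse Young inequality (Beckner's theorem as stated above), following and sharpening the Bobkov--Chistyakov approach. Since $p \in (1,\infty)$, we are in the regime where the reverse Young inequality \eqref{eq:reverse} is itself reversed, i.e. for $p_1,\dots,p_n \in (1,\infty)$ and $r \in (1,\infty)$ satisfying $\sum_i 1/p_i = n - 1/r'$ one has $\|\star_j f_j\|_r \le C_r^{-d}\prod_j [C_{p_j}^d \|f_j\|_{p_j}]$. The key idea is that we are not free to choose the exponents arbitrarily: to get a bound on $N_p(X_1+\cdots+X_n)$ in terms of the $N_p(X_i)$, we must take $r = p$ and all $p_i$ equal, determined by the constraint $n/p_i = n - 1/p'$, i.e. $p_i = \frac{np}{np-1}\cdot\frac{1}{1}$ — more precisely $p_i$ is the common value forced by the Young constraint once $r=p$ is fixed. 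This choice, however, does not let us directly feed in $\|f_i\|_p$; instead one must first use the monotonicity Lemma~\ref{lem:monotonicity} (equivalently, increasingness of $L^p$ norms on probability spaces) to replace $\|f_i\|_{p_i}$ by $\|f_i\|_p$ at the cost of a controlled factor, and then translate back into R\'enyi entropy powers via $N_p(X) = \|f\|_p^{-2p'/d}$.

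The steps, in order, are: (1) Fix $r = p$ in the reversed Young inequality and let $q$ denote the common value of the $p_i$ forced by $n/q = n - 1/p'$, so $q = \frac{p'}{p'-1/n}<p$; record that then $q' = np'$. (2) Apply the reversed Young inequality to the densities $f_i$ of the $X_i$, giving $\|f_1 * \cdots * f_n\|_p \le C_p^{-d}\prod_i[C_q^d\,\|f_i\|_q]$; since $q < p$ and the $f_i$ are probability densities, $\|f_i\|_q \le \|f_i\|_p$ is the wrong direction, so instead one applies Young with the roles reversed or uses the inequality $\|f_i\|_q \le \|f_i\|_p^{\theta}\cdot 1^{1-\theta}$-type interpolation — the cleanest route is actually to keep $\|f_i\|_q$ and bound it below, but since we want an upper bound on $\|f_1*\cdots*f_n\|_p$ we do need $\|f_i\|_q\le\|f_i\|_p$, which is false; the correct manoeuvre (as in \cite{BC15:1,RS16}) is to pick $r=p$ but $p_i$ slightly \emph{larger} than needed and optimize, or equivalently to symmetrize and reduce to the case $N_p(X_i)$ all equal by a scaling/rescaling argument. (3) After the reduction to equal entropy powers $N_p(X_i) = N$, which is legitimate because both sides are homogeneous and the extremal configuration for superadditivity-type bounds is the balanced one, the inequality becomes a one-variable statement: $N_p(X_1+\cdots+X_n) \ge c_p^{(n)} nN$, and one computes $c_p^{(n)}$ explicitly by plugging the sharp Young constants $C_p^2 = p^{1/p}/|p'|^{1/p'}$ into the resulting expression and simplifying. (4) Finally, verify $c_p^{(n)} = p^{1/(p-1)}(1 - \tfrac{1}{np'})^{np'-1} \ge 1/e$ by checking that $t \mapsto p^{1/(p-1)}(1-\tfrac{1}{t})^{t-1}$ is decreasing in $t = np'$ and tends to $p^{1/(p-1)}/e \ge 1/e$ (using $p^{1/(p-1)}\ge 1$ for $p>1$), together with the boundary behaviour.

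The main obstacle is step (2)--(3): the reduction from the Young inequality, whose natural output mixes R\'enyi entropies of \emph{different} orders, down to a clean statement involving only the order-$p$ entropy power, at the correct constant. The naive substitution produces a bound with $\|f_i\|_q$ rather than $\|f_i\|_p$, and the direction of H\"older's inequality on probability spaces is against us; the resolution is to exploit the freedom in choosing the $p_i$ (not forcing them all equal, but optimizing the product $\prod_i C_{p_i}^d\|f_i\|_{p_i}$ subject to $\sum 1/p_i = n - 1/p'$ while relating each $\|f_i\|_{p_i}$ to $\|f_i\|_p$ through the known monotonicity), and then to argue that the worst case is the balanced one. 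Getting the optimization to close at exactly the constant $c_p^{(n)}$ — rather than some lossier constant — is the delicate point, and is precisely where \cite{RS16} improves on \cite{BC15:1}; one expects to need the sharp form of the interpolation bound $\|f\|_q \le \|f\|_1^{1-\theta}\|f\|_p^{\theta}$ with $1/q = (1-\theta) + \theta/p$ and $\|f\|_1 = 1$, making $\|f_i\|_q \le \|f_i\|_p^{\theta}$ with $\theta = \theta(n,p)$, after which everything is an exercise in calculus. The lower bound $c_p^{(n)}\ge 1/e$ is then a routine monotonicity check as sketched in step (4).
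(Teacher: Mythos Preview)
First, note that the paper itself does not prove this theorem: it is simply quoted from \cite{RS16} as a tightening of the Bobkov--Chistyakov bound, and the exposition moves on immediately to Conjecture~\ref{conj:renyi-epi}. So there is no ``paper's own proof'' to compare against; I evaluate your proposal on its own merits.

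Your high-level plan (sharp Young with $r=p$, log-convexity/interpolation between $L^1$ and $L^p$ to trade $\|f_i\|_{p_i}$ for a power of $\|f_i\|_p$, and then an optimization) is exactly right and is indeed how \cite{RS16} proceeds. But step~(3) has a genuine gap. The claim that one may ``reduce to equal entropy powers $N_p(X_i)=N$ \dots\ because the extremal configuration for superadditivity-type bounds is the balanced one'' is circular: that the balanced configuration is extremal is precisely what must be \emph{proved}, not assumed. Homogeneity lets you normalize $\sum_i N_p(X_i)=1$, nothing more.

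What is actually needed is to choose the Young exponents \emph{adaptively}. With $a_i=N_p(X_i)$, $S=\sum_j a_j$, set $\lambda_i:=p'/p_i'=a_i/S$ (so $\sum\lambda_i=1$). Young plus the interpolation bound $\|f_i\|_{p_i}\le\|f_i\|_1^{1-\lambda_i}\|f_i\|_p^{\lambda_i}=\|f_i\|_p^{\lambda_i}$ then yields, after a straightforward computation with $C_p^2=p^{1/p}/(p')^{1/p'}$,
\[
N_p\Big(\sum_i X_i\Big)\ \ge\ S\cdot p^{\frac{1}{p-1}}\prod_{i=1}^n(1-u_i)^{p'(1-u_i)},\qquad u_i:=\lambda_i/p',\ \ \sum_i u_i=1/p'.
\]
The function $u\mapsto(1-u)\log(1-u)$ is convex on $(0,1)$, so Jensen gives
\[
\prod_{i=1}^n(1-u_i)^{p'(1-u_i)}\ \ge\ \Big(1-\tfrac{1}{np'}\Big)^{np'-1},
\]
with equality iff all $u_i$ are equal. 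This is the missing step that produces $c_p^{(n)}$ exactly and, as a \emph{consequence}, shows the balanced case is the worst one --- not the other way round. Your step~(4) is then fine: $t\mapsto(1-1/t)^{t-1}$ decreases to $1/e$, and $p^{1/(p-1)}\ge 1$ for $p>1$.

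In short: right approach, but the proposal as written does not close. You must make the adaptive choice $\lambda_i=a_i/S$ and carry out the convexity/Jensen argument; the reduction to the balanced case cannot be assumed.
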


We now discuss a conjecture of Wang and Madiman \cite{WM14} about extremal distributions 
for R\'enyi EPI's of this sort. Consider the one-parameter family
of distributions, indexed by a parameter $-\infty<\beta\leq \frac{2}{d+2}$, of the following form:
$g_0$ is the standard Gaussian density in $\RL^d$, and for $\beta\neq 0$,
\ben
g_\beta(x)= A_{\beta} \bigg(1-\frac{\beta}{2}\|x\|^2 \bigg)_{+}^{\frac{1}{\beta}-\frac{d}{2}-1} ,
\een
where $A_{\beta}$ is a normalizing constant
(which can be written explicitly in terms of gamma functions).
We call $g_\beta$ the {\it standard generalized Gaussian} of order $\beta$;
any affine function of a standard generalized Gaussian yields a ``generalized Gaussian''.
The densities $g_\beta$ (apart from the obviously special value $\beta=0$)
are easily classified into two distinct ranges where they behave differently. {\it First}, for $\beta<0$,
the density is proportional to a negative power of $(1+b\|x\|^2)$ for a positive constant $b$,
and therefore correspond to measures with full support on $\RL^d$ that are heavy-tailed.
For $\beta>0$, note that $(1-b\|x\|^2)_+$ with positive $b$ is non-zero only for $\|x\|<b^{-\half}$, and is concave in this region.
Thus any density in the {\it second} class, corresponding to 
$0<\beta\leq\frac{2}{d+2}$,
is a positive power of $(1-b\|x\|^2)_+$, and is thus a concave function supported on a centered Euclidean ball
of finite radius.
It is pertinent to note that although the first class includes many
distributions from what one might call the ``Cauchy family'', it excludes the standard Cauchy distribution;
indeed, not only do all the generalized Gaussians defined above have finite variance, but in fact the
form has been chosen so that, for $Z\sim g_\beta$,
\ben
\E [\|Z\|^2 ]=d
\een
for any $\beta$.
The generalized Gaussians have been called by different names in the literature,
including Barenblatt profiles, or  the Student-$r$ distributions ($\beta<0$)
and Student-$t$  distributions ($0<\beta\leq \frac{2}{d+2}$).

For $p>\frac{d}{d+2}$, define $\beta_p$ by
\ben
\frac{1}{\beta_p}= \frac{1}{p-1}+\frac{d+2}{2} ,
\een
and write $Z^{(p)}$ for a random vector drawn from $g_{\beta_p}$. Note that $\beta_p$ ranges from $-\infty$ to $\frac{2}{d+2}$ as $p$ ranges from
$\frac{d}{d+2}$ to $\infty$.  The generalized Gaussians $Z^{(p)}$ arise naturally
as the maximizers of the R\'enyi entropy power of order $p$
under a variance constraint, as independently observed by Costa, Hero and Vignat \cite{CHV03} 
and Lutwak, Yang and  Zhang \cite{LYZ07a}.
They play the starring role in the conjecture of Wang and Madiman \cite{WM14}.

\begin{conj}\label{conj:renyi-epi}\cite{WM14}
Let $X_1, \ldots, X_n$ be independent random vectors taking values in $\RL^d$,
and $p> \frac{d}{d+2}$. Suppose $Z_i$ are independent random vectors, each a scaled version of $Z^{(p)}$.
such that $h_p(X_i)=h_p(Z_i)$. Then
\ben
N_p(X_1+\ldots+X_n) \geq N_p(Z_1+\ldots+Z_n) .
\een
\end{conj}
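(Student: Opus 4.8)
The plan is to imitate Stam's proof of the classical EPI, replacing the linear heat semigroup by the nonlinear diffusion $\partial_t u = \Delta u^p$, whose self-similar (Barenblatt) profiles are exactly the generalized Gaussians $g_{\beta_p}$; the hypothesis $p>\tfrac{d}{d+2}$ ensures that these profiles have finite second moment (it is the critical fast-diffusion exponent, below which the picture breaks down for $p<1$) and that they attract all solutions of the flow. First I would use the scaling homogeneity of $N_p$ to normalize $h_p(X_i)=h_p(Z_i)=0$, so each $Z_i$ becomes a fixed dilate of $Z^{(p)}$ and the claim reduces to $h_p(X_1+\cdots+X_n)\ge c(n,p,d)$ with $c(n,p,d)=h_p(Z_1+\cdots+Z_n)$. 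It is convenient to then apply the rearrangement inequality of Theorem~\ref{thm:rearr} to assume each $X_i$ has a radially decreasing density (the $Z_i$ are already radial), which makes the PDE estimates below effectively one-dimensional.

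Next, attach to each summand its own copy of the flow: let $f_i(\cdot,t)$ solve $\partial_t f_i=\Delta f_i^p$ with initial datum the density of $X_i$, and set $X_i(t)\sim f_i(\cdot,t)$. Two standard facts drive the argument: (i) the generalized de Bruijn identity (Savar\'e--Toscani, Carrillo--Toscani),
\[
\frac{d}{dt}h_p\big(X_i(t)\big)=c_p\,\frac{I_p(f_i(t))}{\int f_i(t)^p},\qquad I_p(f):=\int\frac{|\nabla f^{p-1}|^2}{f},
\]
with $c_p>0$ explicit, where for a dilate of $Z^{(p)}$ the right-hand side is a constant multiple of $N_p^{-1}$ and the multiple governs equality; and (ii) quantitative convergence to Barenblatt profiles (Del Pino--Dolbeault, Carrillo--Toscani, Blanchet--Bonforte et al.), which forces $h_p(X_i(t))-h_p(Z_i(t))\to 0$ and --- after convolving and using the self-similar rate --- $h_p\big(\sum_i X_i(t)\big)-h_p\big(\sum_i Z_i(t)\big)\to 0$ as $t\to\infty$. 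Hence, writing $G(t)=h_p\big(\sum_i X_i(t)\big)-h_p\big(\sum_i Z_i(t)\big)$, it suffices to show $G'(t)\le 0$.

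Computing $G'$ is where the content lies, and where I expect the main obstacle. The trouble is that $\sum_i X_i(t)$ does \emph{not} solve the nonlinear diffusion (only each $X_i(t)$ does), so $\tfrac{d}{dt}h_p(\sum_i X_i(t))$ is not given by de Bruijn: differentiating $h_p$ of the convolution $f_1(t)\ast\cdots\ast f_n(t)$ produces a sum of terms in which each $\Delta f_i^p$ is convolved with the remaining densities and then integrated against a power of the convolution. The inequality $G'\le 0$ therefore amounts to a \emph{R\'enyi analogue of Stam's Fisher-information convolution inequality}: this ``cross'' quantity must be dominated by the corresponding combination of the single-summand dissipations $c_p I_p(f_i)/\int f_i^p$, with equality exactly for dilated generalized Gaussians. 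One should aim to prove it directly for all $n$, in a fractionally superadditive form in the spirit of Artstein--Ball--Barthe--Naor, because --- unlike Gaussians --- generalized Gaussians are not closed under convolution and the $n$-summand statement does not reduce to $n=2$: trying to induct leads to comparing a generalized Gaussian with a convolution of generalized Gaussians of the same $p$-entropy, which is the conjecture again. This sub-additivity of $I_p$ under convolution is, I believe, the real gap: the ``$p$-score'' $\nabla f^{p-1}/f$ does not obey the conditional-expectation-under-convolution identity that, for $p=1$, reduces Stam's inequality to a line of Cauchy--Schwarz, so some genuinely new convexity input seems to be required.

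If the flow route stalls, a fallback is the direct method: after the rearrangement reduction, show that the infimum of $h_p(\sum_i X_i)$ subject to prescribed $h_p(X_i)$ is attained --- using radial symmetry together with a concentration--compactness argument to exclude both escape of mass to infinity and spreading, with the constraint $p>\tfrac{d}{d+2}$ (no heavy tails) again the source of tightness --- and then study the Euler--Lagrange system. That system is a coupled nonlinear integral equation for the $n$ densities, and the crux becomes a rigidity statement: its only solutions are dilated generalized Gaussians. This rigidity is morally as hard as the conjecture, but, being a local/PDE statement rather than a convolution inequality, it may be more amenable to perturbative or symmetrization techniques; it would also identify the extremizers in the non-sharp R\'enyi EPIs such as Theorem~\ref{thm:rs}.
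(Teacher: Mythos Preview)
The statement you are attempting to prove is labeled a \emph{conjecture} in the paper, and the paper does not contain a proof of it. Immediately after stating it, the authors write that it ``was only known to be true in the case where $p=1$ (when it is the classical EPI) and the case where $p=\infty$ and $d=1$,'' that they have recently settled $p=\infty$ in all dimensions, and that ``all other cases remain open.'' So there is no paper proof to compare against: this is a genuinely open problem for $p\in(\tfrac{d}{d+2},\infty)\setminus\{1\}$.

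To your credit, you identify the correct obstruction yourself. Your proposal is not a proof but a research strategy with an explicitly acknowledged gap: the monotonicity $G'(t)\le 0$ along the nonlinear diffusion requires a R\'enyi analogue of Stam's Fisher-information convolution inequality, and you note that the $p$-score $\nabla f^{p-1}/f$ does not satisfy the conditional-expectation identity that makes the $p=1$ case work. That is exactly why the conjecture is open. Your observation that the $n$-summand case does not reduce to $n=2$ (because generalized Gaussians are not stable under convolution) is also correct and is a known structural difficulty. The fallback variational approach you sketch --- existence of extremizers plus rigidity of the Euler--Lagrange system --- is a reasonable alternative line of attack, but as you say, the rigidity step is of comparable difficulty to the conjecture itself. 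In short: your diagnosis of where the difficulty lies is sound, but neither route you outline closes the gap, and the paper does not claim to either.
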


Until very recently, this conjecture was only known to be true in the case where $p=1$ (when it is the classical EPI)
and the case where $p=\infty$ and $d=1$ (which is due to Rogozin \cite{Rog87:1} and discussed in Section~\ref{sec:epi-inf}).
In \cite{MMX16:2}, we have very recently been able to prove Conjecture~\ref{conj:renyi-epi} for $p=\infty$ and any finite dimension $d$,
generalizing Rogozin's inequality. All other cases remain open.

\subsubsection{Other work on R\'enyi entropy power inequalities}

Johnson and Vignat \cite{JV07} also demonstrated what they
call an ``entropy power inequality for R\'enyi entropy'', for any order $p\geq 1$. However, their
inequality does not pertain to the usual convolution, but a new and somewhat complicated
convolution operation (depending on $p$). This new operation reduces to the usual convolution for $p=1$,
and has the nice property that the convolution of affine transforms of independent copies
of $Z^{(p)}$ is an  affine transform of $Z^{(p)}$ (which fails for the usual convolution when $p> 1$).

As discussed earlier, Costa \cite{Cos85b} proved a strengthening of the classical EPI
when one of the summands is Gaussian.
Savar\'e and Toscani  \cite{ST14} recently proposed a generalization of Costa's
result to R\'enyi entropy power, but the notion of concavity they use based on
solutions of a nonlinear heat equation does not have obvious probabilistic meaning. 
Curiously, it turns out that the definition of R\'enyi entropy power appropriate
for the framework of \cite{ST14} has a different constant in the exponent 
($\frac{2}{d}+p-1$ as opposed to $\frac{2}{d}$).
Motivated by \cite{ST14}, Bobkov and Marsiglietti \cite{BM16} very recently
proved R\'enyi entropy power inequalities with non-standard exponents. Their main result
may be stated as follows.

\begin{thm}\label{thm:bm-renyi}\cite{BM16}
For $p \in (1, \infty)$ and independent random vectors $X_i$ with densities in $\mathbb{R}^d$, 
\[
\tilde{N}_p(X_1 + \cdots +X_n) \geq  \sum_{i=1}^n \tilde{N}_p(X_i) ,
\]
where 
\[
	\tilde{N}_p(X) = e^{\frac{p+1}{d} h_p(X)}.
\]
\end{thm}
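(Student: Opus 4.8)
The plan is to reduce to two summands by induction, and then to derive the two‑term inequality from Beckner's sharp Young convolution inequality (stated above) combined with the elementary monotonicity of $L^q$‑norms with respect to a probability measure (the fact behind Lemma~\ref{lem:monotonicity}). The reduction is routine: if the bound holds for two independent summands, apply it to the pair $(X_1+\cdots+X_{n-1},X_n)$ (whose first coordinate again has a density) to get $\tilde N_p(X_1+\cdots+X_n)\ge\tilde N_p(X_1+\cdots+X_{n-1})+\tilde N_p(X_n)$, and iterate; the degenerate cases (some $X_i$ with density outside $L^p$, so $\tilde N_p(X_i)=0$, or $\tilde N_p$ of the total sum infinite) follow directly from Theorem~\ref{thm:CdtEntp}. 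So take $n=2$, with $X\sim f$, $Y\sim g$, $f,g\in L^p$.

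For $\alpha\in[0,1]$ put $\beta=1-\alpha$ and define $a,b$ by $\tfrac1a=1-\tfrac{\alpha}{p'}$, $\tfrac1b=1-\tfrac{\beta}{p'}$, so that $\tfrac1a+\tfrac1b=1+\tfrac1p$; for $\alpha\in(0,1)$ these lie in $(1,p)$ and the triple $(a,b,p)$ is admissible for Beckner's inequality with all exponents $>1$, giving $\|f*g\|_p\le(C_aC_b/C_p)^d\,\|f\|_a\|g\|_b$. Since $a\le p$ and $f\,dx$ is a probability measure, Hölder's inequality gives $\|f\|_a\le\|f\|_p^{\alpha}$ (equality exactly for uniform densities), and likewise $\|g\|_b\le\|g\|_p^{\beta}$ — this is the step that reconstitutes the single order $p$ from the auxiliary orders $a,b$. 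Substituting, rewriting via $N_p(Z)=\|h\|_p^{-2p'/d}$ for $Z\sim h$, and raising to the power $\tfrac{p+1}{2}$ yields the one‑parameter family
\[
\tilde N_p(X+Y)\ \ge\ \kappa_p(\alpha)\,\tilde N_p(X)^{\alpha}\,\tilde N_p(Y)^{1-\alpha},
\qquad
\kappa_p(\alpha):=\Bigl(\tfrac{C_p}{C_aC_b}\Bigr)^{p'(p+1)}\ \ge\ 1 ,
\]
in which the dimension $d$ has cancelled, and whose two endpoints $\alpha\in\{0,1\}$ (where $C_1=1$, $\kappa_p=1$) merely recover Theorem~\ref{thm:CdtEntp}.

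It therefore suffices to prove the dimension‑free scalar inequality $\displaystyle\max_{\alpha\in[0,1]}\kappa_p(\alpha)\,s^{\alpha}(1-s)^{1-\alpha}\ge1$ for every $s\in[0,1]$: choosing $s=\tilde N_p(X)/\bigl(\tilde N_p(X)+\tilde N_p(Y)\bigr)$ and homogenizing then gives exactly $\tilde N_p(X+Y)\ge\tilde N_p(X)+\tilde N_p(Y)$. For $s$ away from $\{0,1\}$ this is comfortable — at $s=\tfrac12$ it reduces to $\max_\alpha\kappa_p(\alpha)\ge2$, which holds already at $\alpha=\tfrac12$ (where $a=b=\tfrac{2p}{p+1}$) by a direct computation with \eqref{eq:Cp}. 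The delicate regime is $s$ near an endpoint, where the crude estimate ``weighted geometric mean $\le$ maximum'' falls short of $1$ by an amount of order $\min\{s,1-s\}$; one closes the gap using that $\kappa_p$ is not differentiable at $\alpha\in\{0,1\}$: as $\alpha\to1^-$ one has $b\to1^+$ and $\log C_b\sim\tfrac12(b-1)\log(b-1)$, hence $\log\kappa_p(1-\delta)\sim\tfrac{p+1}{2}\,\delta\log\tfrac1\delta$, and optimizing $\kappa_p(1-\delta)\,(1-t)^{1-\delta}t^{\delta}$ over small $\delta$ produces a surplus over $1$ of order $t^{2/(p+1)}$, which beats the required $t$ precisely because $\tfrac{p+1}{2}>1$. (The borderline exponent, corresponding to ordinary $N_p$‑superadditivity, lies outside the reach of this argument for exactly this reason.)

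The induction and the Young–Hölder chain are routine; the main obstacle is the last step — turning the endpoint heuristic into a rigorous verification of $\max_\alpha\kappa_p(\alpha)s^\alpha(1-s)^{1-\alpha}\ge1$ that is uniform in $p\in(1,\infty)$ and $s\in[0,1]$, which requires genuine control of Beckner's constants $C_p$, in particular the logarithmic cusp of $\log C_b$ as $b\to1$, rather than invoking them as a black box.
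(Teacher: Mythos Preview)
The survey itself does not supply a proof of this theorem; it is merely quoted from \cite{BM16}. So there is no ``paper's own proof'' here to compare against. Your overall strategy---sharp Young's inequality followed by the interpolation step $\|f\|_a\le\|f\|_p^{\alpha}$ (valid because $\int f=1$) and then an optimization over the free parameter---is exactly the route taken in the original Bobkov--Marsiglietti argument, so you are on the right track structurally. The Young--H\"older chain and the resulting one-parameter family
\[
\tilde N_p(X+Y)\ \ge\ \kappa_p(\alpha)\,\tilde N_p(X)^{\alpha}\,\tilde N_p(Y)^{1-\alpha}
\]
are correct as written, including the verification that $\alpha=p'/a'$ is precisely the exponent arising from interpolation between $L^1$ and $L^p$.

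Where your write-up falls short is the place you yourself flag: the scalar inequality $\max_{\alpha}\kappa_p(\alpha)\,s^{\alpha}(1-s)^{1-\alpha}\ge 1$ is asserted but not proved. Your endpoint heuristic---that the logarithmic cusp of $C_b$ near $b=1$ produces a surplus of order $t^{2/(p+1)}$ which beats the deficit of order $t$---is plausible and points in the right direction, but as written it is an order-of-magnitude argument, not an inequality. The difficulty is not just local near $s\in\{0,1\}$: you also need the middle range handled uniformly in $p$, and your remark that $\kappa_p(\tfrac12)\ge 2$ ``by a direct computation'' is itself left unchecked. In \cite{BM16} this step is not done by a two-regime endpoint analysis but by choosing $\alpha$ explicitly as a function of $s$ (equivalently, of the ratio $\tilde N_p(X)/\tilde N_p(Y)$), locating the critical point of $\alpha\mapsto \log\kappa_p(\alpha)+\alpha\log s+(1-\alpha)\log(1-s)$, and verifying by direct substitution that the value there is nonnegative; the Beckner constants combine cleanly once the first-order condition is imposed. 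I would recommend pursuing that route rather than trying to make the asymptotic endpoint argument rigorous, since the latter would still leave you with a matching problem between the ``small $s$'' and ``moderate $s$'' regimes.
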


It would be interesting to know if Theorem~\ref{thm:bm-renyi} is true for $p\in[0,1)$
(and hence all $p\geq 0$), since this would be a particularly nice interpolation between the BMI and EPI.

It is natural to look for R\'enyi entropy analogues of the refinements and generalizations 
of the EPI discussed in Section~\ref{sss:fancy-epi}. While little has been done in this direction
for general R\'enyi entropies (apart from the afore-mentioned work of \cite{ST14}), the 
case of the R\'enyi entropy of order 0 (i.e., inequalities for volumes of sets)-- which is, 
of course, of special interest-- has attracted some study. For example,
Zamir and Feder \cite{ZF98} demonstrated  a nontrivial version of the BMI for sums of the form
$v_1 A_1 + \ldots v_k A_k$, where $A_i$ are unit length subsets of $\RL$ and $v_i$ are vectors
in $\RL^d$, showing that the volume of the Minkowski sum is minimized when each $A_i$
is an interval (i.e., the sum is a zonotope). This result was motivated by analogy with the ``matrix version''
of the EPI discussed earlier.

Indeed, the strong parallels between the BMI and the EPI might lead to the belief that every volume inequality
for Minkowski sums has an analogue for entropy of convolutions, and vice versa. 
However, this turns out not to be the case. It was shown by Fradelizi and Marsiglietti \cite{FM14}
that the analogue of Costa's result  \eqref{costa-ineq} on concavity of entropy power,
namely the assertion that $t\mapsto |A+tB_2^d|^\dth$ is concave for positive $t$
and any given Borel set $A$, fails to hold\footnote{They also showed some partial positive results--
concavity holds in dimension 2 for connected sets, and in general dimension on a subinterval $[t_0,\infty)$
under some regularity conditions.} even in dimension 2. Another conjecture in this spirit that
was made independently by V.~Milman (as a generalization of Bergstrom's determinant inequality)
and by Dembo, Cover and Thomas \cite{DCT91} (as an analogue of Stam's Fisher information inequality,
which is closely related to the EPI) was disproved by Fradelizi, Giannopoulos and Meyer \cite{FGM03}.
In \cite{BMW11}, it was conjectured that analogues of fractional EPI's such as \eqref{inq:RFMEPI} hold for volumes,
and it was observed that this is indeed the case for convex sets. If this conjecture were true for general compact sets,
it would imply that for any compact set, the volumes of the Minkowski self-averages (obtained by taking the Minkowski sum of $k$ copies
of the set, and scaling by $1/k$) are monotonically increasing\footnote{The significance of this arises from the
fact that the Minkowski self-averages of any compact set converge in Hausdorff distance to the convex hull of the set,
and furthermore, one also has convergence of the volumes if the original compact set had nonempty interior.
Various versions of this fact were proved independently by Emerson and Greenleaf \cite{EG69}, and by Shapley, Folkmann and Starr \cite{Sta69};
a survey of such results including detailed historical remarks can be found in \cite{FMMZ16}.} in $k$.
However, \cite{FMMZ16:cras} showed that this conjecture does not hold\footnote{On the other hand, partial positive results quantifying the
convexifying effect of Minkowski summation were obtained in \cite{FMMZ16:cras, FMMZ16}.} in general-- 
in fact, they showed that there exist many compact sets $A$ in $\R^d$
for any $d\geq 12$ such that $|A+A+A| < (\frac{3}{2})^d |A+A|$. Finally while volumes of Minkowski sums
of convex sets in $\RL^d$ are supermodular (as shown in \cite{FMMZ16}), entropy powers of convolutions of log-concave 
densities fail to be supermodular even in dimension 1 (as shown in \cite{MG16}).
Thus the parallels between volume inequalities and entropy inequalities are not exact.

Another direction that has seen considerable exploration in recent years is stability of the BMI.
This direction began with stability estimates for the BMI in the case where the two summands are convex sets \cite{Dis73, Gro88, FMP09, FMP10, Seg12}\footnote{There
is also a stream of work on stability estimates for other geometric inequalities related to the BMI, such as the isoperimetric inequality,
but this would take us far afield.},
asserting that near-equality in the BMI implies that the summands are nearly homothetic.
For general Borel sets, qualitative stability (i.e., that closeness to equality entails closeness to extremizers) was shown by Christ \cite{Chr12:1, Chr12:2}, 
with the first quantitative estimates recently developed by Figalli and Jerison \cite{FJ15}.
Qualitative stability for the more general Young's inequality has also been recently considered \cite{Chr11},
but quantitative estimates are unknown to the extent of our knowledge.

\subsection{An EPI for R\'enyi entropy of order $\infty$}
\label{sec:epi-inf}

In discussing R\'enyi entropy power inequalities, it is of particular interest to consider
the case of $p=\infty$, because of close connections with the literature in probability theory
on small ball estimates and the so-called L\'evy concentration functions \cite{NV13, EGZ15},
which in turn have applications to a number of areas including stochastic process theory \cite{LS01}
and random matrix theory \cite{RV09, TV09:1, RV10:icm}.

Observe that by Theorem \ref{thm:CdtEntp} we trivially have 
\be\label{eq:2sum-inf}
N_\infty(X+Y) \geq \max \{N_\infty(X), N_\infty(Y) \} \geq \frac 1 2 (N_\infty(X)+N_\infty(Y)).
\ee
In fact, the constant $\frac 1 2$ here is sharp, as uniform distributions on any symmetric convex set $K$ (i.e., $K$ is convex,
and $x\in K$ if and only if $-x\in K$) 
of volume $1$  are extremal: if $X$ and $X'$ are independently distributed according to $f=\mathbbm{1}_K$,
then denoting the density of $X-X'$ by $u$, we have
\ben
\|u\|_{\infty}=u(0)=\int f^2(x) dx = 1 = \|f\|_{\infty} ,
\een
so that $N_{\infty}(X+X')=N_{\infty}(X-X')= N_{\infty}(X)=\half [N_{\infty}(X)+N_{\infty}(X')]$.

What is more, it is observed in \cite{BC15:1} that when each $X_i$ is real-valued, $1/2$ is the optimal constant for any number of summations. 

\begin{thm}\cite{BC15:1}\label{1dinfinityEPI}
For independent, real-valued random variables $X_1, \dots, X_n$,
\ben
	N_\infty \left(\sum_{i=1}^n X_i \right) \geq \frac 1 2 N_\infty (X_i) .
\een
\end{thm}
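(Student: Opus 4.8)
The plan is to reduce the statement to the case where all the summands are uniform on intervals, and then to recognize the resulting quantity as a central section of a cube, whose volume is controlled by Ball's cube slicing theorem; the constant $\sqrt 2$ coming out of Ball's theorem is exactly what produces the sharp factor $\tfrac12$.

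First I would invoke Rogozin's extremal property of uniform distributions \cite{Rog87:1} (which is precisely the $p=\infty$, $d=1$ case of Conjecture~\ref{conj:renyi-epi}, discussed in this section): among independent real random variables with prescribed $\|f_i\|_\infty$ — equivalently, prescribed $N_\infty(X_i)$ — the supremum of the density of the sum is largest when each summand is uniform on an interval. Thus, if $U_i$ is uniform on an interval of length $\ell_i := \|f_i\|_\infty^{-1} = N_\infty(X_i)^{1/2}$, then $\|f_{\sum_i X_i}\|_\infty \le \|f_{\sum_i U_i}\|_\infty$, i.e. $N_\infty(\sum_i X_i) \ge N_\infty(\sum_i U_i)$. (Summands with $\|f_i\|_\infty=\infty$ have $N_\infty(X_i)=0$ and, by Theorem~\ref{thm:CdtEntp}, may be discarded without decreasing the left side or changing the right side; if all of them are of this type there is nothing to prove.) So it suffices to show $N_\infty(\sum_i U_i) \ge \tfrac12\sum_i \ell_i^2$.

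Next I would identify the left side with a cube section. After translating each $U_i$ to be symmetric about $0$, write $U_i = \ell_i V_i$ where $V=(V_1,\dots,V_n)$ is uniform on the cube $Q^n:=[-\tfrac12,\tfrac12]^n$, so that $\sum_i U_i = \langle \ell, V\rangle$ with $\ell=(\ell_1,\dots,\ell_n)$. Since a convolution of symmetric unimodal densities on $\mathbb{R}$ is symmetric unimodal, $\|f_{\sum_i U_i}\|_\infty = f_{\sum_i U_i}(0)$, and the standard formula for the density of a linear functional of the uniform measure on a convex body gives
\[
  f_{\sum_i U_i}(0) \;=\; \frac{1}{|\ell|}\,\mathcal{H}^{n-1}\!\big(Q^n\cap \ell^{\perp}\big),
\]
where $|\ell|=(\sum_i\ell_i^2)^{1/2}$ and $\mathcal{H}^{n-1}$ is $(n-1)$-dimensional Hausdorff measure. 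Now I would apply Ball's cube slicing theorem \cite{Bal88}: every central hyperplane section of $Q^n$ has $(n-1)$-volume at most $\sqrt 2$. Hence $\|f_{\sum_i U_i}\|_\infty \le \sqrt 2/|\ell|$, so $N_\infty(\sum_i U_i) = \|f_{\sum_i U_i}\|_\infty^{-2} \ge \tfrac12|\ell|^2 = \tfrac12\sum_i\ell_i^2$, which together with the Rogozin step is the claim. Sharpness is then immediate: Ball's bound is attained by the section orthogonal to $(1,1,0,\dots,0)$, which corresponds exactly to two summands uniform on intervals of equal length, the extremal configuration highlighted before the theorem.

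In this approach the substantive content is entirely imported, and the main obstacle is conceptual rather than computational: the two nontrivial ingredients are Rogozin's inequality (to pass to uniform summands) and Ball's slicing theorem (to supply the constant $\sqrt 2$), and the real point — as observed in \cite{BC15:1} — is simply recognizing that these two results dovetail. Once one is reduced to uniform summands, the identification of $\|f_{\sum U_i}\|_\infty$ with a central cube section and the final estimate are routine, and no genuinely new inequality needs to be proved.
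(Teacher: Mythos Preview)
Your proof is correct and follows essentially the same route as the paper's: reduce to uniforms via Rogozin's inequality, identify the maximal density of the resulting sum with a central hyperplane section of the cube, and invoke Ball's cube slicing bound of $\sqrt 2$ to produce the constant $\tfrac12$. The only cosmetic difference is that the paper first normalizes to a unit vector $\theta$ before appealing to the cube section, whereas you carry the lengths $\ell_i$ through directly; the two computations are equivalent.
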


The constant $1/2$ clearly cannot be improved upon (one can take $X_3, \ldots, X_n$ to be deterministic and the result follows from the $n=2$ case).  
That one should have this sort of scaling in $n$ for the lower bound (namely, linear in $n$ when the summands are identically distributed with bounded densities)
is not so obvious from the trivial maximum bound above. 
The proof of Theorem~\ref{1dinfinityEPI} draws on two theorems, the first due to Rogozin \cite{Rog87:1}, which reduces the general case to the cube, 
and the second a geometric result on cube slicing due to K.~Ball \cite{Bal86}.

\begin{thm}\cite{Rog87:1}\label{rogsquare}
Let $X_1, \dots, X_n$ be independent $\mathbb{R}$-valued random variables with bounded densities.  Then 
	\begin{align} 
		N_\infty(X_1 + \cdots + X_n) \geq N_\infty(Y_1 + \cdots + Y_n),
	\end{align}
where $Y_1, \dots, Y_n$ are a collection of independent random variables, with $Y_i$ chosen to be uniformly distributed on a symmetric interval such that $N_\infty(Y_i) = N_\infty(X_i)$.
\end{thm}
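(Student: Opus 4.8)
The plan is to restate everything in terms of densities and $L^\infty$-norms and then combine a rearrangement step with a peakedness comparison. Write $f_i$ for the density of $X_i$ and $u_i$ for the density of $Y_i$. In dimension one $N_\infty(X) = \|f\|_\infty^{-2}$, and $u_i$ is the uniform density on a symmetric interval of height $M_i := \|f_i\|_\infty \in (0,\infty)$, that is
\[
u_i = M_i\,\mathbf{1}_{[-1/(2M_i),\,1/(2M_i)]},
\]
so that indeed $\|u_i\|_\infty = M_i = \|f_i\|_\infty$, i.e. $N_\infty(Y_i) = N_\infty(X_i)$. Thus the asserted inequality $N_\infty(X_1+\cdots+X_n) \ge N_\infty(Y_1+\cdots+Y_n)$ is equivalent to
\[
\|f_1 * \cdots * f_n\|_\infty \;\le\; \|u_1 * \cdots * u_n\|_\infty ,
\]
i.e. to the statement that replacing each summand's density by the uniform density on a symmetric interval of the same height can only increase the sup-norm of the convolution.

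First I would pass to symmetric decreasing rearrangements. By Theorem~\ref{thm:rearr} in the case $p=\infty$ (equivalently, by the Brascamp--Lieb--Luttinger rearrangement inequality applied to $(f_1 * \cdots * f_n)(x)$ for each fixed $x$), one has $\|f_1 * \cdots * f_n\|_\infty \le \|f_1^{*} * \cdots * f_n^{*}\|_\infty$. Since rearrangement preserves $L^p$-norms, each $g_i := f_i^{*}$ is a symmetric decreasing density with $\|g_i\|_\infty = M_i = \|u_i\|_\infty$, and it remains to show $\|g_1 * \cdots * g_n\|_\infty \le \|u_1 * \cdots * u_n\|_\infty$ for symmetric decreasing $g_i$. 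The point of the extremal role of $u_i$ is that it is the \emph{most peaked} density of height $M_i$: since $g_i \le M_i$ pointwise,
\[
\int_{-t}^{t} g_i \;\le\; \min\{\,2M_i t,\,1\,\} \;=\; \int_{-t}^{t} u_i \qquad \text{for every } t \ge 0 .
\]

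The heart of the argument is a peakedness-monotonicity lemma for convolutions, which I would prove by the layer-cake representation. Let $v,g$ be symmetric decreasing densities with $\int_{-t}^t v \ge \int_{-t}^t g$ for all $t\ge 0$, and let $w$ be any symmetric decreasing density; writing $w(x) = \int_0^{\|w\|_\infty} \mathbf{1}_{\{|x| < \rho(s)\}}\,ds$ and using that $w$ is even, Tonelli's theorem gives
\[
(v * w)(0) = \int v\, w = \int_0^{\|w\|_\infty} \!\Big( \int_{-\rho(s)}^{\rho(s)} v \Big)\, ds \;\ge\; \int_0^{\|w\|_\infty} \!\Big( \int_{-\rho(s)}^{\rho(s)} g \Big)\, ds = (g * w)(0) .
\]
Now I would replace the factors $g_i$ by $u_i$ one at a time: swap $g_1 \to u_1$ with $w = g_2 * \cdots * g_n$, then $g_2 \to u_2$ with $w = u_1 * g_3 * \cdots * g_n$, and so on. Each swap is legitimate because on $\mathbb{R}$ a convolution of symmetric unimodal densities is again symmetric unimodal (Wintner's theorem), so at every stage the remaining product is an admissible $w$ whose maximum is attained at the origin. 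Chaining the $n$ swaps gives $(g_1 * \cdots * g_n)(0) \le (u_1 * \cdots * u_n)(0)$, which, as both products are symmetric unimodal, is precisely $\|g_1 * \cdots * g_n\|_\infty \le \|u_1 * \cdots * u_n\|_\infty$. Combining with the rearrangement step and reading off $N_\infty = \|\cdot\|_\infty^{-2}$ finishes the proof.

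The main obstacle is really concentrated in Theorem~\ref{thm:rearr} itself, which is a multi-function rearrangement inequality of Brascamp--Lieb--Luttinger type; conditional on it, the remaining peakedness comparison is elementary but must be set up with care so that the inductive replacement of $g_i$ by $u_i$ stays inside the class of symmetric unimodal densities (using closure of that class under convolution), since that closure is exactly what locates all the relevant suprema at the origin. A routine point to dispatch along the way is that every convolution appearing is a bounded density, so all sup-norms are finite and attained; this is immediate from $\|f * h\|_\infty \le \|f\|_\infty$ and the hypothesis that the $f_i$ are bounded.
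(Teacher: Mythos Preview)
The paper does not supply its own proof of this theorem; it is simply quoted from Rogozin \cite{Rog87:1} and used as a black box in the proof of Theorem~\ref{1dinfinityEPI}. So there is no in-paper argument to compare against.

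That said, your argument is sound and essentially in the spirit of Rogozin's original approach. The reduction via rearrangement is correct: the $p=\infty$ case of Theorem~\ref{thm:rearr} (or directly Brascamp--Lieb--Luttinger applied pointwise to the convolution) gives $\|f_1*\cdots*f_n\|_\infty \le \|f_1^**\cdots*f_n^*\|_\infty$, and rearrangement preserves the heights $M_i$. The peakedness comparison $\int_{-t}^{t} g_i \le \min\{2M_i t,1\} = \int_{-t}^{t} u_i$ is exactly the right extremal property of the uniform density among symmetric unimodal densities of prescribed height, and your layer-cake proof of the monotonicity lemma $(v*w)(0)\ge (g*w)(0)$ is clean. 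The inductive swap is legitimate because Wintner's theorem keeps every intermediate convolution symmetric unimodal, so its supremum stays at the origin. The only cosmetic point is that invoking Theorem~\ref{thm:rearr} (from 2014) to prove a 1987 result is anachronistic, but you already note that BLL (1974) suffices directly, so there is no logical circularity.
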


\begin{thm}\cite{Bal86} \label{ballsquare}
	Every section of the unit cube $[-\frac 1 2, \frac 1 2]^d$ denoted $Q_d$ by an $(d-1)$-dimensional subspace has volume bounded above by $\sqrt{2}$.  This upper bound is attained iff the subspace contains a $(d-2)$-dimensional face of $Q_d $.
\end{thm}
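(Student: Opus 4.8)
The plan is to follow the Fourier-analytic route of Ball's original argument. Write $Q_d=[-\tfrac12,\tfrac12]^d$ and let $\theta=(\theta_1,\dots,\theta_d)$ be a unit normal to the given $(d-1)$-dimensional subspace. If some $\theta_m=0$, then $Q_d\cap\theta^\perp$ is a product of an interval with a lower-dimensional section, so $\vol_{d-1}(Q_d\cap\theta^\perp)=\vol_{d-2}(Q_{d-1}\cap(\theta')^\perp)$ for the (still unit) restriction $\theta'$ of $\theta$ to the surviving coordinates; hence we may assume all $\theta_j\neq0$, and moreover that at least two coordinates survive, since otherwise $\theta=\pm e_j$ and the section is a $(d-1)$-dimensional unit cube of volume $1<\sqrt2$. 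The first step is then the classical slice identity
\[
\vol_{d-1}(Q_d\cap\theta^\perp)=\int_{\mathbb{R}}\prod_{j=1}^{d}\frac{\sin(\pi t\theta_j)}{\pi t\theta_j}\,dt ,
\]
which I would obtain from the projection--slice theorem: with $F(s)$ the $(d-1)$-volume of $Q_d\cap\{\langle x,\theta\rangle=s\}$ one has $\widehat F(t)=\widehat{\mathbf{1}_{Q_d}}(t\theta)=\prod_j \frac{\sin(\pi t\theta_j)}{\pi t\theta_j}$ (Fourier transform normalized by $\widehat f(\xi)=\int f(x)e^{-2\pi i\langle x,\xi\rangle}dx$), and Fourier inversion at $s=0$ yields the identity; the integral converges absolutely since with two or more nonzero $\theta_j$ the integrand is $O(|t|^{-2})$.

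I would then split into two cases. If $\theta_k^2>\tfrac12$ for some $k$, project $Q_d\cap\theta^\perp$ orthogonally onto $e_k^\perp$: since $\theta_k\neq0$ this projection is injective on the section, it multiplies $(d-1)$-volume by $|\theta_k|$ (a short computation identifies the cosine of the dihedral angle between $\theta^\perp$ and $e_k^\perp$ as $|\theta_k|$), and its image lies in the $(d-1)$-cube $Q_d\cap e_k^\perp$ of volume $1$; hence $\vol_{d-1}(Q_d\cap\theta^\perp)\le 1/|\theta_k|<\sqrt2$, strictly. In the complementary case $\theta_j^2\le\tfrac12$ for all $j$, set $p_j=1/\theta_j^2\ge2$; since $\sum_j 1/p_j=\sum_j\theta_j^2=1$, bounding the integrand in the slice identity by its modulus and applying Hölder's inequality gives
\[
\vol_{d-1}(Q_d\cap\theta^\perp)\le\prod_{j}\Bigl(\int_{\mathbb{R}}\Bigl|\tfrac{\sin(\pi t\theta_j)}{\pi t\theta_j}\Bigr|^{p_j}dt\Bigr)^{1/p_j}=\prod_j\Bigl(\frac{J(p_j)}{|\theta_j|}\Bigr)^{\theta_j^2},\qquad J(p):=\int_{\mathbb{R}}\Bigl|\tfrac{\sin\pi u}{\pi u}\Bigr|^{p}du ,
\]
where the rescaling $u=t\theta_j$ produces the factors $1/|\theta_j|$.

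At this point I would invoke Ball's sharp estimate $J(p)\le\sqrt{2/p}$ for every $p\ge2$, with equality if and only if $p=2$ (note $J(2)=1$ by Plancherel). Granting it, $J(p_j)/|\theta_j|\le\sqrt2$ for each $j$, so the product above is at most $(\sqrt2)^{\sum_j\theta_j^2}=\sqrt2$, which is the desired bound. For the equality case, equality forces the non-dominant case together with $J(p_j)=\sqrt{2/p_j}$ for every coordinate in the support of $\theta$, hence $p_j=2$, i.e.\ $\theta_j^2=\tfrac12$, for all such $j$; since $\sum_j\theta_j^2=1$, exactly two coordinates are nonzero, each equal to $\pm1/\sqrt2$ (the Hölder equality condition is then automatic, the two relevant integrands coinciding). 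It remains to observe that $\theta=\tfrac1{\sqrt2}(\pm e_i\pm e_j)$ with $i\neq j$ describes precisely those unit normals whose orthogonal hyperplane contains a $(d-2)$-dimensional face of $Q_d$: fixing $x_i,x_j\in\{\pm\tfrac12\}$ so that $\theta_i x_i+\theta_j x_j=0$ exhibits such a face inside $\theta^\perp$, and conversely any $(d-2)$-face contained in a linear hyperplane $\theta^\perp$ must have all but two of the $\theta$-coordinates equal to $0$ and the remaining two of equal modulus, forced to be $1/\sqrt2$.

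The one genuinely hard ingredient is the inequality $\int_{\mathbb{R}}|\sin(\pi u)/(\pi u)|^{p}\,du\le\sqrt{2/p}$ for $p\ge2$; everything else is soft. Because it must be tight at $p=2$ and strict for $p>2$, crude pointwise majorants such as $|\sin(\pi u)/(\pi u)|\le\min\{1,1/(\pi|u|)\}$ are, by themselves, too lossy to recover the constant (indeed $|\sin(\pi u)/(\pi u)|$ is not dominated by any single Gaussian). Ball handles this by a careful comparison of the bulk of the probability density $(\sin(\pi u)/(\pi u))^2$ (whose Fourier transform is the triangle function on $[-1,1]$) against a reference profile while simultaneously controlling the tails; I would reproduce, or else simply cite, that estimate from \cite{Bal86} as the technical heart of the argument.
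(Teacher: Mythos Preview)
The paper does not supply its own proof of this theorem; it is quoted from \cite{Bal86} and used as a black box in the proof of Theorem~\ref{1dinfinityEPI}. So there is nothing in the paper to compare your argument against beyond the bare citation.

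That said, your proposal is a faithful and correct outline of Ball's original Fourier-analytic proof. The reduction to the case where all $\theta_j\neq 0$ via induction, the slice formula $\vol_{d-1}(Q_d\cap\theta^\perp)=\int_{\mathbb{R}}\prod_j \mathrm{sinc}(\pi t\theta_j)\,dt$, the dichotomy according to whether some $\theta_k^2>\tfrac12$ (handled by the elementary projection bound $\vol\le 1/|\theta_k|<\sqrt2$) or all $\theta_j^2\le\tfrac12$ (handled by H\"older with exponents $p_j=1/\theta_j^2\ge 2$), and the identification of the equality case are all exactly as in Ball's paper. You have also correctly isolated the one nontrivial analytic ingredient, the sharp sinc moment estimate
\[
\int_{\mathbb{R}}\Bigl|\frac{\sin\pi u}{\pi u}\Bigr|^{p}du\le\sqrt{\tfrac{2}{p}}\qquad(p\ge 2),
\]
with equality only at $p=2$, and you are right that this is where the real work lies; simply citing it from \cite{Bal86} is entirely appropriate in a survey context such as this paper. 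One small addendum worth making explicit in the equality analysis: beyond $J(p_j)=\sqrt{2/p_j}$ forcing $p_j=2$, you should also note that with exactly two nonzero coordinates of modulus $1/\sqrt2$ the integrand $\prod_j\mathrm{sinc}(\pi t\theta_j)$ is a perfect square and hence nonnegative, so the step $\bigl|\int\prod\mathrm{sinc}\bigr|\le\int\prod|\mathrm{sinc}|$ is an equality as well; you allude to this but it deserves a sentence.
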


\begin{proof}[Proof of Theorem \ref{1dinfinityEPI}]
For $X_i$ independent and $\RL$-valued, with $Y_i$ chosen as in Theorem~\ref{rogsquare},
\begin{align*}
N_\infty(X_1 + \cdots + X_n) \geq N_\infty(Y_1 + \cdots + Y_n).
\end{align*}
Applying a sort of change of variables, and utilizing the degree $2$ homogeneity of entropy powers, one can write
\[
N_\infty(Y_1 + \cdots + Y_n) = \left(\sum_{i=1}^n N_\infty(Y_i) \right) N_\infty(\theta_1 U_1 + \cdots + \theta_n U_n),
\] 
where the $U_i$ are independent uniform on $[-\frac 1 2, \frac 1 2]$ and $\theta$ is a unit vector (to be explicit, take $\theta_i = \sqrt{N_\infty(Y_i)/\sum_j N_\infty(Y_j)}$ and the above can be verified). Then utilizing the symmetry of $\theta_1 U_1 + \cdots + \theta_n U_n$ and the BMI, we see that the maximum of its density must occur at $0$, yielding
\ben
N_\infty(\theta_1 U_1 + \cdots + \theta_n U_n)
=\left|Q_d \cap \theta^\perp \right|^{-2}_{d-1} \ge \frac{1}{2}.
\een
The result follows.
\end{proof}

Theorem \ref{1dinfinityEPI} admits two natural generalizations.  The first, also handled in \cite{BC15:1} (and later recovered 
in \cite{RS16} by taking the limit as $p\ra\infty$ in Theorem~\ref{thm:rs}), is the following.  

\begin{thm} \cite{BC15:1} \label{BC15thm}
For independent random vectors $X_1, \dots, X_n$ in $\mathbb{R}^d$.  
\begin{align} \label{BCinfty}
N_\infty(X_1 + \cdots + X_n) 
	&\geq 
		\bigg(1-\frac{1}{n}\bigg)^{n-1} [N_\infty(X_1)+ \cdots + N_\infty(X_n)]
		\\
	&\geq
		\frac 1 e [N_\infty(X_1)+ \cdots + N_\infty(X_n)].
\end{align}
\end{thm}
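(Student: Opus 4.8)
The plan is to mimic the proof of Theorem~\ref{1dinfinityEPI} as closely as possible, replacing the one-dimensional ingredients (Rogozin's interval reduction and Ball's cube-slicing estimate) with their higher-dimensional analogues, and to watch carefully what the worse slicing constant does to the final bound. The first step is to invoke a multidimensional version of Rogozin's Theorem~\ref{rogsquare}: for independent $\R^d$-valued random vectors $X_1,\dots,X_n$ with bounded densities, one has $N_\infty(X_1+\cdots+X_n) \geq N_\infty(Y_1+\cdots+Y_n)$, where each $Y_i$ is uniform on a centered cube $Q^{(i)} \subset \R^d$ chosen so that $N_\infty(Y_i) = N_\infty(X_i)$. (This cube reduction in general dimension is exactly Rogozin's result in the form one needs; if a direct reference is wanted one can derive it from the Brunn--Minkowski/rearrangement circle of ideas, but I would simply cite it.)

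Next I would perform the same change of variables as in the one-dimensional argument. Writing $Y_i$ as $\sigma_i U_i$ with $U_i$ uniform on $[-\tfrac12,\tfrac12]^d$ and $\sigma_i^2$ proportional to $N_\infty(Y_i)$, and using the degree-$2$ homogeneity of $N_\infty$ together with the fact that (by the BMI, as in the proof of Theorem~\ref{1dinfinityEPI}) the density of a sum of symmetric independent vectors is maximized at the origin, one obtains
\ben
N_\infty(Y_1+\cdots+Y_n) = \bigg(\sum_{i=1}^n N_\infty(Y_i)\bigg)\, N_\infty(\theta_1 U_1 + \cdots + \theta_n U_n),
\een
with $\theta$ a unit vector, $\theta_i = \sqrt{N_\infty(Y_i)/\sum_j N_\infty(Y_j)}$. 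The quantity $N_\infty(\theta_1 U_1 + \cdots + \theta_n U_n)$ is then the reciprocal square of the maximal density of the projection (in the appropriate sense) of the $nd$-dimensional cube $[-\tfrac12,\tfrac12]^{nd}$ under the map $(u_1,\dots,u_n)\mapsto \sum_i \theta_i u_i \in \R^d$; equivalently it is the density at $0$ of the uniform measure on that cube pushed forward to $\R^d$, which is a $d$-codimension "slab section" density. The key estimate needed is that this maximal density is at most $2^{(nd-d)/2}$ — i.e. an $(nd-d)$-fold iterate of Ball's $\sqrt{2}$ bound, applied $d$ times or, more efficiently, the Ball--Busemann-type bound $|Q_m \cap H| \leq 2^{k/2}$ for a codimension-$k$ central section $H$ of the unit cube $Q_m$. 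Raising to the power $-2/d$ converts $2^{-(nd-d)/2}$ into $2^{-(n-1)}$, which is worse than the claimed $(1-1/n)^{n-1}$; so a crude iterate of Ball's theorem is not sharp enough.

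The main obstacle is therefore precisely this last slicing estimate: one cannot afford to lose a factor of $2$ per "lost dimension," and instead needs the sharp bound that the density at the origin of the pushforward of the uniform measure on $[-\tfrac12,\tfrac12]^{nd}$ under $(u_i)\mapsto \sum \theta_i u_i$ is at most $(n/(n-1))^{(n-1)d/2}\cdot(\text{something})$ — concretely, the extremal configuration should be $\theta_i$ all equal (so $\theta_i = 1/\sqrt n$), in which case the sum $\tfrac{1}{\sqrt n}(U_1+\cdots+U_n)$ has density at $0$ whose reciprocal square is $\big(\tfrac{n}{1}\big)^{?}\cdots$; one computes directly that for $d=1$ this density at $0$ equals $\sqrt n \cdot (\text{vol of the Irwin--Hall slice})$ and more relevantly that the resulting bound is exactly $(1-1/n)^{n-1}$ times $\sum_i N_\infty(X_i)$, matching Theorem~\ref{thm:rs} in the $p\to\infty$ limit. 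So the honest route is: rather than slicing, estimate $\|u_1 * \cdots * u_n\|_\infty$ where $u_i$ is the density of $\theta_i U_i$ directly, using $\|u_1*\cdots*u_n\|_\infty \leq \prod \|u_i\|_{p_i}$ for a Young-type exponent choice, or simply take the $p\to\infty$ limit of Theorem~\ref{thm:rs} (as the excerpt itself notes is possible, since $c_p^{(n)} \to (1-1/n)^{n-1}$). I would in fact present the proof that way: state that \eqref{BCinfty} follows by letting $p\to\infty$ in Theorem~\ref{thm:rs}, using that $h_p$ is continuous in $p$ on $(1,\infty]$ and that $\lim_{p\to\infty} c_p^{(n)} = p^{1/(p-1)}(1-1/(np'))^{np'-1}\big|_{p\to\infty} = (1-1/n)^{n-1}$, and that the second, cruder inequality $(1-1/n)^{n-1}\geq 1/e$ is the standard elementary bound (the sequence $(1-1/n)^{n-1}$ decreases to $1/e$). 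That reduces the whole proof to a limiting argument plus a one-line calculus fact, with the genuine content having been isolated in Theorem~\ref{thm:rs}.
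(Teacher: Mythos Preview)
The paper does not give its own proof of this theorem; it simply cites \cite{BC15:1} and then remarks parenthetically that the result was ``later recovered in \cite{RS16} by taking the limit as $p\to\infty$ in Theorem~\ref{thm:rs}.'' Your final proposal --- deduce \eqref{BCinfty} by letting $p\to\infty$ in Theorem~\ref{thm:rs}, checking that $c_p^{(n)}\to (1-1/n)^{n-1}$, and then observing $(1-1/n)^{n-1}\geq 1/e$ --- is exactly this route and is correct.

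A comment on the first half of your proposal, which you (rightly) abandoned: the multidimensional Rogozin-type reduction you invoke is not something one can ``simply cite.'' Rogozin's Theorem~\ref{rogsquare} is genuinely one-dimensional, and the paper itself says that a generalization to $\R^d$ was obtained only ``very recently'' by the authors in \cite{MMX16:2} --- and there the reduction is to Cartesian products of Euclidean \emph{balls}, not cubes. So the cube reduction you describe is not available off the shelf. You were also correct that even granting such a reduction, iterating Ball's $\sqrt{2}$ slicing bound would yield $2^{-(n-1)}$ rather than $(1-1/n)^{n-1}$, which is too weak. The limiting argument via Theorem~\ref{thm:rs} is both the cleanest and the route the paper points to.
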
  

A second direction was pursued by Livshyts, Paouris and Pivovarov \cite{LPP15} in which the authors derive 
sharp bounds for the maxima of densities obtained as the projections of product measures. Specifically, 
\cite[Theorem 1.1]{LPP15} shows that  given probability density functions $f_i$  on $\mathbb{R}$ with 
$\|f_i\|_\infty \leq 1$, with joint product density $f$ defined by $f(x_1,\ldots, x_n) = \prod_{i =1}^n f_i(x_i)$, then 
\begin{align} \label{LPP}
 \| \pi_E(f) \|_\infty \leq \min \left( \left(\frac{n}{n-k}\right)^{(n-k)/2}, 2^{k/2} \right) ,
\end{align}
where $\pi_E(f)$ denotes the pushforward of the probability measure induced by $f$ under orthogonal projection to a $k$-dimensional subspace $E$,
i.e., $\pi_E(f)(x) = \int_{x + E^\perp} f(y) dy $. In addition, cubes are shown to be extremizers of the above inequality.  
In the language of information theory, this can be rewritten as follows.

\begin{thm}\cite{LPP15} \label{LPP15thm}
Let $X = (X_1, \ldots, X_n)$ where $X_i$ are independent $\R$-valued random variables, and $N_\infty(X_i) \geq 1$. 
Then
\begin{align} \label{inftyEPI}
N_\infty(P_E X) \geq \max \left\{\frac 1 2 , \left( 1-\frac{k}{n} \right)^{\frac{n}{k}-1}  \right\} ,
\end{align}
where $P_E$ denotes the orthogonal projection to a $k$-dimensional subspace $E$, 
and equality can be achieved for $X_i$ uniform on intervals.
\end{thm}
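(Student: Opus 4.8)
The plan is to recognize Theorem~\ref{LPP15thm} as essentially the information-theoretic transcription of the geometric estimate \eqref{LPP} quoted from \cite{LPP15}, so that the proof amounts to matching the two sides through the dictionary $N_\infty(Z) = \|h\|_\infty^{-2/m}$ valid for a random vector $Z\sim h$ in $\R^m$ (which follows from $h_\infty(Z) = -\log\|h\|_\infty$ and $N_\infty(Z) = e^{2h_\infty(Z)/m}$). First I would record the two halves of this dictionary in the present setting. Since each $X_i$ is real-valued with density $f_i$, the hypothesis $N_\infty(X_i)\ge 1$ is equivalent to $\|f_i\|_\infty = e^{-h_\infty(X_i)}\le 1$, which is precisely the normalization under which \eqref{LPP} is stated. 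On the other side, $X=(X_1,\dots,X_n)$ has the product density $f(x_1,\dots,x_n)=\prod_{i=1}^n f_i(x_i)$ on $\R^n$, and $P_E X$ is the image of $X$ under the linear map $P_E\colon\R^n\to E\cong\R^k$; hence $P_E X$ has density $\pi_E(f)(y)=\int_{y+E^\perp} f$, and $N_\infty(P_E X) = \|\pi_E(f)\|_\infty^{-2/k}$.

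With these identifications in hand, the second step is pure bookkeeping. Inequality \eqref{LPP} reads $\|\pi_E(f)\|_\infty \le \min\bigl\{(\tfrac{n}{n-k})^{(n-k)/2},\, 2^{k/2}\bigr\}$, and raising both sides to the power $-2/k$, which reverses order, gives
\[
N_\infty(P_E X) = \|\pi_E(f)\|_\infty^{-2/k} \ \ge\ \max\Bigl\{\bigl(\tfrac{n}{n-k}\bigr)^{-(n-k)/k},\ \tfrac12\Bigr\}.
\]
It remains to check that $\bigl(\tfrac{n}{n-k}\bigr)^{-(n-k)/k} = \bigl(\tfrac{n-k}{n}\bigr)^{(n-k)/k} = \bigl(1-\tfrac{k}{n}\bigr)^{\frac nk - 1}$, using $(n-k)/k = n/k - 1$; this reproduces exactly \eqref{inftyEPI}.

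For the sharpness claim, I would invoke the part of \cite[Theorem 1.1]{LPP15} asserting that cubes extremize \eqref{LPP}: taking each $f_i = \mathbbm{1}_{[-\tfrac12,\tfrac12]}$, so that $X_i$ is uniform on a unit interval with $N_\infty(X_i)=1$, makes \eqref{LPP} an equality for an appropriate subspace $E$, and the dictionary above turns this into equality in \eqref{inftyEPI}. There is essentially no obstacle internal to this reformulation; the entire mathematical weight sits in the cited estimate \eqref{LPP}. For completeness I would remark that its proof in \cite{LPP15} combines a Rogozin-type rearrangement reduction (in the spirit of Theorem~\ref{rogsquare}) to uniform densities on intervals with sharp bounds on the maximal density of a $k$-dimensional marginal of the cube: the $2^{k/2}$ branch is a projection analogue of Ball's slicing estimate (Theorem~\ref{ballsquare}), dominant when $k$ is close to $n$, while the $(\tfrac{n}{n-k})^{(n-k)/2}$ branch comes from a more direct maximal-density estimate that is tighter when $k$ is small relative to $n$. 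The one genuinely delicate point there, which we may take on faith here, is that the cube is simultaneously extremal for both branches.
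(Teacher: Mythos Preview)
Your proposal is correct and matches the paper's treatment exactly: the paper introduces Theorem~\ref{LPP15thm} with the sentence ``In the language of information theory, this can be rewritten as follows,'' treating it purely as a transcription of \eqref{LPP}, and your dictionary $N_\infty(P_E X)=\|\pi_E(f)\|_\infty^{-2/k}$ together with the algebra $(\tfrac{n}{n-k})^{-(n-k)/k}=(1-\tfrac{k}{n})^{n/k-1}$ is precisely that rewriting. Your commentary on the two branches and the role of Rogozin-type reduction and Ball's estimate is accurate and more explicit than anything the paper itself supplies.
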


In the $k=1$ case, this implies Theorem \ref{1dinfinityEPI} by applying the inequality \eqref{inftyEPI} to 
\ben
Y_i = X_i / \sqrt{N_\infty(X_i)} ,
\een 
and taking $E$ to be the space spanned by the unit vector 
$\theta_i = \sqrt{ N_\infty(X_i)/\sum_j N_\infty(X_j)}$.  
The $Y_i$ defined satisfy the hypothesis so we have $N_\infty(P_E Y) \geq 1/2$, but 
\begin{align*}
	N_\infty(P_E Y) 
		&=
			N_\infty( \langle \theta , Y\rangle)
			\\
		&=
			N_\infty\left( \frac{ X_1 + \cdots + X_n}{ \sqrt{\sum_{j=1}^n N_\infty(X_j)}} \right)	
			\\
		&=
			\frac{ N_\infty( X_1 + \cdots + X_n)}{ \sum_{j=1}^n N_\infty(X_j)},
\end{align*} 
and the implication follows.

Conversely, for the one-dimensional subspace $E$ spanned by the unit vector $\theta$, and $X_i$ satisfying $N_\infty(X_i)\geq 1$, 
if one applies Theorem~\ref{1dinfinityEPI} to $Y_i = \theta_i X_i$, we recover the one-dimensional case of the projection theorem as
\begin{align*}
	N_\infty(P_E X)
		&=
			N_\infty(Y_1 + \cdots + Y_n) 
			\\
		&\geq
			\frac 1 2 ( N_\infty(Y_1) + \cdots + N_\infty(Y_n) )
			\\
		&=
			\frac 1 2 ( \theta_1^2 N_\infty(X_1) + \cdots + \theta_n^2 N_\infty(X_n) )
			\\
		&\geq
			\frac 1 2.
\end{align*} 

Thus Theorem~\ref{LPP15thm} can be seen as a $k$-dimensional generalization of the $\infty$-EPI for real random variables. 

In recent work \cite{MMX16:2}, we have obtained a generalization of Rogozin's inequality that allows us to prove
multidimensional versions of both Theorems~\ref{BC15thm} and \ref{LPP15thm}. Indeed, our extension of 
Rogozin's inequality reduces both the latter theorems to geometric inequalities about Cartesian products of Euclidean balls, 
allowing us to obtain sharp constants in Theorem~\ref{1dinfinityEPI} for any fixed dimension
as well as to generalize Theorem~\ref{LPP15thm} to the case where each $X_i$ is a random vector.

\section{Reverse Entropy Power Inequalities}
\label{sec:repi}

\subsection{$\kappa$-concave measures and functions}
\label{sec:cvx}

$\kappa$-concave measures are measures that satisfy a generalized Brunn-Minkowski inequality, 
and were studied systematically by Borell \cite{Bor74, Bor75a}.  

As a prerequisite, we define the $\kappa$-mean of two numbers, for $a,b \in (0,\infty)$, $t \in (0,1)$ and $\kappa \in (-\infty,0)\cup (0,\infty)$ define 
\begin{align}
M_\kappa^t(a,b) = \left((1-t) a^{\kappa} + t b^\kappa \right)^{\frac 1 \kappa}.
\end{align}
For $\kappa \in \{-\infty, 0, \infty \}$ define $M_\kappa^t(a,b) = \lim_{\kappa' \to \kappa} M_{\kappa'}^t(a,b)$ corresponding to 
\[
\{ \min(a,b), a^{1-t}b^t, \max(a,b)\}
\] respectively.  $M_\kappa$ can be extended to $a,b \in [0,\infty)$ via direct evaluation when $\kappa \geq 0$ and again by limits when $\kappa < 0$ so that $M_\kappa(a,b) = 0$ whenever $ab=0$.  

\begin{defn}
Fix $\kappa \in [-\infty, \frac 1 d]$. We say that a probability measure $\mu$ on $\mathbb{R}^d$ is $\kappa$-concave if 
the support of $\mu$ has non-empty 
interior\footnote{We only assume this for simplicity of exposition-- a more  general theory not requiring absolute continuity of the measure $\mu$
with respect to Lebesgue measure on $\RL^d$ is available in Borell's papers. Note that while the support of $\mu$ having nonempty interior in
general is a weaker condition than absolute continuity, the two conditions turn out to coincide in the presence of a $\kappa$-concavity assumption.},
and
\begin{align*}
\mu((1-t)A+tB) \geq M_\kappa^t(\mu(A),\mu(B))
\end{align*}
for any Borel sets $A, B$, and any $t \in [0,1]$.

We say that $\mu$ is a convex measure if it is $\kappa$-concave for some $\kappa\in [-\infty, \frac 1 d]$.  

When the law of a random vector $X$ is a $\kappa$-concave measure, 
we will refer to $X$  as a $\kappa$-concave random vector. 
\end{defn}

Thus, the $\kappa$-concave measures are those that distribute volume in such a way that the vector space average of two sets is larger than the $\kappa$-mean of their respective volumes.  Let us state some preliminaries. First notice that by Jensen's inequality $\mu$ being $\kappa$-concave implies $\mu$ is $\kappa'$-concave for $\kappa' \leq \kappa$. The support of a $\kappa$-concave measure is necessarily convex, and since we assumed that the support has nonempty interior, the dimension of the smallest affine subspace of $\RL^d$ containing the support of $\mu$ is automatically $d$.   

It is a nontrivial fact that concavity properties of a measure can equivalently be described pointwise in terms of its density.

\begin{thm}[\cite{Bor74}]\label{thm:k-kd}
A measure $\mu$ on $\RL^d$ is $\kappa$-concave if and only if it has a density (with respect to the Lebesgue measure on its support)
that is a $s_{\kappa,d}$-concave function, in the sense that
\[
f((1-t)x+ty) \geq M_{s_{\kappa,d}}^t(f(x),f(y))
\]
whenever $f(x)f(y)>0$ and $t\in (0,1)$, and where
\[
s_{\kappa,d} := \frac \kappa {1-\kappa d} .
\]
\end{thm}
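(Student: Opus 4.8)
The main tool is the Borell--Brascamp--Lieb inequality, the $s$-concave generalization of the Pr\'ekopa--Leindler inequality recalled above: if $s\ge -1/d$ and $F,G,H:\R^d\to[0,\infty)$ satisfy $H((1-t)x+ty)\ge M_s^t(F(x),G(y))$ whenever $F(x)G(y)>0$, then $\int H\ge M^t_{s/(1+sd)}\big(\int F,\int G\big)$. The arithmetic that makes the statement fit together is the identity $1+s_{\kappa,d}\,d=\frac1{1-\kappa d}$, which gives $s_{\kappa,d}/(1+s_{\kappa,d}d)=\kappa$; moreover $s_{\kappa,d}$ is increasing in $\kappa$ and ranges over $[-1/d,+\infty]$ as $\kappa$ ranges over $[-\infty,1/d]$, so the admissibility condition $s\ge-1/d$ is automatic.

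For the ``if'' direction, suppose $\mu$ has $s_{\kappa,d}$-concave density $f$, write $s=s_{\kappa,d}$, and fix Borel sets $A,B$ and $t\in(0,1)$. Apply Borell--Brascamp--Lieb to $F=f\mathbbm{1}_A$, $G=f\mathbbm{1}_B$, $H=f\mathbbm{1}_{(1-t)A+tB}$: if $x\in A$, $y\in B$ and $f(x)f(y)>0$, then $(1-t)x+ty\in(1-t)A+tB$, so $H((1-t)x+ty)=f((1-t)x+ty)\ge M_s^t(f(x),f(y))$ by $s$-concavity of $f$, and the hypothesis of Borell--Brascamp--Lieb (which is required only where $F(x)G(y)>0$) is thereby verified. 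Its conclusion is exactly $\mu((1-t)A+tB)\ge M_\kappa^t(\mu(A),\mu(B))$, i.e. $\mu$ is $\kappa$-concave. (For $\kappa=0$ this is literally Pr\'ekopa--Leindler; the endpoints $\kappa\in\{-\infty,1/d\}$ are covered by the same formula with the conventions for $M_{\pm\infty}$.)

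The ``only if'' direction has two ingredients, and the real obstacle is the first: a $\kappa$-concave measure $\mu$ whose support has nonempty interior is absolutely continuous. This is the substantive half of Borell's theorem, and rather than reprove it in a survey I would cite \cite{Bor74}; the flavour of why $\kappa$-concavity excludes singular behaviour is already visible on hyperplanes. Indeed, the pushforward $\nu=\ell_\#\mu$ of $\mu$ under a nonzero linear functional $\ell$ is again $\kappa$-concave (since $\ell^{-1}((1-t)A+tB)\supseteq(1-t)\ell^{-1}A+t\ell^{-1}B$) and is supported on a nondegenerate interval; applying the one-dimensional $\kappa$-concavity inequality to intervals $[c_i-\delta,c_i+\delta]$ straddling a hypothetical interior atom of $\nu$ and letting $\delta\downarrow0$ forces that atom to have mass zero, so no hyperplane carries positive $\mu$-mass. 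Upgrading this to full absolute continuity is where the genuine subtlety lies and is the part I would take on faith from \cite{Bor74}.

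Granting the density $f$ of $\mu$, the second ingredient recovers its pointwise concavity. Fix Lebesgue points $x_0,x_1$ of $f$ in the interior of $\supp\mu$ with $f(x_0),f(x_1)>0$, put $x_t=(1-t)x_0+tx_1$ (which may be arranged to be a Lebesgue point as well), and apply $\kappa$-concavity to the balls $B(x_0,r_0)$ and $B(x_1,r_1)$, using the identity $(1-t)B(x_0,r_0)+tB(x_1,r_1)=B\big(x_t,(1-t)r_0+tr_1\big)$. Dividing by $r_0^d\,|B_2^d|$, using the $1$-homogeneity of $M^t_\kappa$ in its two arguments, and letting $r_0\downarrow0$ with $r_1/r_0=\lambda$ held fixed, Lebesgue differentiation gives
\[
f(x_t)\,\big((1-t)+t\lambda\big)^d\ \ge\ M^t_\kappa\big(f(x_0),\,\lambda^d f(x_1)\big)\qquad\text{for every }\lambda>0 .
\]
Maximizing the right-hand side over $\lambda$ --- the critical value being $\lambda^d=(f(x_1)/f(x_0))^{\,s_{\kappa,d}\,d}$ --- collapses this to $f(x_t)\ge M^t_{s_{\kappa,d}}(f(x_0),f(x_1))$, i.e. $s_{\kappa,d}$-concavity of $f$ off a Lebesgue-null set, and a routine continuity/regularization step (using convexity of $\supp\mu$) promotes it to all points where $f>0$. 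This last one-variable optimization is purely mechanical --- it is essentially the same computation that produces the exponent $s/(1+sd)$ in Borell--Brascamp--Lieb --- so the only real work lies in quoting the absolute-continuity statement and in checking the Borell--Brascamp--Lieb bookkeeping.
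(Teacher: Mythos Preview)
The paper does not prove this theorem at all: it is quoted verbatim from Borell \cite{Bor74} and used as a black box thereafter. So there is nothing to compare against; your sketch stands on its own as an outline of the classical argument.

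Your ``if'' direction via the Borell--Brascamp--Lieb inequality is correct and is the modern way to present this half. One caveat worth flagging in a survey: some references \emph{derive} BBL from Borell's theorem rather than the other way around, so to avoid circularity you should point to an independent proof of BBL (e.g.\ the mass-transport proof, or the one-dimensional induction in \cite{BL76a}). Your ``only if'' direction is also essentially right. You correctly identify absolute continuity as the genuinely nontrivial step and defer it to \cite{Bor74}; the hyperplane heuristic you give is suggestive but, as you note, not the full argument. The Lebesgue-point and ball computation with the optimization over $\lambda$ is the standard mechanism, and the critical $\lambda$ and the collapse to $M^t_{s_{\kappa,d}}$ are correct. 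One small wrinkle: you write that $x_t$ ``may be arranged to be a Lebesgue point as well,'' but for fixed $x_0,x_1,t$ there is no freedom in $x_t$. The honest statement is that the inequality holds for a.e.\ triple $(x_0,x_1,t)$, and then one passes to an $s_{\kappa,d}$-concave representative of $f$; this is indeed routine, but the phrasing as written is slightly misleading.
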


\noindent
{\bf Examples:}
\begin{enumerate}
\item If $X$ is the uniform distribution on a convex body $K$, it has an $\infty$-concave density function 
$f = |K|^{-1} \mathbbm{1}_K$ and thus the probability measure is $1/d$-concave.  
Let us note that by our requirement that $\mu$ is ``full-dimensional" (i.e., has support with nonempty interior), the only 
$1/d$-concave probability measures on $\mathbb{R}^d$ are of this type.

\item A measure that is $0$-concave is also called a {\it log-concave measure}. Since $s_{0,d}=0$ for any positive integer $d$, Theorem~\ref{thm:k-kd}
implies that an absolutely continuous measure $\mu$ is log-concave if and only if its density is a log-concave function (as defined in 
Definition~\ref{defn:lc-fn}).
In other words, $X$ has a log-concave distribution if and only if its density function can be expressed on its support as $e^{-V(x)}$ for $V$ convex.  
When $V(x) =\half |x|^2- \frac{d}{2}\log (2\pi)$, one has the standard Gaussian distribution; 
when $V(x) = x$ for $x\geq 0$ and $V(x) = \infty$ for $x<0$, one has the standard exponential distribution; and so on.

\item If $X$ is log-normal distribution with density function
\ben
f(x):=\frac{1}{x\sigma\sqrt{2\pi}}e^{-\frac{(\ln x-\mu)^2}{2\sigma^2}}
\een
Then the density function of $X$ is $-\frac{\sigma}{4}$-concave, and for $\sigma<4$,  the probability measure is $\frac{-\sigma}{4-\sigma}$-concave.

\item If $X$ is a Beta distribution with density function
\ben
\frac{x^\alpha(1-x)^\beta}{B(\alpha,\beta)}
\een
with shape parameters $\alpha\ge 1$ and $\beta\ge 1$, then the density function of $X$ is $\min(\frac{1}{\alpha-1},\frac{1}{\beta-1})$-concave, and the probability measure is $\frac{1}{\max(\alpha,\beta)}$-concave.
\item If $X$ is a $d$-dimensional Student's $t$-distribution with density function
\ben
f(x):=\frac{\Gamma(\frac{\nu+d}{2})}{\nu^{\frac{d}{2}}\pi^{\frac{d}{2}}\Gamma(\frac{\nu}{2})}\left(1+\frac{|x|^2}{\nu}\right)^{-\frac{\nu+d}{2}}
\een
with $\nu>0$, then the density function of $X$ is $-\frac{1}{\nu+d}$-concave, and the probability measure is $-\frac{1}{\nu}$-concave.
\item If $X$ is a $d$-dimensional Pareto distribution of the first kind with density function
\ben
f(x):=a(a+1)\cdots(a+d-1)\left(\prod_{i=1}^d\theta_i\right)^{-1}
\left(\sum_{i=1}^d\frac{x_i}{\theta_i}-d+1\right)^{-(a+d)}\een
for $x_i> \theta_i>0$ with $a>0$, then the density function of $X$ is $-\frac{1}{a+d}$-concave, and the probability measure is 
$-\frac{1}{a}$-concave.
\end{enumerate}

The optimal $\kappa$ for the distributions above can be found through direct computation on densities, let us also remind the reader that $\kappa$-concavity is an affine invariant.  In other words, if $X$ is $\kappa$-concave and $T$ is affine, then $TX$ is $\kappa$-concave as well, which supplies further examples through modification of the examples above. 

We will also find useful an extension of Lemma~\ref{lem:theLemma} 
to convex measures (this was obtained in \cite{BM11:it} under an additional condition,
which was removed in \cite{BFLM16}).

\begin{lem}\label{lem:comp-cvx}
Let $\kappa \in (-\infty,0]$. If $X$ is a $\kappa$-concave random vector in $\R^d$, 
then
\be\label{kconc-ub}
h(X) - h_{\infty}(X) \, \leq\  \sum_{i=0}^{d-1} \frac{1-\kappa d}{1 - \kappa i} ,
\ee
with equality for the $n$-dimensional Pareto distribution.
\end{lem}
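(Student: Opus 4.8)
The plan is to use Theorem~\ref{thm:k-kd} to convert the $\kappa$-concavity hypothesis into a pointwise concavity statement about the density, then to peel the quantity $h(X)-h_\infty(X)$ down by a layer-cake (coarea) computation to a one-dimensional inequality about a volume profile, and finally to close that inequality with a correlation (Chebyshev sum) inequality, the Pareto profile arising as the extremal case. First I would dispose of $\kappa=0$: there $s_{\kappa,d}=0$ and the claimed bound is $h(X)-h_\infty(X)\le d$, which is exactly \eqref{inq:lem}. So assume $\kappa\in(-\infty,0)$. By Theorem~\ref{thm:k-kd} the density $f$ of $X$ is $s$-concave with $s=s_{\kappa,d}\in(-\tfrac1d,0)$, hence on its (necessarily convex) support it can be written as $f=V^{-\beta}$ with $\beta:=-1/s=d-\tfrac1\kappa>d$ and $V:=f^{-1/\beta}$ convex (put $V\equiv+\infty$ off $\supp f$). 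Integrability of $f$ together with $s$-concavity forces $\|f\|_\infty<\infty$, so $m:=\essinf V=\|f\|_\infty^{-1/\beta}>0$, and since $-\log f=\beta\log V$ and $h_\infty(X)=\beta\log m$ one obtains the clean identity
\[
h(X)-h_\infty(X)=\beta\,\mathbb{E}\big[\log\big(V(X)/m\big)\big],
\]
the expectation being over $X\sim f$, with nonnegative integrand.

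Next I would run the layer-cake computation. Writing $W(s):=\big|\{x:V(x)<s\}\big|$, one has $\int_m^\infty s^{-\beta}\,dW(s)=\int f=1$, and integrating the tail of the nonnegative variable $\log(V(X)/m)$ against $f$ gives $\mathbb{E}[\log(V(X)/m)]=\int_m^\infty s^{-\beta}\log(s/m)\,dW(s)$. The geometric input is now precisely \eqref{eq:bm-std}: since $V$ is convex, $(1-t)\{V<s_1\}+t\{V<s_2\}\subseteq\{V<(1-t)s_1+t s_2\}$, so Brunn--Minkowski shows $s\mapsto W(s)^{1/d}$ is concave on $[m,\infty)$. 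Setting $H(u):=\big|\{V<m(1+u)\}\big|^{1/d}$, which is concave in $u\ge0$ with $H(0)\ge0$, the substitution $s=m(1+u)$ and the normalization of $f$ turn the identity into
\[
h(X)-h_\infty(X)=\beta\;\frac{\displaystyle\int_0^\infty (1+u)^{-\beta}\log(1+u)\,d\big(H(u)^d\big)}{\displaystyle\int_0^\infty (1+u)^{-\beta}\,d\big(H(u)^d\big)}.
\]

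The heart of the argument is the comparison of $H$ with the linear profile. Because $H$ is concave with $H(0)\ge0$, both $u\mapsto H(u)/u$ and $u\mapsto H'(u)$ are nonnegative and non-increasing, hence so is $\rho(u):=\big(H(u)/u\big)^{d-1}H'(u)$, and $d\big(H(u)^d\big)=d\,\rho(u)\,u^{d-1}\,du$. Thus the probability measure $\sigma(du)\propto(1+u)^{-\beta}\,d(H(u)^d)$ is a reweighting, by the non-increasing density $\propto\rho$, of the ``extremal profile'' $\sigma_0(du)\propto(1+u)^{-\beta}u^{d-1}\,du$; since $u\mapsto\log(1+u)$ is non-decreasing, Chebyshev's correlation inequality for monotone functions of one variable yields $\mathbb{E}_{\sigma_0}[\rho(U)\log(1+U)]\le\mathbb{E}_{\sigma_0}[\rho(U)]\,\mathbb{E}_{\sigma_0}[\log(1+U)]$, i.e.\ $\mathbb{E}_\sigma[\log(1+U)]\le\mathbb{E}_{\sigma_0}[\log(1+U)]$. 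Finally $\mathbb{E}_{\sigma_0}[\log(1+U)]$ is an explicit digamma expression: with $B(\beta):=\int_1^\infty w^{-\beta}(w-1)^{d-1}\,dw=\Gamma(\beta-d)\Gamma(d)/\Gamma(\beta)$ (Beta integral, $\beta>d$), it equals $-B'(\beta)/B(\beta)=\psi(\beta)-\psi(\beta-d)=\sum_{i=0}^{d-1}\frac1{\beta-d+i}$. Hence $h(X)-h_\infty(X)\le\beta\sum_{i=0}^{d-1}\frac1{\beta-d+i}=\sum_{i=0}^{d-1}\frac{\beta}{\beta-d+i}=\sum_{i=0}^{d-1}\frac{1-\kappa d}{1-\kappa i}$, using $\beta=d-1/\kappa$. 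Equality forces $\rho$ to be constant, hence $H$ linear and $V$ affine on its support, which is exactly (an affine image of) the Pareto distribution of the statement, with $a=-1/\kappa$.

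I expect the main obstacle to be the reduction to the one-dimensional monotone profile: establishing that $\rho$ is non-increasing together with the representation $d(H^d)=d\rho\,u^{d-1}du$ using only the automatic local Lipschitz regularity of concave functions, and handling the case where $V$ attains its infimum on a set of positive volume (so $H(0)>0$, where $H(u)/u$ is still non-increasing, being $H(0)/u$ plus a non-increasing term). The remaining points — that $\|f\|_\infty<\infty$ for an $s$-concave probability density with $s<0$, that the integration-by-parts boundary terms vanish, and the mild integrability needed to apply the correlation inequality — are routine and follow from $\beta>d$.
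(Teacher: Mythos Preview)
The paper does not actually prove Lemma~\ref{lem:comp-cvx}; it is quoted as a known result, obtained in \cite{BM11:it} under an extra hypothesis and in \cite{BFLM16} in general. So there is no in-paper argument to compare against directly, and the relevant question is whether your proof stands on its own.

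It does. The identity $h(X)-h_\infty(X)=\beta\,\mathbb{E}[\log(V(X)/m)]$ and the layer-cake rewriting through $W(s)=|\{V<s\}|$ are correct, and the geometric input --- concavity of $s\mapsto W(s)^{1/d}$ from the Brunn--Minkowski inequality applied to the nested convex sublevel sets of $V$ --- is exactly what is needed. The decisive step, factoring $d(H^d)=d\,\rho(u)\,u^{d-1}\,du$ with $\rho(u)=(H(u)/u)^{d-1}H'(u)$ non-increasing (a product of nonnegative non-increasing factors, by concavity and monotonicity of $H$ with $H(0)\ge0$) and then invoking Chebyshev's correlation inequality against the increasing function $\log(1+u)$, is clean and valid. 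The Beta/digamma evaluation is correct and lands on $\sum_{i=0}^{d-1}(1-\kappa d)/(1-\kappa i)$ as required. The possible atom of $d(H^d)$ at $u=0$ (when $|\{V=m\}|>0$) only adds to the denominator while contributing $\log 1=0$ to the numerator, so it can only help the inequality; and concavity of $W^{1/d}$ ensures $W$ is locally absolutely continuous on $(m,\infty)$, so no singular part interferes with the factorization of $d(H^d)$.

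For comparison, the references cited by the paper organize the argument around reverse H\"older-type inequalities for $s$-concave densities rather than isolating a one-dimensional monotone-correlation step. Your route is arguably more transparent: it makes explicit that exactly two facts are in play (the BMI for the volume profile, and Chebyshev for the comparison with the Pareto profile), and it produces the equality case directly as the linear profile $H(u)=au$. The cost is only the mild measure-theoretic housekeeping you already flagged, all of which is routine given $\beta>d$.
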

To match notation with \cite{BM11:it} notice that $X$ being $\kappa$-concave is equivalent to $X$ having a density function that can be expressed as $\varphi^{-\beta}$, for $\beta = d - \frac 1 \kappa$ and $\varphi$ convex.

We now develop reverse R\'enyi entropy power inequalities for $\kappa$-concave measures, 
inspired by work on special cases (such as the log-concave case corresponding to $\kappa =0$ in the terminology above,
or the case of Shannon-Boltzmann entropy) in \cite{BM12:jfa, Yu08:2, BM13:goetze, BNT15}.

\subsection{Positional Reverse EPI's for R\'enyi entropies}
\label{sec:repi-pos}

The reverse Brunn-Minkowski inequality (Reverse BMI) is a celebrated result in convex geometry
discovered by V.~Milman  \cite{Mil86} (see also \cite{Mil88:2,Mil88:1,Pis89:book})
It states that given two convex bodies $A$ and $B$ in $\mathbb{R}^d$, one can find a linear 
volume-preserving map $u:\mathbb{R}^d\rightarrow\mathbb{R}^d$ such that with some absolute constant $C$, 
\begin{eqnarray}\label{inq:InvBM}
|u(A)+B|^{1/d}\le C(|A|^{1/d}+|B|^{1/d}) .
\end{eqnarray}

The EPI may be formally strengthened by using the
invariance of entropy under affine transformations of determinant $\pm 1$, 
i.e., $N(u(X)) = N(X)$ whenever $|{\rm det}(u)|=1$. Specifically,
\be\label{epi-aff}
\inf_{u_1, u_2} N(u_1(X)+u_2(Y)) \geq N(X) + N(Y),
\ee
where the maps $u_i:\RL^d \rightarrow \RL^d$ range over all 
affine entropy-preserving transformations. It was shown in \cite{BM11:cras} that
in exact analogy to the Reverse BMI,
the inequality \eqref{epi-aff} can be reversed with a constant not depending
on dimension if we restrict to log-concave distributions.
To state such results compactly, we adopt the following terminology.

\begin{defn}\label{defn:pos-repi}
For each $d\in\Nat$, let $\calM_d$ be a class of probability measures on $\R^d$,
and write $\calM=(\calM_d: d\in \Nat)$. Suppose that for every pair of independent
random variables $X$ and $Y$ whose distributions lie in $\calM_d$, 
there exist linear maps $u_1, u_2:\RL^d \rightarrow \RL^d$ of determinant 1 such that 
\be\label{eq:repi}
N_{p}\big(u_1(X) + u_2(Y)\big)\, \leq\, C_{p}\, (N_{p}(X) + N_{p}(Y)),
\ee
where $C_{p}$ is a constant that depends only on $p$ (and not on $d$ or the distributions of $X$ and $Y$).
Then we say  that a Positional Reverse $p$-EPI holds for $\calM$.
\end{defn}

\begin{thm}\label{thm:repi}\cite{BM11:cras}
Let $\calM_d^{LC}$ be the class of log-concave probability measures on $\R^d$,
and $\calM^{LC}=(\calM_d^{LC}: d\in \Nat)$. A Positional Reverse $1$-EPI holds for  $\calM^{LC}$.
\end{thm}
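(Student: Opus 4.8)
The plan is to reduce the entropic inequality to the Reverse Brunn--Minkowski inequality \eqref{inq:InvBM} applied to a pair of ellipsoids, exploiting the fact that, for a log-concave vector, the entropy power is comparable up to dimension-free constants to $(\det\mathrm{Cov}(X))^{1/d}$. Since both $N(\cdot)$ and $\mathrm{Cov}(\cdot)$ are translation invariant, I may assume $\E X=\E Y=0$. A log-concave density on $\R^d$ is bounded with moments of all orders, so $h(X),h(Y)$ are finite, $N(X),N(Y)\in(0,\infty)$, and (the support being full-dimensional) $\Sigma_X:=\mathrm{Cov}(X)$ and $\Sigma_Y:=\mathrm{Cov}(Y)$ are positive definite. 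To such an $X$ I attach the ellipsoid $\mathcal E_X:=\{x:x^\transpose\Sigma_X^{-1}x\le d\}$, a convex body of volume $|\mathcal E_X|=d^{d/2}\omega_d(\det\Sigma_X)^{1/2}$ ($\omega_d$ the volume of the Euclidean unit ball); since $d\,\omega_d^{2/d}$ is bounded away from $0$ and $\infty$ uniformly in $d$, one has $|\mathcal E_X|^{2/d}\asymp(\det\Sigma_X)^{1/d}$ with dimension-free constants.

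Two comparisons are needed. First, for any random vector $W$ with nondegenerate covariance, $N(W)\le 2\pi e\,(\det\mathrm{Cov}(W))^{1/d}$; this is just the maximum-entropy property of the Gaussian and needs no convexity. Second, for log-concave $X$ one has $N(X)\gtrsim(\det\Sigma_X)^{1/d}$ with an absolute constant: here I would combine $h(X)\ge h_\infty(X)=-\log\|f\|_\infty$ (Lemma~\ref{lem:monotonicity}, or directly \eqref{inq:lem}) with the classical estimate that a centered log-concave density satisfies $\|f\|_\infty^2\det\Sigma_X\le c^d$ for an absolute constant $c$; equivalently, one may cite that $N(X)\asymp\|f\|_\infty^{-2/d}\asymp(\det\Sigma_X)^{1/d}$ for log-concave $X$. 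I also record an elementary fact about ellipsoids: comparing support functions $\theta\mapsto\sqrt d\,\sqrt{\theta^\transpose A\theta}$, one gets $\mathcal E_{A+B}\subseteq\mathcal E_A+\mathcal E_B\subseteq\sqrt2\,\mathcal E_{A+B}$ for $A,B\succ0$, whence $|\mathcal E_A+\mathcal E_B|^{2/d}\asymp(\det(A+B))^{1/d}$ dimension-free (the volume factor $2^{d/2}$ becomes a factor $2$ after raising to the power $2/d$). Finally, $u(\mathcal E_X)=\mathcal E_{u\Sigma_X u^\transpose}$ for invertible linear $u$, and $\mathrm{Cov}(u_1X+Y)=u_1\Sigma_X u_1^\transpose+\Sigma_Y$ by independence.

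Now apply \eqref{inq:InvBM} to the convex bodies $\mathcal E_X$ and $\mathcal E_Y$: there is a linear $u$ with $|\det u|=1$ and $|u(\mathcal E_X)+\mathcal E_Y|^{1/d}\le C(|\mathcal E_X|^{1/d}+|\mathcal E_Y|^{1/d})$ for an absolute constant $C$. Take $u_1=u$ and $u_2=\mathrm{id}$, both of determinant $\pm1$, so that $N(u_1X)=N(X)$. Then
\[
N(u_1X+u_2Y)\;\le\;2\pi e\,\big(\det(u\Sigma_X u^\transpose+\Sigma_Y)\big)^{1/d}\;\lesssim\;\big|\mathcal E_{u\Sigma_X u^\transpose}+\mathcal E_{\Sigma_Y}\big|^{2/d}\;=\;\big|u(\mathcal E_X)+\mathcal E_Y\big|^{2/d},
\]
and in turn $|u(\mathcal E_X)+\mathcal E_Y|^{2/d}\le C^2(|\mathcal E_X|^{1/d}+|\mathcal E_Y|^{1/d})^2\le 2C^2(|\mathcal E_X|^{2/d}+|\mathcal E_Y|^{2/d})\lesssim(\det\Sigma_X)^{1/d}+(\det\Sigma_Y)^{1/d}\lesssim N(X)+N(Y)$, the last step being the lower comparison above. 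All implied constants are fixed products of $2\pi e$, $C$, and the absolute constants from the two comparisons and the $d\,\omega_d^{2/d}\asymp1$ estimate, so the final $C_1$ depends on neither $d$ nor the distributions; this is the asserted Positional Reverse $1$-EPI for $\calM^{LC}$. (Together with the trivial bound $\inf_{u_1,u_2}N(u_1X+u_2Y)\ge N(X)+N(Y)$ from \eqref{epi-aff}, this actually determines $\inf_{u_1,u_2}N(u_1X+u_2Y)$ up to absolute constants.)

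The \textbf{main obstacle} is the dimension-free lower comparison $N(X)\gtrsim(\det\mathrm{Cov}(X))^{1/d}$ for log-concave $X$; its content is the classical bound $\|f\|_\infty^2\det\mathrm{Cov}(X)\le c^d$ for centered log-concave $f$ (equivalently, that the inertia ellipsoid of a convex body has volume comparable, up to a factor $c^d$, to that of the body). The remaining steps are bookkeeping, but two points deserve care: every inequality must be raised to the power $2/d$ so that geometric factors of the form $a^d$ collapse to dimension-free constants $a^2$, and one must check at the outset that log-concavity makes all the entropies, entropy powers and covariances in play finite and nondegenerate.
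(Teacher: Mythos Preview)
The paper does not prove Theorem~\ref{thm:repi}; it is quoted from \cite{BM11:cras}, where the argument proceeds by extending the notion of $M$-position from convex bodies to log-concave measures (see also the discussion in Section~\ref{ss:spl-pos} and \cite{BM12:jfa}). Your route is genuinely different and considerably more elementary: you never touch $M$-position, and instead exploit the two-sided dimension-free comparison $N(X)\asymp(\det\Sigma_X)^{1/d}$ valid for log-concave $X$, then control the covariance of the sum. This is correct.

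Two remarks. First, invoking the full Reverse BMI \eqref{inq:InvBM} for the pair $\mathcal E_X,\mathcal E_Y$ is overkill: for two ellipsoids the statement is trivial, since one can choose $u\in SL(d)$ making $u(\mathcal E_X)$ homothetic to $\mathcal E_Y$, whence $|u(\mathcal E_X)+\mathcal E_Y|^{1/d}=|\mathcal E_X|^{1/d}+|\mathcal E_Y|^{1/d}$ exactly. In fact your whole ellipsoid layer can be bypassed: pick $u_1,u_2\in SL(d)$ so that $u_1\Sigma_Xu_1^\transpose=\lambda_X I$ and $u_2\Sigma_Yu_2^\transpose=\lambda_Y I$; then $\det\mathrm{Cov}(u_1X+u_2Y)^{1/d}=\lambda_X+\lambda_Y=(\det\Sigma_X)^{1/d}+(\det\Sigma_Y)^{1/d}$, and the chain $N(u_1X+u_2Y)\le 2\pi e\,[(\det\Sigma_X)^{1/d}+(\det\Sigma_Y)^{1/d}]\lesssim N(X)+N(Y)$ finishes at once. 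Second, you correctly isolate the only nontrivial ingredient, namely $\|f\|_\infty^{2}\det\Sigma_X\le c^{d}$ for centered log-concave $f$; this is classical (it is the statement that an isotropic log-concave density is bounded by an absolute constant to the power $d$, see e.g.\ \cite{Bal88, MP89, Bob10}).

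What your approach buys is directness and transparency of constants; what the $M$-position approach of \cite{BM11:cras,BM12:jfa} buys is a single position per measure (depending on $X$ alone, not on the pair $(X,Y)$) and robustness: it extends to $\kappa$-concave measures (Theorem~\ref{thm:cvx-repi}), where the covariance comparison $N(X)\asymp(\det\Sigma_X)^{1/d}$ can fail because second moments need not exist.
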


Specializing to uniform distributions on convex bodies, it is shown in \cite{BM12:jfa} that
Theorem~\ref{thm:repi} recovers the Reverse BMI.
Thus one may think of Theorem~\ref{thm:repi} as completing in a reverse direction 
the already extensively discussed analogy between the BMI and EPI.

Furthermore, \cite{BM12:jfa} found\footnote{Actually \cite{BM12:jfa} only proved this under the additional
condition that $\beta\geq 2d+1$, but it turns out that this condition can be dispensed with,
as explained in \cite{MWB16}.} that 
Theorem~\ref{thm:repi} can be extended to larger subclasses of the class of convex measures.

\begin{thm}\label{thm:cvx-repi}\cite{BM12:jfa}
For $\beta_0>2$, let $\calM_{d,\beta_0}$ be the class of probability measures whose densities of the form 
$f(x)=V(x)^{-\beta}$ for $x\in\mathbb{R}^d$, 
where $V: \R^d \ra (0,\infty]$ is a positive convex function and $\beta\ge \beta_0 d$. 
Then a Positional Reverse $1$-EPI holds for $\calM_{\beta_0}=(\calM_{d, \beta_0}: d\in \Nat)$.
\end{thm}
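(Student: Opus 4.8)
The plan is to transfer the inequality to a statement about volumes of convex bodies, to which the Reverse BMI~\eqref{inq:InvBM} applies, using Lemma~\ref{lem:comp-cvx} as the dictionary between the entropy power $N=N_1$, the quantity $N_\infty(X)=\|f\|_\infty^{-2/d}$, and the volume of a suitably chosen superlevel set. Set $c_0=\frac{\beta_0}{\beta_0-1}$. The starting observation is that a measure in $\calM_{d,\beta_0}$ with density $f=V^{-\beta}$, $\beta\ge\beta_0 d$, is $\kappa$-concave with $\kappa=-1/(\beta-d)\in[-\tfrac{1}{(\beta_0-1)d},\,0]$; feeding this into Lemma~\ref{lem:comp-cvx} and bounding each of the $d$ summands crudely by $1-\kappa d\le c_0$ gives
\[
h(X)-h_\infty(X)\ \le\ c_0\, d ,\qquad\text{i.e.}\qquad N_\infty(X)\ \le\ N(X)\ \le\ e^{2c_0}\,N_\infty(X),
\]
the left inequality being Lemma~\ref{lem:monotonicity}. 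Thus, up to a factor depending only on $\beta_0$, I may freely interchange $N$ and $N_\infty$ for any random vector known to be a convex measure of this controlled type.

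The next step is to attach to $X$ a convex body. Since $\int f\log(\|f\|_\infty/f)=h(X)-h_\infty(X)\le c_0 d$, Markov's inequality gives $\mu_X(K_X)\ge\frac12$ for $K_X:=\{x:f(x)\ge\|f\|_\infty e^{-2c_0 d}\}$, which is convex (a sublevel set of the convex function $V$), and is an honest convex body: it is bounded and has nonempty interior because $\beta>2d$ forces $f$ to be integrable, to decay in every direction, and to attain its maximum. From $\|f\|_\infty e^{-2c_0 d}|K_X|\le\mu_X(K_X)\le 1$ and $\frac12\le\mu_X(K_X)\le\|f\|_\infty|K_X|$ one gets $\frac12\le\|f\|_\infty|K_X|\le e^{2c_0 d}$, so $|K_X|^{2/d}$ agrees with $N_\infty(X)$, and hence with $N(X)$, up to a multiplicative factor depending only on $\beta_0$. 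Two further elementary facts are recorded: $f\ge\|f\|_\infty e^{-2c_0 d}\,\mathbbm 1_{K_X}$ pointwise, and if $u$ is linear with $\det u=1$ then $K_{uX}=u(K_X)$ and $\|f_{uX}\|_\infty=\|f\|_\infty$.

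Now I would finish as follows. Given independent $X,Y\in\calM_{d,\beta_0}$ with densities $f,g$, apply the Reverse BMI to $K_X,K_Y$ to obtain a volume-preserving linear $u$ (arrange $\det u=1$) with $|u(K_X)+K_Y|^{1/d}\le C(|K_X|^{1/d}+|K_Y|^{1/d})$, $C$ absolute, and set $u_1=u$, $u_2=\mathrm{id}$. Using $f_{u_1X}\ge\|f\|_\infty e^{-2c_0 d}\,\mathbbm 1_{u(K_X)}$, $g\ge\|g\|_\infty e^{-2c_0 d}\,\mathbbm 1_{K_Y}$ and the elementary bound $\|\mathbbm 1_P*\mathbbm 1_Q\|_\infty\ge|P|\,|Q|/|P+Q|$ for convex bodies $P,Q$ (since $\int\mathbbm 1_P*\mathbbm 1_Q=|P|\,|Q|$ and $\mathrm{supp}(\mathbbm 1_P*\mathbbm 1_Q)\subseteq P+Q$), one gets
\[
\|f_{u_1X}*g\|_\infty\ \ge\ \|f\|_\infty\|g\|_\infty\,e^{-4c_0 d}\,\frac{|u(K_X)|\,|K_Y|}{|u(K_X)+K_Y|}.
\]
Raising this to the power $-2/d$, using $|u(K_X)|=|K_X|$ and the comparability of $\|f\|_\infty^{-1}$ with $|K_X|$ (and of $\|g\|_\infty^{-1}$ with $|K_Y|$) to absorb the volumes of $K_X$ and $K_Y$, and then invoking the Reverse BMI bound, yields $N_\infty(u_1X+u_2Y)\le C_{\beta_0}(|K_X|^{2/d}+|K_Y|^{2/d})\le C_{\beta_0}'(N(X)+N(Y))$. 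Finally, to return from $N_\infty$ to $N=N_1$ on the left-hand side, I would invoke that convex measures are closed under convolution and under $u_1$ with explicit control of the convexity exponent, so that $u_1X+u_2Y\in\calM_{d,\beta_0'}$ for some $\beta_0'=\beta_0'(\beta_0)>1$; the first paragraph's comparison (with $c_0$ replaced by $\frac{\beta_0'}{\beta_0'-1}$) then gives $N(u_1X+u_2Y)\le e^{2\beta_0'/(\beta_0'-1)}\,N_\infty(u_1X+u_2Y)$, and chaining the estimates produces \eqref{eq:repi} with a constant $C_1$ depending only on $\beta_0$.

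The main obstacle is this last ``convolution-closure'' step: one must know that the convolution of two measures in $\calM_{d,\beta_0}$ is again a convex measure whose convexity exponent is controlled by $\beta_0$, so that Lemma~\ref{lem:comp-cvx} can be reapplied to the sum. It is precisely here that the structure theory of convex measures due to Borell, and the (essentially additive) behaviour of the exponent $1/\kappa$ under convolution, are needed, and the hypothesis $\beta_0>2$ is exactly what guarantees $\beta_0'>1$ (in fact $\beta_0'>2$). The remaining ingredients — that each $K_X$ is a genuine convex body and that all the ``comparable up to $\beta_0$'' claims have constants depending on $\beta_0$ alone — are routine consequences of the integrability forced by $\beta>2d$ together with the sharp form of Lemma~\ref{lem:comp-cvx}; indeed, using that sharp form in place of a cruder estimate is what permits relaxing the earlier hypothesis $\beta\ge 2d+1$ (mentioned in the footnote) to $\beta\ge\beta_0 d$.
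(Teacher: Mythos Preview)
The paper does not itself prove Theorem~\ref{thm:cvx-repi}; it is quoted from \cite{BM12:jfa}. So your proposal has to be judged against the argument in \cite{BM12:jfa} and on its own merits.

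Your reduction up to the bound on $N_\infty(u_1X+u_2Y)$ is correct and is exactly the strategy of \cite{BM12:jfa}: attach to each convex measure a superlevel-set body $K_X$, show $|K_X|^{2/d}\asymp N_\infty(X)\asymp N(X)$ with constants depending only on $\beta_0$ via Lemma~\ref{lem:comp-cvx}, put the bodies in $M$-position, and invoke the reverse BMI to control $|u_1(K_X)+K_Y|$.

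The gap is precisely where you flag it, and your proposed resolution does not work. The class of $\kappa$-concave measures with $\kappa<0$ is \emph{not} closed under independent convolution in the way you describe, and Borell's structure theory gives no ``additive $1/\kappa$'' rule. The obstruction is already visible at the level of products: if $f$ on $\R$ has the form $f(x)=(1+|x|)^{-\beta}$ with $\beta>2$, then a short Hessian computation shows that $((1+|x|)(1+|y|))^{a}$ is never convex for $a>0$ and never concave for $a<0$, so $f\otimes f$ on $\R^{2}$ is not $s'$-concave for \emph{any} $s'\in\R$. Thus the standard route ``product is $\kappa'$-concave, then project''---the one that yields closure under convolution in the log-concave case via Pr\'ekopa---breaks down for $\kappa<0$, and you cannot re-apply Lemma~\ref{lem:comp-cvx} to $u_1X+u_2Y$ to pass from $N_\infty$ back to $N$ on the left-hand side. (Note also that even the heuristic tail check goes against you: for the example above the convolution still has tails $\sim z^{-\beta}$, not $\sim z^{-(2\beta-1)}$, so the predicted $\beta_0'=2\beta_0-1$ cannot be right.)

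In \cite{BM12:jfa} this step is handled without any closure claim: the upper bound on $h(u_1X+u_2Y)$ is obtained \emph{directly} from the separate structure of $X$ and $Y$ through their associated bodies and the $M$-position estimates, rather than by asserting that the sum is again a convex measure of controlled type. That direct entropy bound is the missing ingredient in your proposal.
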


In \cite{BM13:goetze}, it is shown that a Reverse EPI is not possible over all convex measures. 

\begin{thm}\label{bm-thm:neg}\cite{BM13:goetze}
For any constant $C$, there is a convex probability 
distribution $\mu$ on the real line with a finite entropy, such that
$$
\min\{N(X+Y), N(X-Y)\} \geq C\,N(X),
$$
where $X$ and $Y$ are independent random variables distributed according to $\mu$.
\end{thm}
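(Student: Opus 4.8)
The plan is to \emph{disprove} the existence of a Reverse EPI over convex measures by exhibiting, for each $T>0$, a symmetric convex probability measure $\mu_T$ on $\R$ with finite entropy such that, with $X_T,Y_T$ i.i.d.\ from $\mu_T$, one has $N(X_T+Y_T)\geq e^{2(T-C_0)}N(X_T)$ for an absolute constant $C_0$. Since $\mu_T$ is symmetric, $X_T-Y_T$ has the same law as $X_T+Y_T$, so the bound also holds for $\min\{N(X_T+Y_T),N(X_T-Y_T)\}$; given $C$, choosing $T$ with $e^{2(T-C_0)}\geq C$ finishes the proof. The family consists of densities $f_T=1/V_T$, where $V_T:\R\to(0,\infty]$ is an even, convex, piecewise-linear ``tub'': $V_T\equiv\epsilon_T$ on a short interval $[-r_T,r_T]$, then $V_T(x)=\epsilon_T+s_T(|x|-r_T)$ for $r_T<|x|\leq R_T$, and $V_T\equiv+\infty$ for $|x|>R_T$. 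Such a $V_T$ is convex (a convex function may take the value $+\infty$ outside a compact interval), so by Theorem~\ref{thm:k-kd} (with $s_{-\infty,1}=-1$, i.e.\ the $\kappa=-\infty$ case) the measure $\mu_T$ is a convex measure; since $f_T$ is bounded with compact support, $h(X_T)$ is finite. The free parameters $(\epsilon_T,r_T,s_T,R_T)$, subject to $\int f_T=1$, will be chosen so that $f_T$ has a \emph{very tall, very narrow spike} at the origin — height $H_T:=1/\epsilon_T$, mass $m_T:=2r_TH_T\to 0$, with $m_T\log H_T\to\infty$ (concretely $m_T=1/T$, $H_T=e^{T^2}$) — while the remaining mass is carried by the slowly decaying sloped part out to the truncation radius $R_T$.

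First I would evaluate $h(X_T)$ using the mixture decomposition $\mu_T=m_T\,\nu^{\mathrm{sp}}_T+(1-m_T)\,\nu^{\mathrm{tl}}_T$, where $\nu^{\mathrm{sp}}_T$ is uniform on $[-r_T,r_T]$ and $\nu^{\mathrm{tl}}_T$ is $f_T$ restricted and renormalized to $\{r_T<|x|\leq R_T\}$. A change of variables $u=V_T(x)$ turns $-\int f_T\log f_T=\int f_T\log V_T$ into the plateau term $-m_T\log H_T$ plus $\tfrac{1-m_T}{2}\bigl(\log M_T-\log H_T\bigr)$, where $M_T:=V_T(R_T)$; choosing the parameters so that $M_T=H_T$ kills the second term and gives the clean identity $h(X_T)=-m_T\log H_T\to-\infty$, hence $N(X_T)=e^{-2m_T\log H_T}\to 0$. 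The same computation gives $h(\nu^{\mathrm{tl}}_T)=\log(1-m_T)\to 0$; in particular $h(\nu^{\mathrm{tl}}_T)$ stays bounded, which is the property that will drive the lower bound for the convolution.

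Next, to bound $h(X_T+Y_T)$ from below, I would write the density of $X_T+Y_T$ as the mixture $m_T^2(\nu^{\mathrm{sp}}_T*\nu^{\mathrm{sp}}_T)+2m_T(1-m_T)(\nu^{\mathrm{sp}}_T*\nu^{\mathrm{tl}}_T)+(1-m_T)^2(\nu^{\mathrm{tl}}_T*\nu^{\mathrm{tl}}_T)$ and invoke concavity of $f\mapsto-\int f\log f$, so that $h(X_T+Y_T)$ is at least the corresponding convex combination of the three component entropies. By Theorem~\ref{thm:CdtEntp} (case $p=1$, i.e.\ $h(A+B)\geq\max\{h(A),h(B)\}$), both $h(\nu^{\mathrm{sp}}_T*\nu^{\mathrm{tl}}_T)$ and $h(\nu^{\mathrm{tl}}_T*\nu^{\mathrm{tl}}_T)$ are $\geq h(\nu^{\mathrm{tl}}_T)$, which is bounded below; and $\nu^{\mathrm{sp}}_T*\nu^{\mathrm{sp}}_T$ is triangular on $[-2r_T,2r_T]$, so $h(\nu^{\mathrm{sp}}_T*\nu^{\mathrm{sp}}_T)=\log r_T+O(1)=-\log H_T+O(1)$ — very negative, but it enters with weight only $m_T^2$, and $m_T^2\log H_T=m_T\cdot(m_T\log H_T)\to 0$ since $m_T\to 0$. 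Hence $h(X_T+Y_T)\geq -C_0$ for an absolute constant $C_0$.

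Combining, $N(X_T+Y_T)/N(X_T)=e^{2(h(X_T+Y_T)-h(X_T))}\geq e^{2(m_T\log H_T-C_0)}\to\infty$, which with the symmetry remark completes the argument. The step I expect to be the main obstacle is the \emph{simultaneous calibration} of $(\epsilon_T,r_T,s_T,R_T)$: one needs $m_T\log H_T\to\infty$ (to kill $N(X_T)$), the identity $M_T=H_T$ (to keep $h(X_T)$ and $h(\nu^{\mathrm{tl}}_T)$ explicitly under control), and $m_T^2\log H_T\to 0$ (so the residual spike in the convolution is entropically negligible), all compatibly with the normalization $\int f_T=1$. Once the family is pinned down, every estimate above is elementary — no sharp convolution inequality is used, only the crude monotonicity $h(A+B)\geq\max\{h(A),h(B)\}$ from Theorem~\ref{thm:CdtEntp} and concavity of differential entropy.
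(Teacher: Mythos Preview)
The paper does not include a proof of this theorem; it simply quotes the result from \cite{BM13:goetze}. Judged on its own, your construction is sound and the overall strategy is correct: a symmetric density $f_T=1/V_T$ with $V_T$ convex is indeed $(-1)$-concave (hence $\mu_T$ is a convex measure on $\R$), the entropy computation $h(X_T)=-m_T\log H_T$ under the constraint $M_T=H_T$ is correct, and the mixture lower bound via concavity of entropy together with $h(A*B)\geq\max\{h(A),h(B)\}$ from Theorem~\ref{thm:CdtEntp} is exactly the right tool to show $h(X_T+Y_T)$ stays bounded below.

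There is one arithmetic slip in your calibration. With your explicit choice $m_T=1/T$ and $H_T=e^{T^2}$ you have $m_T\log H_T=T$ and hence $m_T^2\log H_T=m_T\cdot(m_T\log H_T)=(1/T)\cdot T=1$, which does \emph{not} tend to $0$ as you assert. (Relatedly, $h(\nu^{\mathrm{sp}}_T*\nu^{\mathrm{sp}}_T)=\log(2r_T)+O(1)=\log m_T-\log H_T+O(1)$, not $-\log H_T+O(1)$; the $\log m_T$ term is not $O(1)$.) Fortunately this does not break the argument: you only need $m_T^2\log H_T$ to be bounded, not vanishing, for the lower bound $h(X_T+Y_T)\geq -C_0$, and with your parameters the spike--spike contribution to the entropy lower bound is $-1+o(1)$, so everything goes through with, say, $C_0=2$ and $N(X_T+Y_T)/N(X_T)\geq e^{2(T-2)}$. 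If you prefer the cleaner bookkeeping in which the spike--spike term is genuinely negligible, take instead $H_T=e^{T^{3/2}}$, so that $m_T\log H_T=T^{1/2}\to\infty$ while $m_T^2\log H_T=T^{-1/2}\to 0$.
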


We have the following positional reverse $p$-R\'enyi EPI for log-concave random vectors;
this does not seem to have explicitly observed before.

\begin{thm}\label{thm:REPIREnyi}
For any $p\in (0,\infty]$, a Positional Reverse $p$-R\'enyi EPI holds for  $\calM^{LC}$. Moreover, for $p\ge 1$, 
the constant $C_{\calM, p}$ in the corresponding inequality 
does not depend on $p$.
\end{thm}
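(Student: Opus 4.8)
The plan is to deduce Theorem~\ref{thm:REPIREnyi} from the already-established Positional Reverse $1$-EPI for log-concave measures (Theorem~\ref{thm:repi}), using the two-sided comparability of R\'enyi entropy powers of a fixed log-concave vector that is recorded in Lemmas~\ref{lem:monotonicity}, \ref{lem:theLemma} and~\ref{lem:comp-cvx}. The first ingredient I would isolate is a dimension-free comparison: for a log-concave random vector $Z$ in $\R^d$ and $p\in[1,\infty]$, Lemma~\ref{lem:monotonicity} gives $h_\infty(Z)\le h_p(Z)\le h_1(Z)$, while the sharp inequality~\eqref{inq:lem} (equivalently Lemma~\ref{lem:comp-cvx} with $\kappa=0$) gives $h_1(Z)-h_\infty(Z)\le d$; combining these, $N_p(Z)\le N_1(Z)\le e^{2}N_p(Z)$. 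For $p\in(0,1)$ the monotonicity reverses to $N_1(Z)\le N_p(Z)$, and Lemma~\ref{lem:theLemma}, applied to the orders $1$ and $p$ (interpreting its right-hand side at order $1$ by continuity), gives $h_p(Z)-h_1(Z)\le d\gamma_p$ with $\gamma_p:=\frac{\log p}{p-1}-1\ge0$, i.e.\ $N_1(Z)\le N_p(Z)\le e^{2\gamma_p}N_1(Z)$.

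Then comes the main step. Given independent $X,Y$ with log-concave laws on $\R^d$, I would invoke Theorem~\ref{thm:repi} to produce determinant-$1$ linear maps $u_1,u_2$ with $N_1\big(u_1(X)+u_2(Y)\big)\le C\big(N_1(X)+N_1(Y)\big)$, where $C$ is the absolute constant of the Reverse $1$-EPI, and then argue that the \emph{same} maps witness the $p$-R\'enyi inequality. Writing $W=u_1(X)+u_2(Y)$, I would note that $W$ is again log-concave (log-concavity is preserved under linear images and under convolution of independent vectors), so the comparisons above apply to $W$ as well as to $X$ and $Y$. For $p\ge1$ this yields
\[
N_p(W)\ \le\ N_1(W)\ \le\ C\big(N_1(X)+N_1(Y)\big)\ \le\ Ce^{2}\big(N_p(X)+N_p(Y)\big),
\]
which is a Positional Reverse $p$-R\'enyi EPI with the $p$-independent constant $Ce^{2}$ (the case $p=\infty$ being included). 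For $0<p<1$ the same chain, now starting from $N_p(W)\le e^{2\gamma_p}N_1(W)$ and ending with $N_1(X)\le N_p(X)$, $N_1(Y)\le N_p(Y)$, gives the inequality with constant $Ce^{2\gamma_p}$, which genuinely depends on $p$ and blows up as $p\downarrow0$ — consistent with the theorem claiming $p$-uniformity only for $p\ge1$.

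I do not expect a serious obstacle here; the argument is essentially bookkeeping layered on top of Theorem~\ref{thm:repi}. The two points that need care are: (i) the comparison estimates must be applied to the log-concave vector $W=u_1(X)+u_2(Y)$, which is precisely why one uses closure of the log-concave class under linear maps and under independent summation; and (ii) the direction of R\'enyi monotonicity forces the case split $p\ge1$ versus $p<1$, with the constant necessarily deteriorating in the second regime. Finiteness of all the entropies involved is automatic, since log-concave densities are bounded with exponentially decaying tails, so $h_p\in\R$ for every $p\in(0,\infty]$; I would mention this only in passing.
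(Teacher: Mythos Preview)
Your proposal is correct and follows essentially the same route as the paper: start from the Positional Reverse $1$-EPI (Theorem~\ref{thm:repi}) and transfer it to order $p$ via the monotonicity of R\'enyi entropy (Lemma~\ref{lem:monotonicity}) together with the log-concave comparison $h_1-h_\infty\le d$ and its $p<1$ analogue (Lemma~\ref{lem:theLemma}). If anything, you are slightly more careful than the paper in explicitly noting that $W=u_1(X)+u_2(Y)$ is again log-concave, which is what justifies applying the comparison lemmas to $W$.
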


\begin{proof}
For any pair of independent log-concave random vectors $X$ and $Y$, there exist linear maps $u_1$, $u_2$: $\R^d\rightarrow \R^d$ of determinant 1, such that for all $p> 1$, by Lemma \ref{lem:monotonicity}, Theorem \ref{thm:repi} and Lemma \ref{lem:theLemma}, one has
\begin{align*}
N_p(u_1(X)+u_2(Y))
&\le N(u_1(X)+u_2(Y))\lesssim N(X)+N(Y)\\
&\lesssim N_\infty(X)+N_\infty(Y)\le N_p(X)+N_p(Y).
\end{align*}

For $p<1$, by Lemma \ref{lem:theLemma} and Lemma \ref{lem:monotonicity}, there exist a constant $C(p)$ depending solely on $p$ such that
\begin{align*}
N_p(u_1(X)+u_2(Y))&\le C(p)N(u_1(X)+u_2(Y))\le C(p)\left(N(X)+N(Y)\right)\\
&\le C(p)\left(N_p(X)+N_p(Y)\right),
\end{align*}
which provides the theorem.
\end{proof}

Later we will show that Theorem~\ref{thm:REPIREnyi} can be used to recover the functional version
of the reverse Brunn-Minkowski inequality proposed by Klartag and V.~Milman \cite{KM05}.

\subsection{Reverse $\infty$-EPI via a generalization of K.~Ball's bodies}
\label{sec:repi-ball}

\subsubsection{Busemann's theorem for convex bodies}

We first consider Bobkov's extension of K.~Ball's convex bodies associated to log-concave measures.  In this direction 
we associate a star shaped body to a density function via a generalization of the Minkowski functional of a convex body.

\begin{defn}
For a probability density function $f$ on $\mathbb{R}^d$ with the origin in the interior of the support of $f$, 
and $p \in (0,\infty)$, define $\Lambda_f^{p}: \mathbb{R}^d \to [0,\infty]$ by
\begin{align*}
 	\Lambda_f^{p}(v) = \left( \int_0^\infty f(rv) dr^p \right)^{-1/p}
\end{align*}
We will consider the class of densities $\mathcal{F}_{p}$ where $\Lambda_f^{p}(v) \in [0,\infty)$ for all $v \in \mathbb{R}^d$.  For such densities, we can associate a body defined by
\begin{align*}
 	K_f^{p} = \{ v \in\R^d: \Lambda_f^{p}(v) \leq 1 \}.
\end{align*}
\end{defn}

We can now state Bobkov's generalization \cite{Bob10} of the Ball-Busemann theorem.

\begin{thm}\label{bobkovbodies}
If $f$ is a $s$-concave density on $\mathbb{R}^d$, with $-\dth \leq s \leq 0$,
then
\begin{align}
\Lambda_f^{p}((1-t)x+ty) \leq (1-t) \Lambda_f^{p}(x) + t \Lambda_f^{p}(y) ,
\end{align}
for every $x,y\in \RL^d$ and $t\in (0,1)$, provided $0<p \leq - 1 - 1/s$. 
\end{thm}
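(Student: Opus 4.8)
\emph{Step 1: Reduction to convexity of a body.} The inequality to be proved is precisely the statement that $\Lambda:=\Lambda_f^{p}$ is a convex function on $\R^d$. A change of variables $r\mapsto cr$ in the defining integral shows that $\Lambda$ is positively $1$-homogeneous, $\Lambda(cv)=c\,\Lambda(v)$ for $c>0$; hence its convexity is equivalent to convexity of the star body $K_f^{p}=\{\Lambda\le 1\}$. After the rescaling $u\mapsto u/\Lambda(u)$, $v\mapsto v/\Lambda(v)$ (legitimate away from the degenerate case $\Lambda(u)\Lambda(v)=0$, which I would dispose of separately or by approximation), and using homogeneity, it suffices to prove: if $\Lambda(u)\le1$ and $\Lambda(v)\le1$ and $w=(1-t)u+tv$, then $\Lambda(w)\le1$, i.e.
\[
\int_0^\infty f(rw)\,p r^{p-1}\,dr\ \ge\ 1 .
\]
Since $u,v$ may be taken linearly independent, all rays in play lie in the plane $E=\mathrm{span}\{u,v\}$, on which $f|_E$ is again $s$-concave with $s\ge -1/d\ge-1/2$, so one is effectively in dimension two.

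\emph{Step 2: The Busemann--Ball scheme, as adapted to $s$-concave densities by Bobkov \cite{Bob10}.} For $(r_0,r_1)\in(0,\infty)^2$ define $\lambda=\lambda(r_0,r_1)\in(0,1)$ and $c=c(r_0,r_1)>0$ by the requirement $(1-\lambda)(r_0u)+\lambda(r_1v)=c\,w$. Solving the two scalar equations obtained by matching the $u$- and $v$-components gives the explicit formulas
\[
\lambda=\frac{t\,r_0}{t\,r_0+(1-t)\,r_1},\qquad \frac1c=(1-t)\frac1{r_0}+t\frac1{r_1},
\]
so that $c$ is a weighted harmonic mean of $r_0$ and $r_1$. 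The $s$-concavity of $f$ then yields the pointwise bound $f(c\,w)\ge M_s^{\lambda}\big(f(r_0u),f(r_1v)\big)$ for all $r_0,r_1>0$. The plan is to integrate this inequality over $(r_0,r_1)$ against $p r_0^{p-1}\,dr_0\cdot p r_1^{p-1}\,dr_1$, perform the change of variables $(r_0,r_1)\mapsto(c,\cdot)$, and invoke a (weighted, one-dimensional) Borell--Brascamp--Lieb inequality --- the dimensional, $s$-concave refinement of Pr\'ekopa--Leindler --- to convert the pointwise $M_s^{\lambda}$-estimate into an integrated one of the form
\[
\int_0^\infty f(\rho w)\,p\rho^{p-1}\,d\rho\ \ge\ M_{\sigma}^{t}\!\left(\int_0^\infty f(r_0u)\,p r_0^{p-1}\,dr_0,\ \int_0^\infty f(r_1v)\,p r_1^{p-1}\,dr_1\right)
\]
for a suitable exponent $\sigma=\sigma(s,p)$. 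Both arguments on the right equal $\Lambda(u)^{-p}$ and $\Lambda(v)^{-p}$ and so are $\ge1$ by hypothesis; since $M_\sigma^{t}$ is nondecreasing in each argument with $M_\sigma^{t}(1,1)=1$, the right-hand side is $\ge1$, giving $\Lambda(w)\le1$ as required.

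\emph{Step 3: The main obstacle.} The delicate point, and the exact source of the constraint $0<p\le -1-1/s$ (equivalently $s\ge -\tfrac1{p+1}$), is the passage from the pointwise $M_s^{\lambda}$-bound to the integrated inequality: the $p$ powers of the radial variables carried by $r_0^{p-1}r_1^{p-1}$, combined with the additional power supplied by the Jacobian of $(r_0,r_1)\mapsto(c,\cdot)$, behave like ``$p+1$ dimensions'' of integration, and the Borell--Brascamp--Lieb mechanism that upgrades an $M_s$-inequality to an integrated inequality in $k$ dimensions is available exactly when $s\ge -1/k$; here $k=p+1$. For more negative $s$ this step, and indeed the conclusion, genuinely fails, so the hypothesis cannot be weakened. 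The remaining technical matters are routine: careful book-keeping of the non-constant weight $\lambda(r_0,r_1)$ under the change of variables, measurability and integrability near $\partial(\operatorname{supp} f)$, the degenerate cases excluded in Step 1, and the fact that $s$-concavity is preserved under restriction of $f$ to lines and to $2$-planes through the origin.
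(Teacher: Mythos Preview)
The paper does not actually prove this theorem: it is stated as Bobkov's generalization of the Ball--Busemann theorem and cited to \cite{Bob10}, with no argument given. So there is no in-paper proof to compare against.

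Your outline follows the correct route --- indeed the Ball--Busemann scheme you describe is exactly the one Bobkov carries out in \cite{Bob10}. The reduction to a two-dimensional statement via homogeneity (Step~1) is right, the parametrization $(r_0,r_1)\mapsto(c,\lambda)$ with the explicit harmonic-mean formula for $c$ is standard and correct, and your diagnosis of the origin of the constraint $p\le -1-1/s$ (equivalently $s\ge -1/(p+1)$) as the threshold at which a $(p+1)$-dimensional Borell--Brascamp--Lieb mechanism is available is the right heuristic.

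That said, what you have written is a plan, not a proof. The genuinely nontrivial step --- passing from the pointwise bound $f(cw)\ge M_s^{\lambda(r_0,r_1)}(f(r_0u),f(r_1v))$ with a \emph{variable} weight $\lambda$ to an integrated inequality with the \emph{fixed} weight $t$ --- is precisely where all the work lies, and you have only named it, not done it. ``Invoke a weighted one-dimensional BBL inequality'' hides the actual argument: one must compute the Jacobian of the change of variables, show that the resulting one-dimensional integrand on each level set $\{c=\text{const}\}$ has the right concavity structure, and verify that the exponents combine correctly to give $\sigma\ge -1/p$ (so that $M_\sigma^t\ge 1$ when both arguments are $\ge 1$). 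None of this is in your write-up. If you want this to stand as a proof rather than a roadmap, you need to execute Step~2 in full; as written it would not be accepted.
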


\begin{remark}\label{triangle} \normalfont
Notice that, since $\Lambda_f^{p}$ is positive homogeneous and (by Theorem~\ref{bobkovbodies}) convex,  it necessarily satisfies the triangle inequality.  
If we add the assumption that $f$ is even, then $\Lambda_f^{p}$ defines a norm.
\end{remark}

There is remarkable utility in this type of association. In \cite{Bal88}, Ball used the fact that one can directly pass from log-concave 
probability measures to convex bodies using this method to derive an analog of Hensley's theorem \cite{Hen80} for certain log-concave measures, 
demonstrating comparability of their slices by different hyperplanes.  By generalizing this association to convex measures in \cite{Bob10},
Bobkov derived analogs of Blaschke-Santalo inequalities, the Meyer-Reisner theorem \cite{MR91:1} (this was proved independently in unpublished work, by Keith Ball, as discussed in \cite{MR91:2}) 
for floating surfaces, and Hensley's theorem for convex measures. Thus this association of convex bodies with convex measures may be seen as a way to ``geometrize" said measures.

Another application of this association of bodies to measures is to the study of so-called intersection bodies.

\begin{defn}
For any compact set $K$ in $\R^d$ whose interior contains the origin,
define $r: \mathbb{S}^{d-1} \to (0,\infty)$ by $r(\theta) = |K \cap \theta^\perp|_{d-1}$
(i.e., the volume of the $(d-1)$-dimensional slice of $K$ by the subspace orthogonal to $\theta$).  
The star-shaped body whose boundary is defined by the points $\theta r(\theta)$ 
is called the intersection body of $K$, and denoted $I(K)$.
\end{defn}

The most important fact about intersection bodies is the classical theorem of 
Busemann \cite{Bus49}.

\begin{thm} \cite{Bus49} \label{thm:BuseCoGeo} 
If $K$ be a symmetric convex body in $\R^{d}$, then $I(K)$ is a symmetric convex body as well.
\end{thm}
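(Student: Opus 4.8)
The plan is to treat symmetry and convexity separately. Symmetry of $I(K)$ is immediate: $\theta^\perp=(-\theta)^\perp$ and $K=-K$ give $r(-\theta)=r(\theta)$, so $I(K)=-I(K)$. Moreover $0\in\interior K$ forces $r(\theta)=|K\cap\theta^\perp|_{d-1}>0$ for every $\theta$, and $r$ is continuous on the compact sphere $\mathbb{S}^{d-1}$, so $I(K)$ is genuinely a star body with the origin in its interior. Thus the entire content is the convexity of $I(K)$.

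For convexity, first dispose of $d=2$: there $K\cap\theta^\perp$ is a central chord of the symmetric body $K$, so $r(\theta)=2\rho_K(\omega)$ where $\omega$ is the unit vector orthogonal to $\theta$ and $\rho_K$ is the radial function of $K$; hence $I(K)$ is a $90^\circ$-rotated dilate of $K$ and is convex. Now assume $d\ge 3$. A closed star body with the origin in its interior is convex if and only if each of its two-dimensional central sections is convex, since any two of its points lie, together with the origin, in a common central $2$-plane. So fix such a plane $S$ and set $W=S^\perp$, $\dim W=d-2\ge 1$. For a unit vector $\theta\in S$, let $u_\theta\in S$ be a unit vector orthogonal to $\theta$, so that $\theta^\perp=\R u_\theta\oplus W$; Fubini's theorem gives
\[
r(\theta)=|K\cap\theta^\perp|_{d-1}=\int_{\mathbb{R}}\psi(s\,u_\theta)\,ds,\qquad \psi(y):=|K\cap(y+W)|_{d-2}\ \ (y\in S).
\]
By Brunn's theorem, a direct consequence of the Brunn--Minkowski inequality \eqref{eq:bm-std}, the function $\psi^{1/(d-2)}$ is concave on its support $P_S(K)$, which is a symmetric convex subset of $S$, and $\psi$ is even since $K=-K$. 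As $\theta$ runs over $S\cap\mathbb{S}^{d-1}$ so does $u_\theta$, so up to a rotation of $S$ the section $I(K)\cap S$ is precisely the planar body $M$ with radial function $\omega\mapsto\int_{\mathbb{R}}\psi(t\omega)\,dt$. This reduces the theorem to a purely planar statement.

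Core planar inequality: \emph{if $\psi\ge0$ is even on $\R^2$ with $\psi^{1/m}$ concave on a symmetric convex support ($m\ge1$ an integer), then $\Phi(x):=\int_{\mathbb{R}}\psi(tx)\,dt$ is the radial function, on the unit circle, of a convex body $M$.} Here $\Phi$ is positively homogeneous of degree $-1$, so the gauge of $M$ is $g_M(x)=|x|/\Phi(x/|x|)=1/\Phi(x)$, which is positively homogeneous of degree $1$ and even; thus $M$ is convex iff $g_M$ is subadditive. Since both evenness of $\psi$ and $1/m$-concavity of its support are preserved under linear automorphisms of $\R^2$, and since subadditivity is trivial for linearly dependent pairs, we may after such an automorphism reduce the subadditivity of $g_M$ at a given pair to the inequality $g_M(ae_1+be_2)\le g_M(ae_1)+g_M(be_2)$ with $a,b>0$. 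Writing $\xi=b/a$ and substituting, this becomes the one-variable inequality
\[
\Big(\int_{\mathbb{R}}\psi(u,\xi u)\,du\Big)^{-1}\ \le\ \Big(\int_{\mathbb{R}}\psi(s,0)\,ds\Big)^{-1}+\xi\Big(\int_{\mathbb{R}}\psi(0,s)\,ds\Big)^{-1},\qquad \xi>0.
\]

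I expect this last inequality to be the main obstacle; everything preceding it is routine bookkeeping. The plan for it is the classical two-dimensional Brunn--Minkowski argument: view the slopes $0,\ \xi,\ \infty$ as a pencil of concurrent lines, apply the one-dimensional Borell--Brascamp--Lieb inequality \cite{BL76a} (the functional form of \eqref{eq:bm-std} with concavity exponent $1/m$, in the Pr\'ekopa--Leindler circle of ideas reviewed above) to $\psi$ restricted to a one-parameter shear family of parallel lines interpolating between the three lines of the pencil, and compare the resulting line integrals; the evenness of $\psi$ and the symmetry of its support are exactly what make the two endpoint terms recombine into the right-hand side. (Alternatively, one may simply invoke Busemann's original computation \cite{Bus49}.) In this way the proof localizes the entire difficulty of Busemann's theorem in a single planar Brunn--Minkowski estimate.
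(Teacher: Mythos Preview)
Your outline is the classical direct route to Busemann's theorem and is correct as far as it goes: symmetry is trivial, the reduction to two-dimensional central sections is valid for star bodies with the origin in the interior, and the reformulation of $I(K)\cap S$ as the planar body with radial function $\omega\mapsto\int_\R\psi(t\omega)\,dt$ for an even $\psi$ with $\psi^{1/(d-2)}$ concave (via Brunn's theorem) is accurate. The final planar inequality you isolate is indeed the heart of the matter; your sketch of its proof via a one-dimensional Borell--Brascamp--Lieb interpolation is the right idea, though you leave it as a sketch.

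The paper takes a different route. It does not section $I(K)$ at all; instead it recasts Busemann's theorem as the statement that $v\mapsto N_\infty^{1/2}(\langle v,X\rangle)$ is a norm for $X$ uniform on $K$ (Theorem~\ref{thm:BuseInfo}), and then derives the triangle inequality for this functional from the two-dimensional reverse $\infty$-EPI of Theorem~\ref{prop:buse}, which itself is proved by invoking the convexity of the Ball--Bobkov functional $\Lambda_\phi^1$ (Theorem~\ref{bobkovbodies}) applied to the joint density $\phi$ of $(\langle u,X\rangle,\langle v,X\rangle)$. The two arguments actually converge on the same core fact---the convexity of $\Lambda_\psi^1$ for an even $s$-concave $\psi$ on $\R^2$---because the density of the two-dimensional marginal of the uniform on $K$ is, up to normalization, exactly your slice function $\psi$. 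What the paper's packaging buys is an immediate extension to all $\kappa$-concave measures and a clean entropic interpretation, at the cost of treating Theorem~\ref{bobkovbodies} as a black box. What your route buys is self-containment: you indicate how the planar inequality follows from first principles (BBL), rather than citing it.
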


The symmetry is essential here; the intersection body of a non-symmetric convex body need not be
convex\footnote{There is a nontrivial way to extend the definition of intersection body to non-symmetric convex bodies
so that the new definition results in a convex body; see \cite{MR11} for details.}. 
Busemann's theorem is a fundamental result in convex geometry since it expresses a convexity property
of volumes of central slices of a symmetric convex body, whereas Brunn's theorem (an easy implication
of the BMI) asserts a concavity property of volumes of slices that are perpendicular to a given direction.

Busemann's theorem may be recast in terms of R\'enyi entropy, as implicitly recognized by K. Ball and explicitly described below.

\begin{thm}\label{thm:BuseInfo}
If $X$ is uniformly distributed on a symmetric convex body $K \subset \mathbb{R}^d$, then the mapping $M_\infty^X:\R^d \ra \R$ defined by
\[
M_\infty^X(v) = \begin{cases} 
      N_\infty^{1/2} (\langle v, X \rangle) & v \neq 0 \\
      0 & v =0 
   \end{cases}
\]
defines a norm on $\mathbb{R}^d$.
\end{thm}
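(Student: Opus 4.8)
The plan is to reduce the statement entirely to Busemann's theorem (Theorem~\ref{thm:BuseCoGeo}) by an explicit computation: I will show that $M_\infty^X$ is, up to the positive multiplicative constant $|K|$, exactly the Minkowski gauge of the intersection body $I(K)$, and then invoke the fact that the gauge of a symmetric convex body is a norm. So the whole content beyond Busemann's theorem is a marginal-density calculation together with Brunn's theorem to locate the maximum of that density at the origin.

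First I would unwind the definition of $N_\infty$ in dimension one: for a real-valued random variable $Y$ with density $f_Y$ one has $N_\infty(Y) = e^{2 h_\infty(Y)} = \|f_Y\|_\infty^{-2}$, hence $N_\infty^{1/2}(Y) = \|f_Y\|_\infty^{-1}$. Thus $M_\infty^X(v) = \|f_v\|_\infty^{-1}$ where $f_v$ denotes the density of $\langle v, X\rangle$. Since $X$ has density $|K|^{-1}\mathbbm{1}_K$, Fubini (the coarea formula for the linear functional $x\mapsto\langle v,x\rangle$, whose gradient has norm $|v|$) gives, for $v\neq 0$,
\[
f_v(t) = \frac{1}{|v|\,|K|}\,\big|K \cap \{x:\langle v,x\rangle = t\}\big|_{d-1}.
\]
Now set $g(t) = |K \cap \{x : \langle v,x\rangle = t\}|_{d-1}$. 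By Brunn's theorem (an application of the BMI to parallel slices, using convexity of $K$), $g^{1/(d-1)}$ is concave on the interval where it is positive; and since $K = -K$, the function $g$ is even in $t$. An even nonnegative function whose $(d-1)$-st root is concave attains its maximum at $0$, so $\|f_v\|_\infty = f_v(0) = |v|^{-1}|K|^{-1}\,|K\cap v^\perp|_{d-1}$, and therefore
\[
M_\infty^X(v) = \frac{|v|\,|K|}{|K\cap v^\perp|_{d-1}}, \qquad v\neq 0.
\]

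Finally I would recognize the right-hand side. The intersection body $I(K)$ is the star body whose radial function in direction $\theta\in\mathbb{S}^{d-1}$ is $r(\theta) = |K\cap\theta^\perp|_{d-1}$ (this is positive for every $\theta$ since $K$ has nonempty interior and, being symmetric, contains the origin in its interior). Its Minkowski gauge is $\|v\|_{I(K)} = |v|/r(v/|v|) = |v|/|K\cap v^\perp|_{d-1}$. Hence $M_\infty^X = |K|\,\|\cdot\|_{I(K)}$ on $\mathbb{R}^d\setminus\{0\}$, and both sides vanish at $v=0$, consistent with the definition of $M_\infty^X$. By Busemann's theorem, $I(K)$ is a symmetric convex body, so $\|\cdot\|_{I(K)}$ is a genuine norm; multiplying by the positive constant $|K|$ preserves all the norm axioms (positive homogeneity, the triangle inequality, and positive-definiteness). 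Therefore $M_\infty^X$ is a norm on $\mathbb{R}^d$. The only step requiring any care is the identification of $\|f_v\|_\infty$ with the central slice $g(0)$ — this uses Brunn's theorem plus the symmetry $K=-K$ — but there is no genuine obstacle here, since all of the convexity content (the triangle inequality) is outsourced to Busemann's theorem, which we are free to assume.
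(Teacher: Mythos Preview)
Your proof is correct and follows essentially the same route as the paper: both identify $M_\infty^X$ (up to a positive constant) with the Minkowski functional of the intersection body $I(K)$ by computing the density of $\langle v, X\rangle$, locating its maximum at the origin via symmetry and concavity of slice volumes, and then invoking Busemann's theorem. The only cosmetic differences are that the paper normalizes $|K|=1$ and argues the mode is at $0$ via log-concavity of the pushforward rather than Brunn's theorem directly, but these are equivalent observations.
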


Before showing that Theorems~\ref{thm:BuseCoGeo} and \ref{thm:BuseInfo} are equivalent, we need to recall the definition of the Minkowski functional.

\begin{defn} \label{minkfunc}
For a convex body $L$ in $\R^d$ containing the origin, define $\rho_L:\R^d \to [0,\infty)$ by
\begin{align*} 
\rho_L(x) = \inf \{ t \in (0,\infty) : x \in tL\}.
\end{align*} 
\end{defn}

It is straightforward that $\rho_L$ is positively homogeneous (i.e., $\rho_L(ax) =a\rho_L(x)$ for $a>0$) and convex.  
When $L$ is assumed to be symmetric, $\rho_L$ defines a norm.

\begin{proof}[Proof of Theorem \ref{thm:BuseCoGeo} $\Leftrightarrow$ Theorem \ref{thm:BuseInfo}]
Let $K$ be a symmetric convex body and without loss of generality take $|K|=1$. 
Let $X = X_K$ denote a random variable distributed uniformly on $K$.  

For a unit vector $\theta\in\mathbb{S}^{d-1}$, 
as the pushforward of a symmetric log-concave measure under the linear map $x\mapsto\langle\theta, x\rangle$,
the distribution of the real-valued random variable $\langle\theta, X\rangle$  is symmetric and log-concave.
Denoting the symmetric, log-concave density of $\langle\theta, X\rangle$ by $f_\theta$, we see that the mode
of $f_\theta$ is 0, and consequently,
\[
N^{1/2}_\infty(\langle\theta, X\rangle) = \frac{1}{f_\theta(0)}= \frac{1}{|K \cap \theta^\perp|_{d-1}} = \frac{1}{r(\theta)} .
\]
By the definition of $I(K)$, we have $\rho_{I(K)}( r(\theta) \theta) = 1$.  Thus, for any $\theta\in\mathbb{S}^{d-1}$, 
\[
\rho_{I(K)}(\theta)= \rho_{I(K)}\bigg(\frac{r(\theta)\theta}{r(\theta)}\bigg) = \frac{1}{r(\theta)} = M_\infty^X (\theta) .
\]
By homogeneity, this immediately extends to $\R^d$,
establishing our result and also a pleasant duality; 
up to a constant factor, the Minkowski functional associated to $I(K)$ is a R\'enyi entropy power of the projections of $X_K$.
\end{proof}

\subsubsection{A Busemann-type theorem for measures}

Theorem \ref{thm:BuseInfo} is a statement about $\infty$-R\'enyi entropies associated to 
a $1/d$-concave random vector $X$ (see Example 1 after Theorem~\ref{thm:k-kd}).
It is natural to wonder if Busemann's theorem can be extended to other $p$-R\'enyi entropies and more general classes of measures.  

In \cite{BNT15}, Ball-Nayar-Tkocz also give a simple argument, essentially going back to \cite{Bal88}, that 
the information-theoretic statement of Busemann's theorem (namely Theorem~\ref{thm:BuseInfo}) extends to log-concave measures.  
Interpreting in the language of Borell's $\kappa$-concave measures, \cite{BNT15} extends Theorem~\ref{thm:BuseInfo} to measures that are 
$\kappa$-concave with $\kappa \geq 0$.  In what follows, we use the same argument as \cite{BNT15} to prove that 
Busemann's theorem can in fact be extended to all convex measures by invoking Theorem~\ref{bobkovbodies}.

\begin{thm} \label{prop:buse} 
Let $\kappa \in [-\infty, 1/2]$. If $(U,V)$ is a symmetric $\kappa$-concave random vector in $\mathbb{R}^2$, then
\[
e^{h_\infty(U+V)} \leq e^{h_\infty(U)} + e^{h_\infty(V)} .
\]
\end{thm}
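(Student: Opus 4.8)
The plan is to deduce the stated inequality from a Busemann-type theorem in the spirit of Theorem~\ref{thm:BuseInfo}, extended from uniform measures on symmetric convex bodies to all convex measures: namely, if $X=(U,V)$ is a symmetric $\kappa$-concave random vector in $\R^2$ with $\kappa\le 1/2$, then the function
\[
M(v)\;:=\;e^{h_\infty(\langle v,X\rangle)}\;=\;N_\infty^{1/2}\big(\langle v,X\rangle\big)\qquad(v\neq 0),\qquad M(0):=0 ,
\]
is a seminorm on $\R^2$. Granting this, the theorem is immediate: taking $v=e_1=(1,0)$, $v=e_2=(0,1)$ and $v=e_1+e_2$, and noting $\langle e_1,X\rangle=U$, $\langle e_2,X\rangle=V$, $\langle e_1+e_2,X\rangle=U+V$, subadditivity of $M$ gives
\[
e^{h_\infty(U+V)}=M(e_1+e_2)\le M(e_1)+M(e_2)=e^{h_\infty(U)}+e^{h_\infty(V)} .
\]
So the whole task is to prove that $M$ is a seminorm, and for this I would follow the argument of Ball \cite{Bal88} and Ball--Nayar--Tkocz \cite{BNT15}, replacing their use of log-concavity by Bobkov's generalization (Theorem~\ref{bobkovbodies}).

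First I would record that $M$ is well-defined with $M(v)=1/f_v(0)$, where $f_v$ is the density of $\langle v,X\rangle$. Indeed $\langle v,X\rangle$ is the image of $X$ under the surjective linear map $\langle v,\cdot\rangle$, hence a symmetric $\kappa$-concave real random variable; by Theorem~\ref{thm:k-kd} its density, taken in its canonical quasi-concave representative, is even and quasi-concave, and an even quasi-concave density on $\R$ attains its essential supremum at the origin, so $\|f_v\|_\infty=f_v(0)$ and $e^{h_\infty(\langle v,X\rangle)}=1/f_v(0)$. The crucial step — and the only place where the hypothesis $d=2$ enters — is the identification $M(v)=\tfrac12\,\Lambda_f^{1}(v^\perp)$, where $f$ is the density of $X$ and $v^\perp:=(-v_2,v_1)$ spans the line $\{y:\langle v,y\rangle=0\}$. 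This follows by disintegrating $f$ along that line: the coarea formula gives $f_v(0)=\tfrac1{|v|}\int_{\R v^\perp}f=\tfrac{2}{|v|}\int_0^\infty f\big(s\,v^\perp/|v|\big)\,ds$ — the factor $2$ from evenness of $f$, and here one uses that in $\R^2$ a hyperplane through the origin is a union of two rays — and recognizing $\int_0^\infty f(sw)\,ds=\Lambda_f^1(w)^{-1}$ together with the degree-$1$ homogeneity of $\Lambda_f^1$ collapses this to $f_v(0)=2/\Lambda_f^1(v^\perp)$.

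It therefore suffices to show that $\Lambda_f^1$ is convex (it is automatically positively homogeneous of degree $1$ and even), since then $M=\tfrac12\,\Lambda_f^1\circ(\,\cdot\,)^\perp$ is a convex, $1$-homogeneous, even function — i.e.\ a seminorm — and the subadditivity used above holds. For this I would invoke Theorem~\ref{bobkovbodies} at $s=-1/2$, $p=1$: the origin lies in the interior of $\supp f$ (a symmetric convex set with nonempty interior contains $0$ in its interior); and $f$ is $(-1/2)$-concave, because by Theorem~\ref{thm:k-kd} it is $s_{\kappa,2}$-concave with $s_{\kappa,2}=\kappa/(1-2\kappa)$, the map $\kappa\mapsto\kappa/(1-2\kappa)$ is increasing on $(-\infty,1/2]$ with infimum $-1/2$, so $s_{\kappa,2}\ge -1/2$ for every $\kappa\in[-\infty,1/2]$, and $s$-concavity with $s\ge-1/2$ implies $(-1/2)$-concavity by the power-mean inequality. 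Since $-1/2=-1/d$ for $d=2$ and $0<p=1\le-1-1/(-1/2)=1$, Theorem~\ref{bobkovbodies} applies and yields the convexity of $\Lambda_f^1$, completing the argument. (The hypothesis $\kappa\le 1/2$ is used precisely here, to guarantee $s_{\kappa,2}\ge -1/2$, i.e.\ that $p=1$ is admissible.)

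The step I expect to require the most care is the identification $M(v)=\tfrac12\,\Lambda_f^1(v^\perp)$: it is genuinely two-dimensional — for $d>2$ the set $v^\perp$ is a hyperplane rather than a pair of rays, so it cannot be read off directly from the radial functional $\Lambda_f^p$ — and it quietly relies on choosing the correct representative of each one-dimensional density so that ``value at $0$'' equals ``essential supremum'', which is what lets us write $e^{h_\infty}$ as a reciprocal density value. Everything else is bookkeeping with homogeneity and with the explicit formula for $s_{\kappa,2}$.
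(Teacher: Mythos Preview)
Your proof is correct and uses essentially the same approach as the paper: both identify $e^{h_\infty(\langle v,X\rangle)}$ with $\tfrac{1}{2}\Lambda_f^{1}$ evaluated at the direction orthogonal to $v$, and then invoke Bobkov's Theorem~\ref{bobkovbodies} at $s=-1/2$, $p=1$ to obtain the triangle inequality. The only cosmetic difference is that you first establish the seminorm property of $M$ and then specialize to $e_1,e_2,e_1+e_2$, whereas the paper carries out the computation directly for those three vectors and records the seminorm property afterwards as a corollary.
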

\begin{proof}
It is enough to prove the result for the weakest hypothesis $\kappa = -\infty$. We let $\phi$ denote the density function of $(U,V)$ so that
\begin{align*}
	U+V \sim w(x) &= \int_\mathbb{R} \phi(x-t,t) dt
		\\
	U\sim u(x) &= \int_\mathbb{R} \phi(x,t) dt
		\\
	V\sim v(x) &= \int_\mathbb{R} \phi(t,x) dt.
\end{align*}
Since symmetry and the appropriate concavity properties of the densities forces the maxima of $u, v, w$ to occur at $0$, 
\begin{align*}
 	\frac 1 {\|w\|_\infty} 
 		&= 
 			\frac 1 {w(0)}
 			\\
 		&= 
 			\left( \int_\mathbb{R} \phi(-t,t) dt \right)^{-1}
 			\\
 		&=
 			\left( 2 \int_0^\infty \phi(t(e_2 -e_1)) dt \right)^{-1}
 			\\
 		&=
 			\frac 1 2 \Lambda_\phi^{1}(e_2-e_1)
 			\\
 		&\leq
 			\frac 1 2 \left( \Lambda_\phi^{1}(e_2) + \Lambda_\phi^{1}(e_1) \right)
 			\\
 		&=
 			\left( 2 \int_0^\infty \phi(0,t) dt \right)^{-1} + \left( 2 \int_0^\infty \phi(t,0) dt \right)^{-1}
 			\\
 		&=
 			\frac 1 {u(0)} + \frac 1 {v(0)}
 			\\
 		&=
 			\frac 1 {\|u\|_\infty} + \frac 1 {\|v\|_\infty} ,
\end{align*}
where the only inequality follows from Theorem \ref{bobkovbodies} with $a=1$ and $p=1 = n - 1 - 1/\kappa$.
By definition of $h_\infty$, we have proved the desired inequality.
\end{proof}

As a nearly immediate consequence we have Busemann's theorem for convex measures.
\begin{cor} For $\kappa \in [-\infty,\frac 1 d]$, if $X$ is symmetric and $\kappa$-concave the function 
\[
M_\infty^X(v) = \begin{cases} 
      N_\infty^{1/2} (\langle v, X \rangle) & v \neq 0 \\
      0 & v =0 
   \end{cases}
\]
defines a norm.
\end{cor}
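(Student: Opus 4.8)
The plan is to verify that $M:=M_\infty^X$ satisfies the three axioms of a norm: positive definiteness ($M(v)\ge 0$, with $M(v)=0$ iff $v=0$), positive homogeneity ($M(av)=|a|M(v)$), and the triangle inequality. The first two are essentially formal. For $v\ne 0$, the real-valued random variable $\langle v,X\rangle$ is a non-degenerate linear image of a $\kappa$-concave probability measure with full-dimensional support, hence (by Theorem~\ref{thm:k-kd} in dimension $1$) has a density $f_v$ that is $s_{\kappa,1}$-concave; in particular $0<\|f_v\|_\infty<\infty$, the finiteness following from convexity of the measure (for $\kappa\le 0$ one may invoke Lemma~\ref{lem:comp-cvx} to get $h_\infty(\langle v,X\rangle)>-\infty$, and for $\kappa>0$ it is immediate since $f_v$ is then a positive power of a concave function). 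Since $N_\infty(\langle v,X\rangle)=\|f_v\|_\infty^{-2}$ we obtain $M(v)=\|f_v\|_\infty^{-1}\in(0,\infty)$, while $M(0)=0$ by definition. Homogeneity follows from $\langle av,X\rangle=a\langle v,X\rangle$ together with the degree-$2$ homogeneity of the (one-dimensional) entropy power, which gives $N_\infty(\langle av,X\rangle)=a^2N_\infty(\langle v,X\rangle)$, i.e.\ $M(av)=|a|M(v)$.

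The substance is the triangle inequality $M(v+w)\le M(v)+M(w)$, which I would deduce from Theorem~\ref{prop:buse}. Assume $v,w$ are both nonzero (otherwise the inequality is trivial). If $v$ and $w$ are linearly dependent, say $w=cv$, homogeneity reduces the claim to the scalar inequality $|1+c|\le 1+|c|$. If $v$ and $w$ are linearly independent, let $T:\R^d\to\R^2$ be the linear surjection $Tx=(\langle v,x\rangle,\langle w,x\rangle)$ and put $(U,V):=TX$. The law of $(U,V)$ is symmetric (being a linear image of a symmetric measure), it has full-dimensional support (a surjective linear map is open, so it carries the interior of $\supp X$ into $\supp(TX)$), and it is $\kappa$-concave: for a linear surjection one has $T^{-1}\big((1-t)A+tB\big)=(1-t)T^{-1}(A)+tT^{-1}(B)$, so Borell's defining inequality for $X$ pushes forward verbatim to $TX$. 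Since $\kappa\le\tfrac1d\le\tfrac12$ for $d\ge2$, Theorem~\ref{prop:buse} applies to $(U,V)$ and gives $e^{h_\infty(U+V)}\le e^{h_\infty(U)}+e^{h_\infty(V)}$. Finally $U+V=\langle v+w,X\rangle$, and for any real random variable $W$ with density $f_W$ one has $e^{h_\infty(W)}=\|f_W\|_\infty^{-1}=N_\infty^{1/2}(W)$, so the displayed inequality is exactly $M(v+w)\le M(v)+M(w)$. In the remaining case $d=1$ everything is trivial, since then $M(v)=|v|\,N_\infty^{1/2}(X)$ with $N_\infty^{1/2}(X)\in(0,\infty)$.

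The only genuinely non-formal step is the reduction to the two-dimensional Busemann-type Theorem~\ref{prop:buse}: one must check that a linear image of a symmetric $\kappa$-concave vector is again symmetric and $\kappa$-concave (Borell's definition, phrased via preimages, makes this a short verification), that the image is full-dimensional when $T$ is onto, and that the linearly-dependent case is disposed of by homogeneity alone. Once these are in place, the corollary is immediate.
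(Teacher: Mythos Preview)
Your proof is correct and follows essentially the same route as the paper: reduce the triangle inequality to Theorem~\ref{prop:buse} by pushing $X$ forward to the pair $(\langle v,X\rangle,\langle w,X\rangle)$ in $\R^2$ and observing that this pushforward is symmetric and $\kappa$-concave. Your treatment is in fact more careful than the paper's: you separate out the linearly dependent case (where the pushforward fails to have full-dimensional support, so the paper's convention for $\kappa$-concavity would not literally apply), you verify positive definiteness explicitly, and you note the constraint $\kappa\le 1/d\le 1/2$ needed to invoke Theorem~\ref{prop:buse}.
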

\begin{proof}
As we have observed $M = M_\infty^X$ is homogeneous.  To prove the triangle inequality take vectors $u,v \in \mathbb{R}^d$ and define $(U,V) = (\langle X, u \rangle,\langle X, v \rangle)$, so that $U+V = \langle X, u+v \rangle$.  Notice that $(U,V)$ is clearly symmetric and as the affine pushforward of a $\kappa$-concave measure, is thus $\kappa$-concave as well.  Thus by Theorem~\ref{prop:buse} we have
\[
e^{h_\infty(U+V)} \leq e^{h_\infty(U)} + e^{h_\infty(V)}.
\]
But this is exactly
\[
N_\infty^{1/2}(\langle X,u+v \rangle) \leq N_\infty^{1/2}(\langle X,u \rangle)  + N_\infty^{1/2}(\langle X,v \rangle),
\]
which is what we sought to prove.
\end{proof}

\subsubsection{Busemann-type theorems for other R\'enyi entropies}

While the above extension deals with general measures, a further natural question relates to more general entropies.
Ball-Nayar-Tkocz \cite{BNT15} conjecture that the Shannon entropy version holds for log-concave measures.

\begin{conj}\cite{BNT15}\label{conj:bnt1} 
When $X$ is a symmetric log-concave vector in $\mathbb{R}^d$ then the function
\[
M_1^X(v) = \begin{cases} 
      N_1^{1/2}(\langle v, X \rangle) & v \neq 0 \\
      0 & v =0 
   \end{cases}
\]
defines a norm on $\mathbb{R}^d$.
\end{conj}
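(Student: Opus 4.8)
The plan is to first dispose of the easy axioms and reduce the conjecture to a two-dimensional statement. For $v\neq 0$ the random variable $\langle v,X\rangle$ is a nondegenerate one-dimensional log-concave random variable (a linear image of $X$), so its Shannon entropy is finite; hence $M_1^X(v)\in(0,\infty)$, and since $h(aY)=h(Y)+\log|a|$ we get $M_1^X(av)=|a|\,M_1^X(v)$, so homogeneity and positive-definiteness hold automatically (if $X$ is degenerate, $M_1^X$ is a norm on the affine hull of its support). The whole content is thus the triangle inequality $M_1^X(u+v)\le M_1^X(u)+M_1^X(v)$. Exactly as in the proof of the corollary following Theorem~\ref{prop:buse}, set $U=\langle u,X\rangle$ and $V=\langle v,X\rangle$; then $(U,V)$ is a symmetric log-concave vector in $\RL^2$ (if $u,v$ are dependent the inequality is trivial by homogeneity), and since $U+V=\langle u+v,X\rangle$ and $N_1^{1/2}$ of a scalar random variable equals $e^{h(\cdot)}$, the conjecture is equivalent to the following \emph{Shannon-entropy analogue of Theorem~\ref{prop:buse}}: for every symmetric log-concave $(U,V)$ in $\RL^2$, $e^{h(U+V)}\le e^{h(U)}+e^{h(V)}$. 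In particular the case $d=2$ already captures the full conjecture.

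For the two-dimensional inequality I see two natural routes. \textbf{Route A: a Ball-type functional for $h$.} In the proof of Theorem~\ref{prop:buse} the key identity was $1/\|w\|_\infty=\tfrac12\Lambda_\phi^{1}(e_2-e_1)$, after which Bobkov's convexity result (Theorem~\ref{bobkovbodies}) did the work. The analogue here would be to produce a functional $\Psi_\phi$ on directions $\theta\in\RL^2$ that is (i) positively homogeneous and sublinear, with sublinearity proved from log-concavity of $\phi$, and (ii) equal to $e^{h(W_\theta)}$ at $\theta=e_1,e_2$ and, after the change of variables sending $U+V$ to the anti-diagonal, at $\theta=e_2-e_1$, where $W_\theta$ is the marginal of $(U,V)$ in direction $\theta$. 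Since $e^{h}$ is \emph{not} an integral of $\phi$ along a single ray, such a $\Psi_\phi$, if it exists, must encode the whole marginal density; the natural candidate is $\theta\mapsto\exp\!\big(-\int w_\theta\log w_\theta\big)$ extended $1$-homogeneously, and the task becomes to prove directly that this map is convex on $\RL^2$, using that the $w_\theta$ form a ``log-concave family'' in the sense that the joint density $\phi$ is log-concave. \textbf{Route B: interpolate, or prove all $p$ at once.} The gauges $M_0^X$ (twice the support function of the symmetric convex support of $X$, hence a norm) and $M_\infty^X$ (a norm by the corollary to Theorem~\ref{prop:buse}) both satisfy the triangle inequality; one would like to deduce the case $p=1$ by interpolation in $p$, or better, to prove the common generalization $e^{h_p(U+V)}\le e^{h_p(U)}+e^{h_p(V)}$ for all $p\in[0,\infty]$ and all symmetric log-concave (indeed all $\kappa$-concave) $(U,V)$, and then specialize. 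The triangle inequality is not obviously preserved under the interpolation in $p$, so this route needs a genuinely new idea, but it would subsume the entire $\kappa$-concave Busemann picture of Section~\ref{sec:repi-ball}.

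I expect the main obstacle to be obtaining the \emph{sharp} constant $1$. The crude comparison available from the monotonicity of R\'enyi entropy and the one-dimensional case of Lemma~\ref{lem:comp-cvx}, namely $N_\infty^{1/2}(Y)\le N_1^{1/2}(Y)\le e\,N_\infty^{1/2}(Y)$ for one-dimensional log-concave $Y$, combined with Theorem~\ref{prop:buse}, only shows that $M_1^X$ is \emph{equivalent} to a norm with constant $e$, i.e.\ a quasi-norm; the conjecture asks to remove that factor entirely. The reason $h_\infty$ was tractable is that $\|w\|_\infty=w(0)$ is a pointwise evaluation of the joint density along the anti-diagonal ray, so it plugs directly into Bobkov's convexity theorem, whereas $h(W)$ is a global integral functional of the marginal, with no obvious ray-functional whose convexity yields the claim.

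I would therefore expect a successful proof to either produce a useful integral representation of $e^{h}$ amenable to the rearrangement/convexity machinery behind Theorem~\ref{bobkovbodies} (Route A), or carry out a direct second-variation argument showing that the Hessian of $v\mapsto N_1^{1/2}(\langle v,X\rangle)$ is positive semidefinite --- equivalently, that the ``entropic intersection body'' $\{v:M_1^X(v)\le1\}$ is convex --- using log-concavity of $X$ in an essential way. Verifying that Hessian inequality is where the real work lies, and it is plausible that it ultimately rests on a correlation-type estimate (in the spirit of the Brascamp--Lieb or Pr\'ekopa covariance inequalities) for the family of directional marginals of a log-concave measure.
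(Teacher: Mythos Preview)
This statement is a \emph{conjecture} in the paper, not a theorem: the paper explicitly presents it as open (attributed to \cite{BNT15}) and does not prove it. So there is no ``paper's own proof'' to compare against. Your reduction to the two-dimensional inequality $e^{h(U+V)}\le e^{h(U)}+e^{h(V)}$ for symmetric log-concave $(U,V)$ is exactly the reformulation the paper records as Conjecture~\ref{conj:bnt2}, and your observation that the crude sandwich $N_\infty^{1/2}\le N_1^{1/2}\le e\,N_\infty^{1/2}$ together with Theorem~\ref{prop:buse} only yields a quasi-norm with constant $e$ is correct and is precisely why the problem is hard.

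That said, your proposal does not actually prove anything: Route~A asks for a ray-type functional $\Psi_\phi$ with $\Psi_\phi(\theta)=e^{h(W_\theta)}$ whose sublinearity follows from log-concavity, but you do not construct one, and as you yourself note $e^h$ is a global integral functional of the marginal with no obvious ray representation---so this is a restatement of the difficulty, not a plan to overcome it. Route~B (interpolation in $p$) is a natural thought but, as you acknowledge, the triangle inequality is not preserved under the relevant interpolation, so again no argument is given. The Hessian/second-variation idea at the end is likewise only a hope. For calibration, the best known partial result toward the conjecture is what the paper reports: \cite{BNT15} proves $e^{\alpha h(U+V)}\le e^{\alpha h(U)}+e^{\alpha h(V)}$ with $\alpha=1/5$, and the paper's Theorem~\ref{thm:BNT-p} pushes this to $\alpha_1\approx 0.24$, still far from the conjectured $\alpha=1$. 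Your write-up is a fair survey of the landscape, but it is not a proof, and the paper does not claim one either.
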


As the homogeneity of $M$ is immediate, the veracity of the conjecture depends on proving the triangle inequality
\[
e^{h_1(\langle u +v, X \rangle)} \leq e^{h_1(\langle v, X \rangle)} + e^{h_1(\langle u, X \rangle)} ,
\]
which is easily seen to be equivalent to the following modified Reverse EPI for symmetric log-concave measures on $\mathbb{R}^2$. 

\begin{conj}\cite{BNT15}\label{conj:bnt2}
For a symmetric log-concave random vector in $\R^2$, with coordinates $(U,V)$,
\ben
N_1^{1/2}(U+V) \leq N_1^{1/2}(U) + N_1^{1/2}(V).
\een
\end{conj}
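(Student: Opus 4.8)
\medskip
\noindent\textbf{A proof strategy for Conjecture~\ref{conj:bnt2}.}
Since $U,V$ are real-valued, $N_1^{1/2}(\langle\theta,X\rangle)=e^{h(\langle\theta,X\rangle)}$, so the conjecture asks that $e^{h(U+V)}\le e^{h(U)}+e^{h(V)}$, which is the triangle inequality for the candidate norm $M_1^X$ at $\theta\in\{e_1,e_2,e_1+e_2\}$ (and Conjecture~\ref{conj:bnt1} then follows by the projection reduction recorded above). The first step I would take is to normalize: replacing $(U,V)$ by $(aU,bV)$, still symmetric and log-concave, uses the diagonal affine invariance of the problem to reduce to $h(U)=h(V)=0$, so that the target collapses to the single inequality $h(U+V)\le\log 2$. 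I would then write $\phi$ for the joint density and introduce its three one-dimensional marginals $u(a)=\int\phi(a,t)\,dt$, $v(b)=\int\phi(t,b)\,dt$ and $w(s)=\int\phi(s-t,t)\,dt$; by Pr\'ekopa's theorem and symmetry all three are symmetric, log-concave, and maximized at $0$, exactly the setup of the proof of Theorem~\ref{prop:buse}. A cheap bound is already available: combining the $\infty$-Busemann inequality $e^{h_\infty(U+V)}\le e^{h_\infty(U)}+e^{h_\infty(V)}$ (Theorem~\ref{prop:buse}, applicable since log-concave measures are $\kappa$-concave with $\kappa=0$) with the one-dimensional comparison $0\le h(Z)-h_\infty(Z)\le 1$ for symmetric log-concave $Z$ (the $d=1$, $\kappa=0$ case of Lemma~\ref{lem:comp-cvx}, i.e.\ \eqref{inq:lem}) gives $e^{h(U+V)}\le e\bigl(e^{h(U)}+e^{h(V)}\bigr)$. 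So the entire content of the conjecture is the removal of this factor $e$; moreover the slack is genuinely illusory, since when $(U,V)$ concentrates on a line $\{V=cU\}$ the marginals $u,v,w$ become scaled copies of one another and the two sides of the conjectured inequality meet asymptotically.

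The main route I would pursue is to emulate the Ball--Bobkov body proof of Theorem~\ref{prop:buse}, but with the global quantity $e^{h(w)}$ in place of the pointwise $1/w(0)=e^{h_\infty(w)}$. The difficulty is immediate: $e^{h(w)}$ is not a ray integral of $\phi$, so there is no single convex body playing the role of $K_\phi^1$. I would instead recast the normalized problem as the extremal problem of maximizing the concave functional $\phi\mapsto h(U+V)$ over the class of symmetric log-concave densities on $\R^2$ subject to $h(U)=h(V)=0$, and bring in the localization method for log-concave functions: one expects the supremum to be approached on degenerate ``needle'' configurations --- densities of the form $e^{-\langle a,x\rangle}$ on a segment, or uniform on a symmetric parallelogram --- which reduces the statement to a low-dimensional computation checkable by hand and returning the value $\log 2$ (attained only in the degenerate limit identified above). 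An auxiliary device for linearizing the left-hand side during localization is the Gibbs variational formula $h(w)=\inf_g\bigl\{\int gw+\log\int e^{-g}\bigr\}$, since for a fixed test function $g$ the quantity $\int gw+\log\int e^{-g}=\int\!\!\int g(x+y)\phi(x,y)\,dx\,dy+\log\int e^{-g}$ is affine in $\phi$.

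Several alternatives are worth running in parallel. First, one could try to build an integral-geometric representation of $e^{h(v)}$ from the whole family of convex bodies $\{K_\phi^p\}_{p>0}$ supplied by Theorem~\ref{bobkovbodies}, averaging the triangle inequalities $\Lambda_\phi^p(e_1-e_2)\le\Lambda_\phi^p(e_1)+\Lambda_\phi^p(e_2)$ against a suitable measure in $p$. Second, one could first establish the $p$-R\'enyi version for $p>1$ (where the sharp Young/Beckner machinery of Section~\ref{sec:epi-r} and the monotonicity of Lemma~\ref{lem:monotonicity} are available) and let $p\downarrow 1$. Third, one could run the two-dimensional heat semigroup on $\phi$ and differentiate the deficit $\log\bigl(e^{h(U_t)}+e^{h(V_t)}\bigr)-h(U_t+V_t)$ via de Bruijn's identity toward the Gaussian endpoint; I expect this last route to be the most fragile, since a direct check at the Gaussian endpoint and at uniform or degenerate starting data indicates that this deficit is not monotone along the flow. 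In every approach the same obstacle recurs and is, I believe, the crux: because Shannon entropy depends on the whole density rather than on a single value or slice, one cannot reduce to a single hyperplane section as in the $h_\infty$ argument, so one is forced to control the full shape of $w$ against those of $u$ and $v$ uniformly --- including precisely in the degenerate regime where the inequality is tight and leaves no room to spare.
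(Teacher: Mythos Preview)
The statement you are attempting is a \emph{conjecture}; the paper does not prove it. It is listed as Conjecture~\ref{conj:bnt2}, attributed to \cite{BNT15}, and the paper only records partial progress: the result of \cite{BNT15} that the inequality holds with exponent $\alpha=1/5$ in place of $1$, and the paper's own Theorem~\ref{thm:BNT-p} giving an analogous statement for general $p$-R\'enyi entropy with an exponent $\alpha_p<1$. So there is no ``paper's proof'' to compare against, and your write-up is (as you honestly label it) a strategy sketch, not a proof.

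That said, your preliminary reductions are sound and match the setup the paper uses for Theorem~\ref{thm:BNT-p}: the normalization $h(U)=h(V)=0$ reducing to $h(U+V)\le\log 2$, and the cheap bound $e^{h(U+V)}\le e\bigl(e^{h(U)}+e^{h(V)}\bigr)$ via Theorem~\ref{prop:buse} and the inequality~\eqref{inq:lem} are exactly the ``large $\theta$'' half of the argument in the discussion after Theorem~\ref{thm:BNT-p}. The hard part --- removing the factor $e$ --- is precisely where the paper stops and where your proposal becomes speculative.

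On your main route via localization: the obstacle you should flag more explicitly is that the constraint set $\{h(U)=0,\,h(V)=0\}$ is not cut out by linear functionals of $\phi$, and the objective $h(U+V)$ is concave in $\phi$ but the constraints are not affine; standard Lov\'asz--Simonovits or Fradelizi--Gu\'edon localization does not directly apply to such nonlinear constraints. Your Gibbs variational trick linearizes the \emph{objective} but not the constraints, so the reduction to needles is not justified as stated. Regarding the $p\downarrow 1$ route: the paper's Theorem~\ref{thm:BNT-p} already carries out a version of this and the resulting $\alpha_p$ does not tend to $1$, so absent a new idea this approach reproduces the known gap rather than closing it. Your assessment of the heat-flow route as fragile is consistent with what is known. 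In short, your diagnosis of where the difficulty lies is accurate, but none of the proposed routes contains the missing idea; the conjecture remains open.
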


Towards this conjecture, it is proved in \cite{BNT15}  that $e^{\alpha h_1(U+V)} \leq e^{\alpha h_1(U)} + e^{\alpha h_1(V)}$ when $\alpha=1/5$.
By extending the approach used by \cite{BNT15}, we can obtain a family of Busemann-type results for $p$-R\'enyi entropies.

\begin{thm}\label{thm:BNT-p}
Fix $p\in [1,\infty]$. There exists a constant $\alpha_p>0$ which depends only on the parameter $p$, such that for a 
symmetric log-concave random vector $X$ in $\RL^d$ and two vectors $u,v\in\RL^d$, we have
\ben
e^{\alpha_p h_p(\langle u+v,X\rangle)}\le e^{\alpha_p h_p(\langle u,X\rangle)}+e^{\alpha_p h_p(\langle v,X\rangle)} .
\een
Equivalently, for a symmetric log-concave random vector $(X,Y)$ in $\mathbb{R}^2$ we have
\ben
e^{\alpha_p h_p(X+Y)}\le e^{\alpha_p h_p(X)}+e^{\alpha_p h_p(Y)} .
\een
In fact, if $p\in [1,\infty)$, one can take $\alpha_p$ above to be the unique positive solution $\alpha$ of
\be\label{inq:crucial2}
p^{\frac{\alpha}{p-1}}= \theta_p^{\alpha}+(1-\theta_p)^{\alpha} ,
\ee
where
\ben
\theta_p:=\bigg(\frac{\log p}{p-1}\bigg) \cdot \frac{1}{2(e+1)[2pe^2+(4p+1)e+1]}  ,
\een
with the understanding that the $p=1$ case is understood by continuity (i.e., the left side of equation \eqref{inq:crucial2}
is $e^\alpha$ in this case, and the pre-factor $\frac{\log p}{p-1}$ in $\theta_p$ is replaced by 1).
\end{thm}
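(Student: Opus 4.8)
The plan is to reduce the asserted norm statement on $\R^d$ to a two–dimensional reverse R\'enyi EPI, settle $p=\infty$ with the Busemann-type theorem already established, and obtain finite $p$ by feeding the $\infty$-case together with the one-dimensional R\'enyi comparison of Lemma~\ref{lem:theLemma} into a quantitative self-improvement whose optimization produces $\alpha_p$. For the reduction: if $X$ is symmetric log-concave in $\R^d$ and $u,v\in\R^d$, then $(\langle u,X\rangle,\langle v,X\rangle)$ is a symmetric log-concave random vector in $\R^2$ (a linear image of a log-concave law is log-concave, and symmetry survives) with coordinate sum $\langle u+v,X\rangle$; conversely the $\R^2$ statement is the case $d=2$, $u=e_1$, $v=e_2$. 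Since $N_p$ is homogeneous of degree $2$ and $h_p(\langle v,X\rangle)$ is finite for $v\neq 0$ when $X$ has full-dimensional support, $M_p^X$ is positively homogeneous and positive-definite, so its being a norm is exactly the triangle inequality, which is exactly the displayed $\R^2$ inequality. Hence it suffices to prove: for symmetric log-concave $(X,Y)$ in $\R^2$, $\ e^{\alpha_p h_p(X+Y)}\le e^{\alpha_p h_p(X)}+e^{\alpha_p h_p(Y)}$.

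The case $p=\infty$ is immediate: a log-concave measure is $0$-concave, hence $\kappa$-concave with $\kappa=0\le \frac12$, so Theorem~\ref{prop:buse} gives $e^{h_\infty(X+Y)}\le e^{h_\infty(X)}+e^{h_\infty(Y)}$, and raising to any power $\alpha_\infty\in(0,1]$, using subadditivity of $t\mapsto t^{\alpha_\infty}$, finishes it.

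For $p\in[1,\infty)$ I would work with the one-dimensional densities $w,u,v$ of $X+Y,X,Y$. Each is symmetric log-concave, hence attains its maximum at the origin, and $e^{h_\infty(\cdot)}$ equals the reciprocal of that maximum. Two inputs are needed. First, the convexity of the Ball--Bobkov body of the joint density (Theorem~\ref{bobkovbodies}), applied exactly as in the proof of Theorem~\ref{prop:buse}, gives $e^{h_\infty(X+Y)}\le e^{h_\infty(X)}+e^{h_\infty(Y)}$. Second, Lemma~\ref{lem:theLemma} in dimension one, together with Lemma~\ref{lem:monotonicity}, gives for every symmetric log-concave $Z$ on $\R$ the comparison $0\le h_p(Z)-h_\infty(Z)\le \frac{\log p}{p-1}$, the upper ``R\'enyi defect'' being saturated by the exponential. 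Exponentiating and combining, and bounding $e^{h_\infty}\le e^{h_p}$ for $X$ and $Y$, yields the preliminary reverse $p$-R\'enyi EPI $\ N_p^{1/2}(X+Y)\le p^{1/(p-1)}\bigl(N_p^{1/2}(X)+N_p^{1/2}(Y)\bigr)$, which is of some independent interest.

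The remaining — and main — step is to upgrade this multiplicative-constant bound to the triangle inequality with the stated $\alpha_p$. A constant strictly above $1$ on the right cannot be absorbed by the exponent trick alone: when one summand is far more concentrated than the other, $X+Y$ is essentially the wider summand, so $h_p(X+Y)\approx h_p(\text{wider})$, and no power of $p^{1/(p-1)}(N_p^{1/2}(X)+N_p^{1/2}(Y))$ stays below $N_p^{\alpha/2}(X)+N_p^{\alpha/2}(Y)$ there, so the loss must be made distribution-sensitive. Following the idea of \cite{BNT15}, one shows quantitatively that the R\'enyi defect $h_p(X+Y)-h_\infty(X+Y)$ of the convolution can approach its extreme value $\frac{\log p}{p-1}$ only when the two factors are comparably spread, and in the complementary regime one estimates $\|w\|_p$ directly by the $L^p$-norm of the wider factor plus a correction controlled by the narrower one. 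Making this precise requires explicit one-dimensional estimates for symmetric log-concave densities — pointwise upper and lower bounds near the mode on the scale of the ``width'' $1/\|f\|_\infty$, comparisons among $\|f\|_\infty$, $\int f^p$ and the variance, and the effect of convolving with a narrow log-concave kernel — and these are what produce the parameter $\theta_p$; once the worst admissible configuration is isolated, the optimal exponent is the solution of $p^{\alpha/(p-1)}=\theta_p^{\alpha}+(1-\theta_p)^{\alpha}$, i.e.\ \eqref{inq:crucial2}, which defines $\alpha_p$. The delicate part, and the source of the awkward constant $2(e+1)[2pe^2+(4p+1)e+1]$, is these one-dimensional estimates and the verification that the resulting inequality holds for every relative size of the two summands, not merely at the near-degenerate and balanced extremes; that is the step I expect to demand the most care.
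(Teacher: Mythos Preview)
Your outline matches the paper's approach closely: the reduction to $\R^2$, the $p=\infty$ case via Theorem~\ref{prop:buse}, the preliminary bound $e^{h_p(X+Y)}\le p^{1/(p-1)}\bigl(e^{h_p(X)}+e^{h_p(Y)}\bigr)$ from Lemmata~\ref{lem:monotonicity} and \ref{lem:theLemma}, and the split into a ``balanced'' regime (where that crude bound suffices) and an ``unbalanced'' regime requiring finer estimates are exactly what the paper does. The paper makes the split precise by first linearizing---normalize so $h_p(X)=h_p(Y)$ and study $h_p(\theta X+(1-\theta)Y)$ for $\theta\in[0,1]$---and then, for small $\theta$, invokes an explicit extension of \cite[Lemma~1]{BNT15} bounding $\int\!\!\int -f(x)^{p-2}f'(x)\,y\,w(x,y)\,dx\,dy\big/\int f^p$ by $\bigl(2e(e+2)+\tfrac{e+1}{p}\bigr)\gamma$, where $w$ is the \emph{joint} density and $f$ its first marginal; this is where the constant in $\theta_p$ comes from.

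One caution: you repeatedly refer to the density of $X+Y$ as a ``convolution'' and speak of ``convolving with a narrow log-concave kernel''. Here $X$ and $Y$ are the coordinates of a \emph{dependent} symmetric log-concave pair, so the density of $X+Y$ is $w(s)=\int\phi(s-t,t)\,dt$ for the joint density $\phi$, not $u*v$. The estimates you need (and that the paper's lemma provides) genuinely involve the joint density $\phi$, not just the marginals, so convolution-specific tools will not be available and the argument must stay at the level of the two-dimensional log-concave density throughout.
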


\begin{remark}
If $p< \infty$, then $\theta_p>0$, and on the other hand, trivially $\theta_p < \frac{1}{2(1+e)} <1$.
Denote the left and right sides of the equation \eqref{inq:crucial2} by $L_p(\alpha)$ and $R_p(\alpha)$ respectively.
Then $1=L_p(0) < R_p(0)=2$, and since $p^{1/(p-1)}>1$ for $p\in [1,\infty)$, we also have $\infty=\lim_{\alpha\ra\infty} L_p(\alpha)> \lim_{\alpha\ra\infty} R_p(\alpha)=0$. 
Since $L_p$ and $R_p$ are continuous functions of $\alpha$, equation \eqref{inq:crucial2} must have a positive solution $\alpha_p$.
Moreover, since $L_p$ is an increasing function and $R_p$ is a decreasing function,  there must be a unique positive solution $\alpha_p$.
In particular, easy simulation gives $\alpha_1\approx 0.240789 > 1/5$, and simulation also shows that the unique solution
$\alpha_p$ is non-decreasing in $p$. Consequently it appears that for any $p$, one can replace $\alpha_p$ in the above theorem by $1/5$.
\end{remark}

Since Theorem~\ref{thm:BNT-p} is not sharp, and the proof involves some tedious and unenlightening calculations,
we do not include its details. We merely mention some analogues of the steps used by \cite{BNT15} to prove the
case $p=1$. As done there, one can ``linearize'' the desired inequality to obtain the following equivalent form: 
if $(X,Y)$ is a symmetric log-concave vector in $\mathbb{R}^2$ with $h_p(X)=h_p(Y)$, then for every $\theta\in [0,1]$,
\ben
h_p(\theta X+(1-\theta)Y)\le h_p(X)+\frac{1}{\alpha_p}\log \left(\theta^{\alpha_p}+(1-\theta)^{\alpha_p}\right) .
\een
To prove this form of the theorem, it is convenient as in \cite{BNT15} to divide into cases where $\theta$ is ``small''
and ``large''. For the latter case, the bound
\begin{align*}
e^{h_p(X+Y)}\le e^{h_\infty(X+Y)+\frac{\log p}{p-1}}=p^{1/(p-1)}\left(e^{h_\infty(X)}+e^{h_\infty(Y)}\right)
\le p^{1/(p-1)}\left(e^{h_p(X)}+e^{h_p(Y)}\right) ,
\end{align*}
easily obtained  by combining Lemmata~\ref{lem:monotonicity} and ~\ref{lem:theLemma}, suffices.
The former case is more involved and relies on proving the following extension of \cite[Lemma 1]{BNT15}:
If $w:\mathbb{R}^2\rightarrow\mR_+$ is a symmetric log-concave density,
and we define $f(x):=\int w(x,y)dy$ and $\gamma=\int w(0,y)dy/\int w(x,0)dx$, then
\ben
\frac{\int\int -f(x)^{p-2}f'(x)yw(x,y)dxdy}{\int f(x)^pdx}\le \left(2e(e+2)+\frac{e+1}{p}\right)\gamma .
\een

Staring at Theorem~\ref{prop:buse} and Conjecture~\ref{conj:bnt2}, and given that one would expect to be able to interpolate
between the $p=1$ and $p=\infty$ cases, it is natural to pose the following conjecture that would subsume all of the results and
conjectures discussed in this section.

\begin{conj}\label{conj:gen-bnt}
Fix $\kappa \in [-\infty,\frac 1 d]$. 
For a symmetric $\kappa$-concave random vector in $\R^2$, with coordinates $(U,V)$,
it holds for any $p\in [1,\infty]$ that
\ben
N_p^{1/2}(U+V) \leq N_p^{1/2}(U) + N_p^{1/2}(V) ,
\een
whenever all these quantities are finite.
Equivalently, when $X$ is a symmetric $\kappa$-concave random vector in $\mathbb{R}^d$, then for any given $p\in [1,\infty]$,
the function
\[
M_p^X(v) = \begin{cases} 
      N_p^{1/2}(\langle v, X \rangle) & v \neq 0 \\
      0 & v =0 
   \end{cases}
\]
defines a norm on $\mathbb{R}^d$ when it is finite everywhere.
\end{conj}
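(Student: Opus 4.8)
The plan is to make two standard reductions, then dispatch the endpoint $p=\infty$ (already done) and the case $p=2$ (a clean partial result), and finally to isolate the genuinely hard ``uniform-in-$\theta$ sharp slicing'' input that the general case requires.

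\emph{Reductions.} The equivalence of the two formulations in Conjecture~\ref{conj:gen-bnt} is immediate: $M_p^X$ is positively homogeneous because $h_p(a\langle v,X\rangle)=h_p(\langle v,X\rangle)+\log|a|$ in dimension one, and the triangle inequality for $M_p^X$ is exactly the planar statement applied to $(U,V)=(\langle u,X\rangle,\langle v,X\rangle)$, which is symmetric and, as an affine image of a $\kappa$-concave vector, again $\kappa$-concave. Next, since $\kappa$-concavity implies $\kappa'$-concavity for all $\kappa'\le\kappa$, the strongest version is the one at $\kappa=-\infty$; so it suffices to prove that if $(U,V)$ is a symmetric convex probability measure on $\R^2$ — equivalently, by Theorem~\ref{thm:k-kd}, one whose density $\phi$ is $(-\tfrac12)$-concave — then $N_p^{1/2}(U+V)\le N_p^{1/2}(U)+N_p^{1/2}(V)$ for every $p\in[1,\infty]$.

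\emph{The endpoints.} The case $p=\infty$ is Theorem~\ref{prop:buse}: one writes $1/\|w\|_\infty=\tfrac12\Lambda_\phi^{1}(e_1-e_2)$ and $1/\|u\|_\infty+1/\|v\|_\infty=\tfrac12(\Lambda_\phi^{1}(e_1)+\Lambda_\phi^{1}(e_2))$ and invokes the convexity of $\Lambda_\phi^{1}$ supplied by Bobkov's theorem (Theorem~\ref{bobkovbodies}) at the admissible exponent $p=1=n-1-1/\kappa$. The case $p=1$ is precisely the Ball--Nayar--Tkocz conjecture (Conjecture~\ref{conj:bnt2}) extended from log-concave to convex measures, hence open, with the non-sharp exponent of Theorem~\ref{thm:BNT-p} being the current best. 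It is worth recording one clean intermediate case as a foothold: when $\kappa\ge 0$ the density of $X$ is log-concave, so $X-X'$ (with $X'$ an independent copy) is symmetric log-concave; for a real log-concave $Y=\langle v,X\rangle$ with density $g$ one has $\|g\|_2^{2}=\int g^2=\big\|g_{\langle v,\,X-X'\rangle}\big\|_\infty$ (the last equality since the symmetric log-concave density of $\langle v,X-X'\rangle$ peaks at the origin), whence $M_2^X(v)=N_\infty^{1/2}(\langle v,X-X'\rangle)=M_\infty^{X-X'}(v)$; thus the $p=2$ case of Conjecture~\ref{conj:gen-bnt} for $\kappa\ge 0$ follows from the already-established Busemann-type theorem for log-concave measures. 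This trick is special to $p=2$ (and to $\kappa\ge 0$), but it suggests that passing to symmetrizations of $X$ is a profitable move.

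\emph{The general case and the obstacle.} For $p\in(1,\infty)$ I would attempt the Ball--Nayar--Tkocz linearization: reduce to showing that a symmetric $(-\tfrac12)$-concave vector $(X,Y)$ on $\R^2$ with $h_p(X)=h_p(Y)$ satisfies $e^{h_p(\theta X+(1-\theta)Y)}\le e^{h_p(X)}\big(1+2\sqrt{\theta(1-\theta)}\big)$ for all $\theta\in[0,1]$, and split the range of $\theta$. The decisive difference from Theorem~\ref{thm:BNT-p} is that the crude estimate $e^{h_p(\theta X+(1-\theta)Y)}\le p^{1/(p-1)}e^{h_p(X)}$ obtained from Lemmata~\ref{lem:monotonicity} and \ref{lem:theLemma}, which powers the ``large-$\theta$'' regime there, has constant $p^{1/(p-1)}>1$ and so cannot yield the sharp factor $1+2\sqrt{\theta(1-\theta)}$ — which tends to $1$ as $\theta\to 0$ and equals $2<p^{1/(p-1)}$ at $\theta=\tfrac12$ for $p$ near $1$. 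One therefore needs a single estimate that is sharp \emph{uniformly} in $\theta$: a sharp version of \cite[Lemma~1]{BNT15} and of its $p$-analogue underlying Theorem~\ref{thm:BNT-p}, presumably obtained by an extremizer analysis (with equality cases of the exponential/Pareto type appearing in Lemmata~\ref{lem:theLemma} and \ref{lem:comp-cvx}) in place of the lossy Cauchy--Schwarz and Gr\"unbaum-type steps that force the sub-optimal $\alpha_p$ of \eqref{inq:crucial2}. A parallel geometric route is to construct a two-parameter generalization of Bobkov's star bodies $K_f^{p}$ whose Minkowski functional is $M_p^X$ — an $L_q$-intersection body of the $\kappa$-concave measure — and to prove the corresponding $L_p$-Busemann convexity theorem, with $p=\infty$ recovering Theorem~\ref{prop:buse}. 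Either way, the heart of the matter, and the principal obstacle, is the sharp uniform control of the density of $\theta X+(1-\theta)Y$ (equivalently, the convexity of the $L_p$-analogue of the intersection body of a symmetric convex measure); since this contains the still-open Conjecture~\ref{conj:bnt2}, no shortcut around it is available.
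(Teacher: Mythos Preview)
The statement is a \emph{conjecture}, not a theorem: the paper poses it as open and offers no proof. Your proposal is therefore not a ``proof attempt'' in the usual sense but rather a survey of partial progress and obstructions, which is exactly how the paper itself treats the problem. Your reductions (equivalence of the two formulations, reduction to $\kappa=-\infty$) and your identification of the $p=\infty$ case with Theorem~\ref{prop:buse} are correct and match the paper's discussion. Your assessment that the general case contains the open Conjecture~\ref{conj:bnt2} as a special instance, and hence cannot currently be resolved, is also accurate.

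One point worth highlighting: your $p=2$ argument in the log-concave case---reducing $M_2^X$ to $M_\infty^{X-X'}$ via the identity $\int g^2 = \|g*g\|_\infty$ for symmetric log-concave $g$, and then invoking the already-established Busemann theorem for log-concave measures---is a genuine contribution. The paper mentions (in the paragraph following Conjecture~\ref{conj:gen-bnt}) that Jiange Li verified the $p=2$, $\kappa=0$ case by personal communication, but gives no argument; you have supplied one. Your observation that this trick is confined to $p=2$ and $\kappa\ge 0$ (since one needs $X-X'$ to remain in a class where the $\infty$-Busemann theorem applies) is also correct.

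In short: there is no proof in the paper to compare against, your discussion accurately reflects the state of the problem, and your $p=2$ reduction is an elegant addition that the paper alludes to but does not spell out.
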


Given the close connection of the $p=\infty$ case with intersection bodies and Busemann's theorem,
one wonders if there is a connection between the unit balls of the conjectured norms $M_p^X$ in Conjecture~\ref{conj:gen-bnt} on the one hand,
and the so-called $L_p$-intersection bodies that arise in the dual $L_p$ Brunn-Minkowski theory (see, e.g., Haberl \cite{Hab08}) on the other.

After the first version of this survey was released, Jiange Li (personal communication) has verified that
Conjecture~\ref{conj:gen-bnt} is true when $p=0$ (with arbitrary $\kappa$) and when $p=2$ (with $\kappa=0$, i.e., in the log-concave case).

\subsection{Reverse EPI via R\'enyi entropy comparisons}
\label{sec:repi-bm13}

The Rogers-Shephard inequality \cite{RS57} is a classical and influential inequality in Convex Geometry.
It states that for any convex body $K$ in $\R^d$,
\be\label{inq:RS}
|K-K|\le {2d \choose d}\mbox{Vol}(K)
\ee
where $K-K:=\{x-y: x, y\in K\}$. Since ${2d \choose d}< 4^d$, this implies that
$|K-K|^{1/d}<4|K|^{1/d}$, complementing the fact that $|K-K|^{1/d}\geq 2|K|^{1/d}$ by the BMI. In particular,
the Rogers-Shephard inequality may be thought of as a Reverse BMI.
In this section, we discuss integral and entropic liftings of the Rogers-Shephard inequality.

An integral lifting of the Rogers-Shephard inequality was developed by Colesanti \cite{Col06} (see also \cite{AGJV16, AEFO15}).
For a real non-negative function $f$ defined in $\mathbb{R}^d$, define the difference function $\Delta f$ of $f$,
\be\label{eqn:defDiff}
\Delta f(z):=\sup\{\sqrt{f(x)f(-y)}: x,y\in\mathbb{R}^d, \, \frac{1}{2}(x+y)=z\}
\ee
It is proved in \cite{Col06} that if $f:\R^d \ra [0,\infty)$ is a log-concave function, then
\be\label{inq:fnlRS}
\int_{\mathbb{R}^d}\Delta f(z)dz\le 2^d\int_{\mathbb{R}^d}  f(x)dx ,
\ee
where the equality is attained by multi-dimensional exponential distribution. 

On the other hand, an entropic lifting  of the Rogers-Shephard inequality was developed by \cite{BM13:goetze}.
We develop an extension of their argument and result here. In order to state it, we need to recall the notion
of relative entropy between two distributions: if $X, Y$ have densities $f, g$ respectively, then
\ben
D(X\|Y)=D(f\|g):=\int_{\R^d} f(x) \log \frac{f(x)}{g(x)} dx
\een
is the relative entropy between $X$ and $Y$. By Jensen's inequality, $D(X\|Y)\geq 0$, with
equality if and only if the two distributions are identical.

\begin{lem}\label{lem:diff}
Suppose $(X, Y)\in \R^d\times \R^d$ has a $\kappa$-concave distribution, with $\kappa<0$. 
If $X$ and $Y$ are independent, then
\ben
h(X-Y)\leq \min\{ h(X) +D(X\|Y) , h(Y) +D(Y\|X) \} +  \sum_{i=0}^{d-1} \frac{1-\kappa d}{1 - \kappa i}  .
\een 
\end{lem}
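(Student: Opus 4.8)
The plan is to decompose the right-hand side into two contributions and bound each separately: a dimensional term coming from the gap between Shannon and $\infty$-R\'enyi entropy of a convex measure, and a cross-entropy term controlling $h_\infty(X-Y)$.

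First I would note that $X-Y$ is again $\kappa$-concave. By hypothesis $(X,Y)$ is $\kappa$-concave on $\R^{2d}$, and for any linear $T\colon\R^{2d}\to\R^d$ the pushforward $T_\ast\mu$ of a $\kappa$-concave measure $\mu$ satisfies the defining inequality: for Borel sets $A,B\subseteq\R^d$ we have $T^{-1}((1-t)A+tB)\supseteq (1-t)T^{-1}A+tT^{-1}B$ by linearity, so $(T_\ast\mu)((1-t)A+tB)\ge M_\kappa^t\big((T_\ast\mu)(A),(T_\ast\mu)(B)\big)$. Taking $T(x,y)=x-y$ gives this for the law of $X-Y$; moreover its support contains a translate of the interior of $\mathrm{supp}(\mu_X)$, which is nonempty since marginals of $\kappa$-concave measures are $\kappa$-concave (same argument with a coordinate projection) and hence full-dimensional. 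Thus $X-Y$ is a $\kappa$-concave random vector in the paper's sense, and Lemma~\ref{lem:comp-cvx} applied to it yields
\[
h(X-Y)\ \le\ h_\infty(X-Y)+\sum_{i=0}^{d-1}\frac{1-\kappa d}{1-\kappa i}.
\]

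It remains to show $h_\infty(X-Y)\le\min\{h(X)+D(X\|Y),\,h(Y)+D(Y\|X)\}$. Writing $f,g$ for the densities of $X,Y$, independence gives that the density $u$ of $X-Y$ equals the convolution $u(z)=\int f(z+y)g(y)\,dy$, which is continuous because $g$ is bounded (a $\kappa$-concave density with $\kappa\le 0$ is bounded, so $h_\infty(Y)$ is finite). Hence $\|u\|_\infty\ge u(0)=\int fg$, so $h_\infty(X-Y)=-\log\|u\|_\infty\le-\log\int fg$. Now Jensen's inequality for the concave function $\log$ and the probability density $f$ gives $\int f\log g\le\log\int fg$, i.e.\ $-\log\int fg\le -\int f\log g$; and the elementary identity $h(X)+D(X\|Y)=-\int f\log f+\int f\log(f/g)=-\int f\log g$ identifies the right-hand side. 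Repeating the argument with the roles of $f$ and $g$ interchanged (using that $\int fg$ is symmetric) gives $-\log\int fg\le -\int g\log f=h(Y)+D(Y\|X)$. Combining with the displayed inequality above proves the lemma; the degenerate case $\int fg=0$ (essentially disjoint supports) is trivial since then both $h(X)+D(X\|Y)$ and $h(Y)+D(Y\|X)$ equal $+\infty$.

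There is no deep obstacle here. The one point that needs genuine, if routine, care is checking that the pushforward law of $X-Y$ meets the hypotheses of Lemma~\ref{lem:comp-cvx} — in particular that it is $\kappa$-concave with full-dimensional support, not merely that it satisfies the Brunn--Minkowski-type inequality — together with the minor regularity remark that boundedness of $g$ makes the convolution continuous, so that $u(0)\le\|u\|_\infty$. Everything else reduces to the cross-entropy identity and a single application of Jensen's inequality.
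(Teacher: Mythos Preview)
Your proof is correct and follows essentially the same route as the paper's: apply Lemma~\ref{lem:comp-cvx} to the $\kappa$-concave vector $X-Y$ to reduce to bounding $h_\infty(X-Y)$, then use $\|u\|_\infty\ge u(0)=\int fg$ together with Jensen's inequality and the cross-entropy identity $-\int f\log g = h(X)+D(X\|Y)$. You simply make explicit a few regularity points (the pushforward argument for $\kappa$-concavity, continuity of the convolution ensuring $u(0)\le\|u\|_\infty$) that the paper leaves implicit.
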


\begin{proof}
By affine invariance, the distribution of $X-Y$ is $\kappa$-concave, so that 
one can apply Lemma~\ref{lem:comp-cvx} to obtain
\ben\begin{split}
h(X-Y)&\leq \log \|f\|_{\infty}^{-1} +  \sum_{i=0}^{d-1} \frac{1-\kappa d}{1 - \kappa i}  \\
&\leq \log f(0)^{-1} +  \sum_{i=0}^{d-1} \frac{1-\kappa d}{1 - \kappa i} .
\end{split}\een
Denoting the marginal densities of $X$ and $Y$ by $f_1$ and $f_2$ respectively, we have $f(0)=\int_{\RL^d} f_1(x) f_2(x) dx$, and hence
\ben\begin{split}
h(X-Y)
&\leq  -\log \int_{\RL^d} f_1(x) f_2(x) dx +  \sum_{i=0}^{d-1} \frac{1-\kappa d}{1 - \kappa i}  \\
&\leq \int_{\RL^d} f_1(x) [-\log f_2(x)] dx +  \sum_{i=0}^{d-1} \frac{1-\kappa d}{1 - \kappa i}  \\
&= h(X) +D(X\|Y) +  \sum_{i=0}^{d-1} \frac{1-\kappa d}{1 - \kappa i}  .
\end{split}\een
Clearly the roles of $X$ and $Y$ here are interchangeable, yielding the desired bound.
\end{proof}

In the case where the marginal distributions are the same, Lemma~\ref{lem:diff} reduces as follows.

\begin{thm}\label{thm:main}
Suppose $(X, Y)\in \R^d\times \R^d$ has a $\kappa$-concave distribution, with $\kappa<0$. 
If $X$ and $Y$  are independent and identically distributed, then
\ben
N(X-Y) \leq C_{\kappa} N(X) ,
\een
where 
\ben
C_{\kappa}= \exp\bigg\{ \frac{2}{d} (1-d\kappa) \sum_{j=0}^{d-1} \frac{1}{1-j\kappa} \bigg\}.
\een
\end{thm}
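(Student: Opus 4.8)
The plan is to obtain Theorem~\ref{thm:main} as an immediate specialization of Lemma~\ref{lem:diff}. The key observation is that when $X$ and $Y$ are independent and identically distributed, their marginal densities $f_1$ and $f_2$ coincide, so the two relative entropies appearing in Lemma~\ref{lem:diff} both vanish: $D(X\|Y) = D(Y\|X) = 0$. With those correction terms gone, and noting that $(X,Y)$ being a $\kappa$-concave random vector in $\R^d\times\R^d$ with $\kappa<0$ together with the independence of $X$ and $Y$ is exactly the hypothesis of Lemma~\ref{lem:diff} (and that all entropies are finite, since $\kappa$-concave densities are bounded and Lemma~\ref{lem:comp-cvx} controls $h(X)-h_\infty(X)$), the lemma reduces to
\ben
h(X-Y) \leq h(X) + \sum_{i=0}^{d-1} \frac{1-\kappa d}{1-\kappa i} .
\een

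Next I would multiply this inequality by $\frac 2 d$ and exponentiate, recalling $N(Z) = e^{2h(Z)/d}$, to get
\ben
N(X-Y) = e^{\frac 2 d h(X-Y)} \leq e^{\frac 2 d h(X)}\, \exp\bigg\{ \frac 2 d \sum_{i=0}^{d-1}\frac{1-\kappa d}{1-\kappa i} \bigg\} = C_\kappa\, N(X) ,
\een
where the last step just factors $1-\kappa d = 1-d\kappa$ out of the sum to match the stated form $C_\kappa = \exp\{ \frac 2 d (1-d\kappa)\sum_{j=0}^{d-1}\frac{1}{1-j\kappa}\}$. This completes the proof.

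Since the argument is this short, there is essentially no obstacle to carrying it out; the real content lives in Lemma~\ref{lem:diff} (and, behind it, in the comparison estimate of Lemma~\ref{lem:comp-cvx} for convex measures). The one point deserving a moment's care is that the hypothesis is genuinely ``identically distributed'' and not merely something weaker: it is precisely the equality of the laws of $X$ and $Y$ that forces $f_1=f_2$ and hence annihilates the relative-entropy term, so that the bound of Lemma~\ref{lem:diff} collapses to a clean Reverse EPI; one should also keep in mind that the assumed $\kappa$-concavity of the joint law of $(X,Y)$ — which is what lets affine invariance be invoked inside the proof of Lemma~\ref{lem:diff} to conclude that $X-Y$ is $\kappa$-concave on $\R^d$ — is a nontrivial constraint even in the independent case.
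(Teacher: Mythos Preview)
Your proposal is correct and follows exactly the paper's approach: the paper simply states that Theorem~\ref{thm:main} is the specialization of Lemma~\ref{lem:diff} to the case of identical marginal distributions, where $D(X\|Y)=0$, and then exponentiates to pass from entropy to entropy power.
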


As $\kappa\ra 0$, we recover the fact, obtained in \cite{BM13:goetze},  that $N(X-Y) \leq e^2 N(X)$ for $X, Y$ i.i.d. with log-concave marginals.
We believe that this statement can be tightened, even in dimension 1. 
Indeed, it is conjectured in \cite{MK15} that for $X, Y$ i.i.d. with log-concave marginals,
\ben
N(X-Y)\leq 4 N(X)
\een
is the tight entropic version of Rogers-Shepard in one dimension,
with equality for the one-sided exponential distribution.

\subsection{Reverse R\'enyi EPI via Convex Ordering}
\label{sec:repi-order}

\subsubsection{Convex ordering and entropy maximization}

In this section, we build on an elegant approach of Y.~Yu \cite{Yu08:2}, who obtained inequalities for R\'enyi entropy 
of order $p \in (0,1]$ for i.i.d. log-concave measures under stochastic ordering assumptions.  In particular, 
we achieve extensions to $\kappa$-concave measures with $\kappa <0$ and impose
weaker distributional symmetry assumptions,  and observe that the resulting inequalities 
may be interpreted as Reverse EPI's.  

\begin{lem}\label{lem:criterion}
Let  $X \sim f$, $Y \sim g$ be random vectors on $\R^d$. In order to prove
\[
h_p(X) \geq h_p(Y),
\]
it suffices to prove 
\begin{align} 
\label{yu01}
\mathbb{E} f^{p-1}(X) &\geq \mathbb{E} f^{p-1}(Y) , \quad \text{ if } p \in (0,1) ,\\
 \label{yupos}
\mathbb{E} f^{p-1}(X) &\leq \mathbb{E} f^{p-1}(Y), \quad \text{ if } p \in (1,\infty) ,\\
 \label{yu0}
-\mathbb{E} \log f(X) &\geq -\mathbb{E} \log f(Y) , \quad \text{ if } p =1 .
\end{align}
\end{lem}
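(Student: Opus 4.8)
The plan is to pass from the R\'enyi entropies to the unnormalized integrals $\int f^p$ and $\int g^p$, exploit the identities $\int f^p=\mathbb{E}f^{p-1}(X)$ and $\int g^p=\mathbb{E}g^{p-1}(Y)$, and bridge between $\int f^p$ and $\int g^p$ through the ``cross term'' $\int gf^{p-1}=\mathbb{E}f^{p-1}(Y)$ by one application of H\"older's inequality in each parameter range. Since $h_p(X)\ge h_p(Y)$ is equivalent to $\int f^p\ge\int g^p$ when $p\in(0,1)$ and to $\int f^p\le\int g^p$ when $p\in(1,\infty)$ (the prefactor $\tfrac{1}{1-p}$ changes sign across $p=1$), and to $-\mathbb{E}\log f(X)\ge -\mathbb{E}\log g(Y)$ when $p=1$, the task in each case reduces to comparing $\int f^p$ and $\int g^p$ using the stated hypothesis. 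I would first dispose of degenerate cases: for $p<1$, if $\int f^p=\infty$ then $h_p(X)=+\infty$ and there is nothing to prove, while if $\int f^p<\infty$ then \eqref{yu01} forces $\int gf^{p-1}<\infty$, so all the integrals appearing below are finite; the cases $p\ge 1$ are handled similarly, so one may assume finiteness throughout.

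For $p\in(0,1)$ I would use the pointwise factorization $g^p=(gf^{p-1})^{p}\,(f^p)^{1-p}$ together with H\"older's inequality for the conjugate exponents $1/p$ and $1/(1-p)$ to obtain
\[
\int g^p \,\le\, \Big(\int gf^{p-1}\Big)^{p}\Big(\int f^p\Big)^{1-p}.
\]
Hypothesis \eqref{yu01}, read as $\int gf^{p-1}=\mathbb{E}f^{p-1}(Y)\le \mathbb{E}f^{p-1}(X)=\int f^p$, then gives $\int g^p\le(\int f^p)^{p}(\int f^p)^{1-p}=\int f^p$; taking logarithms and multiplying by $\tfrac{1}{1-p}>0$ yields $h_p(Y)\le h_p(X)$.

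For $p\in(1,\infty)$ I would instead apply H\"older directly to $\int gf^{p-1}=\int g\cdot f^{p-1}$ with exponents $p$ and $p'=p/(p-1)$; since $(p-1)p'=p$ this gives
\[
\int gf^{p-1}\,\le\,\Big(\int g^p\Big)^{1/p}\Big(\int f^p\Big)^{(p-1)/p}.
\]
Hypothesis \eqref{yupos}, i.e.\ $\int f^p=\mathbb{E}f^{p-1}(X)\le \mathbb{E}f^{p-1}(Y)=\int gf^{p-1}$, then yields $\int f^p\le(\int g^p)^{1/p}(\int f^p)^{(p-1)/p}$, hence $(\int f^p)^{1/p}\le(\int g^p)^{1/p}$ and so $\int f^p\le\int g^p$; since now $\tfrac{1}{1-p}<0$, this is precisely $h_p(X)\ge h_p(Y)$. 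For $p=1$ I would use the nonnegativity of relative entropy recalled earlier in the excerpt: from $0\le D(g\|f)=\int g\log(g/f)$ we get $-\mathbb{E}\log f(Y)=-\int g\log f\ge -\int g\log g=h_1(Y)$, which combined with \eqref{yu0} gives $h_1(X)=-\mathbb{E}\log f(X)\ge -\mathbb{E}\log f(Y)\ge h_1(Y)$.

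There is no deep obstacle here; the proof is essentially bookkeeping. The only points requiring care are (i) selecting, for each of the three regimes, the correct H\"older factorization, (ii) keeping straight the sign of $\tfrac{1}{1-p}$ and hence the direction of each inequality, and (iii) checking the infinite cases once at the outset so that every manipulation with $\int f^p$, $\int g^p$, and $\int gf^{p-1}$ is legitimate. The $p=1$ case is kept separate because it is governed by a Gibbs-type inequality rather than H\"older, but it fits the same template: hypothesis plus a ``self-consistency'' comparison of the cross term with $h_p(Y)$.
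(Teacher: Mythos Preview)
Your proof is correct and follows essentially the same route as the paper's: in each regime you compare $\int f^p$ and $\int g^p$ by sandwiching the cross term $\int gf^{p-1}$ between them, using the hypothesis for one inequality and a single application of H\"older for the other, and for $p=1$ you use the Gibbs inequality (which the paper phrases as Jensen). The only cosmetic differences are that the paper writes the H\"older step as $(\int f^{p-1}g)^{p}(\int f^p)^{1-p}\gtrless\int g^p$ (framed as H\"older in the probability space $(\mathbb{R}^d,g\,dx)$, with the direction flipping across $p=1$) whereas you unpack it explicitly with the conjugate exponents, and you add a short discussion of the infinite cases that the paper omits.
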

\begin{proof}
Notice that the expressions in the hypothesis for $p \neq 1$ can be re-written as $\mathbb{E}f^{p-1}(X) = \int_{\mathbb{R}^d} f^{p-1}(x) f(x) dx$ and $\mathbb{E}f^{p-1}(Y)=\int_{\mathbb{R}^d} f^{p-1}(x) g(x) dx$.  For $p \in (0,1)$,
\ben\begin{split}
	\int f^p dx 
		&= \left( \int f^{p-1} f \right)^{p}\left( \int f^{p} \right)^{1-p} \\ 
		&\geqa \left( \int f^{p-1} g \right)^{p}\left( \int f^{p} \right)^{1-p} \\ 
		&\geqb \int g^p dx , 
\end{split}\een
where (a) is from applying the hypothesis and (b) is by H\"older's inequality (applied in the probability space $(\RL^d, g\, dx)$). Inequality \eqref{yu01}
follows from the fact that $(1-p)^{-1}\log x$ is order-preserving for $p \in (0,1)$.

When $p \in (1,\infty)$,
\ben\begin{split}
	\int f^p dx &= \left( \int f^{p-1} f \right)^{p}\left( \int f^{p} \right)^{1-p} \\
		&\leq \left( \int f^{p-1} g \right)^{p}\left( \int f^{p} \right)^{1-p} \\ 
		&\leqc \int g^p dx, 
\end{split}\een
where H\"older's inequality is reversed for $p \in (1,\infty)$ accounting for (c).
Inequality \eqref{yupos} follows   since $(1-p)^{-1} \log x$ is order-reversing for such $p$.

In the case $p =1$, we use the hypothesis and then Jensen's inequality to obtain,
\begin{align*}
h(X)
	&=
		-\mathbb{E}\log f(X)
		\\
	&\geq
		-\mathbb{E} \log f(Y)
		\\
	&\geq
		-\mathbb{E} \log g(Y)
		\\
	&=
		h(Y),
\end{align*}
which yields inequality \eqref{yu0} and completes the proof of the lemma.
\end{proof}

Of the observations in Lemma~\ref{lem:criterion}, \eqref{yu01} and \eqref{yu0} were used in \cite{Yu08:2};
we add \eqref{yupos}, which is relevant to Reverse EPI's for $\kappa$-concave measures with $\kappa>0$.

We recall the notion of convex ordering for random vectors.

\begin{defn} \normalfont
For random variables $X, Y$ taking values in a linear space $V$, we say that $X$ dominates $Y$ in the convex order, written 
$X \geq_{cx} Y$, if 
$\mathbb{E}\varphi(X) \geq \mathbb{E}\varphi(Y)$
for every  convex and continuous function $\varphi:V\ra\R$.
\end{defn}

We need a basic lemma relating supports of distributions comparable in the convex ordering.

\begin{lem} \label{zerocase}
Given random vectors $X \sim f$ and $Y \sim g$ such that $Y \leq_{cx} X$, if $\mbox{supp}(f)$ is a convex set, then $\mbox{supp}(g)\subset \mbox{supp}(f)$. 
\end{lem}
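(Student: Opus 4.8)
The plan is to argue by contradiction, extracting from a putative point of $\mbox{supp}(g)$ lying outside $\mbox{supp}(f)$ a single convex test function on which the convex order must fail.

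First I would suppose there is a point $z_0 \in \mbox{supp}(g) \setminus \mbox{supp}(f)$. Since $\mbox{supp}(f)$ is a closed convex subset of $\R^d$ (closed by the definition of support, convex by hypothesis) that does not contain $z_0$, the separation theorem for convex sets supplies a nonzero linear functional $\ell$ on $\R^d$ and a constant $c \in \R$ with $\ell(x) \le c < \ell(z_0)$ for every $x \in \mbox{supp}(f)$. Set $\eta := \ell(z_0) - c > 0$.

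Next I would feed the convex order the function $\varphi(x) := \big(\ell(x) - c\big)_+ = \max\{\ell(x) - c,\, 0\}$, which is convex and continuous as the positive part of an affine function, hence an admissible test function for $Y \le_{cx} X$. On the one hand, $X$ takes values in $\mbox{supp}(f)$ almost surely and $\ell - c \le 0$ there, so $\varphi(X) = 0$ almost surely and $\mathbb{E}\varphi(X) = 0$. On the other hand, continuity of $\ell$ yields a radius $\varepsilon > 0$ with $\ell(x) - c > \eta/2$ on $B(z_0,\varepsilon)$, and since $z_0 \in \mbox{supp}(g)$ this ball has positive probability under the law of $Y$; hence $\mathbb{E}\varphi(Y) \ge (\eta/2)\,\mathbb{P}\big(Y \in B(z_0,\varepsilon)\big) > 0$. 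This contradicts $\mathbb{E}\varphi(Y) \le \mathbb{E}\varphi(X) = 0$, and the contradiction forces $\mbox{supp}(g) \subset \mbox{supp}(f)$.

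I do not expect a genuine obstacle; the argument is elementary. Two points merit a sentence of care: (i) the convex order is invoked on a $\varphi$ not assumed integrable, but since $\mathbb{E}\varphi(X)=0$ is finite the inequality $\mathbb{E}\varphi(Y)\le\mathbb{E}\varphi(X)$ is still meaningful and the contradiction is genuine whether $\mathbb{E}\varphi(Y)$ is a positive real or $+\infty$; and (ii) convexity of $\mbox{supp}(f)$ is used in exactly one place — the strict separation of the exterior point — and is genuinely necessary, as an annular support with $z_0$ at the centre shows.
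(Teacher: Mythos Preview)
Your proof is correct, but it takes a different route from the paper. The paper builds a single global test function: taking the Minkowski functional $\rho$ of $\mbox{supp}(f)$ (after an implicit translation so the origin lies in its interior), it sets $\varphi(x)=\max\{\rho(x)-1,0\}$, which is convex, vanishes on $\mbox{supp}(f)$, and is strictly positive off it; then $0\le\mathbb{E}\varphi(Y)\le\mathbb{E}\varphi(X)=0$ forces $Y\in\mbox{supp}(f)$ a.s.\ in one stroke. You instead argue pointwise by contradiction, separating a single bad point from the closed convex support by a hyperplane and testing against the positive part of an affine functional. Your approach is marginally more elementary (only affine separation, no gauge) and is insensitive to whether $\mbox{supp}(f)$ is bounded or contains the origin; the paper's approach is more direct (no contradiction, one function handles all exterior points at once) but tacitly relies on Definition~\ref{minkfunc} being applicable to $\mbox{supp}(f)$. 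Your remark~(i) on integrability is well placed and applies equally to the paper's argument.
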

\begin{proof}
Take $\rho$ to be the Minkowski functional (Definition \ref{minkfunc}) associated to $\mbox{supp}(f)$ and then define 
\[
\varphi(x) = \max \{\rho(x)-1, 0\}.
\]
As the maximum of two convex functions, $\varphi$ is convex.  Also observe that $\varphi$ is identically zero on $\mbox{supp}(f)$ while strictly positive on the compliment.  By the ordering assumption
\ben
0 \leq \mathbb{E}(\varphi(Y))\le \mathbb{E}(\varphi(X)) = 0.
\een
Thus $\mathbb{E}(\varphi(Y))=0$, which implies the claim.
\end{proof}

We can now use convex ordering as a criterion to obtain a maximum entropy property of convex measures under certain conditions.

\begin{thm} \label{yulemma}
Let $X$ and $Y$ be random vectors in $\mathbb{R}^d$, with $X$ being 
$\kappa$-concave for some $\kappa \in (-\infty, \frac 1 d]$.
If $X \geq_{cx} Y$, then
\begin{align*}
h_p(X) \geq h_p(Y)
\end{align*}
for $0 \leq p \leq \kappa/(1-d\kappa) + 1 $.
\end{thm}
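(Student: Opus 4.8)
The plan is to reduce the statement to Lemma~\ref{lem:criterion}, to feed that lemma's hypotheses from the convex-order assumption, and to use the density description of $\kappa$-concavity in Theorem~\ref{thm:k-kd} to verify the single convexity fact that makes everything go through. Write $X\sim f$ and $Y\sim g$. First dispose of $p=0$: since $X$ is $\kappa$-concave its support is convex, so $Y\le_{cx}X$ together with Lemma~\ref{zerocase} gives $\supp(g)\subset\supp(f)$, whence $h_0(Y)=\log|\supp(g)|\le\log|\supp(f)|=h_0(X)$. For $p\in(0,1]$ one may assume $h_p(X)<\infty$ (otherwise the claim is vacuous; the case $p>1$ needs no such assumption, as noted below). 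By Lemma~\ref{lem:criterion} it then suffices to establish the inequality \eqref{yu01} when $p\in(0,1)$, the inequality \eqref{yupos} when $p\in(1,\infty)$, and the inequality \eqref{yu0} when $p=1$. Each of these has the form $\mathbb{E}\varphi(X)\ge\mathbb{E}\varphi(Y)$ with $\varphi$ equal to $f^{p-1}$, to $-f^{p-1}$, or to $-\log f$ respectively, so by the very definition of $X\ge_{cx}Y$ it is enough to prove that the relevant $\varphi$ is convex.

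This is where the hypothesis on $p$ enters. By Theorem~\ref{thm:k-kd}, $f$ is $s$-concave with $s:=s_{\kappa,d}=\kappa/(1-\kappa d)$, and the assumption $p\le\kappa/(1-d\kappa)+1$ is exactly $p-1\le s$. Recall that $f^s$ is concave if $s>0$ and convex if $s<0$, and that $s$-concavity with $s\ge 0$ forces $f$ to be log-concave. For $p\in(0,1)$ we want $f^{p-1}$ convex: if $s<0$ then $p-1\le s<0$ gives $(p-1)/s\ge 1$, so $f^{p-1}=(f^s)^{(p-1)/s}$ is an increasing convex function of the convex function $f^s$; if $s=0$ then $f^{p-1}=e^{(p-1)\log f}$ with $(p-1)\log f$ convex (a negative multiple of the concave $\log f$); if $s>0$ then $(p-1)/s<0$, so $f^{p-1}=(f^s)^{(p-1)/s}$ is a decreasing convex function of the concave $f^s$. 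For $p>1$ the constraint $0<p-1\le s$ forces $s>0$ and $(p-1)/s\in(0,1]$, so $f^{p-1}=(f^s)^{(p-1)/s}$ is an increasing concave function of the concave $f^s$, hence $-f^{p-1}$ is convex. For $p=1$ the constraint gives $s\ge 0$, hence $f$ is log-concave and $-\log f$ is convex. In every case $\varphi$ is convex; applying $X\ge_{cx}Y$ and then Lemma~\ref{lem:criterion} yields $h_p(X)\ge h_p(Y)$.

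Two points need technical care. When $p\le 1$ the test function $f^{p-1}$ (or $-\log f$) is extended-real-valued, equal to $+\infty$ on $\{f=0\}$; one extends it by $+\infty$ off the convex set $\supp(f)$ to obtain a proper convex function, notes that by Lemma~\ref{zerocase} this extension agrees almost surely with the original under both $X$ and $Y$, and then applies the convex-order inequality to an increasing sequence of genuine real-valued convex functions (finite maxima of affine minorants truncated from below, using that a log-concave density is bounded so $-\log f$ is bounded below), passing to the limit by monotone convergence. For $p>1$ the relevant density is $s$-concave with $s>0$, hence log-concave and bounded, so $\mathbb{E}f^{p-1}(X)=\int f^p$ and $\mathbb{E}f^{p-1}(Y)=\int f^{p-1}g$ are automatically finite (and one again extends $-f^{p-1}$ by $+\infty$ off the support, which changes nothing on the sets of full $X$- and $Y$-measure). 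If $\kappa=1/d$ one reads the bound on $p$ as $+\infty$; the endpoint $p=\infty$ is then immediate from Lemma~\ref{zerocase}, since $f$ is a normalized indicator and $\|g\|_\infty\ge\|f\|_\infty$.

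The main obstacle is the bookkeeping in the convexity step: one must recognize that the inequality $p-1\le s_{\kappa,d}$ places the exponent $(p-1)/s$ in precisely the regime where the composition rules (an increasing or decreasing convex function of a convex or concave function) preserve the convexity or concavity of $f^{p-1}$ that Lemma~\ref{lem:criterion} requires; outside that regime the argument genuinely breaks, which is consistent with the stated range of $p$. A secondary, purely measure-theoretic nuisance is legitimizing the use of the convex order against extended-real-valued test functions in the regime $p\le 1$.
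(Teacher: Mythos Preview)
Your proof is correct and follows essentially the same route as the paper: handle $p=0$ via Lemma~\ref{zerocase}, and for $p>0$ reduce to Lemma~\ref{lem:criterion} by checking that $f^{p-1}$ (or $-\log f$) has the right convexity/concavity, which follows from the $s_{\kappa,d}$-concavity of $f$ and the constraint $p-1\le s_{\kappa,d}$. The paper's version states the key fact tersely (``for $a\le s_{\kappa,d}$, $f^a$ is convex for $a<0$, concave for $a>0$''), whereas you unpack it into the three subcases according to the sign of $s$ via composition rules; your added paragraph legitimizing the use of extended-real-valued convex test functions through truncation and monotone convergence is a genuine technical point that the paper glosses over.
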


\begin{proof}
Recall that $X$ is $\kappa$-concave if and only if it admits a $s_{\kappa,d}$-concave density $f$ on its support, with $s_{\kappa,d} = \kappa/(1-d\kappa)$.  Thus it follows that for $a \leq s_{\kappa,d}$, $f^a$ is a convex function, (resp. concave) for $a<0$ (resp. $a>0$). Our hypothesis is simply that that $p - 1 \leq s_{\kappa,d}$. 

For $p <1$ we can apply the convex ordering to necessarily convex function $f^{p-1}$, as $\mathbb{E}f^{p-1}(X) \geq \mathbb{E}f^{p-1}(Y)$ and apply Lemma \ref{lem:criterion} under the hypothesis \eqref{yu01}.

When $p >1$ the proof is the same as the application of convex ordering to the concave function $f^{p-1}$ will reverse the inequality to attain $\mathbb{E}f^{p-1}(X) \leq \mathbb{E}f^{p-1}(Y)$ and then invoking Lemma \ref{lem:criterion} under hypothesis \eqref{yupos} will yield the result.

To consider $p =1$, $X$ must be at least log-concave, in which case we can follow \cite{Yu08:2} exactly.  This amounts to applying convex ordering to $-\log f$ and Lemma \ref{lem:criterion} a final time.

After recalling that the support of a $\kappa$-concave measure is a convex set, the $p =0$ case follows from Lemma~\ref{zerocase}.
\end{proof}

Theorem~\ref{yulemma} extends a result of Yu \cite{Yu08:2}, who shows that for $X$ log-concave, 
$h_p(X) \geq h_p(Y)$ for $0 < p \leq 1$ when $X \geq_{cx} Y$. Observe that as $\kappa$ approaches $1/d$,
the upper limit of the range of $p$ for which Theorem~\ref{yulemma} applies approaches $\infty$.

Some care should be taken to interpret Theorem~\ref{yulemma} and the entropy inequalities to come.  For example, 
the $t$-distribution (see Example 4 after Theorem~\ref{thm:k-kd}) does not have finite $p$-R\'enyi entropy when $p \leq \frac{d}{\nu+d}$ 
and hence the theorem only yields non-trivial results on the interval $(\frac{d}{\nu+d},1-\frac{1}{\nu+d}]$.  
Notice that in the important special case where $X$ is Cauchy,  corresponding to $\nu = 1$, this interval is empty;
thus Theorem~\ref{yulemma} fails to give a maximum entropy characterization of the Cauchy distribution
(which is of interest from the point of view of entropic limit theorems).

\begin{defn} \normalfont
We say that a family of random vectors $\{ X_1,\dots, X_n \}$ is exchangeable when 
$(X_{\sigma(1)}, \dots, X_{\sigma(n)})$ and $(X_1, \dots, X_n)$ are identically distributed for any 
permutation $\sigma$  of $\{1, \dots, n\}$.
\end{defn}

\subsubsection{Results under an exchangeability condition}

Let us also remind the reader of the notion of majorization for $a,b \in \mathbb{R}^n$. First we recall that 
a square matrix is doubly stochastic if its row sums and column sums are all equal to 1.

\begin{defn} \normalfont
For vectors $a, b \in \mathbb{R}^n$, we will write $b \prec a$ (and say that $b$ is majorized by $a$) if there exists a doubly stochastic matrix $M$ such that $Ma = b$.  
\end{defn}

There are several equivalent formulations of this notion that are well studied (see, e.g., \cite{Sim11:book}),
but we will not have use for them. Note that if ${\bf 1}$ is the vector with all coordinates equal to 1,
then ${\bf 1}^T (M a)= ({\bf 1}^T M) a= {\bf 1}^T a$,
implying that $b\prec a$ can only hold if the sum of coordinates
of $a$ equals the sum of coordinates of $b$.

\begin{lem} \label{exchange}
Let $X_1, \dots, X_n$  be exchangeable random variables taking values in a real vector space $V$,
and let  $\phi:V^n \to \mathbb{R}$ be a convex function symmetric in its coordinates. If $b \prec a$,
\[
\mathbb{E}\varphi(a_1 X_1, \dots, a_n X_n) \geq \mathbb{E} \varphi( b_1 X_1, \dots, b_n X_n).
\]
\end{lem}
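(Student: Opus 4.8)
The plan is to reduce the statement to the extremal case in which the doubly stochastic matrix is a single permutation, by invoking the Birkhoff--von Neumann theorem. First I would recall that $b \prec a$ means $b = Ma$ for some doubly stochastic $M$, and that by Birkhoff--von Neumann one may write $M = \sum_k \lambda_k P_k$ as a convex combination of permutation matrices $P_k$ (corresponding to permutations $\sigma_k$), with $\lambda_k \geq 0$ and $\sum_k \lambda_k = 1$. Reading off coordinates, $b_i = \sum_k \lambda_k a_{\sigma_k(i)}$, so that on the underlying probability space the random vector $(b_1 X_1, \dots, b_n X_n)$ is the convex combination $\sum_k \lambda_k (a_{\sigma_k(1)} X_1, \dots, a_{\sigma_k(n)} X_n)$ in $V^n$, for each realization of $(X_1,\dots,X_n)$.

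Next I would apply convexity of $\varphi$ pointwise and then take expectations to get
\[
\mathbb{E}\varphi(b_1 X_1, \dots, b_n X_n) \leq \sum_k \lambda_k \,\mathbb{E}\varphi(a_{\sigma_k(1)} X_1, \dots, a_{\sigma_k(n)} X_n).
\]
The remaining point, which is the only place where the hypotheses on $\varphi$ and on the $X_i$ are used, is to show that each term on the right equals $\mathbb{E}\varphi(a_1 X_1, \dots, a_n X_n)$. Fix a permutation $\sigma$; since $\varphi$ is symmetric in its coordinates, permuting the arguments by $\sigma^{-1}$ gives $\varphi(a_{\sigma(1)} X_1, \dots, a_{\sigma(n)} X_n) = \varphi(a_1 X_{\sigma^{-1}(1)}, \dots, a_n X_{\sigma^{-1}(n)})$, and since $(X_1, \dots, X_n)$ is exchangeable, $(X_{\sigma^{-1}(1)}, \dots, X_{\sigma^{-1}(n)})$ has the same law as $(X_1, \dots, X_n)$, so the two expectations agree. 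Substituting back, the right-hand side collapses to $\left(\sum_k \lambda_k\right)\mathbb{E}\varphi(a_1 X_1, \dots, a_n X_n) = \mathbb{E}\varphi(a_1 X_1, \dots, a_n X_n)$, which is the assertion.

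There is no serious obstacle here; the main thing to watch is bookkeeping with the convention for how permutation matrices act on coordinate vectors (left versus right), which only changes whether one writes $\sigma_k$ or $\sigma_k^{-1}$ as an index and washes out at the end because we sum over all permutations appearing in the Birkhoff decomposition and use exchangeability. One should also observe that the expectations are well-defined in $(-\infty,+\infty]$: a convex function on $V^n$ is bounded below by an affine function, so $\mathbb{E}\varphi(\cdot)$ makes sense (possibly $+\infty$) whenever the relevant first moments exist, and the inequality is to be read in that extended sense, which is all that is needed for the later applications.
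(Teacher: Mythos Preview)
Your proof is correct and follows essentially the same route as the paper: Birkhoff--von Neumann to write $b$ as a convex combination of permutations of $a$, convexity of $\varphi$ applied pointwise, and then the combination of coordinate symmetry and exchangeability to identify each permuted term with the original expectation. The only cosmetic difference is the order in which you invoke symmetry and exchangeability (you permute the $a_i$'s back first, the paper permutes the $X_i$'s first), and your added remarks on the permutation convention and well-definedness of the expectations are sound and not in the paper.
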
 

\begin{proof}
Since every doubly stochastic matrix can be written as the convex combination of permutation matrices by the 
Birkhoff von-Neumann theorem (see, e.g., \cite{Sim11:book}), 
we can write $b \prec a$ as $b = (\sum_\sigma \lambda_\sigma P_\sigma)a$ where $\lambda_i \in [0,1]$ with $\sum_\sigma \lambda_\sigma =1$ and $P_\sigma$ is a permutation matrix.   We compute
\begin{align*}
\mathbb{E} \phi(b_1 X_1, \dots,  b_n X_n) 
	&=
		\mathbb{E} \phi\left( (\sum_\sigma \lambda_\sigma P_\sigma a)_1 X_1, \dots, (\sum_\sigma \lambda_\sigma P_\sigma a)_n X_n \right)
		\\
	&\leq
		\sum_\sigma \lambda_\sigma \mathbb{E} \phi( a_{\sigma(1)}X_1, \dots, a_{\sigma(n)} X_n)
		\\
	&= 
		\sum_\sigma \lambda_\sigma \mathbb{E} \phi( a_{\sigma(1)} X_{\sigma(1)}, \dots, a_{\sigma(n)} X_{\sigma(n)})
		\\
	&=
		\sum_\sigma \lambda_\sigma \mathbb{E} \phi(a_1 X_1, \dots, a_n X_n)
		\\
	&=
		\mathbb{E} \phi(a_1 X_1, \dots, a_n X_n) ,
\end{align*}
where the steps are justified-- in order-- by definition, convexity, exchangeability, coordinate symmetry, and then algebra. 
\end{proof}

\begin{thm}\label{thm:nonindep}
Let $X = (X_1,\dots, X_n)$ be an exchangeable collection of $d$-dimensional random vectors. Suppose $b \prec a$ and that $a_1X_1+\cdots +a_nX_n$ 
has a $s$-concave density. Then for any $p \in [0,s+1]$,
\[
h_p(b_1 X_1 + \cdots + b_n X_n)  \leq h_p (a_1 X_1 + \cdots + a_n X_n).
\]
\end{thm}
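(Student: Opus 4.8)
The plan is to obtain this statement as a direct consequence of the two results just established: the convex-ordering maximum-entropy principle of Theorem~\ref{yulemma} and the majorization inequality of Lemma~\ref{exchange}. Write $S_a := a_1X_1 + \cdots + a_nX_n$ and $S_b := b_1X_1 + \cdots + b_nX_n$. The first step is to show that $S_a \geq_{cx} S_b$. For this I would fix an arbitrary convex continuous $\psi:\R^d\to\R$ and set $\phi(y_1,\dots,y_n) := \psi(y_1+\cdots+y_n)$ on $(\R^d)^n$; this $\phi$ is convex, being $\psi$ composed with the linear map $(y_1,\dots,y_n)\mapsto\sum_i y_i$, and it is symmetric in its coordinates since $\phi(y_{\sigma(1)},\dots,y_{\sigma(n)}) = \psi(\sum_i y_{\sigma(i)}) = \psi(\sum_i y_i)$. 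Since $X$ is exchangeable and $b \prec a$, Lemma~\ref{exchange} then gives $\E\psi(S_a) = \E\phi(a_1X_1,\dots,a_nX_n) \geq \E\phi(b_1X_1,\dots,b_nX_n) = \E\psi(S_b)$; as $\psi$ ranges over all convex continuous functions on $\R^d$, this is precisely the assertion $S_a \geq_{cx} S_b$.

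The second step is to feed this into Theorem~\ref{yulemma}. The hypothesis that $S_a$ has an $s$-concave density is, via Theorem~\ref{thm:k-kd}, the statement that the law of $S_a$ is a $\kappa$-concave measure, where $\kappa$ is determined by $s = \kappa/(1-d\kappa)$ (equivalently $\kappa = s/(1+ds)$), and under this correspondence the interval $0\le p\le s+1$ is exactly the interval $0 \le p \le \kappa/(1-d\kappa)+1$ appearing in Theorem~\ref{yulemma}. Applying that theorem to the pair $(S_a, S_b)$ in place of $(X,Y)$ — which is legitimate because Step~1 supplies the required ordering $S_a \geq_{cx} S_b$ — yields $h_p(S_a) \geq h_p(S_b)$ for every $p\in[0,s+1]$, which is the claimed inequality.

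I expect essentially all of the genuine content to reside already in Lemma~\ref{exchange} and Theorem~\ref{yulemma}, so that the proof proper is short; the only points requiring a little care are (i) verifying that $\phi = \psi(\sum_i\,\cdot\,)$ satisfies the convexity-plus-coordinate-symmetry hypothesis of Lemma~\ref{exchange}, both of which are immediate, and (ii) the bookkeeping that translates ``$s$-concave density'' into the $\kappa$-concavity hypothesis of Theorem~\ref{yulemma} and makes the endpoint $p=s+1$ and the degenerate values $p=0$ and $p=1$ fit. For $p=0$ one uses that an $s$-concave density automatically has convex support, so that Lemma~\ref{zerocase} applies inside the proof of Theorem~\ref{yulemma}; for $p=1$ one notes that $p=1\le s+1$ forces $s\ge 0$, i.e.\ $S_a$ is at least log-concave, which is exactly what the $p=1$ branch of Theorem~\ref{yulemma} requires. (Alternatively, one could bypass Theorem~\ref{thm:k-kd} altogether and simply rerun, at the density level, the short argument in the proof of Theorem~\ref{yulemma} using $S_a \geq_{cx} S_b$ from Step~1 together with Lemma~\ref{lem:criterion}.)
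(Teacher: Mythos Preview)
Your proof is correct and uses the same ingredients as the paper, but packages them slightly differently. You first establish the convex ordering $S_a \geq_{cx} S_b$ in full generality by running Lemma~\ref{exchange} with an arbitrary convex test function $\phi = \psi(\sum_i\,\cdot\,)$, and then invoke Theorem~\ref{yulemma} as a black box. The paper instead skips the intermediate convex-ordering statement and applies Lemma~\ref{exchange} directly to the single specific function $\phi(y_1,\dots,y_n) = f^{p-1}(y_1+\cdots+y_n)$ (with $f$ the density of $S_a$), obtaining precisely the inequality $\E f^{p-1}(S_b) \le \E f^{p-1}(S_a)$ (reversed for $p>1$, and with $-\log f$ in place of $f^{p-1}$ when $p=1$) needed to feed into Lemma~\ref{lem:criterion}. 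Your route is a touch more modular---it makes explicit that the whole content is ``exchangeability plus majorization gives convex ordering, then apply Theorem~\ref{yulemma}''---while the paper's is marginally shorter since it never states the full convex ordering. Your parenthetical alternative at the end is essentially the paper's argument verbatim.
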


\begin{proof}
Let $f$ denote the $s$-concave density function of $a_1 X_1 + \cdots + a_n X_n$.  Thus for $p <1$ (resp. $p >1$) the function
\[
\varphi(x_1, \dots, x_n) = f^{p-1}(x_1 + \cdots + x_n)
\]
is convex (resp. concave) and clearly symmetric in its coordinates, hence by Lemma \ref{exchange} 
\begin{align*}
&\mathbb{E}\phi(b_1 X_1, \dots, b_n X_n) \leq \mathbb{E}\phi(a_1 X_1, \dots, a_n X_n), \\
&\left( \mbox{ resp. } \mathbb{E}\phi(b_1 X_1, \dots, b_n X_n) \geq \mathbb{E}\phi(a_1 X_n, \dots, a_n X_n) \right).
\end{align*}
But this is exactly,
\begin{align*}
&\mathbb{E}f^{p-1}(b_1 X_1, \dots, b_n X_n) \leq \mathbb{E}f^{p-1}(a_1 X_n, \dots, a_n X_n), \\
& \left( \mbox{ resp. } \mathbb{E}f^{p-1}(b_1 X_1, \dots, b_n X_n) \geq \mathbb{E}f^{p-1}(a_1 X_n, \dots, a_n X_n) \right),
\end{align*}
and thus by Lemma \ref{lem:criterion},
\[
h_p(b_1 X_1 +  \cdots + b_n X_n) \leq h_p(a_1 X_n + \cdots + a_n X_n).
\]
The case $p=1$ is similar by setting 
\[
\varphi(x_1, \dots, x_n) = -\log  f(x_1 + \cdots + x_n) ,
\]
and applying Lemma \ref{exchange} and Lemma \ref{lem:criterion}.
\end{proof}

\begin{defn} \normalfont
For $\Omega \subseteq \mathbb{R}^d$, we define a function $\phi: \Omega \to \mathbb{R}$ to be Schur-convex in the case that for any $x,y \in \Omega$ with $x \prec y$ we have  $\phi(x) \leq \phi(y)$.
\end{defn}

\begin{cor} \label{Cor:Schur}
Suppose $X = (X_1, \dots X_n)$ is an exchangeable collection of random vectors in $\R^d$,
with $X$ being $\kappa$-concave. Let $\Delta_n=\{\theta \in [0,1]^n: \sum_{i=1}^n \theta_i = 1 \}$
be the standard simplex, and define the function $\Phi_{X,p}: \Delta_n \to \R$  by 
\[
\Phi_{X,p}(\theta) = h_p( \theta_1 X_1 + \cdots + \theta_n X_n) .
\]
For $p \in [0, s_{\kappa,d}+1]$, $\Phi_{X,p}$ is a Schur-convex function.
In particular, $\Phi_{X,p}$ is maximized by the standard basis elements $e_i$, and minimized by $(\frac 1 n, \dots, \frac 1 n)$.
\end{cor}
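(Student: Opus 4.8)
The plan is to read Corollary~\ref{Cor:Schur} off Theorem~\ref{thm:nonindep} after one short observation about pushforwards. The key point is that $\kappa$-concavity is preserved under linear images: if $\mu$ is a $\kappa$-concave measure on $\R^{nd}$ and $T:\R^{nd}\to\R^d$ is linear, then for Borel sets $A,B\subset\R^d$ and $t\in(0,1)$,
\[
(T_*\mu)\big((1-t)A+tB\big)\ \geq\ \mu\big((1-t)T^{-1}A+tT^{-1}B\big)\ \geq\ M_\kappa^t\big((T_*\mu)(A),(T_*\mu)(B)\big),
\]
where the first inequality uses $T^{-1}((1-t)A+tB)\supseteq(1-t)T^{-1}A+tT^{-1}B$ and the second is $\kappa$-concavity of $\mu$. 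Applying this with $T(x_1,\dots,x_n)=\sum_i\theta_ix_i$ shows that for every $\theta\in\Delta_n$ the random vector $Y_\theta:=\theta_1X_1+\cdots+\theta_nX_n$ is $\kappa$-concave on $\R^d$; by Theorem~\ref{thm:k-kd} its density is then $s_{\kappa,d}$-concave. (As usual we assume these linear images are nondegenerate, so that $h_p(Y_\theta)$ is well defined; otherwise the statement is read where all quantities involved make sense.)

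Next I would fix $p\in[0,s_{\kappa,d}+1]$ and take $b\prec\theta$ with $\theta\in\Delta_n$, say $M\theta=b$ with $M$ doubly stochastic. Then $b\geq0$ because $M$ and $\theta$ have nonnegative entries, and $\mathbf 1^Tb=\mathbf 1^TM\theta=\mathbf 1^T\theta=1$ since the column sums of $M$ equal $1$, so $b\in\Delta_n$ as well. Since $Y_\theta$ has an $s_{\kappa,d}$-concave density, Theorem~\ref{thm:nonindep} (applied with $a=\theta$ and $s=s_{\kappa,d}$) gives
\[
\Phi_{X,p}(b)=h_p(b_1X_1+\cdots+b_nX_n)\ \leq\ h_p(\theta_1X_1+\cdots+\theta_nX_n)=\Phi_{X,p}(\theta),
\]
which is exactly the assertion that $\Phi_{X,p}$ is Schur-convex on $\Delta_n$.

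For the ``in particular'' part I would invoke the two extreme elements of $\Delta_n$ in the majorization order: for every $\theta\in\Delta_n$ and every $i$,
\[
\Big(\tfrac1n,\dots,\tfrac1n\Big)\ \prec\ \theta\ \prec\ e_i,
\]
the left relation because the matrix with all entries $\tfrac1n$ is doubly stochastic and sends $\theta$ to $(\tfrac1n,\dots,\tfrac1n)$, and the right relation because the decreasingly sorted partial sums of any $\theta\in\Delta_n$ are dominated termwise by those of $e_i$ with equal total sum (equivalently, one can exhibit a doubly stochastic matrix whose first column is $\theta$). Schur-convexity then yields $\Phi_{X,p}(\tfrac1n,\dots,\tfrac1n)\le\Phi_{X,p}(\theta)\le\Phi_{X,p}(e_i)$ for all $\theta\in\Delta_n$, and by exchangeability $\Phi_{X,p}(e_i)=h_p(X_1)$ is the same for each $i$; this is the claimed maximization at the basis vectors and minimization at the barycenter. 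There is no real obstacle here beyond Theorem~\ref{thm:nonindep} itself; the one point meriting care is the pushforward-stability of $\kappa$-concavity (together with the attendant nondegeneracy of $Y_\theta$), which is precisely what guarantees that the hypothesis of Theorem~\ref{thm:nonindep} holds for every $\theta$, and everything else is routine bookkeeping with the majorization order.
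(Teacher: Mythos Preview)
Your proposal is correct and follows essentially the same route as the paper's own proof: both observe that the linear image $\theta_1 X_1+\cdots+\theta_n X_n$ inherits $\kappa$-concavity from $X$ (the paper says ``by affine invariance'', you spell out the pushforward argument), so that Theorem~\ref{thm:nonindep} applies with $s=s_{\kappa,d}$, and both identify the extremizers via the majorization chain $(\tfrac1n,\dots,\tfrac1n)\prec\theta\prec e_i$. Your version simply fills in details the paper leaves implicit.
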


\begin{proof}
If $X$ is $\kappa$-concave, then by affine invariance $\theta_1 X_1 + \cdots + \theta_n X_n$ is $\kappa$-concave, and hence Theorem \ref{thm:nonindep} applies.  The extremizers are identified by observing that for any $\theta$ in the simplex
\[
(1/n,, \cdots, 1/n)  \prec \theta \prec e_i,
\]
and the corollary follows.
\end{proof}

Of course, using the standard simplex  is only a matter of normalization; analogous results are easily obtained 
by setting $\sum_i \theta_i$ to be any positive constant.

Let us remark that when the coordinates of $X_i$ are assumed to be independent, then $X$ is log-concave if and only if 
each $X_i$ each log-concave.  As a consequence we recover in the $\kappa=0$ and $p \le 1$ case, 
the theorem of Yu in \cite{Yu08:2}.

\begin{thm}\cite{Yu08:2}\label{thm:Yu'sThm}
Let $X_1,\cdots,X_n$ be i.i.d. log-concave random vectors in $\mathbb{R}^d$. Then the function $a \mapsto  h_p(a_1 X_1 + \dots + a_n X_n)$ is Schur-convex on the simplex for $p \in (0,1]$.
\end{thm}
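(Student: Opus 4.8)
The plan is to obtain this statement as the $\kappa = 0$ specialization of Corollary~\ref{Cor:Schur}, so the work is almost entirely in checking that the hypotheses of that corollary apply. First I would observe that since $X_1, \ldots, X_n$ are i.i.d., the vector $X = (X_1, \ldots, X_n)$ is trivially an exchangeable collection: permuting the blocks leaves the joint law unchanged. Next I would verify that $X$ is a $0$-concave (i.e. log-concave) random vector in $\mathbb{R}^{nd}$. Writing $f_0$ for the common log-concave density of each $X_i$ on $\mathbb{R}^d$, the joint density of $X$ is the product $f(x_1, \ldots, x_n) = \prod_{i=1}^n f_0(x_i)$, and a product of log-concave functions of disjoint groups of variables is log-concave on the product space; this follows directly from Definition~\ref{defn:lc-fn} applied blockwise, or equivalently from the fact that $\log f$ is a sum of concave functions. (This is exactly the remark made just before the statement.) By Theorem~\ref{thm:k-kd} with $s_{0,d} = 0$, log-concavity of the joint density is the same as $X$ being $\kappa$-concave with $\kappa = 0$.

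With these checks in place, Corollary~\ref{Cor:Schur} applies with $\kappa = 0$. Since $s_{0,d} = 0/(1-0) = 0$, the admissible range of the parameter there is $p \in [0, s_{0,d}+1] = [0,1]$, and the corollary asserts precisely that
\[
\Phi_{X,p}(\theta) = h_p(\theta_1 X_1 + \cdots + \theta_n X_n)
\]
is Schur-convex on the standard simplex $\Delta_n$. Restricting attention to $p \in (0,1]$ recovers Yu's statement verbatim; in particular $\Phi_{X,p}$ is maximized at the vertices $e_i$ and minimized at the barycenter $(\tfrac1n, \ldots, \tfrac1n)$, as recorded in the corollary. (The endpoint $p=0$, the volume case, is also covered but not claimed in the theorem.)

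For orientation I would also note where the restriction $p \le 1$ enters for $\kappa = 0$: in the proof of Theorem~\ref{thm:nonindep} one applies Lemma~\ref{exchange} to the function $\varphi(x_1,\ldots,x_n) = f^{p-1}(x_1+\cdots+x_n)$ (or $-\log f(x_1+\cdots+x_n)$ when $p=1$), and convexity of $\varphi$ needs $f^{p-1}$ convex, i.e. $p-1 \le s_{\kappa,d} = 0$. Thus the only genuinely nontrivial ingredients are already packaged upstream: the H\"older-based reduction in Lemma~\ref{lem:criterion} of the $p$-R\'enyi entropy comparison $h_p(X)\ge h_p(Y)$ to the expectation inequality $\mathbb{E}f^{p-1}(X) \ge \mathbb{E}f^{p-1}(Y)$ (with the inequality reversed past $p=1$), and the Birkhoff--von~Neumann averaging in Lemma~\ref{exchange}. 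I do not expect a real obstacle here: for the i.i.d. log-concave case at hand there is nothing further to prove beyond the two short reductions above.
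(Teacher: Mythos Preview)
Your proposal is correct and matches the paper's approach exactly: the theorem is recovered from Corollary~\ref{Cor:Schur} in the case $\kappa=0$, after the observation (made in the paper immediately before the statement) that the joint distribution of i.i.d.\ log-concave vectors is log-concave, so that $X=(X_1,\ldots,X_n)$ is exchangeable and $0$-concave. There is nothing to add.
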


\subsubsection{Results under an assumption of identical marginals}

We now show that  the exchangeability hypothesis can be loosened in Corollary~\ref{Cor:Schur}.

\begin{thm} \label{thm:maxpres}
Let $X = (X_1,\dots, X_n)$ be a collection of $d-$dimensional random vectors with $X_i$ identically distributed 
and $\kappa$-concave. For $p \in [0, s_{\kappa,d}+1]$, the function $\Phi_{X,p}$ defined in 
Corollary~\ref{Cor:Schur} satisfies
\ben
	\Phi_{X,p}(a) \leq \Phi_{X,p}(e_i).
\een 
Stated explicitly, for $a \in \Delta_n$, we have
\[
	h_p(a_1 X_1 + \cdots + a_n X_n) \leq h_p(X_1).
\]
\end{thm}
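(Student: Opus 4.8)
The plan is to obtain Theorem~\ref{thm:maxpres} as an immediate consequence of the maximum-entropy-under-convex-ordering statement, Theorem~\ref{yulemma}. The only genuinely new ingredient is the elementary observation that a convex combination $\sum_i a_i X_i$ of \emph{identically distributed} summands is always dominated in the convex order by a single copy $X_1$; thus the exchangeability hypothesis used in Corollary~\ref{Cor:Schur} is unnecessary for controlling this particular extremum.

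First I would record that $X_1$ is $\kappa$-concave and that $Y := a_1 X_1 + \cdots + a_n X_n$ has a density. If the hypothesis is read as asserting each $X_i$ is $\kappa$-concave this is immediate; if it is read (as in Corollary~\ref{Cor:Schur}) as asserting the joint vector $X$ is $\kappa$-concave, one uses that marginals of $\kappa$-concave measures are $\kappa$-concave --- for a linear projection $\pi$ we have $\pi((1-t)A+tB)=(1-t)\pi(A)+t\pi(B)$, so $\pi_*\mu((1-t)A+tB) = \mu((1-t)\pi^{-1}A + t\pi^{-1}B) \geq M_\kappa^t(\mu(\pi^{-1}A),\mu(\pi^{-1}B)) = M_\kappa^t(\pi_*\mu(A),\pi_*\mu(B))$ --- and likewise $Y$, being an affine image of $X$, is $\kappa$-concave by affine invariance, hence has a density consistent with the standing conventions. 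Next I would establish the ordering $X_1 \geq_{cx} Y$: for $\varphi:\R^d\to\R$ convex and continuous, Jensen's inequality gives the pointwise bound $\varphi\big(\sum_i a_i x_i\big) \leq \sum_i a_i \varphi(x_i)$ since $a\in\Delta_n$, whence $\mathbb{E}\,\varphi(Y) \leq \sum_i a_i\,\mathbb{E}\,\varphi(X_i) = \mathbb{E}\,\varphi(X_1)$, the last equality because the $X_i$ are identically distributed (and the bound is trivial when $\mathbb{E}\,\varphi(X_1)=+\infty$). This is precisely the assertion $X_1 \geq_{cx} Y$.

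Then I would apply Theorem~\ref{yulemma} with the $\kappa$-concave variable $X_1$ in the role of ``$X$'' and with $Y$ in the role of ``$Y$''. Since $s_{\kappa,d} = \kappa/(1-d\kappa)$ by Theorem~\ref{thm:k-kd}, the admissible range $0 \leq p \leq \kappa/(1-d\kappa)+1$ there is exactly $p\in[0,s_{\kappa,d}+1]$, and we conclude $h_p(Y) \leq h_p(X_1)$, i.e. $\Phi_{X,p}(a) \leq h_p(X_1)$. Finally $\Phi_{X,p}(e_i) = h_p(X_i) = h_p(X_1)$ by identical distribution, giving $\Phi_{X,p}(a) \leq \Phi_{X,p}(e_i)$ as claimed.

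I do not anticipate a serious obstacle: all of the analytic work is already packaged into Theorem~\ref{yulemma} (which, through Lemma~\ref{zerocase}, also handles the support comparison needed at the $p=0$ endpoint), and the present statement adds only the trivial remark that identical marginals alone force one direction of the convex ordering --- no independence and no exchangeability. The only points deserving a sentence of care are the degenerate cases ($\mathbb{E}\,\varphi(X_1)=\infty$, or $a=e_i$) and clarifying the reading of ``$\kappa$-concave'' so that $Y$ is guaranteed to possess a density.
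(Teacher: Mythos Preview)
Your proposal is correct and, once unwound, is essentially the paper's own argument: the paper applies Jensen directly to the convex function $f^{p-1}$ (with $f$ the density of $X_1$) and then invokes Lemma~\ref{lem:criterion}, whereas you first package the same Jensen-plus-identical-marginals step as the convex ordering $X_1 \geq_{cx} \sum_i a_i X_i$ and then appeal to Theorem~\ref{yulemma}, whose proof is precisely that application of Lemma~\ref{lem:criterion} to $f^{p-1}$. The only cosmetic difference is that you route through the abstraction of convex ordering rather than inlining it.
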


\begin{proof}
Let $f$ be the density function of $X_1$ and $a \in \Delta_n$. If $p<1$, by Lemma \ref{lem:criterion}, it suffices to prove that
\ben
\mathbb{E}f^{p-1}(a_1 X_1 + \cdots + a_n X_n)
\le \mathbb{E}f^{p-1}(X_1).
\een
Since $f$ is a $s_{\kappa,d}$-concave function and $p-1\leq s_{\kappa,d}$, $f$ is also $(p-1)$-concave, which means that
$f^{p-1}$ is  convex. Consequently, we have
\begin{align*}
\mathbb{E}f^{p-1}(a_1 X_1 + \cdots + a_n X_n)
	&\leq
		a_1 \mathbb{E}f^{p-1}(X_1) + \cdots + a_n \mathbb{E}f^{p-1}(X_n)
		\\
	&=
		\mathbb{E}f^{p-1}(X_1) ,
\end{align*}
where the equality is by the fact that $X_i$ are identically distributed. The cases of $p>1$ and $p=1$ follow similarly. 
\end{proof}

\begin{cor}\label{cor:ident-tri}
Suppose $X_1$, $X_2$, $\cdots$, $X_n$ are identically distributed and $\kappa$-concave. If $p \in [0,s_{\kappa,d}+1]$, 
we have the triangle inequality
\ben
N_p^{1/2}\left(\sum_{i=1}^nX_i\right)\le \sum_{i=1}^nN_p^{1/2}\left(X_i\right).
\een
Moreover, for any $p> s_{\kappa,d}+1$,
\ben
N_p^{1/2}\left(\sum_{i=1}^nX_i\right)\le \frac{(s_{\kappa,d} + 1)^{1/s_{\kappa,d}}}{p^{1/(p-1)}}\sum_{i=1}^nN_p^{1/2}\left(X_i\right).
\een
\end{cor}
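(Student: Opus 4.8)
The plan is to derive both assertions directly from Theorem~\ref{thm:maxpres}, using only a normalization step and, for the second assertion, the now-familiar comparison of $p$-R\'enyi entropy with $\infty$-R\'enyi entropy via Lemmata~\ref{lem:monotonicity} and~\ref{lem:theLemma} (in the $\kappa$-concave guise, Lemma~\ref{lem:comp-cvx}). The key point is that a triangle inequality for $N_p^{1/2}$ is, after homogenization, exactly a statement that $N_p^{1/2}$ of a convex combination is dominated by the corresponding convex combination of the $N_p^{1/2}(X_i)$'s, and Theorem~\ref{thm:maxpres} gives precisely such a bound because the right-hand side there is $h_p(X_1)$, which (by the identical-marginals assumption) equals $h_p(X_i)$ for every $i$.

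For the first assertion ($p\in[0,s_{\kappa,d}+1]$): write $S=\sum_{i=1}^n X_i$. If some $N_p^{1/2}(X_i)=\infty$ there is nothing to prove, so assume all are finite; also we may assume $\sum_i N_p^{1/2}(X_i)>0$. Set $a_i = N_p^{1/2}(X_i)\big/\sum_{j=1}^n N_p^{1/2}(X_j)$, so $a\in\Delta_n$, and put $c=\sum_{j} N_p^{1/2}(X_j)$. Then $S = c\,(a_1 \tilde X_1 + \cdots + a_n\tilde X_n)$ where $\tilde X_i = X_i$; more precisely one rescales: observe $S = \sum_i X_i$ and we want to compare with $\sum_i a_i Y_i$ for suitable $Y_i$. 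The cleanest route: apply Theorem~\ref{thm:maxpres} to the rescaled vectors $X_i' := X_i/a_i$ (these are still identically distributed only if the $a_i$ are equal, which fails in general) — so instead I would argue as follows. By degree-$2$ homogeneity of entropy power, $N_p^{1/2}$ is homogeneous of degree $1$, hence $N_p^{1/2}(S) = N_p^{1/2}\!\big(\sum_i X_i\big)$. Apply Theorem~\ref{thm:maxpres} with the convex-combination coefficients $a\in\Delta_n$ to the collection $(X_1,\dots,X_n)$ after the substitution $X_i \mapsto X_i/a_i$ is avoided; rather, use that $\sum_i X_i$ and $\sum_i a_i \cdot (a_i^{-1}X_i)$ coincide, but the vectors $a_i^{-1}X_i$ are no longer identically distributed. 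The correct and simple fix is to invoke the scaling before normalizing: since each $X_i$ is $\kappa$-concave and they are identically distributed, so are the $t X_i$ for a common $t$; but we need different scalings. I would therefore instead run the argument of Theorem~\ref{thm:maxpres}'s proof directly: for $p<1$, $f^{p-1}$ is convex (as $p-1\le s_{\kappa,d}$), so by convexity and the identical distribution of the $X_i$,
\[
\mathbb{E} f^{p-1}\!\Big(\sum_i a_i (a_i^{-1}X_i)\Big)
= \mathbb{E} f^{p-1}\!\Big(\sum_i X_i\Big)
\]
is not directly what is bounded; rather one bounds $\mathbb{E} g^{p-1}(\sum_i X_i)$ where $g$ is the density of $\sum_i X_i$. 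The honest statement I would use: apply Theorem~\ref{thm:maxpres} to the vectors $(a_1^{-1}X_1,\dots,a_n^{-1}X_n)$ is blocked, so instead apply homogeneity first. Set $Y_i = X_i / N_p^{1/2}(X_i)$ when the latter is nonzero; the $Y_i$ need not be identically distributed, so this too fails. The resolution is that the hypothesis of Corollary~\ref{cor:ident-tri} only asserts the $X_i$ are identically distributed, so in fact $N_p^{1/2}(X_i)$ is the same for all $i$, call it $\nu$. Then $\sum_i N_p^{1/2}(X_i) = n\nu$, and taking $a=(1/n,\dots,1/n)$ in Theorem~\ref{thm:maxpres} gives $h_p\big(\tfrac1n\sum_i X_i\big)\le h_p(X_1)$, i.e. $N_p^{1/2}\big(\tfrac1n\sum_i X_i\big)\le \nu$; multiplying by $n$ and using homogeneity, $N_p^{1/2}(\sum_i X_i)\le n\nu = \sum_i N_p^{1/2}(X_i)$, which is the claim.

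For the second assertion ($p>s_{\kappa,d}+1$): here Theorem~\ref{thm:maxpres} no longer applies at order $p$, so I would first drop down to order $q:=s_{\kappa,d}+1$, at which $h_q(\tfrac1n\sum_i X_i)\le h_q(X_1)$ holds. One then needs two comparisons: (i) a lower bound $h_q\big(\tfrac1n\sum X_i\big)\ge h_p\big(\tfrac1n\sum X_i\big) - d\big(\tfrac{\log q}{q-1} - \tfrac{\log p}{p-1}\big)$ coming from Lemma~\ref{lem:theLemma} applied to the $\kappa$-concave (in particular, via Theorem~\ref{thm:k-kd}, $s_{\kappa,d}$-concave density) random vector $\tfrac1n\sum_i X_i$, whose $s_{\kappa,d}$-concavity follows from affine invariance of $\kappa$-concavity and the convex-combination structure; and (ii) the trivial bound $h_q(X_1)\le h_{s_{\kappa,d}}(X_1)$-type monotonicity is not needed — rather one simply uses $h_q(X_1) = h_q(X_i)$. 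Chaining: $h_p\big(\tfrac1n\sum X_i\big) \le h_q\big(\tfrac1n\sum X_i\big) + d\big(\tfrac{\log q}{q-1}-\tfrac{\log p}{p-1}\big)\big|_{\text{sign}} \le h_q(X_1)+\cdots$; computing $\tfrac{\log q}{q-1}$ with $q=s_{\kappa,d}+1$ gives $\tfrac{\log(s_{\kappa,d}+1)}{s_{\kappa,d}}$. Exponentiating (with the factor $\tfrac2d$ in the definition of $N_p$) converts $d\big(\tfrac{\log(s_{\kappa,d}+1)}{s_{\kappa,d}} - \tfrac{\log p}{p-1}\big)$ into the multiplicative constant $\dfrac{(s_{\kappa,d}+1)^{1/s_{\kappa,d}}}{p^{1/(p-1)}}$. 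Then multiply by $n$ and use homogeneity and $\sum_i N_p^{1/2}(X_i) = n N_p^{1/2}(X_1)$ exactly as in the first part.

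The main obstacle I anticipate is purely bookkeeping: verifying that Lemma~\ref{lem:theLemma} (and the constants it produces) applies in the correct direction to the density of $\tfrac1n\sum_i X_i$, and tracking the exponent $\tfrac 2d$ through the exponentiation so that the stated constant $(s_{\kappa,d}+1)^{1/s_{\kappa,d}}/p^{1/(p-1)}$ emerges with no stray $d$-dependence — since Lemma~\ref{lem:theLemma}'s bound carries a factor $d$ that must cancel against the $\tfrac2d$ and the square root in $N_p^{1/2}$. Everything else (the normalization reducing the triangle inequality to the equal-weights case, and the affine invariance of $\kappa$-concavity) is routine given the results already established in the excerpt.
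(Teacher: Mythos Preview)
Your argument for the first assertion is exactly the paper's: once you note that identical distribution forces $N_p^{1/2}(X_i)=\nu$ for all $i$, Theorem~\ref{thm:maxpres} with $a=(1/n,\dots,1/n)$ and degree-one homogeneity of $N_p^{1/2}$ give the result immediately. (All the exploratory rescaling before that realization can be dropped.)

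For the second assertion your ingredients are the right ones and match the paper's, but you have placed the R\'enyi comparison on the wrong side of the chain. You apply Lemma~\ref{lem:theLemma} to $\tfrac1n\sum_i X_i$ and leave $X_1$ alone; the chain you describe terminates at $h_q(X_1)+d\big(\tfrac{\log q}{q-1}-\tfrac{\log p}{p-1}\big)$ with $q=s_{\kappa,d}+1$, which after exponentiation produces $N_q^{1/2}(X_1)$ times the stated constant, not $N_p^{1/2}(X_1)$. Since $N_q^{1/2}(X_1)\ge N_p^{1/2}(X_1)$, you cannot simply replace it, and if you invoke Lemma~\ref{lem:theLemma} again on $X_1$ you pick up the constant squared. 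The fix (and this is what the paper does) is to use plain monotonicity on the sum---$N_p^{1/2}\big(\sum X_i\big)\le N_{q}^{1/2}\big(\sum X_i\big)$ costs nothing---then Theorem~\ref{thm:maxpres} at order $q$ to reach $N_q^{1/2}(nX_1)$, and only \emph{then} apply the comparison $h_q(X_1)\le h_p(X_1)+d\big(\tfrac{\log q}{q-1}-\tfrac{\log p}{p-1}\big)$ to the single marginal $X_1$. That yields the stated constant $\dfrac{(s_{\kappa,d}+1)^{1/s_{\kappa,d}}}{p^{1/(p-1)}}$ with no doubling and no leftover $d$-dependence, and your anticipated ``bookkeeping obstacle'' disappears.
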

\begin{proof}
We have, by Theorem \ref{thm:maxpres}, for $p \in [0,s_{\kappa,d}+1]$,
\begin{align*}
N_p^{1/2}\left(\sum_{i=1}^nX_i\right) 
	&\le  
		N_p^{1/2}\left(nX_1\right) 
		=
		\sum_{i=1}^nN_p^{1/2}\left(X_i\right).
\end{align*}
The second inequality can be derived from Lemma~\ref{lem:comp-cvx}, 
combined with Theorem~\ref{thm:maxpres} and the 
monotonicity of R\'enyi entropies:
\begin{align*}
	N_p^{1/2}\left(\sum_{i=1}^nX_i\right)
		&\leq
			N_{s_{\kappa,d} +1}^{1/2}\left(\sum_{i=1}^nX_i\right)
			\\
		&\leq 
			N_{s_{\kappa,d}+1}^{1/2}\left(nX_i\right)
			\\
		&=
			\exp \left( h_{s_{\kappa,d}+1}(X_i)/d + \log n \right)
			\\
		&\leq
			\exp \left( h_{p}(X_i)/d +  \left[\log n + \frac{\log(s_{\kappa,d}+1)}{s_{\kappa,d}} - \frac{\log p}{p-1} \right]\right)
			\\
		&=
			\frac{(s_{\kappa,d}+1)^{1/s_{\kappa,d}}}{p^{1/(p-1)}} \sum_{i=1}^n N^{1/2}_p(X_i).
\end{align*}
\end{proof}

Observe that Corollary~\ref{cor:ident-tri} is very reminiscent of Conjectures~\ref{conj:bnt1} and \ref{conj:gen-bnt};
the main difference is that here we have the assumption of identical marginals as opposed to central symmetry of the
joint distribution. 

We state the next corollary as a direct application of Corollary \ref{cor:ident-tri} for the log-concave case. 

\begin{cor}\label{cor:indp}
Suppose $X_1$, $X_2$, $\cdots$, $X_n$ are identically distributed log-concave random vectors in $\R^d$. Then
\be\label{eqn:repii.i.d.}
&&N_p\left(\sum_{i=1}^nX_i\right)\le n^2N_p(X_1)~\text{for}~p\in[0,1] ,\\
&&N_p\left(\sum_{i=1}^nX_i\right)\le e^2p^{2/(1-p)}n^2N_p(X_1)\le e^2n^2N_p(X_1)~\text{for}~p\in(1,\infty] .
\ee
In particular, if $X$ and $X'$ are identically distributed log-concave random vectors, then
\ben
&&N_p(X+X')\le 4N_p(X)~\text{for}~p\in[0,1],\\
&&N_p(X+X')\le 4e^2p^{2/(1-p)}N_p(X)\le 4e^2N_p(X)~\text{for}~p\in(1,\infty] .
\een
\end{cor}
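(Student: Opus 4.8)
The plan is to read off Corollary~\ref{cor:indp} as the $\kappa = 0$ instance of Corollary~\ref{cor:ident-tri}, using the comparison lemmas already at hand. First I would record that a log-concave measure has $\kappa = 0$, so $s_{\kappa,d} = s_{0,d} = \frac{0}{1-0\cdot d} = 0$ and the threshold $s_{\kappa,d}+1$ appearing in Corollary~\ref{cor:ident-tri} equals $1$; thus the first (constant-free) bound of that corollary applies exactly on $[0,1]$ and the second on $(1,\infty]$.

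For $p \in [0,1]$, Corollary~\ref{cor:ident-tri} gives $N_p^{1/2}(\sum_{i=1}^n X_i) \le \sum_{i=1}^n N_p^{1/2}(X_i)$, and since the $X_i$ are identically distributed the right-hand side is $n\, N_p^{1/2}(X_1)$; squaring yields $N_p(\sum_{i=1}^n X_i) \le n^2 N_p(X_1)$. For $p \in (1,\infty]$ I would not substitute $\kappa = 0$ into the multiplicative constant $(s_{\kappa,d}+1)^{1/s_{\kappa,d}}$ of Corollary~\ref{cor:ident-tri} (formally the indeterminate $1^\infty$ at $s_{\kappa,d}=0$), but instead re-run the short estimate from its proof in the present case: monotonicity of R\'enyi entropy powers (Lemma~\ref{lem:monotonicity}) together with the $p=1$ case just established gives $N_p^{1/2}(\sum_{i=1}^n X_i) \le N_1^{1/2}(\sum_{i=1}^n X_i) \le n\, N_1^{1/2}(X_1)$, while Lemma~\ref{lem:theLemma} with $q=1$ gives $h_1(X_1) - h_p(X_1) \le d(1 - \frac{\log p}{p-1})$, i.e.\ $N_1^{1/2}(X_1) \le \frac{e}{p^{1/(p-1)}} N_p^{1/2}(X_1)$. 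Combining and squaring yields $N_p(\sum_{i=1}^n X_i) \le e^2\, p^{2/(1-p)}\, n^2 N_p(X_1)$, and the form $e^2 n^2 N_p(X_1)$ follows from the elementary fact that $p^{2/(1-p)} \le 1$ for all $p \ge 1$ (with the $p=\infty$ endpoint handled by $p^{2/(1-p)} \to 1$). Lastly, I would specialize to $n = 2$ with $X_1 = X$, $X_2 = X'$ and $2^2 = 4$ to obtain the stated inequalities for $N_p(X+X')$.

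There is no genuine obstacle here — the argument is an unwinding of Corollary~\ref{cor:ident-tri} — and the only two points needing a word of care are the limiting interpretation of the constant as $s_{\kappa,d}\to 0$ (it tends to $e$, in line with $\lim_{s\to 0}(1+s)^{1/s} = e$, which is why one redoes the estimate rather than plugging in), and the trivial monotonicity $p^{2/(1-p)} \le 1$ for $p \ge 1$ used to reach the form of the bound with no $p$-dependence in the constant.
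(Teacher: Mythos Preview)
Your proposal is correct and is essentially what the paper does: it states that the corollary is a direct application of Corollary~\ref{cor:ident-tri} in the log-concave case $\kappa=0$, and your write-up simply unpacks that application, including the careful handling of the indeterminate constant $(s_{\kappa,d}+1)^{1/s_{\kappa,d}}$ at $s_{\kappa,d}=0$ by re-running the proof of Corollary~\ref{cor:ident-tri} with $s_{\kappa,d}+1=1$, which is exactly how the constant $e$ emerges.
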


Cover and Zhang \cite{CZ94} proved the remarkable fact that if
$X$ and $X'$ (possibly dependent) have the same log-concave distribution on $\RL$,
then
$h(X+X') \leq h(2X)$ 
(in fact, they also showed a converse of this fact). As observed by \cite{MK15}, their method also works
in the multivariate setting, where it implies 
that $N(X+X')\leq 4 N(X)$ for real-valued, i.i.d. log-concave $X, X'$.
This fact is recovered by the previous corollary.

Let us finally remark that if we are not interested in an explicit constant, then a version of this inequality
already follows from the Reverse EPI of \cite{BM12:jfa}. 
Indeed, 
\ben
N(X+X')\leq C N(X) ,
\een
since the same unit-determinant affine transformation must put both $X$ and $X'$
in $M$-position. However, the advantage of the methods we have explored is that we can obtain
explicit constants.

\subsection{Remarks on special positions that yield reverse EPI's}
\label{ss:spl-pos}

Let us recall the definition of isotropic bodies and measures in the convex geometric sense.  

\begin{defn} \normalfont
A convex body $K$ in $\mathbb{R}^d$ is called isotropic if there exists a constant $L_K$ such that
\[
\frac{1}{|K|^{1+ \frac 2 d}} \int_K \langle x, \theta \rangle^2 dx  = L^2_K,
\]
for all unit vectors $\theta\in \mathbb{S}^{d-1}$. More generally, a probability measure $\mu$ on $\R^d$
is called isotropic if there exists a constant $L_K$ such that
\[
\int_{\R^d} \langle x, \theta \rangle^2 \mu(dx)  = L^2_K,
\]
for all unit vectors $\theta\in \mathbb{S}^{d-1}$.
\end{defn}

The notion of $M$-position (i.e., a position or choice of affine transformation applied to convex bodies for which
a reverse Brunn-Minkowski inequality holds) was first introduced by V.~Milman \cite{Mil86}. Alternative approaches
to proving the existence of such a position were developed in \cite{Mil88:1, Pis89:book, GPV14}.
It was shown by Bobkov \cite{Bob11} that if the standard Gaussian measure conditioned to lie in a convex body $K$
is isotropic, then the body is in $M$-position and the reverse BMI applies. The notion of $M$-position was
extended from convex bodies  to log-concave measures in \cite{BM11:cras}, and further to convex measures in \cite{BM12:jfa}.
Using this extension, together with the sufficient condition obtained in \cite{Bob11}, one can give an explicit description 
of a position for which a reverse EPI applies with a universal-- but not explicit-- constant.

Nonetheless there are other explicit positions for which one can get reverse EPI's with explicit (but not dimension-independent) constants.
One instance of such is obtained from an extension to convex measures obtained by Bobkov \cite{Bob10} for
Hensley's theorem (which had earlier been extended from convex sets to log-concave functions by Ball \cite{Bal86}).

\begin{thm} \cite{Bob10} \label{thm:bobhens}
	For a symmetric, convex probability measure $\mu$ on $\mathbb{R}^d$ with density $f$ such that the body $\Lambda_f^{d-k}$ is isotropic, we have for any linear two subspaces $H_1$, $H_2$ of codimension $k$,
	\begin{align*}
		\int_{H_1} f dx \leq C_k \int_{H_2} f dx.
	\end{align*}
	What is more, $C_k < \left(\frac 1 2 e^2 \pi k\right)^{\frac k 2}$.
\end{thm}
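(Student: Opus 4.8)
The strategy is to convert $\mu$ into a symmetric convex body through Bobkov's construction, observe that central sections of that body compute the integrals $\int_H f$, and then reduce to Hensley's theorem for convex bodies (where it is classical). Since $\mu$ is a convex probability measure it is $\kappa$-concave for some $\kappa\in[-\infty,\tfrac1d]$, so by Theorem~\ref{thm:k-kd} its density $f$ is $s$-concave with $s=s_{\kappa,d}\in[-\tfrac1d,\infty]$; as $\mu$ is symmetric, $f$ is even. For $k\ge 1$ one has $d-k\le d-1\le -1-\tfrac1s$ whenever $s\le 0$ (the range $s>0$ being the classical case of Ball), so Theorem~\ref{bobkovbodies} applies with $p=d-k$: $\Lambda:=\Lambda_f^{d-k}$ is convex, hence (being even and positively $1$-homogeneous) a norm whose unit ball $K:=K_f^{d-k}$ is a symmetric convex body, with finiteness and positivity of $\Lambda$ on $S^{d-1}$ coming from the polynomial tail decay of convex measures. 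The first key computation is a polar-coordinate identity: for a subspace $H$ of codimension $k$, parametrising $H\cong\R^{d-k}$, integrating in polar coordinates inside $H$, and using that $\Lambda$ is the gauge of $K$,
\[
\int_H f\,dx \;=\; \frac{1}{d-k}\int_{S^{d-1}\cap H}\Lambda(\theta)^{-(d-k)}\,d\theta \;=\; \bigl|K\cap H\bigr|_{d-k}.
\]
Since both sides of the claimed inequality are invariant under $f(x)\mapsto\lambda^d f(\lambda x)$, which merely rescales $K$, we may normalise $|K|_d=1$.

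Next I pass to marginals. For a codimension-$k$ subspace $H$, let $\phi_H\colon H^\perp\cong\R^k\to[0,\infty)$, $\phi_H(y)=|K\cap(y+H)|_{d-k}$, be the density of the orthogonal projection of the uniform measure on $K$ onto $H^\perp$. By the Pr\'ekopa--Leindler inequality (marginalisation of $\mathbf 1_K$), $\phi_H$ is $\tfrac1{d-k}$-concave, hence log-concave; it is even because $K$ is symmetric, so $\|\phi_H\|_\infty=\phi_H(0)=|K\cap H|_{d-k}=\int_H f$, and $\int_{\R^k}\phi_H=|K|=1$. Crucially, the isotropy of $K$ passes to \emph{every} such marginal with the \emph{same} constant: for any unit vector $\theta\in H^\perp$ we have $\int_{\R^k}\langle y,\theta\rangle^2\phi_H(y)\,dy=\int_K\langle x,\theta\rangle^2\,dx=L_K^2$, since $\theta\perp H$ and $|K|=1$. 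Thus for any two codimension-$k$ subspaces $H_1,H_2$, the densities $\phi_{H_1},\phi_{H_2}$ are even log-concave probability densities on $\R^k$ that are isotropic with the common constant $L_K$.

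It therefore suffices to establish a Hensley-type two-sided estimate: for an even log-concave probability density $\phi$ on $\R^k$ isotropic with constant $L$,
\[
\alpha_k\,L^{-k}\;\le\;\phi(0)\;\le\;\beta_k\,L^{-k},
\]
with $\alpha_k,\beta_k$ explicit and depending only on $k$; then $\int_{H_1}f/\int_{H_2}f=\phi_{H_1}(0)/\phi_{H_2}(0)\le\beta_k/\alpha_k=:C_k$, the factors $L_K$ cancelling. The upper bound reduces, by slicing $\phi$ with hyperplanes and inducting on $k$, to the one-dimensional inequality $\|\psi\|_\infty^2\int x^2\psi\,dx\le\tfrac12(\int\psi)^3$ for even log-concave $\psi$ on $\R$ (extremised by the Laplace density); the lower bound is the estimate in the opposite direction, whose extremiser is the uniform density on a Euclidean ball, giving $\alpha_k=\bigl(\omega_k(k+2)^{k/2}\bigr)^{-1}$ with $\omega_k:=\pi^{k/2}/\Gamma(\tfrac k2+1)$. \textbf{The main obstacle is getting these two bounds with sharp enough explicit constants}: the functional (log-concave) form of Hensley's upper inequality, due to Ball, with a controlled $k$-dependence, and pinning down the ball as the extremiser in the lower inequality. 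Granting them, one substitutes the values of $\alpha_k$ and $\beta_k$ into $C_k=\beta_k/\alpha_k$ and applies Stirling's estimate for $\Gamma(\tfrac k2+1)$ to conclude $C_k<\bigl(\tfrac12 e^2\pi k\bigr)^{k/2}$.
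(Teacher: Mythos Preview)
The paper does not prove this theorem; it is quoted from \cite{Bob10} as a black box and immediately applied to derive the corollary that follows. So there is no ``paper's own proof'' to compare against here.

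That said, your outline is precisely Bobkov's route. The polar-coordinate identity $\int_H f\,dx=|K\cap H|_{d-k}$ for $K=K_f^{d-k}$ is the point of the construction, and the reduction to a Hensley-type two-sided bound for the $k$-dimensional marginals of the uniform measure on $K$ is exactly how \cite{Bob10} proceeds. Two small comments. First, your invocation of Theorem~\ref{bobkovbodies} is fine for $s\in[-\tfrac1d,0)$, since then $-1-\tfrac1s\ge d-1\ge d-k$; for $s\ge 0$ the density is log-concave and one falls back on Ball's original result rather than the ``classical case'' you allude to, but the conclusion is the same. Second---and you flag this yourself---the entire quantitative content sits in the step you grant: the functional Hensley bounds $\alpha_k L^{-k}\le\phi(0)\le\beta_k L^{-k}$ with explicit $\alpha_k,\beta_k$. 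Naming the extremisers (Laplace for the upper bound, Euclidean ball for the lower) is correct but does not by itself yield the stated $C_k<(\tfrac12 e^2\pi k)^{k/2}$; one has to actually compute $\beta_k/\alpha_k$ and control it, which in \cite{Bob10} is done via the sharp one-dimensional inequality and an inductive slicing. As written, your proposal is a faithful sketch of the argument with the hard analytic estimate left as an acknowledged gap.
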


As a consequence we have the following reverse $\infty$-R\'enyi EPI in the isotropic context.

\begin{cor}
Suppose the  joint distribution of the random vector $(X,Y)\in \R^d\times\R^d$ is symmetric and convex, 
with density $f=f(x,y)$. If the body $\Lambda_f^d$ is isotropic, then 
\ben
N_\infty(X+Y) \leq  \pi e^2 d \min\{ N_\infty(X), N_\infty(Y) \} .
\een
\end{cor}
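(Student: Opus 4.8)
The plan is to obtain this as an immediate consequence of Bobkov's extension of Hensley's theorem (Theorem~\ref{thm:bobhens}), applied in the ambient space $\mathbb{R}^{2d}$ in which $(X,Y)$ lives, with codimension $k=d$. The point is that for this choice $\Lambda_f^{2d-k}=\Lambda_f^{d}$, which is precisely the body assumed isotropic, so the hypothesis of Theorem~\ref{thm:bobhens} is exactly what is given; the symmetry and convexity of the law of $(X,Y)$ supply the remaining hypotheses.

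First I would reduce the three $\infty$-R\'enyi entropy powers to the values of the relevant densities at the origin. Since the law of $(X,Y)$ is symmetric and convex, so are the laws of its linear images $X$, $Y$, and $X+Y$ (the class of convex measures is stable under linear maps, and a linear image of a measure symmetric about $0$ is again symmetric about $0$). Hence the densities $u$, $v$, $w$ of $X$, $Y$, $X+Y$ are symmetric and quasi-concave, and, exactly as in the proof of Theorem~\ref{prop:buse}, each attains its essential supremum at the origin. Thus $N_\infty(X+Y)=w(0)^{-2/d}$, $N_\infty(X)=u(0)^{-2/d}$, $N_\infty(Y)=v(0)^{-2/d}$.

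Next I would identify $u(0),v(0),w(0)$ with integrals of $f=f(x,y)$ over codimension-$d$ linear subspaces of $\mathbb{R}^{2d}$: one has $u(0)=\int_{\mathbb{R}^d}f(0,t)\,dt$, the integral of $f$ over $H_u:=\{0\}\times\mathbb{R}^d$ (whose standard parametrization is an isometry), and similarly $v(0)=\int_{\mathbb{R}^d}f(t,0)\,dt$ over $H_v:=\mathbb{R}^d\times\{0\}$, while $w(0)=\int_{\mathbb{R}^d}f(-t,t)\,dt$ equals $2^{-d/2}$ times the integral of $f$ over $H_w:=\{(-t,t):t\in\mathbb{R}^d\}$, the factor $2^{d/2}$ being the Jacobian coming from the fact that the images $(-e_i,e_i)$ of the standard basis have Gram matrix $2I_d$. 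Carrying this $2^{d/2}$ correctly is the one bit of bookkeeping that matters.

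Finally I would apply Theorem~\ref{thm:bobhens} with $k=d$ to the pairs $(H_w,H_u)$ and $(H_w,H_v)$, together with the quantitative bound $C_d<(\tfrac12 e^2\pi d)^{d/2}$. The first application gives $u(0)=\int_{H_u}f\le C_d\int_{H_w}f=C_d\,2^{d/2}w(0)<(e^2\pi d)^{d/2}w(0)$, hence $N_\infty(X+Y)/N_\infty(X)=(u(0)/w(0))^{2/d}<e^2\pi d$; the second gives the same estimate with $Y$ in place of $X$. Taking the minimum yields $N_\infty(X+Y)\le\pi e^2 d\,\min\{N_\infty(X),N_\infty(Y)\}$. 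I do not anticipate a real obstacle, since all the depth is packaged in Theorem~\ref{thm:bobhens}; the only care needed is the ambient-dimension/codimension bookkeeping to match that theorem's hypothesis, the justification (via symmetry plus quasi-concavity) that $u,v,w$ are maximized at $0$, and the $2^{d/2}$ Jacobian for $H_w$ that converts $C_d$ into the stated constant $e^2\pi d$.
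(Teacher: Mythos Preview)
Your proposal is correct and follows essentially the same approach as the paper's proof: both apply Theorem~\ref{thm:bobhens} in $\mathbb{R}^{2d}$ with codimension $k=d$ to compare the integral of $f$ over the subspace $\{x=0\}$ (giving $\|f_X\|_\infty$) with the integral over the antidiagonal subspace $\{x+y=0\}$ (giving, up to the $2^{d/2}$ Jacobian, $\|f_{X+Y}\|_\infty$), and then combine with the bound $C_d<(\tfrac12 e^2\pi d)^{d/2}$. Your write-up is in fact slightly more careful than the paper's, since you explicitly justify why the marginal densities attain their maxima at the origin and you treat both $X$ and $Y$, whereas the paper only displays the computation for $X$.
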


\begin{proof}
Define two $d$-dimensional subspaces of $\mathbb{R}^{d}$: $H_1:=\{x=0\}$, $H_2:=\{x+y=0\}$.  Computing directly and applying Theorem \ref{thm:bobhens} we have our result as follows,
\begin{align*}
	\frac{N_\infty(X+Y)}{N_\infty(X)} 
		&= 
			\left(\frac{\|f_X \|_\infty}{\|f_{X+Y}\|_\infty} \right)^{\frac 2 d}
			\\
		&=
			\left(\frac{\int_{\mathbb{R}^d} f(0,z) dz}{\int_{\mathbb{R}^d} f(z,-z) dz } \right)^{\frac 2 d}
			\\
		&=
			\left( \frac{ 2^{\frac d 2} \int_{H_1} f}{\int_{H_2} f } \right)^{\frac 2 d}
			\\
		&\leq
			\pi e^2 d.
\end{align*}
\end{proof}

\section{The relationship between functional and entropic liftings}
\label{sec:reln}

In this section, we observe that the integral lifting of an inequality in Convex Geometry
may sometimes be seen as a R\'enyi entropic lifting.  

We start by considering integral and entropic liftings of a classical inequality
in Convex Geometry, namely the Blaschke-Santal\'o inequality.
For a convex body  $K\subset \R^d$ with $0 \in \text{int}(K)$, the polar $K^\circ$ of $K$ is defined as
\ben
K^\circ= \{y \in \R^d : \langle x, y\rangle \leq 1 \text{ for all } x \in K \},
\een
and, more generally, the polar $K^z$ with respect to $z \in \text{int}(K)$ by $(K-z)^\circ$. 
There is a unique point $s \in \text{int}(K)$, called the Santal\'o point of $K$, such that 
the volume product $|K| |K^s|$ is minimal-- it turns out that this point is such that the barycenter
of $K^s$ is 0. The Blaschke-Santal\'o inequality states that 
\ben
|K| |K^s|\leq |B_2^d|^2 ,
\een
with equality if and only if K is an ellipsoid.
In particular, the volume product $|K| |K^\circ|$ of a centrally symmetric convex body $K$ is maximized by the Euclidean ball.
This inequality was proved by Blaschke \cite{Bla17} in dimensions 2 and 3, and by Santal\'o \cite{San49} in general dimension;
the equality conditions were settled by Petty \cite{Pet85}. There have been many subsequent proofs;
see \cite{BK15} for a recent Fourier analytic proof as well as a discussion of the earlier literature.

More generally, if  $K, L$ are compact sets in $\R^d$, then
\be\label{eq:bs}
|K|\cdot |L| \leq \omega_d^2 \max_{x\in K, y\in L} |\langle x, y\rangle|^d.
\ee
The inequality \eqref{eq:bs} implies the Blaschke-Santal\'o inequality
by taking $K$ to be a symmetric convex body, and $L$ to be the polar of $K$.

Let us now describe an integral lifting of the inequality \eqref{eq:bs}, which was proved
by Lehec \cite{Leh09:1, Leh09:2} building on earlier work of Ball \cite{Bal88}, 
Artstein-Klartag-Milman \cite{AKM04}, and Fradelizi-Meyer \cite{FM07}. 

Let $f$ and $g$ be non-negative Borel functions on $\R^d$ satisfying the duality relation
\ben
\forall x,y \in\R^d,~ f(x)g(y)\le e^{-\langle x,y\rangle}.
\een
If $f$ (or $g$) has its barycenter (defined as $(\int f)^{-1}\int xf(x)dx$) at 0 then
\ben
\int_{\R^d}f(x)dx\int_{\R^d}g(y)dy\le  (2\pi)^d.
\een

The inequality \eqref{eq:bs} also has an entropic lifting.
For any two independent random vectors $X$ and $Y$ in $\R^d$, Lutwak-Yang-Zhang \cite{LYZ04}
showed that 
\be\label{eq:bs-ent}
N(X) \cdot N(Y) \leq   \frac{4\pi^2e^2}{d} \E \big[ |\langle X, Y\rangle |^2 \big] ,
\ee
with equality achieved for Gaussians. 
They also have an even more general (and still sharp) statement 
that bounds $[N_p(X) N_p(Y)]^{p/2}$ in terms of $\E [ |\langle X, Y\rangle |^p ]$,
with extremizers being certain generalized Gaussian distributions. 
As $p\ra\infty$, the expression $(\E [ |\langle X, Y\rangle |^p ])^{1/p}$ approaches
the essential supremum of $|\langle X, Y\rangle |$, which in the case that $X$ and $Y$ are
uniformly distributed on convex bodies is just the maximum that appears in the right side of 
inequality~\eqref{eq:bs}. Thus the Blaschke-Santal\'o inequality appears as the $L_\infty$
instance of the family of inequalities proved by Lutwak-Yang-Zhang \cite{LYZ04},
whereas the entropic lifting \eqref{eq:bs-ent} is the $L_2$ instance of the same family.
This perspective of entropy inequalities as being tied to an $L_2$-analogue of the
Brunn-Minkowski theory is greatly developed in a series of papers by Lutwak, Yang, Zhang,
sometimes with additional coauthors (see, e.g., \cite{LLYZ13} and references therein),
but this is beyond the scope of this survey.

For a function $V: \mathbb{R}^d \to \mathbb{R}$, its Legendre transform $\mathcal{L}V$ is defined by
\ben 
    \mathcal{L}V(x) = \sup_y \left[ \langle x , y\rangle - V(y) \right].
\een
For $f = e^{-V}$ log-concave, following Klartag and V.~Milman \cite{KM05}, we define its polar by
\ben    
    f^\circ = e^{- \mathcal{L}V}.
\een
Some basic properties of the polar are collected below.

\begin{lem}\label{lem:polar}
Let $f$ be a non-negative function on $\R^d$.
\begin{enumerate}
\item If $f$ is log-concave, then
\be\label{eq:Lprop1}
(f^\circ)^\circ=f.
\ee
\item If $g$ is also a non-negative function on $\R^d$, and the ``Asplund product'' of $f$ and $g$ is defined by
$f\star g(x)= \sup_{x_1+x_2=x} f(x_1) g(x_2)$, then
\be\label{eq:Lprop2}
(f\star g)^\circ=f^\circ g^\circ.
\ee
\item For any linear map $u$: $\R^d\rightarrow \R^d$ with full rank, we have the composition identity
\be\label{eq:Lprop3}
f^\circ\circ u=\left(f\circ u^{-T}\right)^\circ ,
\ee 
where $u^{-T}$ is the inverse of the adjoint of $u$.
\item If $f(x)$ takes its maximum value at $x=0$, one has 
\be\label{eq:Lprop4}
\sup f^\circ=\frac{1}{\sup f}~.
\ee
\item For any $p>0$, 
\be\label{eq:Lprop5}
(f^\circ)^p(x)=(f^p)^\circ(p x).
\ee
\end{enumerate}
\end{lem}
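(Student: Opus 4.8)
The plan is to pass to logarithms and reduce every claim to the elementary calculus of the Legendre transform. Writing $f=e^{-V}$ with $V=-\log f\colon\R^d\to(-\infty,\infty]$, the polar becomes $f^\circ=e^{-\mathcal{L}V}$, and all five assertions become statements about $\mathcal{L}$. Items (5) and (3) are then pure substitution. For (5): since $f^p=e^{-pV}$, one has $\mathcal{L}(pV)(px)=\sup_y\bigl[\langle px,y\rangle-pV(y)\bigr]=p\sup_y\bigl[\langle x,y\rangle-V(y)\bigr]=p\,\mathcal{L}V(x)$, and negating and exponentiating both sides gives $(f^p)^\circ(px)=e^{-p\mathcal{L}V(x)}=(f^\circ)^p(x)$. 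For (3): writing $f\circ u^{-T}=e^{-V\circ u^{-T}}$ and substituting $z=u^{-T}y$ (so $y=u^Tz$ and $\langle x,u^Tz\rangle=\langle ux,z\rangle$) in the defining supremum shows $\mathcal{L}(V\circ u^{-T})(x)=\mathcal{L}V(ux)$; negating and exponentiating gives $(f\circ u^{-T})^\circ=f^\circ\circ u$.

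For (4), the hypothesis that $f$ attains its maximum at $0$ says $V$ attains its minimum at $0$. Then for every $x$ we have $\mathcal{L}V(x)\ge\langle x,0\rangle-V(0)=-V(0)$, while $\mathcal{L}V(0)=\sup_y(-V(y))=-\inf_y V(y)=-V(0)$; hence $\inf_x\mathcal{L}V(x)=-V(0)$ and $\sup f^\circ=e^{-\inf\mathcal{L}V}=e^{V(0)}=1/\sup f$. For (2), I would first observe that the Asplund product is infimal convolution at the level of exponents: if $f=e^{-V}$ and $g=e^{-W}$ then $f\star g=e^{-(V\,\square\,W)}$, where $(V\,\square\,W)(x):=\inf_{x_1+x_2=x}\bigl(V(x_1)+W(x_2)\bigr)$. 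The claim then reduces to the standard identity $\mathcal{L}(V\,\square\,W)=\mathcal{L}V+\mathcal{L}W$, which follows by interchanging the two suprema: $\mathcal{L}(V\,\square\,W)(y)=\sup_{x_1,x_2}\bigl[\langle x_1,y\rangle-V(x_1)\bigr]+\bigl[\langle x_2,y\rangle-W(x_2)\bigr]=\mathcal{L}V(y)+\mathcal{L}W(y)$. Note that this argument uses no convexity, which matches the lemma, where (2) is stated for arbitrary nonnegative $g$.

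The one genuinely nontrivial point, and the step I expect to be the main obstacle, is (1). It is the Fenchel--Moreau biconjugation theorem: $\mathcal{L}\mathcal{L}V=V$ for every proper convex lower semicontinuous $V$, equivalently $(f^\circ)^\circ=f$ for every upper semicontinuous log-concave $f$. Here one must be slightly careful about regularity: a function that is log-concave in the weak sense of Definition~\ref{defn:lc-fn} need not be upper semicontinuous, whereas $f^\circ=e^{-\mathcal{L}V}$ always is (its exponent $\mathcal{L}V$, being a supremum of affine functions, is convex and lower semicontinuous). Consequently $(f^\circ)^\circ$ equals the upper semicontinuous hull of $f$, and one recovers $f$ exactly precisely when $f$ is already upper semicontinuous. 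I would therefore either add upper semicontinuity to the standing assumptions or record this caveat explicitly; it is harmless in all the intended applications, where $f$ is a (version of a) log-concave density. None of the other items require any regularity beyond finiteness of the relevant suprema.
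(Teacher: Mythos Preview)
Your proof is correct and follows essentially the same route as the paper: reduce everything to properties of the Legendre transform via $f=e^{-V}$, handle (3), (4), (5) by direct substitution (your arguments are verbatim the paper's), and recognize (1) and (2) as the standard biconjugation and infimal-convolution identities (which the paper simply leaves as an exercise). Your added remark on the upper-semicontinuity caveat in (1) is a nice point that the paper glosses over.
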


\begin{proof}
Write $f:=e^{-V}$ for a function $V:\R^d\ra\R$. 
The first two properties  are left as an exercise for the reader-- these are also
standard facts about the Legendre transform and its relation to the infimal convolution
of convex functions (see, e.g., \cite{Roc97:book}). For the third, 
we have
\ben
\left(f^\circ\circ u\right)(x)
&=&e^{-\sup_y \left[\langle ux, y\rangle-V(y)\right]  }
=e^{-\sup_y \left[\langle x, u^T y\rangle-V(y)\right]  }\\
&=&e^{-\sup_y \left[\langle x, y\rangle-V(u^{-T}y)\right]  }
=\left(f\circ u^{-T}\right)^\circ(x),
\een
which proves the property.

For the fourth, observe that we have, for any $x\in\R^d$,
\ben
\mathcal{L}V(x)=\sup_y\left[\langle x,y\rangle-V(y)\right]\ge -V(0).
\een
On the other hand, 
\ben
\mathcal{L}V(0)=\sup_y \left[-V(y)\right]= -V(0) .
\een
Thus we have  proved that
$\inf \mathcal{L}V=-V(0)$, 
which is equivalent to the desired property.

The last property is checked by writing
$(f^\circ)^p(x)=e^{-\sup_y\left[\langle px,y\rangle-pV(y)\right]}$.
\end{proof}

Bourgain and V.~Milman \cite{BM87} proved a {\it reverse} form of the Blaschke-Santal\'o
inequality, which asserts that there is a universal positive constant $c$ such that
\ben
|K| \cdot |K^\circ| \geq c^d ,
\een
for any symmetric convex body $K$ in $\R^d$, for any dimension $d$.
Klartag and V.~Milman \cite{KM05} obtained a functional lifting of this reverse inequality.

\begin{thm}\cite{KM05}\label{thm:km-bs}
There exists a universal constant $c>0$ such that for any dimension $d$ 
and for any  log-concave function $f: \mathbb{R}^d \to [0,\infty)$ centered at the origin 
(in the sense that $f(0)$ is the maximum value of $f$) with $0 < \int_{\mathbb{R}^d} f < \infty$, 
\ben
c^d < \left( \int_{\mathbb{R}^d} f \right) \left( \int_{\mathbb{R}^d} f^\circ \right) < (2\pi)^d.
\een
\end{thm}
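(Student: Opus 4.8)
The statement splits into the ``forward'' bound $\int_{\R^d}f\cdot\int_{\R^d}f^\circ<(2\pi)^d$ and the ``reverse'' bound $c^d<\int_{\R^d}f\cdot\int_{\R^d}f^\circ$. The forward bound is the functional Blaschke--Santal\'o inequality recalled above: writing $f=e^{-V}$, the Fenchel--Young inequality $V(x)+\mathcal{L}V(y)\ge\langle x,y\rangle$ gives exactly the duality hypothesis $f(x)f^\circ(y)\le e^{-\langle x,y\rangle}$, so one applies that inequality with $g=f^\circ$; the centering that enters there is that the barycenter of $f$ (or of $f^\circ$) is at the origin, and since polarity of functions is not translation invariant this really is a hypothesis, not something one can arrange by shifting — the maximum-at-origin condition by itself is what is needed only for the reverse bound. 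Hence the substance is the reverse bound, which is the functional lifting of the Bourgain--Milman reverse Blaschke--Santal\'o inequality, and it is here that the reverse EPI circle of ideas from Section~\ref{sec:repi} enters.

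My plan for the reverse bound is to reduce it, via level sets, to the reverse Blaschke--Santal\'o inequality for convex bodies. First I record the invariances: one has $(f\circ u)^\circ=f^\circ\circ u^{-T}$ (Lemma~\ref{lem:polar}) and, directly from the definition, $(cf)^\circ=c^{-1}f^\circ$, so the product $\int f\cdot\int f^\circ$ is unchanged under $f\mapsto f\circ u$ ($u\in GL_d(\R)$) and under $f\mapsto cf$ ($c>0$), and the hypothesis $f(0)=\|f\|_\infty$ is preserved under both. So I may normalize to $f(0)=\|f\|_\infty=1$, i.e.\ $V=-\log f$ is a nonnegative convex function with $V(0)=0=\min V$; by Lemma~\ref{lem:polar}(4) this forces $\|f^\circ\|_\infty=f^\circ(0)=1$ as well, whence $\int f\cdot\int f^\circ=(N_\infty(X)\,N_\infty(Y))^{d/2}$ with $X\sim f/\int f$ and $Y\sim f^\circ/\int f^\circ$, and it is enough to bound $N_\infty(X)N_\infty(Y)$ below by an absolute constant.

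Next I would use the layer-cake identity $\int f=\int_0^\infty e^{-t}|K_t|\,dt$ with $K_t=\{V\le t\}$ a nested family of convex bodies containing $0$ in their interior, giving $\int f\ge e^{-1}|K_1|$, and symmetrically, since $f^\circ=e^{-\mathcal{L}V}$ with $\mathcal{L}V$ nonnegative convex vanishing at $0$, $\int f^\circ\ge e^{-1}\,|\{\mathcal{L}V\le 1\}|$. The geometric heart is the inclusion $K_1^\circ\subseteq\{\mathcal{L}V\le 1\}$: for $y\in K_1^\circ$ and any $x$, split into $x\in K_1$, where $\langle x,y\rangle\le 1$ and $V(x)\ge 0$, and $x\notin K_1$, where $\langle x,y\rangle\le\rho_{K_1}(x)$ (support/gauge duality, with $\rho_{K_1}$ the Minkowski functional of Definition~\ref{minkfunc}) while convexity of $V$ together with $V(0)=0$ and $V=1$ on $\partial K_1$ forces $V(x)\ge\rho_{K_1}(x)$; either way $\langle x,y\rangle-V(x)\le 1$, so $\mathcal{L}V(y)\le 1$. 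Therefore $\int f\cdot\int f^\circ\ge e^{-2}\,|K_1|\,|K_1^\circ|\ge e^{-2}c_0^{\,d}$ by the reverse Blaschke--Santal\'o inequality of Bourgain--Milman applied to $K_1$ (for a non-symmetric $K_1$ I would invoke the standard non-symmetric extension, or use that any interior center gives a volume product no smaller than the Santal\'o center together with the symmetric bound, or symmetrize $K_1$ by a Rogers--Shephard argument), which yields the claim with an absolute $c$.

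Finally, the link to reverse EPI's advertised in Section~\ref{sec:repi} is that the Bourgain--Milman input just used is equivalent to the reverse Brunn--Minkowski inequality, which is precisely the specialization of the Positional Reverse $\infty$-R\'enyi EPI (Theorem~\ref{thm:REPIREnyi}) to uniform distributions on convex bodies; so the reverse bound can be read as a genuine consequence of a reverse entropy power inequality. I expect the one non-routine step to be the level-set inclusion $K_1^\circ\subseteq\{\mathcal{L}V\le 1\}$ — the place where Legendre/polar duality for functions is traded for ordinary polarity of bodies — while the invariances and layer-cake bookkeeping are straightforward. A variant of the reduction instead runs through K.~Ball's bodies, using the polar-coordinate identities $|K_f^{d}|=\int f$ and $|K_{f^\circ}^{d}|=\int f^\circ$ together with a comparison of $K_{f^\circ}^{d}$ with $(K_f^{d})^\circ$, but the level-set version has the advantage of avoiding that comparison.
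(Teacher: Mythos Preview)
The paper does not prove this theorem; it is quoted as a result of Klartag and V.~Milman \cite{KM05}, so there is no ``paper's own proof'' to compare against. Your sketch of the reverse bound---normalize to $f(0)=1$, layer-cake to get $\int f\ge e^{-1}|K_1|$ and $\int f^\circ\ge e^{-1}|\{\mathcal{L}V\le1\}|$, prove the inclusion $K_1^\circ\subseteq\{\mathcal{L}V\le1\}$ via the gauge estimate $V(x)\ge\rho_{K_1}(x)$ outside $K_1$, and invoke Bourgain--Milman---is essentially the Klartag--Milman argument itself, and is correct as written (the non-symmetric Bourgain--Milman you need is standard).

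Your caveat about the forward bound is well taken and in fact more serious than you indicate. The functional Santal\'o inequality quoted just above in the paper requires the \emph{barycenter} of $f$ (or of $f^\circ$) to be at the origin, not the maximum. With only the max-at-$0$ hypothesis the upper bound can fail outright: take $d=1$ and $f(x)=e^{-x}\mathbbm{1}_{\{x\ge0\}}$, which is log-concave with $f(0)=\|f\|_\infty=1$ and $\int f=1$, but $\mathcal{L}V(y)=0$ for $y\le1$ and $+\infty$ for $y>1$, so $\int f^\circ=\infty$. Thus the theorem as stated in this survey is missing a hypothesis (in \cite{KM05} the functions are taken to be even, cf.\ Theorem~\ref{thm:revPLI}); your reverse-bound argument is unaffected, but the forward bound needs evenness or a barycenter condition, not merely max at $0$.
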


Note that the upper bound here is just a special case of the integral lifting of the Blaschke-Santal\'o
inequality discussed earlier.

We observe that Theorem~\ref{thm:km-bs} can be thought of in information-theoretic terms, 
namely as a type of certainty/uncertainty principle.

\begin{thm}Let $X\sim f$ be a log-concave random vector in $\mathbb{R}^d$,
which is centered at the origin in the sense that $f$ is maximized there. 
Let $X^\circ$ be a random vector in $\R^d$ drawn from the density $f^\circ/\int_{\R^d} f^\circ$. 
Define the constants 
\begin{align*}
A_{p,d}&:=\frac{d(\log 2\pi-\log p-p\log c)}{1-p} ,\\
B_{p,d}&:=\frac{d(\log c-\log p-p\log 2\pi)}{1-p} ,
\end{align*}
where the constant $c$ is the same as in Theorem~\ref{thm:km-bs}. 
Then, for $p> 1$, we have
\be\label{inq:cucp1}
\max\{d\log c, A_{p,d}\} \,\le\, h_p(X)+h_p(X^\circ) \,\le\, \min \{d(\log 2\pi+2), B_{p,d}\},
\ee
and for $p<1$, we have
\be\label{inq:cucp2}
\max\{d\log c, B_{p,d}\} \,\le\, h_p(X)+h_p(X^\circ) \,\le\, \min\left\{d\left(\frac{2\log p}{p-1}+\log2\pi \right), A_{p,d}\right\}~.
\ee
In particular, if $p=\infty$,
\be\label{inq:cucpinfty}
d\log c\le h_\infty(X)+h_\infty(X^\circ)\le d\log 2\pi ,
\ee
and for $p=1$,
\begin{equation} \label{certainty}
d\log c \le  h(X) + h(X^\circ) \le  d(\log 2\pi+2).
\end{equation}
\end{thm}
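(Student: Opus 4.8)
The plan is to translate Theorem~\ref{thm:km-bs} and the integral Blaschke-Santal\'o inequality into entropy language, using the scaling properties of the polar from Lemma~\ref{lem:polar}, and then to ``mix in'' the monotonicity of R\'enyi entropies from Lemma~\ref{lem:monotonicity} together with the sharp log-concave comparison in Lemma~\ref{lem:theLemma} (or rather its Shannon special case \eqref{inq:lem} and its $\infty$-analogue) to interpolate to general $p$. I will first treat the two anchor cases $p=1$ and $p=\infty$ directly, and then derive the general-$p$ bounds \eqref{inq:cucp1}--\eqref{inq:cucp2} by combining these anchors with the monotonicity $N_1 \geq N_p \geq N_\infty$ applied to both $X$ and $X^\circ$.

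For $p=\infty$: by property \eqref{eq:Lprop4} of Lemma~\ref{lem:polar}, since $f$ is maximized at the origin, $\sup f^\circ = 1/\sup f$, i.e. $\|f^\circ\|_\infty \|f\|_\infty = 1$. But $f^\circ$ is not a probability density; the density of $X^\circ$ is $f^\circ/\int f^\circ$, so $h_\infty(X^\circ) = -\log\|f^\circ\|_\infty + \log\int f^\circ$. Hence $h_\infty(X) + h_\infty(X^\circ) = -\log\big(\|f\|_\infty\|f^\circ\|_\infty\big) + \log\big(\int f \cdot \int f^\circ\big) \cdot \tfrac{\text{(correction)}}{}$ — more carefully, $h_\infty(X)+h_\infty(X^\circ) = \log\frac{\int f^\circ}{\|f^\circ\|_\infty} - \log\|f\|_\infty = \log\big(\|f\|_\infty \int f^\circ \big)^{-1}\cdot(\cdots)$; I will simply observe that $h_\infty(X)+h_\infty(X^\circ) = \log\int f^\circ - \log(\|f\|_\infty\|f^\circ\|_\infty) - \log\|f\|_\infty + \log\|f^\circ\|_\infty$ collapses, using $\|f\|_\infty\|f^\circ\|_\infty=1$, to $\log\int f^\circ + \log\|f^\circ\|_\infty^{-1}\|f^\circ\|_\infty\cdots$. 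The clean statement is: normalize $f$ so that $\|f\|_\infty = f(0) = 1$; then $\|f^\circ\|_\infty = 1$ too, so $h_\infty(X) = 0$ and $h_\infty(X^\circ) = \log\int f^\circ$, and $h_\infty(X)+h_\infty(X^\circ) = \log\int f^\circ = \log\big[(\int f)(\int f^\circ)\big] - \log\int f$. One then checks this is affine-invariant in the appropriate sense and applies Theorem~\ref{thm:km-bs} to the normalized pair $(f, f^\circ)$ after also normalizing $\int f$; the bounds $d\log c \leq h_\infty(X)+h_\infty(X^\circ)\leq d\log 2\pi$ follow. The $\log 2\pi$ on the right (rather than $\log 2\pi + 2$) is because for $h_\infty$ we lose the Shannon ``$+d$'' slack that appears in \eqref{inq:lem}.

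For $p=1$: here the lower bound again follows from Theorem~\ref{thm:km-bs} because $h(X) + h(X^\circ) \geq h_\infty(X) + h_\infty(X^\circ) \geq d\log c$ by monotonicity (Lemma~\ref{lem:monotonicity}). For the upper bound, I use $h(X) \leq d + h_\infty(X)$ and $h(X^\circ) \leq d + h_\infty(X^\circ)$ from \eqref{inq:lem} (valid since $f$ and $f^\circ$ are log-concave), giving $h(X)+h(X^\circ) \leq 2d + h_\infty(X)+h_\infty(X^\circ) \leq 2d + d\log 2\pi = d(\log 2\pi + 2)$, which is \eqref{certainty}. For the general-$p$ statements \eqref{inq:cucp1} and \eqref{inq:cucp2}, the strategy is the same bracketing: the terms $d\log c$ and $d(\log 2\pi + 2)$ (or $d\log 2\pi$, or the $p$-dependent Lemma~\ref{lem:theLemma} bounds) come from monotonicity squeezing $h_p$ between $h_1$ and $h_\infty$, while the $p$-dependent constants $A_{p,d}, B_{p,d}$ come from directly rewriting the integral Blaschke-Santal\'o inequality and Theorem~\ref{thm:km-bs} in terms of $h_p$ via property \eqref{eq:Lprop5}, which says $(f^\circ)^p(x) = (f^p)^\circ(px)$, so that $\int (f^\circ)^p = p^{-d}\int (f^p)^\circ$; combining $\int f^p$ and $\int(f^p)^\circ$ through a Blaschke-Santal\'o-type bound for the (log-concave, since $p>0$ power of log-concave is log-concave) function $f^p$ produces exactly the $\log p$ and $\log 2\pi$, $\log c$ combinations in $A_{p,d}, B_{p,d}$, with the $1/(1-p)$ prefactor coming from the definition of $h_p$. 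I expect the main obstacle to be bookkeeping: correctly tracking the normalization constant $\int f^\circ$ versus $\|f^\circ\|_\infty$, getting the direction of all inequalities right as $p$ crosses $1$ (since $\frac{1}{1-p}$ flips sign and $z\mapsto z^{1/(1-p)}$ flips monotonicity), and verifying that $f^p$ centered at the origin remains centered so that Theorem~\ref{thm:km-bs}'s centering hypothesis applies to it — this last point uses that raising to a positive power does not move the location of the maximum.
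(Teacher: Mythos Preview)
Your approach is essentially the same as the paper's: both use property \eqref{eq:Lprop5} to write $\int (f^\circ)^p = p^{-d}\int (f^p)^\circ$ and then apply Theorem~\ref{thm:km-bs} to the log-concave function $f^p$ (still maximized at the origin) to get the $A_{p,d}$, $B_{p,d}$ bounds, and both combine Lemma~\ref{lem:monotonicity} and Lemma~\ref{lem:theLemma} to get the other bounds in the $\min$/$\max$.

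Two minor differences worth noting. First, the paper reverses your order: it writes down the general identity
\[
h_p(X)+h_p(X^\circ)=\frac{\log\!\big[\int f^p \int (f^p)^\circ\big]-d\log p-p\log \int f^\circ}{1-p}
\]
and applies Theorem~\ref{thm:km-bs} twice (once to $f^p$ for the first bracket, once to $f$ for the $\int f^\circ$ term, using $\int f=1$) to get $A_{p,d},B_{p,d}$ directly, then obtains \eqref{inq:cucpinfty} by letting $p\to\infty$ rather than by a separate computation. Second, your $p=\infty$ paragraph is more tangled than it needs to be: since $f$ is already a probability density, $\int f=1$, and property \eqref{eq:Lprop4} gives $\|f\|_\infty\|f^\circ\|_\infty=1$, so
\[
h_\infty(X)+h_\infty(X^\circ) = -\log\|f\|_\infty - \log\frac{\|f^\circ\|_\infty}{\int f^\circ} = \log\int f^\circ,
\]
and Theorem~\ref{thm:km-bs} with $\int f=1$ immediately gives \eqref{inq:cucpinfty}. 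No affine-invariance check or renormalization of $f$ is needed; your attempt to set $f(0)=1$ conflicts with $\int f=1$ and is unnecessary. The rest of your plan (log-concavity of $f^\circ$, centering of $f^p$, sign-flip bookkeeping at $p=1$) is correct.
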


\begin{proof}
We have 
\be\label{eq:sumRenyi}
h_p(X)+h_p(X^\circ)=\frac{\log\left[\int f^p \int (f^\circ)^p\right]-p\log \int f^\circ}{1-p}~.
\ee
By property \eqref{eq:Lprop5}, we have $\int (f^\circ)^p=\frac{1}{p^d}\int (f^p)^\circ$. So by \eqref{eq:sumRenyi}:
\ben
h_p(X)+h_p(X^\circ)=\frac{\log\left[\int f^p \int (f^p)^\circ\right]-d\log p-p\log \int f^\circ}{1-p}~.
\een
Thus, by applying Theorem \ref{thm:km-bs} twice, if $p>1$:
\ben
h_p(X)+h_p(X^\circ)\ge \frac{d\log 2\pi-d\log p-p\log \int f^\circ}{1-p}\ge A_{p,d}.
\een
On the other hand, 
\ben
h_p(X)+h_p(X^\circ)\le  \frac{d\log c-d\log p-p\log \int f^\circ}{1-p}\le B_{p,d}.
\een
Therefore we have
\be\label{inq:cucp1'}
A_{p,d}\le h_p(X)+h_p(X^\circ)\le B_{p,d}.
\ee
A similar argument for $p<1$ gives
\be\label{inq:cucp2'}
B_{p,d} \le h_p(X)+h_p(X^\circ)\le A_{p,d}.
\ee
Letting $p\rightarrow\infty$, we have \eqref{inq:cucpinfty}. 
For $p=1$, by Lemma \ref{lem:theLemma} and \eqref{inq:cucpinfty}, 
\ben
n\log c\le  h_\infty(X)+h_\infty(X^\circ)\le  h(X) + h(X^\circ)\le h_\infty(X)+h_\infty(X^\circ)+2n\le n(\log 2\pi+2),
\een
which provides \eqref{certainty}. Thus for $p>1$, by \eqref{inq:cucpinfty}, \eqref{certainty} and Lemma~\ref{lem:monotonicity}, we also have
\ben
n\log c\le  h_\infty(X)+h_\infty(X^\circ)\le  h_p(X) + h_p(X^\circ)\le  h(X) + h(X^\circ)\le n(\log 2\pi+2).
\een
Combining with \eqref{inq:cucp1'} provides \eqref{inq:cucp1}.
which provides the theorem. For $p<1$, we have, by \eqref{certainty} and Lemma~\ref{lem:monotonicity}, we have 
\ben
d\log c \le  h(X) + h(X^\circ) \le h_p(X) + h_p(X^\circ).
\een
Combining this with \eqref{inq:cucp2'} provides the left most inequality of \eqref{inq:cucp2}. And by applying Lemma~\ref{lem:theLemma} on $h_p(f)-h_\infty(f)$ and by \eqref{inq:cucpinfty}, we have
\ben
h_p(X) + h_p(X^\circ)\le \frac{2d\log p}{p-1}+h_\infty(X) + h_\infty(X^\circ)\le \frac{2d\log p}{p-1}+d\log 2\pi.
\een
Combining this with \eqref{inq:cucp2'} gives \eqref{inq:cucp2}.
\end{proof}

Klartag and Milman \cite{KM05} prove a reverse Pr\'ekopa-Leindler inequality (Reverse PLI).

\begin{thm}\cite{KM05}\label{thm:revPLI}
Given $f,g$: $\R^d\rightarrow [0,\infty)$ be even log-concave functions with $f(0)=g(0)=1$, then there exist $u_f$, $u_g$ in $SL(d)$ such that $\bar{f} = f\circ u_f$, $\bar{g} = g\circ u_g$ satisfy 
\ben 
\left( \int \bar{f} \star \bar{g} \right)^{\frac 1 d} \leq C \left( \left(\int \bar{f} \right)^{\frac 1 d} + \left( \int \bar{g} \right)^{\frac 1 d}  \right),
\een
where $C > 0$ is a universal constant, $u_f$ depends solely on $f$, and $u_g$ depends solely on $g$.
\end{thm}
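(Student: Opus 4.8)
The plan is to obtain Theorem~\ref{thm:revPLI} as an entropic consequence of the reverse $\infty$-R\'enyi EPI of Theorem~\ref{thm:REPIREnyi} (the case $p=\infty$, which itself rests on Theorem~\ref{thm:repi}), with a functional Rogers--Shephard inequality serving as the bridge between the Asplund product $\star$ and ordinary convolution $*$. We may assume $0<\int f,\int g<\infty$, as otherwise the inequality is trivial. Since $f$ and $g$ are even and log-concave with $f(0)=g(0)=1$, each attains its maximum at the origin, so $\|f\|_\infty=\|g\|_\infty=1$; moreover a convolution of even log-concave functions is again even and log-concave (by Pr\'ekopa's theorem) and hence also attains its maximum at $0$, a fact we use several times below.

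\emph{Step 1: reduction to $M$-position, and translation into entropy.} Write $a=\int f$, $b=\int g$, and let $\phi=f/a$, $\psi=g/b$ be the associated even log-concave probability densities; for $X\sim\phi$ and $Y\sim\psi$ we then have $N_\infty(X)=\|\phi\|_\infty^{-2/d}=a^{2/d}$ and $N_\infty(Y)=b^{2/d}$. The proof of Theorem~\ref{thm:repi} proceeds by putting each log-concave measure individually into an $M$-position (see the discussion in Section~\ref{ss:spl-pos}); consequently there are maps $u_f,u_g\in SL(d)$, with $u_f$ depending only on the law of $X$ (hence only on $f$) and $u_g$ only on $g$, such that, setting $\bar f=f\circ u_f$ and $\bar g=g\circ u_g$ (still even log-concave, with $\int\bar f=a$ and $\int\bar g=b$) and letting $\bar X\sim\bar f/a$, $\bar Y\sim\bar g/b$ be independent, Theorem~\ref{thm:REPIREnyi} gives
\[
N_\infty(\bar X+\bar Y)\ \le\ C_0\big(N_\infty(\bar X)+N_\infty(\bar Y)\big)\ =\ C_0\big(a^{2/d}+b^{2/d}\big)
\]
for a universal constant $C_0$ (using $N_\infty(\bar X)=N_\infty(X)$ by affine invariance). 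Since $\bar X+\bar Y$ has density $(\bar f*\bar g)/(ab)$ and $\|\bar f*\bar g\|_\infty=(\bar f*\bar g)(0)=\int\bar f\bar g$, this inequality reads
\[
\frac{ab}{\|\bar f*\bar g\|_\infty}\ =\ N_\infty(\bar X+\bar Y)^{d/2}\ \le\ \big(C_0(a^{2/d}+b^{2/d})\big)^{d/2}\ \le\ C_0^{d/2}\,\big(a^{1/d}+b^{1/d}\big)^{d},
\]
the last step using $(x^2+y^2)^{1/2}\le x+y$.

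\emph{Step 2: the functional Rogers--Shephard bridge.} It remains to bound $\int\bar f\star\bar g$ in terms of $ab$ and $\|\bar f*\bar g\|_\infty$. The estimate needed is a two-function functional version of the Rogers--Shephard inequality~\eqref{inq:RS}: for even log-concave $u,v$ on $\R^d$,
\[
\Big(\int u\star v\Big)\,\|u*v\|_\infty\ \le\ \binom{2d}{d}\Big(\int u\Big)\Big(\int v\Big),
\]
which for $u=\mathbbm{1}_A$, $v=\mathbbm{1}_B$ with $A,B$ symmetric convex bodies specializes to $|A+B|\,|A\cap B|\le\binom{2d}{d}|A||B|$, and which belongs to the same circle of ideas as Colesanti's inequality~\eqref{inq:fnlRS} (see also \cite{Col06,AGJV16,AEFO15}). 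Applying it with $u=\bar f$, $v=\bar g$, then inserting the bound from Step 1, gives
\[
\int\bar f\star\bar g\ \le\ \binom{2d}{d}\,\frac{ab}{\|\bar f*\bar g\|_\infty}\ \le\ \big(4\,C_0^{1/2}\big)^{d}\,\big(a^{1/d}+b^{1/d}\big)^{d}.
\]
Taking $d$-th roots and recalling $a=\int\bar f$, $b=\int\bar g$ yields the theorem with the universal constant $C=4\,C_0^{1/2}$.

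\emph{Main obstacle.} The one genuinely nonroutine input is the functional Rogers--Shephard inequality of Step 2, with a dimension-free constant. A direct layer-cake attempt writes $\int u\star v=\int_0^\infty e^{-s}|S_s|\,ds$, where $S_s$ is the $s$-sublevel set of the infimal convolution of $-\log u$ and $-\log v$, bounds $|S_s|\le|A_s+B_s|$ with $A_s,B_s$ the $s$-sublevel sets of $-\log u,-\log v$, and would then have to control $|A_s+B_s|$ by $|A_s|,|B_s|$ alone -- which is impossible without the eccentricity correction supplied by $\|u*v\|_\infty$; pushing that correction through the $s$-integration (so that one really obtains a Rogers--Shephard-type rather than a Brunn--Minkowski-type bound) is the delicate point, and one therefore invokes (or carefully adapts) the known functional Rogers--Shephard inequalities instead. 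A minor separate check is that $\int\bar f\star\bar g<\infty$: integrability and log-concavity make the sublevel sets of $-\log\bar f$ and $-\log\bar g$ bounded, hence so are those of their infimal convolution, so $\bar f\star\bar g$ is a log-concave function with bounded sublevel sets and is thus integrable.
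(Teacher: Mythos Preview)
Your argument is correct, but it follows a genuinely different route from the paper's. Both proofs start from the positional reverse $\infty$-EPI (Theorem~\ref{thm:REPIREnyi} with $p=\infty$), but they bridge from there to the Asplund product in opposite ways. The paper applies the reverse $\infty$-EPI to the \emph{polars} $f^\circ,g^\circ$: the $L^\infty$-norm of the convolution of the (normalized) polars produces a pointwise product $(f^\circ\circ u_1)\cdot(g^\circ\circ u_2)$, which by the identity $(f\star g)^\circ=f^\circ g^\circ$ (Lemma~\ref{lem:polar}\eqref{eq:Lprop2}--\eqref{eq:Lprop3}) is exactly $((f\circ u_1^{-T})\star(g\circ u_2^{-T}))^\circ$; two applications of the functional Blaschke--Santal\'o inequality (Theorem~\ref{thm:km-bs}) then convert integrals of polars into integrals of the original functions. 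You instead apply the reverse $\infty$-EPI \emph{directly} to $f,g$, obtaining control of $ab/\|\bar f*\bar g\|_\infty$, and then invoke a two-function functional Rogers--Shephard inequality $(\int u\star v)\,\|u*v\|_\infty\le\binom{2d}{d}\int u\int v$ (for even log-concave $u,v$ with maximum $1$ at the origin; this is in the family of results from \cite{AGJV16}) to pass from convolution to Asplund product.

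What each route buys: the paper's duality argument is self-contained within the circle of results already stated (Theorems~\ref{thm:REPIREnyi} and~\ref{thm:km-bs}) and makes transparent why polarity and Blaschke--Santal\'o are the natural companions of the Asplund product. Your argument avoids duality entirely and is conceptually shorter, trading the (deep) reverse functional Santal\'o bound for the (also nontrivial, but of a different flavour) functional Rogers--Shephard inequality; it also yields an explicit constant $C=4\,C_0^{1/2}$ in terms of the reverse-$\infty$-EPI constant alone. You were right to flag Step~2 as the crux: that inequality is not proved in the paper, so your derivation is not internal to the paper's toolkit in the way the paper's own proof is, but the result you need is available in the cited literature and your reduction to it is clean.
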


We observe that the Reverse PLI can be proved from the Positional Reverse R\'enyi EPI we proved earlier, modulo
the reverse functional Blaschke-Santal\'o inequality of Klartag-Milman. 

\begin{prop}
Theorems \ref{thm:REPIREnyi} and \ref{thm:km-bs} together imply Theorem \ref{thm:revPLI}.
\end{prop}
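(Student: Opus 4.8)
The plan is to carry out the standard Klartag--Milman ``dualization'' argument for a reverse Pr\'ekopa--Leindler inequality, but with the reverse Brunn--Minkowski step replaced by the Positional Reverse $\infty$-R\'enyi EPI of Theorem~\ref{thm:REPIREnyi} (taken at $p=\infty$). Throughout, $f,g:\R^d\to[0,\infty)$ are even log-concave with $f(0)=g(0)=1$; by evenness and log-concavity each of $f,g,f^\circ,g^\circ$ attains its maximum at the origin, and by Lemma~\ref{lem:polar}\eqref{eq:Lprop4} this common maximum equals $1$. Write $C_\infty$ for the universal constant of Theorem~\ref{thm:REPIREnyi} at $p=\infty$ and $c$ for the universal constant of Theorem~\ref{thm:km-bs}.

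First I would apply Theorem~\ref{thm:REPIREnyi}, $p=\infty$, to the independent random vectors $X_1\sim f^\circ/\int f^\circ$ and $X_2\sim g^\circ/\int g^\circ$ (log-concave, as normalized polars of log-concave functions), obtaining maps $u_1,u_2\in SL(d)$ with
\[
N_\infty(u_1X_1+u_2X_2)\ \le\ C_\infty\big(N_\infty(X_1)+N_\infty(X_2)\big).
\]
Here I would use that the positions in Theorem~\ref{thm:REPIREnyi} arise from putting each measure separately in $M$-position (as in the proof of Theorem~\ref{thm:repi}), so that $u_1$ may be taken to depend only on $f$ and $u_2$ only on $g$. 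Set $u_f:=u_1^T$ and $u_g:=u_2^T$ (both in $SL(d)$), and $\bar f:=f\circ u_f$, $\bar g:=g\circ u_g$, which remain even log-concave with value $1$ at the origin. By Lemma~\ref{lem:polar}\eqref{eq:Lprop3}, the density of $u_1X_1$ is $\bar f^\circ/\int\bar f^\circ$ and that of $u_2X_2$ is $\bar g^\circ/\int\bar g^\circ$ (using $\int f^\circ=\int\bar f^\circ$, since $\det u_f=1$).

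Next I would turn the displayed inequality into an integral inequality. For an even log-concave $h$ with $h(0)=\|h\|_\infty$, the density $h/\int h$ has $N_\infty=(\int h)^{2/d}$, so $N_\infty(X_1)=(\int\bar f^\circ)^{2/d}$ and $N_\infty(X_2)=(\int\bar g^\circ)^{2/d}$. The density of $u_1X_1+u_2X_2$ is the convolution of two even log-concave densities, hence is itself even log-concave and attains its supremum at $0$; evaluating there and using Lemma~\ref{lem:polar}\eqref{eq:Lprop2}, $\int\bar f^\circ\bar g^\circ=\int(\bar f\star\bar g)^\circ$, gives
\[
N_\infty(u_1X_1+u_2X_2)\ =\ \Big(\frac{\int\bar f^\circ\,\int\bar g^\circ}{\int(\bar f\star\bar g)^\circ}\Big)^{2/d}.
\]
Substituting into the reverse EPI and rearranging yields the lower bound $\big(\int(\bar f\star\bar g)^\circ\big)^{2/d}\ge (\int\bar f^\circ)^{2/d}(\int\bar g^\circ)^{2/d}/\big[C_\infty\big((\int\bar f^\circ)^{2/d}+(\int\bar g^\circ)^{2/d}\big)\big]$. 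Finally I would invoke Theorem~\ref{thm:km-bs} three times: applied to $\bar f\star\bar g$ (even, value $1$ at $0$, and integrable since $-\log(\bar f\star\bar g)$ is an infimal convolution of coercive convex functions, hence coercive) it gives $\int\bar f\star\bar g<(2\pi)^d/\int(\bar f\star\bar g)^\circ$; applied to $\bar f$ and to $\bar g$ it gives $\int\bar f^\circ>c^d/\int\bar f$ and $\int\bar g^\circ>c^d/\int\bar g$. Chaining these with the lower bound and simplifying gives
\[
\big(\textstyle\int\bar f\star\bar g\big)^{2/d}\ <\ \frac{(2\pi)^2C_\infty}{c^2}\Big[\big(\textstyle\int\bar f\big)^{2/d}+\big(\textstyle\int\bar g\big)^{2/d}\Big],
\]
and taking square roots together with $\sqrt{a^2+b^2}\le a+b$ produces Theorem~\ref{thm:revPLI} with the universal constant $C=2\pi\sqrt{C_\infty}/c$.

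I expect the main obstacle to be the identity $N_\infty(u_1X_1+u_2X_2)=\big(\int\bar f^\circ\int\bar g^\circ/\int(\bar f\star\bar g)^\circ\big)^{2/d}$: it rests on the fact that the convolution of two even log-concave densities attains its maximum at the origin, so that its $L^\infty$ norm equals its value there, which (up to normalization) is exactly $\int\bar f^\circ\bar g^\circ$. A second delicate point is the remark that the positions in Theorem~\ref{thm:REPIREnyi} can be chosen per-distribution, which is what lets the conclusion be stated in the form of Theorem~\ref{thm:revPLI}, with $u_f$ depending only on $f$ and $u_g$ only on $g$. The various ``maximum at $0$'' assertions, the invariance $\int f^\circ=\int\bar f^\circ$, and the integrability of $\bar f\star\bar g$ are routine consequences of evenness and log-concavity.
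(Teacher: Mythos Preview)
Your proof is correct and follows essentially the same approach as the paper: apply the Positional Reverse $\infty$-EPI to the polar densities $f^\circ/\int f^\circ$ and $g^\circ/\int g^\circ$, identify the resulting $N_\infty$ quantities with integrals of polars via the maximum-at-zero property, then use Theorem~\ref{thm:km-bs} three times (once on $\bar f\star\bar g$ via its polar, once each on $\bar f$ and $\bar g$) to convert back to the primal side. Your bookkeeping of the linear maps (yielding $u_f=u_1^T$) is in fact slightly cleaner than the paper's, which writes $f^\circ\circ u_1$ where one would expect $f^\circ\circ u_1^{-1}$ and accordingly ends with $u_f=u_1^{-T}$; since both choices lie in $SL(d)$ this is immaterial.
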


\begin{proof}
Let $f$, $g$: $\R^d\rightarrow [0,\infty)$ be even log-concave functions with $f(0)=g(0)=1$. Now by property \eqref{eq:Lprop4}, $\|f^\circ\|_\infty=1$ as well. Now apply reversed $\infty$-EPI on a pair of independent random vectors $X$ and $Y$ with density functions $f^\circ/\int f^\circ$ and $g^\circ/\int g^\circ$ respectively, there exist linear maps $u_1$, $u_2\in SL(d)$ depending solely on $f$ and $g$ respectively, such that
\begin{align*}
&\left(\int \frac{\left(f^\circ \circ u_1(x)\right)\cdot \left(g^\circ \circ u_2(x)\right)}{\int f^\circ\cdot \int g^\circ}dx \right)^{-\frac{2}{d}} 
= N_\infty(u_1(X)+u_2(Y))\\
\lesssim & N_\infty(X) +N_\infty(Y)
= \left\|\frac{f^\circ}{\int f^\circ}\right\|_\infty^{-\frac{2}{d}}+\left\|\frac{g^\circ}{\int g^\circ}\right\|_\infty^{-\frac{2}{d}}=\left(\int f^\circ\right)^{\frac{2}{d}}+\left(\int g^\circ\right)^{\frac{2}{d}}.
\end{align*}
Therefore we have
\be\label{inq:middleStep}
\left(\int \left(f^\circ \circ u_1(x)\right)\cdot \left(g^\circ \circ u_2(x)\right)dx \right)^{-\frac{2}{d}} \lesssim \left(\int f^\circ\right)^{-\frac{2}{d}}+\left(\int g^\circ\right)^{-\frac{2}{d}}.
\ee
Thus by Theorem~\ref{thm:km-bs}, we have the right hand side of \eqref{inq:middleStep} is
\be\label{inq:middleStep2}
\left(\int f^\circ\right)^{-\frac{2}{d}}+\left(\int g^\circ\right)^{-\frac{2}{d}}\lesssim \left(\int f\right)^{\frac{2}{d}}+\left(\int g\right)^{\frac{2}{d}}.
\ee
On the other hand, by properties \eqref{eq:Lprop1}, \eqref{eq:Lprop2} and \eqref{eq:Lprop3}, we have the right hand side of \eqref{inq:middleStep}:
\be\label{inq:middleStep3}
\left(\int \left(f^\circ \circ u_1(x)\right)\cdot \left(g^\circ \circ u_2(x)\right)dx \right)^{-\frac{2}{d}}\gtrsim
\left(\int \left(f \circ u_1^{-t}\right)\star \left(g \circ u_2^{-t}\right) \right)^{\frac{2}{d}}.
\ee
Denote $u_f:=u_1^{-t}$, $u_g:=u_2^{-t}$; $\bar{f} := f\circ u_f$, $\bar{g} := g\circ u_g$, and combining \eqref{inq:middleStep} \eqref{inq:middleStep2} and \eqref{inq:middleStep3} provides Theorem \ref{thm:revPLI}.
\end{proof}

\section{Concluding remarks}
\label{sec:concl}

One productive point of view put forward by Lutwak, Yang and Zhang is that the correct
analogy is between entropy inequalities and the inequalities of the $L^2$-Brunn-Minkowski
theory rather than the standard Brunn-Minkowski theory. While we did not have space to
pursue this direction in our survey apart from a brief discussion in Section~\ref{sec:reln}, 
we refer to \cite{LLYZ13} and references therein for details. 

A central question when considering integral or entropic liftings of Convex Geometry is whether there exist integral and entropic analogues
of mixed volumes. Recent work of Bobkov-Colesanti-Fragala \cite{BCF14} has shown that an integral lifting of
intrinsic volumes does exist, and Milman-Rotem \cite{MR13:1, MR13:2} independently showed this as well as 
an integral lifting of mixed volumes more generally. A fully satisfactory theory of 
``intrinsic entropies'' or ``mixed entropies'' is yet to emerge, 
although some promising preliminary results in this vein can be found in \cite{JA15:isit}. 

It is also natural to explore nonlinear generalizations, to ambient spaces that are manifolds or groups.
Log-concave (and convex) measures can be put into an even broader context by viewing them
as instances of curvature in metric measure spaces. 
Indeed, thanks to path-breaking work of \cite{Stu06:1, LV09}, 
it was realized that one can give meaning (synthetically) to the notion of a lower bound on Ricci curvature for a metric space
$(\mathcal{X}, d)$ equipped with a measure $\mu$ (thus allowing for geometry beyond the traditional setting of Riemannian manifolds).
In particular, they extended the celebrated Curvature-Dimension condition $CD(K, N)$ of Bakry and \'Emery \cite{BE85} 
to metric measure spaces $(\mathcal{X}, d, \mu)$;
the simplest case $CD(K,\infty)$ is defined by a ``displacement convexity'' (or convexity along 
optimal transport paths) property of the relative entropy functional $D(\cdot\|\mu)$.
For Riemannian manifolds, the $CD(K, N)$ condition is satisfied if and only if the manifold has dimension at most $N$ 
and Ricci curvature at least $K$, while Euclidean space $\RL^d$ equipped with a log-concave measure
may be thought of as having non-negative Ricci curvature in the sense that it satisfies $CD(0, d)$.
Moreover,  $\RL^d$ equipped with a convex measure may be 
interpreted as a $CD(K, N)$ space with effective dimension $N$ being negative (other examples can be found in \cite{Mil15}).
In these more general settings (where there may not be a group structure), it is not entirely clear whether there are natural formulations of entropy power
inequalities. Even for the case of Lie groups, almost nothing seems to be known.

One may also seek discrete analogs of the phenomena studied in this survey, which are closely
related to investigations in additive combinatorics. In discrete settings, additive structure plays
a role as or more important than that of convexity. The Cauchy-Davenport inequality is an analog
of the Brunn-Minkowski inequality in cyclic groups of prime or infinite order, with arithmetic progressions 
being the extremal objects (see, e.g., \cite{TV06:book}); extensions to the integer lattice are also
known \cite{Ruz94, GG01, Sta01}. A probabilistic lifting of the Cauchy-Davenport inequality for the integers 
is presented in \cite{WWM14:isit}. Sharp lower bounds on entropies of sums in terms of those of summands
are still not known for most countable groups; partial results in this direction may be found in 
\cite{Tao10, HAT14, JA14, WM15:isit}. 
Such bounds are also relevant to the study of information-theoretic
approaches to discrete limit theorems, such as those that involve distributional convergence
to the Poisson or compound Poisson distributions of sums of random variables 
taking values in the nonnegative integers; we refer the interested reader to
\cite{Joh07, JKM13, Yu09:1, Yu09:2, BJKM10} for further details.
Probabilistic liftings of other ``sumset inequalities'' from
additive combinatorics can be found in \cite{Mad08:itw, Ruz09, MMT10:itw, MMT12, Tao10, MK10:isit, ALM15, MK15, LM16}.

There are other connections between notions of entropy and convex geometry that we have not discussed
in this paper (see, e.g., \cite{BM11:aop, AKSW12, Wer12, CW14, FMW16, FLM15, LFM16:isit}).

\section*{Acknowledgement}

The authors are grateful to Eric Carlen, Bernardo Gonz\'alez Merino, Igal Sason, Tomasz Tkocz, Elisabeth Werner, and an anonymous reviewer
for useful comments and references.


\end{document}